\documentclass[12pt]{article}
\usepackage[letterpaper,margin=1in]{geometry}
\usepackage{amsmath,amssymb,amsthm,mathtools}
\usepackage{graphicx}
\usepackage{booktabs}
\usepackage{array}
\usepackage{float}
\usepackage{needspace}
\usepackage{setspace}
\usepackage[round]{natbib}
\usepackage{xcolor}
\definecolor{linkcol}{rgb}{0.10,0.25,0.50}
\usepackage{hyperref}
\hypersetup{colorlinks=true,linkcolor=linkcol,citecolor=linkcol,urlcolor=linkcol,
  pdftitle={Conformity Traps and the Formation of Independent Judgment},pdfauthor={Hector Galindo-Silva}}
\newtheorem{theorem}{Theorem}
\newtheorem{proposition}{Proposition}
\newtheorem{corollary}{Corollary}
\newtheorem{lemma}{Lemma}
\newtheorem{assumption}{Assumption}
\theoremstyle{definition}\newtheorem{definition}{Definition}
\theoremstyle{remark}\newtheorem*{remark}{Remark}
\newcommand{\E}{\mathbb E}
\DeclareMathOperator*{\argmax}{arg\,max}
\newcommand{\msv}{\mathrm{MSV}}
\newcolumntype{L}[1]{>{\raggedright\arraybackslash}p{#1}}
\makeatletter
\renewcommand\paragraph{\@startsection{paragraph}{4}{\z@}%
  {1.2ex \@plus0.3ex \@minus.2ex}{-1em}{\normalfont\normalsize\itshape}}
\makeatother
\newcommand{\pbench}{0.10}
\newcommand{\deltaval}{0.05}
\newcommand{\Jlow}{0.0207}
\newcommand{\Jmid}{0.1668}
\newcommand{\Jhigh}{0.7444}
\newcommand{\Alow}{0.0121}
\newcommand{\Ahigh}{0.7399}
\newcommand{\wlow}{0.587}
\newcommand{\whigh}{0.994}
\newcommand{\wzero}{0.9971}
\newcommand{\SRlow}{0.0085}
\newcommand{\SRhigh}{0.0023}
\newcommand{\Nlow}{0.0001}
\newcommand{\Nhigh}{0.0022}
\newcommand{\SRsharelow}{41}
\newcommand{\SRsharehigh}{0.3}
\newcommand{\qflow}{0.924}
\newcommand{\qfhigh}{0.251}
\newcommand{\qalow}{0.965}
\newcommand{\qahigh}{0.308}
\newcommand{\qfone}{0.200}
\newcommand{\qaone}{0.248}
\newcommand{\YFlow}{1.068}
\newcommand{\YFhigh}{1.018}
\newcommand{\YNlow}{-0.164}
\newcommand{\YNhigh}{+0.178}
\newcommand{\Ylow}{0.944}
\newcommand{\Yhigh}{0.934}
\newcommand{\phat}{0.127}
\newcommand{\sigmalow}{0.349}
\newcommand{\sigmabar}{0.573}
\newcommand{\sigmahigh}{0.840}
\newcommand{\AhDelta}{0.218}
\newcommand{\YNLhigh}{0.048}
\newcommand{\YNLlow}{-0.173}
\newcommand{\Adagger}{0.108}
\newcommand{\Jdagger}{0.125}
\newcommand{\YNChigh}{+0.178}
\newcommand{\wedgelow}{1.51}
\newcommand{\wedgehigh}{0.53}
\newcommand{\plannerJ}{0.8938}
\newcommand{\plannerA}{0.8930}
\newcommand{\plannerCompJ}{0.8994}
\newcommand{\plannerCompA}{0.8988}
\newcommand{\welfarelow}{0.959}
\newcommand{\welfarehigh}{1.185}
\newcommand{\welfareplanner}{1.195}
\newcommand{\pfoldlo}{0.0671}
\newcommand{\pfoldhi}{0.1429}
\newcommand{\Kminchi}{1.082269}
\newcommand{\Kmaxchi}{1.749425}
\newcommand{\vfbound}{0.098109}
\newcommand{\REminlo}{0.12}
\newcommand{\REminhi}{0.13}
\newcommand{\REpluslo}{0.17}
\newcommand{\REplushi}{0.19}
\newcommand{\Tpossible}{3}
\newcommand{\Trobust}{4}
\newcommand{\qbarf}{0.16}
\newcommand{\qbara}{0.207}
\newcommand{\ratiof}{5.25}
\newcommand{\ratioa}{3.83}
\newcommand{\phizero}{0.44}
\newcommand{\gammaval}{0.069}
\newcommand{\thetamid}{2.0499}
\newcommand{\focusbound}{2.0520}
\newcommand{\invisUnif}{0.599}
\newcommand{\invisUnifxg}{8.7}
\newcommand{\invisTrapxg}{7.9}
\newcommand{\invisCapxg}{4.5}
\newcommand{\phiZeroJ}{0.836}
\newcommand{\compJlow}{0.0205}
\newcommand{\compJmid}{0.1768}
\newcommand{\compCa}{0.22}
\newcommand{\compSaTrapBase}{1.757}
\newcommand{\compSaTrap}{1.767}
\newcommand{\bnDagger}{1.31}
\newcommand{\bnWindowLow}{0.18}
\newcommand{\bnWindowTwo}{0.86}
\newcommand{\wgap}{0.225}
\newcommand{\wgapfam}{-0.045}
\newcommand{\wgappsy}{+0.132}
\newcommand{\wgapadapt}{+0.022}
\newcommand{\wgapcost}{+0.116}
\newcommand{\wgapbn}{0.177}
\newcommand{\mukval}{-0.08}

\title{\textbf{Conformity Traps and the Formation of Independent Judgment}}
\author{
 Hector Galindo-Silva\thanks{Department of Economics, Pontificia Universidad Javeriana, Bogot\'a, Colombia. I thank Juan Samuel Santos for helpful comments. Any remaining errors are my own.}\\
  \texttt{galindoh@javeriana.edu.co}
}
\date{August 2026}
\begin{document}
\maketitle
\vspace{-2.4em}

\begin{abstract}
\singlespacing
\noindent I study why some groups---polities and organizations alike---sustain independent judgment while
others fall into conformity traps. Social approval can suppress not only the expression of independent
judgment but also the upstream practice that keeps such judgment available: maintaining judgment generates
visible questioning, and observers cannot distinguish diligence from opposition. Because suspicion falls
when questioning is common, the same desire for approval can sustain a questioning culture or a conformity
trap. The model separates maintaining judgment from acting on it when novelty arrives. When the decision to
act is uniquely determined, I give primitive conditions for coexistence; in a logit--proportional
benchmark, a single inequality is necessary and sufficient. The principal contribution is dynamic. Under
staggered revision the stock of judgment is inherited: protecting dissent releases existing judgment at
once---with negative expected adaptive value on impact in a stationary culture below a break-even
competence threshold---whereas rebuilding the stock is constrained by turnover. A monotone operator
constructs the extremal perfect-foresight paths from every inherited state. When recovery under every
admissible continuation is attainable, two cutoffs classify inherited states---recovery impossible,
expectation-dependent, or assured---and a focus condition at the unstable culture strictly separates them,
opening a band of states where expectations select the long-run culture. Turnover then yields sharp
physical duration bounds, within temporary direct formation support, for making recovery possible or
robust. A fully coupled interval-certified example verifies coexistence, the band, and the distinct output
and welfare implications.
\end{abstract}

\medskip
\noindent\textit{Keywords:} conformity, independent judgment, social norms, strategic
complementarities, monotone methods, perfect-foresight dynamics.\quad
\textit{JEL:} C62, C73, D23, D82, D83, D91, M14, Z13.

\newpage

\section{Introduction}\label{sec:intro}

\begin{flushright}
\begin{minipage}{0.74\textwidth}
\small\itshape
``[He] desires, not only praise, but praiseworthiness; or to be that thing which, though it should be
praised by nobody, is, however, the natural and proper object of praise.''
\upshape\normalsize
\par\medskip
\raggedleft\footnotesize --- \citet[III.2.1]{Smith1759}
\end{minipage}
\end{flushright}
\medskip

A willingness to conform to group expectations is central to human coordination. Deferring to settled
practice saves deliberation and allows a group to act in concert, so the desire for social approval can be
both individually rational and socially useful \citep{BernheimConformity1994}. Yet groups that rely on
approval also value the capacity to act on one's own judgment when peers disapprove. Independent judgment
provides insurance against change \citep{Hayek1945,DesseinSantos2006}: a settled routine is valuable while
it fits the task, but when conditions change and that routine fails, corrective action may require someone
able to diagnose the failure independently.

This value creates a puzzle. Novel situations are infrequent and unpredictable, whereas familiar situations
dominate everyday activity; and when novelty arrives, the capacity to diagnose it either is already in place
or is not. The use of independent judgment is therefore intermittent, but its maintenance cannot be. Keeping
that capacity alive often requires members to question procedures while those procedures still appear to
work, exposing diligence to being read as opposition. The puzzle is how a group can sustain such
readiness---apparently in defiance of the social approval that coordinates ordinary conduct---through long
stretches in which routine appears sufficient.

This paper's answer is that a questioning culture need not persist in defiance of social approval. The
prevalence of questioning changes how observers interpret it and therefore changes its social cost.
Maintaining judgment leaves visible traces before any novel task arrives: a member requests a justification,
probes a premise, or tests an alternative.\footnote{The mechanism formalizes Mill's idea that visible
nonconformity can reduce the stigma attached to later nonconformity \citep[ch.~III]{Mill1859}.} Observers see
the probe but not its motive. When questioning is common, they interpret it as ordinary diligence and impose
little stigma; maintaining readiness is then worthwhile. When questioning is rare, the same probe is read as
opposition, cooperation is withdrawn, and fewer members maintain judgment. The same desire for approval can
thus sustain either a questioning culture or a conformity trap.

One consequential reading of the mechanism is political. Polities differ persistently, at fixed formal institutions,
in whether citizens publicly ask why; where questioning official routine is rare, the citizen who requests
a justification is read as an oppositionist, and where it is common, as engaged. Sustained stigma on
questioning then does more than silence views: it reaches back to their formation, and after long enough
there is little considered judgment left to suppress---which is why the passivity a regime enforces can
persist long after the regime is gone.\footnote{East Germans exposed to denser Stasi surveillance remain
less civically engaged decades after reunification \citep{LichterLoefflerSiegloch2021}, and regime
exposure durably reshaped what citizens expect of state and society \citep{AlesinaFuchsSchundeln2007};
\citet{BesleyPersson2019} model the coevolution of civic values and institutions that such persistence
suggests. Section~\ref{subsec:outscope} delimits the scope of this reading.} In this reading, repression
is the technology that depletes the inherited stock, and the dynamics explain its aftermath---why the
depletion persists once formal coercion is gone, and why its repair is bounded by turnover.

I formalize this mechanism in an environment with two sequential margins. At the formation margin, agents
decide whether to maintain the capacity to assess established practice independently. At the enactment
margin, those who maintained it decide whether to act on their judgment if a novel state arrives. In the
civic reading, probing is ordinary-times questioning of authority, and enactment is departing from
official routine when it visibly fails. On an investment committee---the compact illustration used
throughout---questioning a proposal's premise is the formation-stage probe, and voting against the
received view in an unusual case is enactment. The same inference problem operates at both
margins: conduct generated by independent judgment is pooled with observationally identical opposition.
Because the audience's withdrawal of cooperation is ex-post optimal---a best response to its posterior at the
point of decision---and agents anticipate it, aggregate
conduct and reputational pressure must be jointly determined. An equilibrium is a fixed point of this
feedback; no hostile preferences or external coordination device is needed for a conformity trap.

The paper makes two contributions. The first gives primitive conditions under which the trap can arise.
When the enactment subgame is single-valued, the two-margin system reduces to a scalar fixed-point problem
organized by a threshold function built from primitives. Theorem~\ref{thm:trap} turns that function into a
sign test: an ordering of its values at two interior stocks is sufficient for multiplicity and, under a
shape condition and transverse crossings, characterizes exactly two stable cultures separated by an
unstable one. Two results then deliver the ordering from primitives: in the fully coupled model, sufficiently
concentrated formation costs whose zero-pressure return clears the median cost somewhere generate an open
coexistence interval, and in a logit--proportional benchmark, a
single primitive inequality is necessary and sufficient for a nonempty multiplicity interval
(Theorem~\ref{thm:analytic-threshold}). A worked example, certified by interval arithmetic, serves as an
existence witness; the sign test is the device that translates primitives into equilibrium counts, not a
result in its own right.

The second contribution is dynamic. Under staggered revision the
maintenance stock becomes an inherited group-level state: current interventions act on what members do
with the existing stock, but the stock itself moves only at revision opportunities. A
monotone path operator constructs the least and greatest perfect-foresight paths from every inherited stock
(Theorem~\ref{thm:history-expectations}). When the stationary map has three cultures and escape is assured
from some inherited stock, the extremal paths define two cutoffs: below the first, escape to the
questioning culture is impossible; between them, it depends on continuation expectations; above the
second, it occurs under every admissible continuation. The cutoffs may coincide; in the certified
benchmark all three regions are nonempty, and Theorem~\ref{thm:band} makes the separation a matter of
theory rather than certification: complex perfect-foresight roots at the unstable culture---a condition
checkable from the slope of the static reduced map and the revision primitives---force the
expectation-dependent region to contain an open band. That complex roots open a history--expectations
overlap is a classical insight \citep{Krugman1991,FukaoBenabou1993}; the theorem establishes it by a
global monotone construction, in a nonsmooth discrete-time system whose complementarity is itself an
equilibrium object, beyond the reach of the classical phase-plane arguments.

Turnover then bounds institutional change: no temporary intervention can rebuild the stock faster than
revision allows. Theorem~\ref{thm:frontier} converts the cutoffs into a physical stock ceiling and minimum
durations---sharp within temporary direct formation support---for making recovery possible or guaranteeing
it. Speed, however, does not determine value: enactment protection releases existing judgment at once, but
the released departures are worth acting on only above a break-even competence threshold, so the impact
sign of protection is separated across stationary inherited cultures by a model-implied stock threshold. Identification and welfare are implications of the same machinery:
under a single exposure regime with unknown recording and background rates, observed conduct cannot
identify how much judgment is unformed, privately withheld, or reputationally suppressed; and comparison
with a same-technology planner yields underprovision only under stated conditions
(Theorem~\ref{thm:central-welfare}), so the paper does not
rank the two decentralized cultures unconditionally.

\medskip

The mechanism builds on canonical models of social interactions and social image, in which aggregate conduct
feeds back into private incentives and reputational returns
\citep{BernheimConformity1994,BenabouTirole2006,BenabouTirole2011,BrockDurlauf2001}, with more recent
extensions in \citet{MichaeliSpiro2017,AliBenabou2020,Braghieri2024,BenabouTiroleLawsNorms2026}, experimental
evidence that a stigmatized act's sanction falls with its prevalence and that bad norms persist and tip
\citep{BursztynEgorovFiorin2020,SmerdonOffermanGneezy2020,AndreoniNikiforakisSiegenthaler2021}, and
pluralistic-ignorance foundations for miscoordinated conformity \citep{FernandezDuque2022}. Further back
stands the critical-mass logic of threshold models \citep{Schelling1978,Granovetter1978}. Multiplicity
through endogenous social meaning is therefore familiar; the new element is an upstream maintenance margin
that carries its own posterior, is observed before enactment, and becomes an inherited group-level state
under staggered revision.

The closest static benchmarks are \citet{SwankVisser2023}, where reputations inferred from deliberation make
preparation efforts strategic complements, and \citet{KaramychevSwank2022}, where social image shapes
information acquisition, discussion, and voting, including information traps. This paper instead separates
keeping a diagnostic capacity alive from later acting on it, and that separation produces the stock--flow
distinction: enactment protection can release the silenced share of the inherited stock on impact but cannot
contemporaneously rebuild it.

A complementary organizational literature studies distortions in what identified members report, vote, or
acquire within a decision episode: conformity and herding under career concerns
\citep{ScharfsteinStein1990,Prendergast1993,Zwiebel1995,Holmstrom1999,OttavianiSorensen2006}, with
supporting evidence that short-record agents conform most \citep{ChevalierEllison1999,HongKubikSolomon2000},
committees in which reputational
incentives shape deliberation and its value \citep{VisserSwank2007,SuurmondSwankVisser2004}, the design of
optimal dissent \citep{LandierSraerThesmar2009}, and transparency's effect on the incentive to acquire
expertise \citep{BarIsaac2012}. Those mechanisms operate through individually attributed records. The margin
here is upstream and anonymous: whether the capacity to diagnose is maintained at all between
episodes---closer to \citet{March1991}'s tension between exploiting settled routines and preserving
exploratory capability, and to culture as shared beliefs and practice \citep{VandenSteen2010}---and what the
audience reads is the population prevalence of that maintained practice, not any individual's record.

Operationally, the formation choice is upstream of \citet{Hirschman1970}'s voice: it concerns maintaining the
capacity that makes expression possible, not merely revealing a view already held. Preference
falsification, self-censorship, and career-concerns conformity
\citep{Loury1994,Kuran1995,Morris2001,Prat2005} distort the use or expression of an existing view; here
sustained stigma reaches back to formation, so that eventually there is little left to falsify.
\citet{Kuran1995} anticipates this feedback informally---persistent falsification eventually corrupts
private knowledge itself (see also \citealp{Kuran1987} on collective conservatism)---and it is formalized
here as an equilibrium object with its own prevalence-priced posterior and inherited stock. The closest
antecedent of the formation margin is the signaling model of \citet{AustenSmithFryer2005}, in which a
costly productive investment is read by peers as a type signal; there, however, the investment's social
meaning is fixed by the signaling equilibrium rather than priced by its own prevalence, and no inherited
stock arises. Acquisition models ask whether information is obtained or communicated
\citep{CheKartik2009,Swank2010,Marshall2019}, and informational cascades conceal information about the
state \citep{BikhchandaniHirshleiferWelch1992}; the hidden object here is instead the readiness to depart
from a failing routine.

The history-versus-expectations question goes back to \citet{Krugman1991} and \citet{Matsuyama1991}; its
game-theoretic form is the perfect-foresight dynamics of \citet{MatsuiMatsuyama1995}, developed in
\citet{HofbauerSorger1999,Oyama2002,OyamaTakahashiHofbauer2008} and, with aggregate shocks,
\citet{BurdzyFrankelPauzner2001}---a program, like its stochastic-evolutionary branch
\citep{KandoriMailathRob1993,Young1993,Young2015}, of \emph{selecting} among static equilibria. This paper
shares the skeleton but not the question: the payoff to revising runs through a Bayesian audience's
posterior rather than a fixed stage game, no selection is imposed---the operator characterizes the entire
set of perfect-foresight paths from each inherited stock---and the resulting cutoffs convert into duration
bounds for temporary policy (Section~\ref{sec:dynamics} makes the comparison precise). In
\citet{AcemogluJackson2015} history operates through the finite window of play new cohorts observe---here
long-lived members adjust only at revision opportunities, so turnover itself bounds temporary
interventions---and the duration bounds give the law-versus-norm tension of \citet{AcemogluJackson2017} a
stock--flow form. Nor is the inherited state \citet{Tirole1996}'s collective reputation, the audience's
belief about past conduct at a known distribution of types, or culture as an equilibrium-selection device
\citep{Kreps1990}: it is the prevalence of a maintained practice, whose social meaning moves with that
prevalence.\footnote{Persistence comes from staggered revision of a maintained practice, not from the
socialization of preferences \citep{BisinVerdier2001}. Persistent-state analogues---individually
maintained quality \citep{BoardMeyerterVehn2013}, organizational stocks and capital
\citep{HalacPrat2016,DesseinPrat2022}---are not the population prevalence of a visible practice
interpreted reputationally. Work on slowly built capabilities, organizational inertia, and cultural
persistence \citep{Chassang2010,GibbonsHenderson2012,GuisoSapienzaZingales2015,DelfgaauwSwank2016}
supplies broader organizational neighbors, and speaking-up research an empirical referent for the
maintenance margin \citep{Edmondson1999}. Motivated-belief, self-image, and echo-chamber mechanisms
instead alter beliefs or the information environment
\citep{Benabou2013,GrossmanVanderWeele2017,LevyRazin2019}.}

Section~\ref{sec:model} presents the environment. Section~\ref{sec:trap} delivers the first contribution, the
primitive conditions for the trap. Section~\ref{sec:dynamics} delivers the second, the dynamics of escape and policy
duration, together with the impact effect of protection. Sections~\ref{sec:hidden}--\ref{sec:welfare} draw out
the diagnostic and welfare implications, and Section~\ref{sec:numerical} verifies the fully coupled benchmark.

\section{The environment}\label{sec:model}

Agents first decide whether to form and maintain independent judgment and, if novelty arrives, whether to act
on it. At each margin, a distinct recorded public event is interpreted by a Bayesian audience whose withdrawal
of cooperation creates a reputational sanction.

\subsection{Tasks, margins, and observability}\label{subsec:env}

\paragraph{Tasks and states.}
A unit continuum of agents confronts recurring task episodes. Within each episode the state $\omega$ is
aggregate: all agents face the same task environment. Let $p\in(0,1)$. With probability $1-p$, the state is
\emph{familiar}: an established routine $r$ is efficient. With probability $p$, it is \emph{novel}: the
routine is inappropriate and considered judgment prescribes $a^\ast(\omega)\neq r$. Thus $p$ is the group's
long-run frequency of novel episodes. Novelty creates a wedge between considered judgment and socially
settled conduct.

\paragraph{Two margins of judgment.}
Before the state is realized, an agent chooses whether to form and maintain the disposition to assess
established practice through self-endorsed judgment, $e\in\{0,1\}$. The idiosyncratic net cost of doing so is
$k$, with strictly increasing c.d.f.\ $H$ on $\mathbb R$ and continuous density $h>0$. The cost includes the
effort of maintaining that disposition net of any intrinsic value attached to it and may therefore be
negative. Let $J$ denote the mass of agents with $e=1$, which we call \emph{formed judgment}.

In a familiar state, formed judgment endorses $r$, and all agents take the same action. In a novel state, a
formed agent identifies $a^\ast(\omega)$ as appropriate but must still decide whether to depart from the
routine. Her idiosyncratic resolve cost $\ell$ is drawn after formation, independently of $k$, from c.d.f.\
$G$ on $\mathbb R$ with continuous density $g>0$. Let $A\leq J$ denote the mass of formed agents who depart in
a novel state, which we call \emph{enacted judgment}.

All primitives and distributions are common knowledge. The two margins are distinct: formation determines whether considered judgment is
available, whereas enactment determines whether the agent remains faithful to it. The benchmark private
motive is fidelity to a considered judgment rather than confidence that the judgment is correct.

\paragraph{Timing and diagnostic technology.}
An agent's formation status $e$ is in place before the task is known. Nature draws $\omega$, and agents
observe a diagnostic cue $y$ before the episode is recognized as familiar or novel. A formed agent applies a
diagnostic rule when $y$ belongs to a set $\mathcal D$ and incurs a real cost $c_D>0$. This activity may
generate a \emph{diagnostic probe}: a public request to justify a routine, a challenge to one of its premises,
a small experiment, or another low-stakes test of established practice. On a committee, for example, a probe
is a tabled question asking a sponsor to defend the rationale for a proposal.\footnote{Related evidence
documents avoidance of help-seeking when exposing informational need can threaten standing
\citep{RyanPintrichMidgley2001}.}

In a familiar episode, define
\begin{equation}\label{eq:gamma-diagnostic}
 \zeta:=\Pr(y\in\mathcal D\mid\omega\text{ is familiar})>0,
 \qquad
 \gamma:=c_D\zeta,
 \qquad
 \phi_0:=\zeta\bar\phi,
\end{equation}
where $\bar\phi\in(0,1]$ is the conditional probability that a diagnostic probe is recorded as a public
event. Thus $\gamma$ is the expected familiar-state diagnostic cost and $\phi_0$ is raw probe visibility.
The same diagnostic frequency $\zeta$ links the real cost of scrutiny to its public exposure. Because the
diagnosis occurs before the task is recognized as familiar, its cost has already been incurred when the agent
learns that the routine fits.

Novel-state diagnostic costs do not vary with the aggregates $(J,A)$, and the baseline normalizes them to
zero, so the net formation cost $k$ is an ex-ante primitive whose distribution $H$ is invariant to $p$:
comparative statics in $p$ vary the frequency of novelty at fixed formation-cost
primitives.\footnote{With a per-novel-episode diagnostic cost $c_N>0$ instead, the formation return
\eqref{eq:invest} would read $p[\Omega_P(A)-c_N]-(1-p)[\gamma+\phi s_f]$, and
$\widehat R_p=\Omega_P(a)-c_N+\gamma+\phi s_f$ would be positive if{}f
$\Omega_P(a)+\gamma+\phi s_f>c_N$. The boundary case $\zeta=0$ corresponds to perfect ex-ante recognition
and eliminates both the familiar-state diagnostic cost and this formation-stage event channel.}

\paragraph{Public recording and information.}
In a familiar episode, a recorded diagnostic probe generates the public event $x_f=1$. Once the task context
becomes recognizable, a formed agent in a novel state chooses whether to depart from the routine. A
consummated departure is registered as the public event $x_a=1$ with source-neutral probability
$\psi\in(0,1]$.

For accounting, formation-stage probes are recorded only in familiar episodes. In a novel episode,
diagnostic activity is not independently attributable as a separate event; only a registered consummated
departure generates $x_a=1$. The audience observes the event and whether its task context is familiar or
novel, but it observes neither the actor's motive nor $a^\ast(\omega)$ and cannot link personal histories
across the two margins. Probing therefore precedes the recognition of novelty, whereas enactment is
conditional on novelty.

\subsection{Bayesian beliefs and a reduced-form sanction schedule}\label{subsec:audsanction}

\paragraph{Authentic and background sources.}
At each margin $m\in\{f,a\}$, underlying independent acts come from two sources. Authentic acts arise at rate
$\alpha_mX_m$, where $\alpha_m>0$ is a visibility normalization and $X_f=J$, $X_a=A$.
A background source of oppositional or attention-seeking conduct generates acts at rate $\nu_m>0$. This
source affects event frequencies and audience beliefs but lies outside the unit continuum of agents over
which private choices, output, and welfare are defined. The exclusion has an economic reading, which is
also why the two sources respond differently to stigma: background acts come from outsiders to the
relational network---actors with no standing or future cooperation at stake with this audience---so the
reputational sanction, which operates entirely through withdrawn cooperation, has little purchase on them.
The audience need not distinguish insiders from outsiders for this to be coherent: it withdraws
cooperation from whoever performed the act, and the withdrawal simply carries no cost for an actor with no
relational capital at stake.
At the formation margin both sources generate
familiar-state probes; at the enactment margin both generate departures conditional on a novel-state
context.

Recording is source-neutral within each margin: formation events are recorded with raw probability $\phi_0$
and enactment events with probability $\psi$. The audience observes the event and its task context but not
its source. Let $o$ denote the event that an observed act comes from the background source. Conditional on
observing $x_m=1$, Bayes' rule gives
\begin{equation}\label{eq:posterior}
 q_m(X_m):=\Pr(o\mid x_m=1;X_m)
 =\frac{\nu_m}{\nu_m+\alpha_mX_m}.
\end{equation}
The posterior measures how suspicious an observed independent act appears. It is high when authentic
conduct is rare and falls as authentic prevalence increases.

Observed event frequencies are $\phi_0(\nu_f+\alpha_fJ)$ at formation and $\psi(\nu_a+\alpha_aA)$ at
enactment. Because recording is source-neutral, $\phi_0$ and $\psi$ cancel
from the posterior. They nevertheless govern an authentic agent's exposure to the corresponding sanction.
The normalization $\alpha_m$ scales aggregate authentic activity; only the ratio $\nu_m/\alpha_m$ matters
for beliefs, and the baseline sets $\alpha_m=1$ by absorbing its units into $\nu_m$.\footnote{If background acts have relative recording probability $\chi>0$, the same formulas apply
after replacing $\nu_m$ by $\chi\nu_m$. A common change in the visibility of both sources changes exposure
but not the posterior.}

Inference is local to each margin, and in equilibrium $X_f=J$ and $X_a=A$ are the prevalences generated by
agents' own choices. Positive $\nu_m$ supplies residual ambiguity and keeps observed events on path. When
$\nu_m=0$, the posterior of a background source is zero for every $X_m>0$, and the reputational-inference
channel disappears away from the boundary.

\paragraph{Vindication and effective exposure.}
Standing is withdrawn at the matching stage, before an act's consequences are realized and attributed to its
author. Let $v_f\in[0,1]$ be the probability that a recorded probe is promptly and individually vindicated,
in which case the resulting standing loss is restored. Define unresolved formation exposure as
\begin{equation}\label{eq:effective-formation-exposure}
 \phi:=(1-v_f)\phi_0.
\end{equation}
Observed probe frequency continues to depend on raw visibility $\phi_0$, whereas formation incentives depend
on unresolved exposure $\phi$.

\paragraph{Matching and reputational sanctions.}
After an independent act, potential partners with heterogeneous withdrawal costs decide whether to retain
or exclude its author; the aggregate standing loss therefore rises with the posterior $q_m(X_m)$ and falls
with authentic prevalence $X_m$.

Let $s_m(X_m;\theta_m)\geq0$ denote the \emph{reputational sanction schedule} at margin $m$: the expected loss
of standing and future cooperation borne by an authentic actor conditional on her independent act being
observed. Like every payoff in the model, $s_m$ is measured in units of a single episode's output, and it is
a capitalized value: the present value of the standing and cooperation forgone after the act, charged when
the act is observed, not a recurring per-period levy. The parameter $\theta_m$ is a generic index of the pressure applied to suspicious conduct. The next
section derives this schedule from decentralized withdrawal decisions. The analysis here requires only the
following monotonicity discipline.

\begin{assumption}[Reputational monotonicity]\label{ass:socialmeaning}
For each margin $m$, $s_m(X_m;\theta_m)$ is continuous and weakly decreasing in $X_m$. On every smooth
sanctioned interior branch used for derivative statements, $s_m>0$, $s_m'<0$, and
$\partial s_m/\partial\theta_m>0$. Flat tolerated or capped regions are allowed; derivative comparative
statics involving $\theta_m$ are then interpreted on the interior branch, while lattice and monotonicity
results use weak monotonicity only.
\end{assumption}

\subsection{The two decision problems}\label{subsec:decisions}

Given the sanction schedules, the individual problems are solved backward: enactment conditional on formed
judgment, followed by formation conditional on enactment's option value.

\paragraph{Enacting judgment.}
In a novel state, following one's judgment produces private benefit $b_N>0$ relative to the routine. The
stake $b_N$ is measured, like every payoff, in units of one episode's output: it is the conviction
stake---what standing by a considered judgment is worth to the agent in the episode at hand. A
consummated departure is registered with probability $\psi$, so the agent acts on her judgment if and only
if $\ell\leq z(A):=b_N-\psi s_a(A;\theta_a)$.
The cutoff is the private value of following a considered view net of the expected standing loss generated by
a registered departure.

Define the option-value function $\Omega(x):=\E[(x-\ell)_+]$,
which is assumed finite for every finite $x$. The conditional enactment rate and the private option value of
formed judgment are $w(A):=G(z(A))$ and
$\Omega_P(A):=\Omega(z(A))=\int_{-\infty}^{z(A)}G(u)\,du$.
Thus $w(A)$ is the share of formed agents who enact, while $\Omega_P(A)$ is the ex-ante value of having the
option to depart after the resolve cost is known. Aggregate enacted judgment must be consistent with the
prevalence that determines its social interpretation:
\begin{equation}\label{eq:AS}
 A=Jw(A).
\end{equation}

\paragraph{Forming judgment.}
On a familiar task, diagnosis forgoes the economy of relying immediately on settled practice. Its expected
real cost is $\gamma$, while a recorded probe exposes an authentic agent to the formation-stage sanction with
unresolved probability $\phi$. The familiar-state frequency $1-p$ scales both channels. An agent forms and
maintains judgment if and only if
\begin{equation}\label{eq:invest}
 e=1\iff k\leq R(J,A;p):=
 \underbrace{p\Omega_P(A)}_{\text{option value of standing by judgment}}
 -\underbrace{(1-p)\bigl[\gamma+\phi s_f(J;\theta_f)\bigr]}_
 {\text{diagnostic and reputational cost}}.
\end{equation}
Consequently, formed judgment satisfies
\begin{equation}\label{eq:fp}
 J=H(R(J,A;p)).
\end{equation}
Equations \eqref{eq:AS} and \eqref{eq:fp} jointly determine the stationary pair $(J,A)$.

The binary formation decision has a maintained-practice interpretation.

\begin{proposition}[Formation as maintained practice]\label{prop:practice}
Suppose each agent chooses in every period whether to maintain diagnostic practice: maintaining means
applying the diagnostic rule on cued episodes, incurring the per-period net cost $k$ and generating the
familiar-state cost $\gamma$ and raw exposure $\phi_0$ of \eqref{eq:gamma-diagnostic}, and the ability to
identify $a^\ast(\omega)$ in a novel episode is available in period $t$ if and only if practice is
maintained in period $t$. If the aggregates $(J,A)$ are stationary and agents are atomistic, the optimal
policy is stationary: maintain if and only if $k\le R(J,A;p)$, as in \eqref{eq:invest}, with the
indifferent type of zero mass. The formation margin is therefore the reduced form of currently maintained
diagnostic readiness.
\end{proposition}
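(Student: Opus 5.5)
The approach is to write the agent's infinite-horizon problem as a dynamic program and show that it separates period by period. Fix stationary aggregates $(J,A)$. Because agents are atomistic, an individual's choices do not move $(J,A)$, the sanction schedules $s_f(J;\theta_f)$ and $s_a(A;\theta_a)$, or the posteriors \eqref{eq:posterior}. These therefore enter the agent's problem as fixed numbers.

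The first step is to compute the flow payoff of maintaining in a given period $t$, taking the period-$t$ decision to enact as given.
\begin{itemize}
\item With probability $1-p$ the episode is familiar. Diagnosis on cued episodes then costs $\gamma=c_D\zeta$ in expectation.
\item In that familiar episode, a recorded probe that is not vindicated carries the capitalized sanction $s_f$. Its expected cost is $\phi s_f$, using \eqref{eq:effective-formation-exposure}.
\item With probability $p$ the episode is novel. By hypothesis, the ability to identify $a^\ast(\omega)$ exists exactly when practice is maintained in $t$.
\item In the novel episode the agent draws $\ell$ and departs if and only if $\ell\le z(A)$. Since $s_a$ is charged as a capitalized value when the act is observed, this cutoff is optimal within the episode, and its ex-ante value is $\Omega_P(A)$.
\end{itemize}
Subtracting the net cost $k$, the per-period gain from maintaining relative to not maintaining is $R(J,A;p)-k$. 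This matches \eqref{eq:invest}.

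The second step, and the main obstacle, is to rule out any dynamic link between periods. I have to check three things from the stated hypotheses:
\begin{itemize}
\item Readiness in period $t$ depends only on maintenance in period $t$, so there is no accumulated capital.
\item Sanctions are capitalized values charged at observation, so they do not change future per-period payoffs inside the model's accounting.
\item The audience cannot link personal histories across margins or periods, so an agent's own past acts do not shift the posterior she faces later.
\end{itemize}
Given these, the continuation value from $t+1$ on is the same whatever the agent chooses at $t$. The Bellman equation then reads $V=\max\{R-k,0\}+\delta V$, with the normalization for the non-maintaining payoff absorbed into a constant. I would also state the assumption that $k$ is fixed per agent, or i.i.d.\ across periods; either works. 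Under this separation the one-shot-deviation principle applies to the bounded discounted problem. So a policy is optimal if and only if it maximizes the flow payoff each period, which gives the stationary rule: maintain iff $k\le R(J,A;p)$.

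Finally, the indifferent types satisfy $k=R(J,A;p)$, a single point. Since $H$ has a continuous density, this set has $H$-measure zero, so tie-breaking does not affect aggregates. The mass of maintainers is then $H(R(J,A;p))$, which is \eqref{eq:fp}.
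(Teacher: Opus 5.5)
Your proposal is correct and takes essentially the paper's approach: at stationary aggregates, maintaining yields the history-independent per-period gain $R(J,A;p)-k$. The optimum is therefore the stationary cutoff rule, and continuity of $H$ gives the indifferent type zero mass. One small fix: write the discount factor in your Bellman equation as $\beta$, since the paper reserves $\delta$ for the revision rate.
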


The proof is in Online Appendix OA.13. The proposition gives $J$ the interpretation of current readiness;
the dynamic extension introduces infrequent revision opportunities and thereby persistence in its aggregate
prevalence.

Let $w^0:=G(b_N)$ denote enactment in the absence of reputational pressure. Formed judgment decomposes as
\begin{equation}\label{eq:decomposition}
 J=
 \underbrace{A}_{\text{enacted judgment}}
 +\underbrace{J[w^0-w(A)]}_{S^R:\ \text{reputationally silenced}}
 +\underbrace{J(1-w^0)}_{N:\ \text{privately not enacted}}.
\end{equation}
Among agents who have formed a judgment, $A$ enact it, $S^R$ are deterred specifically by anticipated
reputational sanctions, and $N$ would refrain even in the absence of an audience. The decomposition therefore
separates social suppression from non-enactment caused by the real resolve cost $\ell$.

\subsection{Output, institutional interpretation, and scope}\label{subsec:outscope}

\paragraph{Output.}
The positive benchmark treats a formed agent's judgment as accurate; a later extension allows accuracy to
depend on enacted judgment. On a familiar task, an agent who relies immediately on the routine produces
$V_F+\gamma$, whereas an agent who maintains diagnostic readiness produces $V_F$. Here $\gamma$ is the same
diagnostic cost that enters the formation decision and is counted once in output. Hence
\[
 Y_F(J)=V_F+\gamma(1-J).
\]
In a novel state, only enacted judgment changes conduct, so
\[
 Y_N^C(A)=AV_N-(1-A)D,
 \qquad
 Y^C(J,A;p)=(1-p)Y_F(J)+pY_N^C(A),
\]
where $V_F,V_N>0$, $D\geq0$, and the superscript $C$ denotes the competent benchmark. These measures are
defined over the unit continuum and exclude the background event-generating source. The rate of adaptive
action is $A$; equation \eqref{eq:decomposition} identifies how much non-enactment is attributable specifically
to reputational pressure.

\paragraph{Institutional interpretation.}
A motivating civic interpretation is a polity's culture of questioning: a probe is a public request to
justify established practice, and enactment is a departure from routine when the routine visibly fails.
The audience is the succession of neighbors, officials, and counterparties encountered in a fluid society;
what matters for anonymous matching is record linkage---current identity visible enough for standing to be
withdrawn, past acts difficult to associate with the same individual. The background source requires a
real institutional counterpart---an actual persistent stratum of oppositional or attention-seeking public
conduct, of the kind a departed regime's residues, factional contention, or notoriety seeking
supply---since the audience is correctly Bayesian and its suspicion must be justified in equilibrium.

The same architecture applies to rotating committees and other organizations when evaluators have limited
access to personal histories; persistent cross-unit differences in whether established practice is
questioned at all, at fixed task and technology, are the cross-sectional pattern this multiplicity predicts
\citep{Edmondson1996}. Stable teams and named committees with persistent individual records lie
closer to a career-concerns environment, in which linked histories provide information absent from the
anonymous-matching benchmark. In the political application, the relevant environment begins after formal
coercion has ended but decentralized inference and social withdrawal remain; centralized, identity-linked
surveillance belongs to a different interaction structure.

\paragraph{Scope: prompt vindication and task recognition.}
The baseline sets $v_f=0$: standing is withdrawn before a probe's consequences are observed; when prompt
vindication is possible, it lowers unresolved exposure $\phi$ in \eqref{eq:effective-formation-exposure}
and weakens the formation-stage channel. An enactment analogue $v_a$ would enter exactly as
$\psi(1-v_a)$; that the formal treatment carries a vindication parameter and a certified bound at
formation (Corollary~\ref{cor:anonymity}) but neither at enactment is motivated, though not forced, by the
timing---a departure's consequences take longest to attribute precisely when the state is novel, so the
learning technology of Online Appendix OA.1 routes outcome signals to successors' shared model rather than
to the audience sanctioning the current act.\footnote{How much slow or noisy outcome-based updating the
three-culture configuration survives is a quantitative question the certificates do not answer and is left
open.}

Recognizing that a task is unsettled does not reveal whether a particular departure reflects judgment or
opposition: the audience observes the context but not the appropriate action or the actor's motive. The
next section derives the sanction schedules from the audience's ex-post matching problem and
closes the stationary equilibrium.

\section{When the conformity trap arises}\label{sec:trap}

This section derives reputational sanctions from decentralized matching, defines cultural equilibrium, and
reduces the two-margin system to a scalar formation map. It then characterizes coexistence through
formation-pressure thresholds and provides a global characterization in a logit--proportional benchmark.

\subsection{The Bayesian audience and endogenous tolerance}\label{subsec:audience}

\paragraph{The audience's decision.}
At each margin $m\in\{f,a\}$, the audience observes the event $x_m=1$, holds the posterior
$q_m(X_m)$ in \eqref{eq:posterior}, and decides how much cooperation to withdraw. Because the posterior
depends on the true prevalence $X_m$, audience sanctions and agents' choices must be jointly consistent.

After observing an independent act, potential partners decide whether to retain or exclude its author.
Excluding an oppositional actor avoids a bad match, whereas excluding an authentic actor sacrifices useful
cooperation. Let $\pi_m>0$ denote the per-partner value of excluding an oppositional actor and
$\Lambda_m>0$ the loss from excluding an authentic actor. These are relational values of cooperation and
standing.
Partner $u\in[0,\bar s_m]$ incurs withdrawal cost $c_mu$, where $c_m>0$, and withdraws if and only if
\[
 q\pi_m-(1-q)\Lambda_m\geq c_mu.
\]
Anonymous matching means that partners observe the current event but cannot condition on a sufficiently
informative personal history.\footnote{As individual records become more informative, authentic and
opportunistic conduct become easier to distinguish, attenuating the posterior channel
\citep{ElyValimaki2003}. Online Appendix OA.8 formalizes this attenuation.}

If a mass $s$ of partners withdraws, integrating their heterogeneous costs gives the aggregate audience
payoff
\begin{equation}\label{eq:audience-payoff}
 U_m(s;q)
 =
 q\pi_m s-(1-q)\Lambda_m s-\frac{c_m}{2}s^2,
 \qquad
 s\in[0,\bar s_m].
\end{equation}
The quadratic term is the aggregate withdrawal cost
$\int_0^s c_mu\,du$. Ex-post optimization yields the exact decentralized sanction rule
\begin{equation}\label{eq:endogenous-tolerance}
 W_m(q):=\argmax_{s\in[0,\bar s_m]}U_m(s;q)
 =
 \left[
 \frac{(\pi_m+\Lambda_m)q-\Lambda_m}{c_m}
 \right]_{0}^{\bar s_m},
\end{equation}
where $[x]_0^{\bar s}:=\min\{\bar s,\max\{0,x\}\}$.
Define the \emph{marginal sensitivity} $\kappa_m:=(\pi_m+\Lambda_m)/c_m$ and the \emph{tolerance
threshold} $\bar q_m:=\Lambda_m/(\pi_m+\Lambda_m)$. The resulting prevalence-dependent sanction is
$s_m^W(X_m)=\bigl[\kappa_m\bigl(q_m(X_m)-\bar q_m\bigr)\bigr]_0^{\bar s_m}$.
The audience tolerates an independent act when $q_m(X_m)\leq\bar q_m$ and withdraws increasingly above that
threshold, up to the cap $\bar s_m$. At the formation margin, we call $\kappa_f$ \emph{formation pressure}
and $\phi\kappa_f$ \emph{effective formation pressure}.

\begin{proposition}[Endogenous tolerance]\label{prop:tolerance}
Define $s_m^W(X_m):=W_m(q_m(X_m))$, and call a region \emph{interior} when $0<s_m^W<\bar s_m$ holds on it.
The exact Bayesian sanction is weakly decreasing in $X_m$ and strictly decreasing on every interior
region, with
\[
 \frac{d s_m^W}{dX_m}
 =
 -\frac{\pi_m+\Lambda_m}{c_m}
 \frac{\nu_m\alpha_m}{(\nu_m+\alpha_mX_m)^2}
 <0.
\]
Hence the exact rule satisfies Assumption~\ref{ass:socialmeaning} on every sanctioned interior branch and
weakly satisfies it globally.

Moreover, for any two prevalences $X_h>X_\ell$, every tolerance threshold
\[
 \bar q_m\in\bigl(q_m(X_h),q_m(X_\ell)\bigr)
\]
induces a strictly positive sanction at $X_\ell$ and zero sanction at $X_h$ under the same audience
primitives.
\end{proposition}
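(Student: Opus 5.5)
The plan is to prove Proposition~\ref{prop:tolerance} by composing two monotone maps. The first is the posterior $X_m\mapsto q_m(X_m)$ from \eqref{eq:posterior}. The second is the clamped affine sanction rule $q\mapsto W_m(q)$ from \eqref{eq:endogenous-tolerance}. The first step checks that \eqref{eq:endogenous-tolerance} really is the argmax. For fixed $q$, $U_m(\cdot;q)$ is strictly concave in $s$ because $c_m>0$. Its unconstrained maximizer solves the first-order condition $q\pi_m-(1-q)\Lambda_m=c_ms$, which gives $s=\kappa_m(q-\bar q_m)$. Since the maximizer of a strictly concave quadratic over an interval is the projection of the unconstrained maximizer, the argmax is $[\kappa_m(q-\bar q_m)]_0^{\bar s_m}$. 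It follows that $W_m$ is continuous and weakly increasing in $q$, and strictly increasing with slope $\kappa_m$ wherever $0<W_m<\bar s_m$.

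The second step handles the posterior. Because $\nu_m,\alpha_m>0$, $q_m$ is continuous on $X_m\ge0$ and strictly decreasing, with $q_m'(X_m)=-\nu_m\alpha_m/(\nu_m+\alpha_mX_m)^2$. Composing the two maps shows that $s_m^W$ is continuous and weakly decreasing everywhere. On any open region where $0<s_m^W<\bar s_m$, the clamp is inactive, so $s_m^W=\kappa_m(q_m-\bar q_m)$ is differentiable there. The chain rule then gives the displayed derivative, which is strictly negative. Matching this to Assumption~\ref{ass:socialmeaning} takes two further observations. First, $\partial s_m^W/\partial\theta_m>0$ on interior branches when $\theta_m$ is read as $\kappa_m$, for instance by scaling $\pi_m+\Lambda_m$ or lowering $c_m$ at fixed $\bar q_m$; this holds because $q_m-\bar q_m>0$ there. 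Second, the flat pieces at $0$ and $\bar s_m$ are exactly the tolerated and capped regions that the assumption permits.

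The third step proves the separation claim. Take $X_h>X_\ell$, so that $q_m(X_h)<q_m(X_\ell)$ by strict monotonicity, which makes the interval nonempty. Choose any $\bar q_m$ strictly inside it. At $X_h$ we have $q_m(X_h)-\bar q_m<0$, so the clamp returns $0$. At $X_\ell$ we have $\kappa_m(q_m(X_\ell)-\bar q_m)>0$, and since $\bar s_m>0$ the clamped value is strictly positive.

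The main obstacle is one of interpretation rather than calculation. To make a given $\bar q_m\in(0,1)$ attainable "under the same audience primitives", I will have to exhibit $(\pi_m,\Lambda_m)$ with $\Lambda_m/(\pi_m+\Lambda_m)=\bar q_m$. Setting $\Lambda_m=\bar q_m$ and $\pi_m=1-\bar q_m$ does this while holding $(\nu_m,\alpha_m,c_m,\bar s_m)$ fixed, and the chosen interval automatically lies in $(0,1)$.
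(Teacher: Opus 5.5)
Your proposal is correct and follows essentially the same route as the paper. Both arguments proceed the same way: the sanction rule is the clamped first-order cutoff; the posterior's strictly negative slope gives the displayed derivative by the chain rule on interior regions; and any threshold strictly between $q_m(X_h)$ and $q_m(X_\ell)$ makes the clamp return zero at $X_h$ and a positive value at $X_\ell$. The paper obtains the rule from the individual partner's cutoff $u\le[(\pi_m+\Lambda_m)q-\Lambda_m]/c_m$ and only notes that the same expression maximizes the concave aggregate payoff, and it takes $\bar q_m$ as given, so your closing step realizing $\bar q_m$ through a choice of $(\pi_m,\Lambda_m)$ is harmless but unnecessary.
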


\begin{remark}[Relational standing]
The audience prices the future relationship, not the present act: $\pi_m$ and $\Lambda_m$ are fixed
primitives of the match. A competence-aware audience would let $\Lambda_a$ rise with the accuracy
$\sigma(A)$ of Section~\ref{subsec:sequencing}, so that excluding an authentic dissenter costs less exactly
where dissent is least competent; Online Appendix OA.6 shows this strengthens the prevalence feedback
rather than weakening it, preserves Assumption~\ref{ass:socialmeaning}, and at full proportional strength
leaves the benchmark's three cultures with a slightly deeper trap and a wider trap basin ($J_i$ rises from
$\Jmid$ to $\compJmid$).
\end{remark}

\paragraph{Smooth sanctions.}
Some arguments require global differentiability. For these, define the softened Bayesian schedule
\begin{equation}\label{eq:sanction}
 s_m^\varepsilon(X_m;\kappa_m,\bar q_m,\bar s_m)
 :=
 \bar s_m
 \left(
 1-\exp\left\{
 -\frac{\kappa_m[q_m(X_m)-\bar q_m]_\varepsilon}{\bar s_m}
 \right\}
 \right),
\end{equation}
where $[x]_\varepsilon:=\varepsilon\log(1+\exp(x/\varepsilon))$.
For finite $\bar s_m$, this is a smooth saturating sanction. In the no-cap limit,
\[
 s_m^\varepsilon(X_m)
 =
 \kappa_m[q_m(X_m)-\bar q_m]_\varepsilon,
\]
which converges as $\varepsilon\downarrow0$ to the uncapped hard-tolerance rule. The softening can be
interpreted as small idiosyncratic heterogeneity in audience tolerance.

\begin{lemma}[Smooth Bayesian sanctions]\label{lem:reputation}
The no-cap limit of \eqref{eq:sanction} satisfies
Assumption~\ref{ass:socialmeaning} globally and has $s_m''>0$. Thus its sanction is strictly decreasing in
authentic prevalence and convex on the physical domain.
\end{lemma}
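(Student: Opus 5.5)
We write $s(X)=\kappa\,\Psi(u(X))$, with $u(X):=q_m(X)-\bar q_m$ and $\Psi(x):=\varepsilon\log(1+e^{x/\varepsilon})$, suppressing the margin index. The plan is a chain-rule computation on this composition. The softplus $\Psi$ has $\Psi>0$, $\Psi'(x)=\sigma(x/\varepsilon)\in(0,1)$ with $\sigma$ the logistic function, and $\Psi''(x)=\varepsilon^{-1}\sigma(x/\varepsilon)\bigl(1-\sigma(x/\varepsilon)\bigr)>0$. The inner map $q(X)=\nu/(\nu+\alpha X)$ has
\[
 q'(X)=-\frac{\nu\alpha}{(\nu+\alpha X)^2}<0,
 \qquad
 q''(X)=\frac{2\nu\alpha^2}{(\nu+\alpha X)^3}>0
\]
on the physical domain $X\ge0$ (indeed for all $X>-\nu/\alpha$).

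The first step verifies Assumption~\ref{ass:socialmeaning}. Continuity, indeed smoothness, is immediate from the composition of smooth maps, and $s>0$ everywhere because $\Psi>0$ and $\kappa>0$. The derivative is $s'(X)=\kappa\Psi'(u)q'(X)<0$ everywhere, so $s$ is strictly decreasing globally and there are no flat regions. For the pressure index, take $\theta=\kappa$: then $\partial s/\partial\kappa=\Psi(u)>0$. If instead the index is read as $-\bar q$ (a lower tolerance threshold meaning more pressure), then $\partial s/\partial(-\bar q)=\kappa\Psi'(u)>0$. Either way the sign condition holds on the whole domain, not just on an interior branch, which is the sense of ``globally.''

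The second step establishes convexity. The chain rule gives
\[
 s''(X)=\kappa\bigl[\Psi''(u)\,q'(X)^2+\Psi'(u)\,q''(X)\bigr].
\]
Both terms are strictly positive: the first because $\Psi''>0$ and $q'\neq0$, the second because $\Psi'>0$ and $q''>0$. Hence $s''>0$. Combined with the first step, this gives the final sentence of the lemma: the sanction is strictly decreasing in authentic prevalence and convex on the physical domain.

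The argument has no real obstacle. The only point needing care is specifying which parameter plays $\theta_m$, and checking that the sign of $q''$ holds on the stated domain; the choice of domain is what restricts the convexity claim to the physical domain.
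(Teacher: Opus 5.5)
Your proposal is correct and matches the paper's proof: the paper likewise differentiates the softplus composition, writing the first-derivative factor as the logistic $\Phi_\varepsilon(q_m-\bar q_m)$, and obtains $s_m'<0$ and $s_m''>0$ from $q_m'<0$ and $q_m''>0$, with positivity and $\partial s_m/\partial\kappa_m>0$ noted as immediate. One cosmetic point: your names $\Psi$ for the softplus and $\sigma$ for the logistic function clash with the paper's reduced formation map and competence function, so rename them if you write this up.
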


Generic results below require only Assumption~\ref{ass:socialmeaning}. Derivative statements use smooth
sanctioned branches. Hard tolerated or capped regions retain weak monotonicity but may replace smooth folds
with border cases.

\paragraph{Cultural equilibrium.}
To distinguish a sanction schedule from its equilibrium realization, write $S_m(X_m;\theta_m)$ for the
schedule and $s_m:=S_m(X_m;\theta_m)$ for its value at the equilibrium prevalence.

\begin{definition}[Cultural equilibrium]\label{def:equilibrium}
Fix sanction schedules $S_m(X_m;\theta_m)$ satisfying
Assumption~\ref{ass:socialmeaning}. A \emph{cultural equilibrium relative to those schedules} is a tuple
$(J,A,q_f,q_a,s_f,s_a)$ satisfying:

\emph{(i) Enactment:} $A=J\,G(b_N-\psi s_a)$,
so a formed agent departs in a novel state if and only if
$\ell\leq b_N-\psi s_a$.

\emph{(ii) Formation:} $J=H(R(J,A;p))$.

\emph{(iii) Bayesian beliefs:}
$q_m$ is the posterior \eqref{eq:posterior} evaluated at the true prevalences
$X_f=J$ and $X_a=A$.

\emph{(iv) Sanction consistency:} $s_m=S_m(X_m;\theta_m)$; in the exact Bayesian matching game
$S_m=W_m\circ q_m$, and under the smooth specification
$S_m=s_m^\varepsilon(\cdot\,;\kappa_m,\bar q_m,\bar s_m)$.
\end{definition}

Under the exact rule, the last two conditions impose correct Bayesian beliefs and an ex-post optimal audience
response; because $\nu_m>0$ keeps every event on path, no off-path belief selection is involved. Under the
smooth rule, they define the corresponding regularized fixed point. This is a stationary
population equilibrium with rational expectations.

\subsection{The equilibrium map and scalar reduction}\label{subsec:map}

Define the simultaneous best-response map $\mathcal T:[0,1]^2\to[0,1]^2$ by
\begin{equation}\label{eq:2dmap}
 \mathcal T(J,A):=
 \begin{pmatrix}
 H(R(J,A;p))\\
 JG(z(A))
 \end{pmatrix}.
\end{equation}
Both coordinates are weakly increasing in both arguments and strictly increasing in the relevant directions
on sanctioned interior branches. Greater formation lowers formation stigma; greater enactment lowers
enactment stigma, increases current enactment, and raises the option value of forming judgment. The two
margins therefore generate connected strategic complementarities
\citep{MilgromRoberts1990,Vives1990,CooperJohn1988,VanZandtVives2007}.

To reduce this two-dimensional system to one scalar equation, impose the following regularity condition on
the enactment subgame:
\begin{equation}\label{eq:enactment-regularity}
 \mathcal C_a:=
 \sup_{A\in(0,1]}
 \frac{A g(z(A))}{G(z(A))}
 \,\psi[-s_a'(A)]<1.
\end{equation}
The condition limits the feedback through which greater enactment lowers stigma and thereby induces still
more enactment. For a hard rule, it is imposed on both one-sided derivatives at a kink.

For every $J>0$, the map $A\mapsto JG(z(A))-A$ is positive at $A=0$ and nonpositive at $A=1$. Condition
\eqref{eq:enactment-regularity} makes $A/G(z(A))$ strictly increasing, so the enactment equation has a unique
solution $A=a(J)$, with $a(0)=0$.
Where differentiable,
\begin{equation}\label{eq:aprime}
 a'(J)
 =
 \frac{w(a(J))}
 {1-Jg(z(a(J)))\psi[-s_a'(a(J))]}
 >0.
\end{equation}
Define the induced formation return $\widehat R(J):=R(J,a(J);p)$ and the reduced formation map
$\Psi(J):=H(\widehat R(J))$.

\begin{proposition}[Two-margin equilibrium and multiplicity]\label{prop:mult}
The equilibrium set of \eqref{eq:2dmap} is nonempty and forms a complete lattice. Hence minimal and maximal
judgment cultures exist.

Under \eqref{eq:enactment-regularity}, two-margin equilibria are in one-to-one correspondence with fixed
points of the increasing scalar map $\Psi$. The equilibrium is unique if $\Psi$ is Lipschitz with modulus
below one; on a smooth branch, $\sup_J\Psi'(J)<1$ is sufficient.

Suppose $\Psi'-1$ has exactly two zeros
$\underline J<\overline J$, is negative outside
$(\underline J,\overline J)$, and positive inside. Then exactly three equilibria exist if and only if
\[
 \Psi(\underline J)-\underline J
 <0<
 \Psi(\overline J)-\overline J.
\]
The two outer equilibria are scalar-stable ($\Psi'<1$) and the middle equilibrium scalar-unstable
($\Psi'>1$).
\end{proposition}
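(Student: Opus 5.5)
The plan has three parts: lattice existence from Tarski, the scalar reduction through the enactment selection $a(J)$, and the count of fixed points from the sign pattern of $\Psi'-1$.

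\emph{Lattice existence.} I will apply Tarski's fixed-point theorem to $\mathcal T$ on the complete lattice $[0,1]^2$ with the product order. The first coordinate $H(R(J,A;p))$ is weakly increasing in $J$ and in $A$. It rises in $J$ because $-\phi s_f(J)$ is weakly increasing by Assumption~\ref{ass:socialmeaning}. It rises in $A$ because $\Omega_P(A)=\Omega(b_N-\psi s_a(A))$, where $\Omega$ is increasing and $s_a$ is weakly decreasing. The second coordinate $JG(z(A))$ is weakly increasing in $J\ge 0$ and in $A$ for the same reason. So $\mathcal T$ is monotone and maps $[0,1]^2$ into itself, and Tarski gives a nonempty complete lattice of fixed points with least and greatest elements. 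Continuity is not needed for this step.

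\emph{Scalar reduction.} Under \eqref{eq:enactment-regularity}, the text before the proposition already shows that $A=a(J)$ is the unique solution of the enactment equation. Hence $(J,A)$ is a fixed point of $\mathcal T$ if and only if $A=a(J)$ and $J=H(R(J,a(J);p))=\Psi(J)$. This gives the bijection $J\mapsto(J,a(J))$. The map $\Psi$ is increasing because $a$ is increasing, by \eqref{eq:aprime} or by monotone comparative statics on the unique crossing, and because $R$ is increasing in both arguments. Uniqueness under a Lipschitz modulus below one follows from the contraction principle on $[0,1]$. On a smooth branch, $\sup\Psi'<1$ makes $\Psi(J)-J$ strictly decreasing, so it has at most one zero. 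I will check that $a$, and hence $\Psi$, is continuous by the implicit-function argument on the unique root, possibly one-sided at kinks.

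\emph{Counting fixed points.} Set $F(J):=\Psi(J)-J$. On $[0,1]$ we have $F(0)=H(\widehat R(0))>0$ because $h>0$ makes $H<1$ and $H>0$ everywhere, and $F(1)=\Psi(1)-1<0$. The hypothesis on $\Psi'-1$ gives the shape of $F$: it is strictly decreasing on $[0,\underline J]$, strictly increasing on $[\underline J,\overline J]$, and strictly decreasing on $[\overline J,1]$.

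For the ``if'' direction, assume $F(\underline J)<0<F(\overline J)$. Then $F$ has exactly one zero in each of $(0,\underline J)$, $(\underline J,\overline J)$ and $(\overline J,1)$ by strict monotonicity and the intermediate value theorem, so there are three zeros in total.

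For ``only if'', I argue by cases. If $F(\underline J)\ge0$, then $F>0$ on $[0,\underline J)$ and $F\ge0$ on $[\underline J,\overline J]$, with at most the single point $\underline J$ as a zero there. On $[\overline J,1]$ there is exactly one zero. That gives at most two zeros. If $F(\overline J)\le0$, the argument is symmetric. Either way, fewer than three.

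Stability follows from the same signs. At the outer zeros $F'<0$, so $\Psi'<1$. At the middle zero $F'>0$, so $\Psi'>1$. I need the derivative to be nonzero at those points, and it is, because the zeros of $\Psi'-1$ are exactly $\underline J,\overline J$ and $F(\underline J),F(\overline J)\ne0$ in this case.

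\emph{Main obstacle.} I expect the hardest part to be rigor with a hard, kinked sanction rule. $\Psi$ may then only be piecewise differentiable, so ``$\Psi'$'' has to be read via one-sided derivatives. I would handle this by replacing derivative signs with strict monotonicity of $F$ on each interval, which survives at kinks because one-sided derivatives have the same sign, and by proving continuity of $a(J)$ directly from uniqueness and compactness.
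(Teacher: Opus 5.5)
Your proposal is correct and follows the paper's proof essentially step for step. It uses Tarski on $[0,1]^2$, the bijection through the unique enactment solution $a(J)$, the contraction principle for uniqueness, and the intermediate-value count with the same necessity case analysis ($F(\underline J)\ge0$ or $F(\overline J)\le0$ leaves at most two zeros). Your remarks that continuity is not needed for Tarski and that $\sup\Psi'<1$ directly makes $\Psi-\mathrm{id}$ strictly decreasing are minor refinements, not a different route.
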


Here \emph{stable} and \emph{unstable} refer only to iteration of the reduced map $\Psi$; they do not describe
forward-looking dynamics.\footnote{Global-games perturbations or aggregate shocks could select among
equilibria \citep{MorrisShin2003,FrankelPauzner2000}. The analysis here characterizes the equilibrium set.}

Prevalence dependence is essential for the conformity feedback. Constant sanctions can lower judgment but
cannot make the social interpretation of independence depend on how common independent conduct is.

\begin{proposition}[Fixed disclosure costs do not create cultural traps]\label{prop:nosocialmeaning}
If $s_f(J)\equiv s_f^0$ and $s_a(A)\equiv s_a^0$ with $s_f^0,s_a^0\geq0$,
the equilibrium is unique: $\bar z=b_N-\psi s_a^0$, $\bar w=G(\bar z)$,
\[
 J^0=
 H\!\left(
 p\Omega(\bar z)
 -(1-p)(\gamma+\phi s_f^0)
 \right),
 \qquad
 A^0=\bar wJ^0.
\]
The reduced formation response is constant in $J$, so fixed disclosure costs can depress judgment but cannot
generate a prevalence-driven fold or an unstable threshold.
\end{proposition}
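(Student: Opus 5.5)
With constant sanctions the two equilibrium conditions decouple, so I will solve them in sequence: first the enactment condition, then the formation condition. Start with Definition~\ref{def:equilibrium}(i). With $s_a(A)\equiv s_a^0$, the enactment cutoff is $z(A)=b_N-\psi s_a^0=:\bar z$, which does not depend on $A$. Every formed agent therefore faces the same threshold rule $\ell\le\bar z$. The conditional enactment rate is the constant $\bar w=G(\bar z)$, and the enactment equation \eqref{eq:AS} becomes $A=\bar w J$. This is linear in $J$, so for each $J$ it has the single solution $a(J)=\bar wJ$. No regularity condition is needed here: the derivative $s_a'$ vanishes, so $\mathcal C_a=0<1$ and \eqref{eq:enactment-regularity} holds trivially. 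The option value is likewise constant, $\Omega_P(A)=\Omega(\bar z)$.

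Next, substitute into the formation return \eqref{eq:invest}. Since $s_f(J)\equiv s_f^0$ and $\Omega_P$ is constant, we get
\[
 R(J,A;p)=p\Omega(\bar z)-(1-p)(\gamma+\phi s_f^0)=:\bar R ,
\]
which depends on neither $J$ nor $A$. The formation condition \eqref{eq:fp} then reads $J=H(\bar R)$. Its right-hand side is a single number in $(0,1)$, because $H$ is a strictly increasing c.d.f.\ on $\mathbb R$. Hence $J^0=H(\bar R)$ is the only formation level consistent with equilibrium. The enactment equation then forces $A^0=\bar wJ^0$. The Bayesian beliefs $q_m$ are pinned down by \eqref{eq:posterior} at these prevalences. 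The sanctions are the given constants $s_f^0$ and $s_a^0$. The full tuple is therefore unique.

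To state the conclusion in the language of Proposition~\ref{prop:mult}, note that $\widehat R(J)=\bar R$, so the reduced map $\Psi(J)=H(\bar R)$ is constant. Thus $\Psi'\equiv0<1$. The sufficient condition of Proposition~\ref{prop:mult} for uniqueness holds, and $\Psi-\mathrm{id}$ is strictly decreasing with a single zero. Because $\Psi'-1\equiv-1$ never vanishes, none of the fold or three-equilibrium configurations in that proposition can occur. In particular there is no fixed point with $\Psi'>1$, so there is no unstable threshold.

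Finally, the lowering claim follows from monotonicity. Raising $s_f^0$ or $s_a^0$ lowers $\bar R$: it lowers $\bar z$, and $\Omega$ is increasing. So it weakly lowers $J^0$, with strict decrease when $p>0$ and $\phi>0$ respectively. The argument is essentially mechanical. The only point needing care is that the reduction to $\Psi$ is legitimate without assuming \eqref{eq:enactment-regularity}. This is settled by observing directly that the enactment equation is linear in $A$.
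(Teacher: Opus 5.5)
Your proposal is correct and follows the paper's proof. The paper argues the same way: the constant cutoff $\bar z$ gives $A=\bar wJ$ directly, the formation return is independent of $(J,A)$, and $J^0$ is the unique fixed point of a constant map. Your extra checks ($\mathcal C_a=0$, $\Psi'\equiv0$, and monotonicity of $J^0$ in $s_f^0,s_a^0$) make explicit what the paper leaves implicit.
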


The proof is in Online Appendix OA.13. When exactly three fixed points exist, denote them by
\[
 J_\ell<J_i<J_h,
 \qquad
 A_j:=a(J_j).
\]
The low fixed point is the conformity trap, the middle point is the threshold culture, and the high fixed
point is the questioning culture. Figure~\ref{fig:cultures} illustrates the reduced response.

\begin{figure}[t]
\centering
\caption{Two-margin judgment at $p=0.10$.}
\label{fig:cultures}
\includegraphics[width=0.62\textwidth]{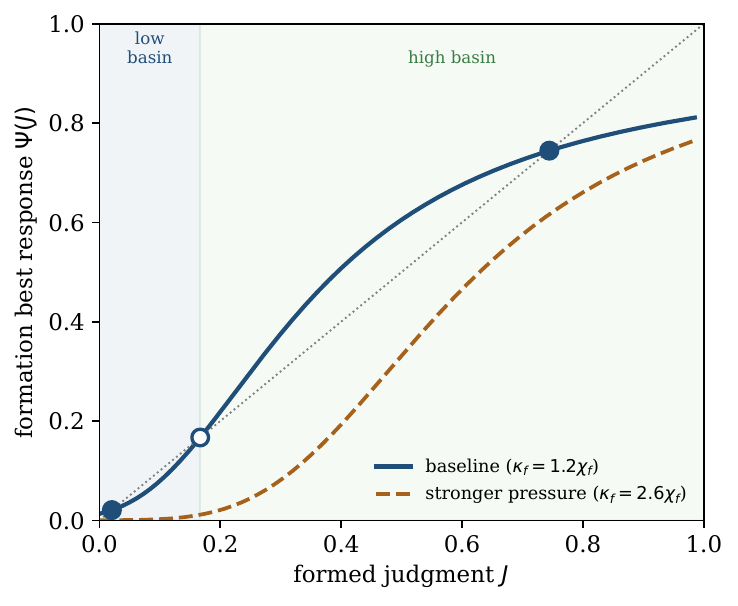}\\[4pt]
\begin{minipage}{0.88\textwidth}
\footnotesize\emph{Notes:}
The reduced formation response $\Psi(J)$, after substituting the enactment solution $A=a(J)$, crosses the
$45^\circ$ line at two scalar-stable cultures (filled) separated by a scalar-unstable threshold (open).
Greater formation pressure shifts the reduced response downward.
\end{minipage}
\end{figure}

\subsection{Pressure thresholds and coexistence}\label{subsec:trap}

Consider a smooth interior formation-sanction branch of the form $s_f(J;\kappa_f)=\kappa_f b_f(J)$ with
$b_f(J)>0$ and $b_f'(J)<0$.
Here $b_f$ is the base sanction schedule. It equals $[q_f]_\varepsilon$ under the zero-threshold no-cap smooth
rule and $q_f-\bar q_f$ on an interior hard-tolerance branch; $\bar q_f=0$ arises as the limit
$\Lambda_f\downarrow0$ of the matching microfoundation, which itself delivers $\bar q_m\in(0,1)$.

Holding enactment primitives fixed, let $a(J)$ be the unique regular enactment solution and define the
formation return at zero formation pressure,
$B(J;p):=p\,\Omega\!\left(b_N-\psi s_a(a(J))\right)-(1-p)\gamma$.
For $\phi>0$ and $J\in(0,1)$, define
\begin{equation}\label{eq:pressure-threshold}
 \mathcal K(J;p):=
 \frac{B(J;p)-H^{-1}(J)}
 {(1-p)\phi b_f(J)}.
\end{equation}
The threshold $\mathcal K(J;p)$ is the formation-pressure level at which $J$ lies exactly on the reduced
formation response. Since $H$ is strictly increasing and the denominator is positive,
\begin{equation}\label{eq:psi-K}
 \Psi(J;\kappa_f)>J
 \quad\Longleftrightarrow\quad
 \kappa_f<\mathcal K(J;p).
\end{equation}
Pressure below the threshold places the best response above the diagonal; pressure above it places the best
response below.

\begin{definition}[Pressure-threshold pair]\label{def:pressure-threshold}
Fix $p\in(0,1)$ and $\phi>0$, and suppose the enactment subgame satisfies
\eqref{eq:enactment-regularity}. Suppose the multiplicative formation branch is valid on an interval
containing $0<J_L<J_H<1$. The pair $(J_L,J_H)$ is a \emph{pressure-threshold pair} at $p$ if
$\mathcal K(J_H;p)>\max\{0,\mathcal K(J_L;p)\}$; its induced positive-pressure interval is
$\mathcal I_{LH}(p):=\bigl(\max\{0,\mathcal K(J_L;p)\},\,\mathcal K(J_H;p)\bigr)$.
\end{definition}

For exact root counts, use the following three-piece threshold-shape condition: $\mathcal K$ is continuously
differentiable and there exist $0<J_-<J_+<1$ such that
$\mathcal K'<0$ on $(0,J_-)$, $\mathcal K'>0$ on $(J_-,J_+)$, and $\mathcal K'<0$ on $(J_+,1)$.
Under this condition, every horizontal positive-pressure line intersects $\mathcal K$ at most three times
(Proposition~\ref{prop:atmost}).

\begin{theorem}[Pressure-threshold sign test for coexistence]\label{thm:trap}
Fix $p\in(0,1)$ and $\phi>0$, and suppose the enactment subgame satisfies
\eqref{eq:enactment-regularity}. If the formation-sanction branch admits a pressure-threshold pair
$(J_L,J_H)$ at $p$ and is valid at $J_L$ and $J_H$ for every pressure in
$\mathcal I_{LH}(p)$---automatic for uncapped rules, since their multiplicative form holds at every
pressure---then every
\[
 \kappa_f\in\mathcal I_{LH}(p)
\]
generates at least three fixed points of the reduced map $\Psi(\cdot;\kappa_f)$.

If, in addition, the multiplicative branch is valid on all of $(0,1)$ at the pressure under
consideration---so that every fixed point lies on it, as for uncapped rules---then under the three-piece
threshold-shape condition there are exactly three fixed points. If the crossings are
transverse, the outer two are scalar-stable and the middle one scalar-unstable. At any
endpoint of a maximal multiplicity interval satisfying
\[
 \Psi(J^\ast;\kappa_f^\ast)=J^\ast,
 \qquad
 \Psi_J(J^\ast;\kappa_f^\ast)=1,
\]
\[
 \Psi_{JJ}(J^\ast;\kappa_f^\ast)\neq0,
 \qquad
 \Psi_{\kappa_f}(J^\ast;\kappa_f^\ast)\neq0,
\]
the entering or exiting pair has the nondegenerate saddle-node form.

Conversely, suppose that at some $\kappa_f>0$, on a smooth multiplicative branch valid on an interval
containing them, there are three transverse
fixed points with the stable--unstable--stable scalar pattern. Then there exists a pressure-threshold pair
$(J_L,J_H)$ such that
\[
 \kappa_f\in\mathcal I_{LH}(p).
\]
Thus a pressure-threshold pair is sufficient for at least three fixed points and necessary for a transverse
stable--unstable--stable configuration. Together with the threshold-shape condition, it characterizes the
exact transverse three-root configuration.
\end{theorem}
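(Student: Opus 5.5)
The whole argument runs through the equivalence \eqref{eq:psi-K}: $\Psi(J;\kappa_f)>J$ if and only if $\kappa_f<\mathcal K(J;p)$. This holds because $\Psi(J)=H(B(J;p)-(1-p)\phi\kappa_f b_f(J))$ on the multiplicative branch, $H$ is strictly increasing, and the denominator $(1-p)\phi b_f(J)$ is positive.

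\emph{Sufficiency.} Fix $\kappa_f\in\mathcal I_{LH}(p)$. Then $\kappa_f>\mathcal K(J_L;p)$ gives $\Psi(J_L)<J_L$, and $\kappa_f<\mathcal K(J_H;p)$ gives $\Psi(J_H)>J_H$. Validity of the branch at $J_L,J_H$ for all pressures in $\mathcal I_{LH}(p)$ is exactly what licenses these two evaluations. Next check the boundaries. Since $\Psi$ maps into $[0,1]$ and is continuous (by continuity of $H$, of $a(\cdot)$ under \eqref{eq:enactment-regularity}, and of the sanction schedules), we have $\Psi(0)-0\ge0$ and $\Psi(1)-1\le0$. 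In fact $\Psi(0)>0$ because $H>0$ everywhere, and $\Psi(1)<1$ because $H<1$. The intermediate value theorem applied to $\Psi(J)-J$ on $[0,J_L]$, $[J_L,J_H]$ and $[J_H,1]$ then gives a fixed point in each of $(0,J_L)$, $(J_L,J_H)$ and $(J_H,1)$, so at least three.

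\emph{Exact count.} When the branch is valid on all of $(0,1)$, every fixed point $J$ satisfies $\kappa_f=\mathcal K(J;p)$. Fixed points are therefore exactly the intersections of the horizontal line at height $\kappa_f$ with the graph of $\mathcal K$, and under the three-piece shape condition Proposition~\ref{prop:atmost} gives at most three. Together with sufficiency there are exactly three.

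\emph{Stability.} Differentiate $\Psi(J)-J$ at a root. Write $\Psi(J)=H(\widehat R)$ with $\widehat R=H^{-1}(J)+(1-p)\phi b_f(J)(\mathcal K(J)-\kappa_f)$. At a root the second term vanishes, so
\[
\Psi'(J)-1=h(H^{-1}(J))\,(1-p)\phi b_f(J)\,\mathcal K'(J).
\]
Hence $\operatorname{sign}(\Psi'-1)=\operatorname{sign}\mathcal K'$ at every fixed point. Transversality means $\mathcal K'\neq0$ there. The shape condition and the fact that $\Psi-J$ changes sign from $+$ to $-$ to $+$ to $-$ then force the pattern $\mathcal K'<0,>0,<0$ across the three roots, so the outer roots are stable and the middle one unstable.

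\emph{Saddle-node.} At an endpoint where $\Psi=J$ and $\Psi_J=1$, the conditions $\Psi_{JJ}\neq0$ and $\Psi_{\kappa_f}\neq0$ are the standard nondegeneracy hypotheses of the one-dimensional fold. Concretely, apply the implicit function theorem to $F(J,\kappa)=\Psi(J;\kappa)-J$ to solve $\kappa=\kappa(J)$ near $J^\ast$. This gives $\kappa'(J^\ast)=0$ and $\kappa''(J^\ast)=-F_{JJ}/F_\kappa\neq0$, so locally two roots exist on one side of $\kappa^\ast$ and none on the other.

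\emph{Necessity.} Take three transverse fixed points $J_1<J_2<J_3$ with the pattern stable, unstable, stable, all with $\kappa_f=\mathcal K(J_i)$. By the derivative identity above, $\mathcal K'(J_2)>0$. So there is a small $\epsilon>0$ with $J_L:=J_2-\epsilon$ and $J_H:=J_2+\epsilon$ in the branch interval, satisfying $\mathcal K(J_L)<\kappa_f<\mathcal K(J_H)$. Since $\kappa_f>0$, we get $\mathcal K(J_H)>\max\{0,\mathcal K(J_L)\}$ and $\kappa_f\in\mathcal I_{LH}(p)$.

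The main obstacle is bookkeeping rather than substance. One point is confirming the strict boundary inequalities for $\Psi$ together with continuity of $a(J)$ at $J=0$. The other is keeping the shape condition aligned with the sign alternation, so that the middle root falls on the increasing piece of $\mathcal K$.
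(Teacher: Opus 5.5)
Your proposal is correct and follows essentially the paper's proof. It uses the equivalence \eqref{eq:psi-K} with full-support boundary signs and the intermediate value theorem for sufficiency, the threshold-shape condition for the exact count, the $+,-,+,-$ sign alternation for stability (your identity $\Psi'-1=h(H^{-1}(J))(1-p)\phi\, b_f(J)\,\mathcal K'(J)$ at a root is the same sign link the paper uses in its logit proof), and the fold normal form of Lemma~\ref{lem:regular-fold}. The only variation is in the converse: you perturb locally around the unstable root $J_2$ instead of picking $J_L\in(J_1,J_2)$ and $J_H\in(J_2,J_3)$ as the paper does, which incidentally shows that a single transverse unstable fixed point at positive pressure already yields a pressure-threshold pair.
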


The proof in Appendix~\ref{app:logit-geometry} applies the intermediate value theorem to the signs delivered by
\eqref{eq:psi-K}; full support of $H$ supplies the boundary signs
$\Psi(0)>0$ and $\Psi(1)<1$. The local saddle-node statement follows from the normal-form argument in
Lemma~\ref{lem:regular-fold}.

\begin{corollary}[Private or vindicated formation closes the formation-stigma channel]
\label{cor:anonymity}
On a multiplicative formation-sanction branch, raw visibility $\phi_0$, prompt vindication $v_f$, and
formation sensitivity $\kappa_f$ enter formation incentives only through
$\phi\kappa_f=(1-v_f)\phi_0\kappa_f$, because $\phi s_f(J)=\phi\kappa_fb_f(J)$.
For fixed $\phi$, $\mathcal I_{LH}(p)$ is an interval in $\kappa_f$; the corresponding interval in effective
formation pressure is $\phi\mathcal I_{LH}(p)$.
A lower recording probability $\bar\phi$ at fixed diagnostic frequency $\zeta$---which lowers $\phi_0$
without changing the real diagnostic cost $\gamma$---greater prompt vindication, and lower formation
sensitivity are therefore equivalent for stationary formation incentives. As
$v_f\uparrow1$ or $\phi_0\downarrow0$, effective formation pressure vanishes.\footnote{At the worked
benchmark, $\phi_0=\phizero$ and $\kappa_f/\chi_f=1.2$, where $\chi_f$ is the normalization scale used to
report formation pressure; the certified lower pressure fold is $\mathcal K(J_-)/\chi_f\approx\Kminchi$.
Holding the other primitives fixed, the three-root conclusion survives for
$0\le v_f<1-\Kminchi/1.2\approx\vfbound$, with the low and middle roots meeting at the nondegenerate fold
at the endpoint.}
\end{corollary}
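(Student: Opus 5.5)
The plan is to read everything off the formation return \eqref{eq:invest}. On a multiplicative branch $s_f(J;\kappa_f)=\kappa_f b_f(J)$, the only place where $\phi_0$, $v_f$ and $\kappa_f$ enter is the term $(1-p)\phi s_f(J)=(1-p)\phi\kappa_f b_f(J)$. By \eqref{eq:effective-formation-exposure}, $\phi=(1-v_f)\phi_0$. The enactment solution $a(J)$ from \eqref{eq:enactment-regularity} depends only on enactment primitives ($b_N,\psi,s_a,G$), and none of the three parameters appears there. So $\widehat R(J)$, and hence $\Psi(J)=H(\widehat R(J))$, depends on $(\phi_0,v_f,\kappa_f)$ only through the product $\phi\kappa_f$. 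This gives the first claim, and it gives equivalence for stationary incentives: any two parameter vectors with the same $(1-v_f)\phi_0\kappa_f$ have the same reduced map and so the same fixed points.

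For the interval statement, I fix $\phi>0$ and invoke Definition~\ref{def:pressure-threshold}. $\mathcal I_{LH}(p)=(\max\{0,\mathcal K(J_L;p)\},\mathcal K(J_H;p))$ is an interval in $\kappa_f$ by construction. Next I track how $\mathcal K$ scales with $\phi$. From \eqref{eq:pressure-threshold}, $\phi\,\mathcal K(J;p)=(B(J;p)-H^{-1}(J))/((1-p)b_f(J))$, and the right-hand side is independent of $\phi$ because $B$ involves only enactment primitives and $\gamma$. So the effective-pressure threshold is $\phi\mathcal K$. Multiplying the endpoints by $\phi>0$ preserves order and positivity, so $\kappa_f\in\mathcal I_{LH}(p)$ if and only if $\phi\kappa_f\in\phi\mathcal I_{LH}(p)$. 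The sign test of Theorem~\ref{thm:trap} therefore transfers verbatim to effective pressure.

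For the $\bar\phi$ comparison, I use \eqref{eq:gamma-diagnostic}: $\gamma=c_D\zeta$ does not involve $\bar\phi$, while $\phi_0=\zeta\bar\phi$ is linear in $\bar\phi$. Lowering $\bar\phi$ at fixed $\zeta$ therefore lowers $\phi\kappa_f$ and leaves $B$ and $\gamma$ untouched. The same holds for raising $v_f$ and for lowering $\kappa_f$, so all three move the single sufficient statistic $\phi\kappa_f$ in the same direction. Finally, as $v_f\uparrow1$ or $\phi_0\downarrow0$, $\phi\kappa_f=(1-v_f)\phi_0\kappa_f\to0$ at fixed $\kappa_f$.

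No step here is a real obstacle; the main care point is bookkeeping. I must confirm that $\phi$ appears nowhere else in the stationary system. In particular, raw $\phi_0$ governs observed probe frequencies, but the posterior \eqref{eq:posterior} is independent of the recording probabilities, so $\phi_0$ does not enter $q_f$ or $b_f$. The footnote's numerical bound follows by applying the same reasoning to the certified benchmark. The three-root configuration persists while $(1-v_f)\kappa_f/\chi_f$ exceeds the lower fold $\mathcal K(J_-)/\chi_f$, which gives $v_f<1-\Kminchi/1.2$. The fold at the endpoint is the saddle-node case of Theorem~\ref{thm:trap}.
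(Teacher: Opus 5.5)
Your proposal is correct and follows the paper's own reasoning. The paper justifies the corollary only by the substitution $\phi s_f(J)=\phi\kappa_f b_f(J)$ in \eqref{eq:invest}, together with the $1/\phi$ scaling of $\mathcal K$ in \eqref{eq:pressure-threshold} that underlies the footnote, and that is exactly what you spell out. For the footnote, you could state explicitly that dividing $\mathcal K$ by $1-v_f$ preserves the certified $(-,+,-)$ shape and the nonzero curvature at $J_-$, so the root count is exact and the endpoint fold nondegenerate, and that the upper fold never binds because $(1-v_f)\cdot 1.2\le 1.2<\Kmaxchi$.
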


When $\phi=0$, the formation-stigma channel disappears, but enactment stigma may still generate
prevalence feedback through
$\Psi_0(J)=H\!\left(p\,\Omega_P(a(J))-(1-p)\gamma\right)$.
Multiplicity can therefore survive private or fully vindicated formation. Uniqueness additionally follows
if $\sup_J\Psi_0'(J)<1$.

Institutionally, a mandatory pre-mortem makes probing universal and hence uninformative about motive, the
logic of Proposition~\ref{prop:nosocialmeaning}; anonymous petition and review channels lower raw probe
visibility $\phi_0$, the margin of Corollary~\ref{cor:anonymity}; and the secret ballot or legal protection
of dissent shields the enactment act itself.

In the fully coupled model, sufficiently concentrated location--scale formation costs whose zero-pressure
return clears the median cost at some prevalence deliver a pressure-threshold pair even with an active
enactment channel (Theorem~\ref{thm:coupled-coexistence}). The benchmark below instead makes the threshold
geometry globally explicit.

\subsection{A sharp characterization in the logit--proportional benchmark}\label{subsec:canonical}

With logit formation costs, a proportional uncapped formation sanction, and a prevalence-independent
zero-pressure return, the threshold geometry admits a closed-form characterization. A single primitive
inequality determines whether the formation-pressure interval supporting three cultures is nonempty.

\begin{theorem}[Global threshold characterization in the logit--proportional benchmark]
\label{thm:analytic-threshold}
Fix $p\in(0,1)$ and $\phi>0$. Suppose the formation return at zero formation pressure is constant in
prevalence, $B(J;p)\equiv B_0(p)$---as with a fixed enactment sanction $s_a^0$, for which
$B_0(p)=p\Omega(b_N-\psi s_a^0)-(1-p)\gamma$. Suppose formation costs are logit,
$H^{-1}(J)=\mu_k+\varsigma\log\bigl(J/(1-J)\bigr)$ with $\varsigma>0$, and the formation audience uses the
globally interior, uncapped proportional rule
\[
 s_f(J;\kappa_f)=\kappa_fb_f(J),
 \qquad
 b_f(J)=q_f(J)-\bar q_f,
 \qquad
 q_f(J)=\frac{\nu_f}{\nu_f+\alpha_fJ},
\]
with $0\leq\bar q_f<q_f(1)$. Let $r:=\nu_f/\alpha_f$ and $c_0(p):=(B_0(p)-\mu_k)/\varsigma$, and define
$m(r,\bar q_f):=\log\bigl(r/(1+r)\bigr)+2\bigl[1+2r-2\bar q_f(1+r)\bigr]$.

The primitive inequality
\begin{equation}\label{eq:global-logit-inequality}
 c_0(p)>m(r,\bar q_f)
\end{equation}
is necessary and sufficient for $\mathcal K(\cdot;p)$ to have exactly two stationary points. Under this
inequality, there are unique $0<J_-<J^\circ<J_+<1$, with $J^\circ:=r/(1+2r)$, such that
$\mathcal K'<0$ on $(0,J_-)$, $\mathcal K'>0$ on $(J_-,J_+)$, and $\mathcal K'<0$ on $(J_+,1)$.
The point $J_-$ is a strict local minimum, $J_+$ is a strict local maximum, and
$0<\mathcal K(J_-;p)<\mathcal K(J_+;p)$.

Consequently, for every $\kappa_f>0$, the reduced equilibrium equation has exactly three interior fixed
points if and only if
\begin{equation}\label{eq:canonical-pressure-window}
 \mathcal K(J_-;p)
 <
 \kappa_f
 <
 \mathcal K(J_+;p).
\end{equation}
The outer two are scalar-stable and the middle one scalar-unstable. Outside the closed
interval in \eqref{eq:canonical-pressure-window}, there is exactly one interior fixed point. At either endpoint
there are two distinct fixed points, one of which is a nondegenerate saddle-node. Hence
\eqref{eq:global-logit-inequality} is necessary and sufficient, within this benchmark, for a nonempty
multiplicity interval in formation pressure.

If $c_0(p)<m(r,\bar q_f)$,
$\mathcal K$ is strictly decreasing. If equality holds, it remains strictly decreasing but has an isolated
horizontal tangent at $J^\circ$. In either case every positive formation pressure supports a unique interior
fixed point.
\end{theorem}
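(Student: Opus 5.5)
The plan is to reduce the shape of $\mathcal K$ to a comparison between the constant $c_0(p)$ and an explicit auxiliary function $F$ of $J$ built from primitives. After that, the fixed-point count follows from \eqref{eq:psi-K} and the part of Theorem~\ref{thm:trap} that uses the three-piece shape condition.

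\emph{Step 1 (sign of $\mathcal K'$).} Write $\mathcal K(J;p)=\varsigma N(J)/[(1-p)\phi\, b_f(J)]$ with
\[
N(J):=c_0(p)-\log\tfrac{J}{1-J},\qquad b_f(J)=\tfrac{r}{r+J}-\bar q_f .
\]
We have $b_f>0$ on $[0,1]$ because $\bar q_f<q_f(1)=r/(1+r)$. Then $\operatorname{sign}\mathcal K'=\operatorname{sign}(N'b_f-Nb_f')$. Using $N'=-1/[J(1-J)]$ and $b_f'=-r/(r+J)^2$, and multiplying by the positive factor $(r+J)^2/r$, this becomes $\operatorname{sign}\bigl(c_0(p)-F(J)\bigr)$, where
\[
F(J):=\log\tfrac{J}{1-J}+\frac{(r+J)\bigl(r(1-\bar q_f)-\bar q_fJ\bigr)}{rJ(1-J)} .
\]
So $\mathcal K'>0$ if and only if $c_0(p)>F(J)$. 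Moreover, every stationary point of $\mathcal K$ is a solution of $F(J)=c_0(p)$.

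\emph{Step 2 (shape of $F$; expected main obstacle).} Both boundary limits are immediate: $F\to+\infty$ as $J\downarrow0$ and as $J\uparrow1$. The second limit uses $r-\bar q_f(1+r)>0$. A direct substitution at $J^\circ=r/(1+2r)$ should give
\[
J^\circ(1-J^\circ)=\tfrac{r(1+r)}{(1+2r)^2},\qquad r+J^\circ=\tfrac{2r(1+r)}{1+2r},\qquad r(1-\bar q_f)-\bar q_fJ^\circ=\tfrac{r[1+2r-2\bar q_f(1+r)]}{1+2r}.
\]
Hence $F(J^\circ)=m(r,\bar q_f)$.

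The real work is to show that $F$ is strictly decreasing on $(0,J^\circ)$ and strictly increasing on $(J^\circ,1)$. I would first compute $F'$ over the common denominator $rJ^2(1-J)^2$. I would then try to factor the numerator as $(J-J^\circ)$ times an expression that is positive on $(0,1)$. The logit term contributes $rJ(1-J)$ to that numerator, and the rational term contributes a quadratic in $J$, so the numerator is a polynomial of low degree. If a clean factorization fails, the fallback is to show directly that this numerator changes sign only once on $(0,1)$, for instance by checking its signs at $0$, at $J^\circ$ and at $1$ and using convexity.

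Granting Step 2, the three cases follow.
\begin{itemize}
\item If $c_0(p)>m$, then $F=c_0$ has exactly two roots $J_-<J^\circ<J_+$, both with $F'\neq0$. This gives the sign pattern $-,+,-$ for $\mathcal K'$, makes $J_-$ a strict local minimum and $J_+$ a strict local maximum, and gives $\mathcal K''\neq0$ at both points.
\item If $c_0(p)<m$, then $F>c_0$ everywhere, so $\mathcal K'<0$ throughout.
\item If $c_0(p)=m$, then $\mathcal K'\le0$ with equality only at $J^\circ$, which is an isolated horizontal tangent.
\end{itemize}

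\emph{Step 3 (positivity and ordering).} At a stationary point, $c_0(p)=F(J)$, so
\[
N(J)=\frac{(r+J)\bigl(r(1-\bar q_f)-\bar q_fJ\bigr)}{rJ(1-J)}>0 .
\]
Hence $\mathcal K(J_\pm)>0$. The inequality $\mathcal K(J_-)<\mathcal K(J_+)$ follows because $\mathcal K$ is strictly increasing on $(J_-,J_+)$.

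\emph{Step 4 (fixed points).} Fix $\kappa_f>0$. By \eqref{eq:psi-K}, interior fixed points are exactly the solutions of $\mathcal K(J)=\kappa_f$. The limits $\mathcal K\to+\infty$ as $J\downarrow0$ and $\mathcal K\to-\infty$ as $J\uparrow1$ (sign of $N$, with $b_f$ bounded away from $0$) combine with monotonicity on the three pieces. Together they give:
\begin{itemize}
\item exactly one crossing on each piece when $\mathcal K(J_-)<\kappa_f<\mathcal K(J_+)$;
\item exactly one crossing outside the closed interval;
\item two crossings at either endpoint, one of which is tangential.
\end{itemize}
Stability comes from \eqref{eq:psi-K}. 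At a crossing where $\mathcal K$ is strictly decreasing, $\Psi-J$ changes sign from $+$ to $-$, so $\Psi'<1$. I would verify strictness via $\Psi'-1\propto -\mathcal K'$ at a root, obtained by differentiating $H^{-1}(\Psi)$ along the identity. By the same computation the middle root, where $\mathcal K$ increases, has $\Psi'>1$. At the endpoints, $\Psi_{JJ}\neq0$ follows from $\mathcal K''(J_\pm)\neq0$, and $\Psi_{\kappa_f}=-h\,(1-p)\phi\, b_f<0$. The saddle-node form then follows from Lemma~\ref{lem:regular-fold}, as in Theorem~\ref{thm:trap}.
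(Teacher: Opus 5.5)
Your proposal follows the paper's proof essentially step for step (your $F$ is the paper's $f=L+L'/\eta$, with $L(J)=\log\frac{J}{1-J}$ and $\eta=-b_f'/b_f$), and the factorization you anticipate in Step~2 does hold: $F'(J)=\frac{[r(1-\bar q_f)-\bar q_fJ]\,[(1+2r)J-r]}{rJ^2(1-J)^2}$, where the first factor is positive on $[0,1]$ because $\bar q_f<r/(1+r)$, so no fallback is needed. One sign slip needs fixing: at a root, $\Psi'(J)-1=h\bigl(H^{-1}(J)\bigr)(1-p)\phi\,b_f(J)\,\mathcal K'(J)$, which is proportional to $+\mathcal K'$ rather than $-\mathcal K'$, and this is exactly what your (correct) stability conclusions require.
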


The proof is in Appendix~\ref{app:logit-geometry}. The constant-return restriction delivers the global
equivalence. The two-turning-point geometry, the positive pressure interval, the scalar-stability signs, and
the two nondegenerate saddle-nodes persist under sufficiently small $C^2$ enactment feedback
(Corollary~\ref{cor:logit-feedback}). The fully coupled result
Theorem~\ref{thm:coupled-coexistence} instead provides a primitive sufficient condition without requiring the
return $B(J;p)$ to be constant.

The parameters have distinct roles: $\kappa_f$ is the bifurcation parameter, novelty affects the
zero-pressure return and the interval's endpoints, and audience tolerance reshapes the pressure window.
Multiplicity is generated by the dependence of Bayesian sanctions on authentic prevalence.

\subsection{Comparative statics and scope}\label{subsec:scope}

\begin{proposition}[Pressure at formation and enactment]\label{prop:capacity}
Under the enactment regularity condition \eqref{eq:enactment-regularity}, $a_J>0$.
Away from kinks, on any smooth branch,
$a_{\theta_a}\leq0$, $\widehat R_J\geq0$, $\widehat R_{\theta_f}\leq0$, $\widehat R_{\theta_a}\leq0$,
and
\[
 \widehat R_p
 =
 \Omega_P(a)+\gamma+\phi s_f(J)
 >0.
\]
Novelty raises the frequency of the option value of judgment and simultaneously reduces the frequency of
familiar-state diagnostic and reputational costs.

At every regular scalar-stable equilibrium and for
$\theta\in\{\theta_f,\theta_a,p\}$,
\[
 \frac{dJ_j}{d\theta}
 =
 \frac{h(\widehat R_j)\widehat R_\theta}
 {1-\Psi'(J_j)}.
\]
Thus, locally along any regular scalar-stable equilibrium branch, greater pressure at either margin weakly
lowers formed and enacted judgment, strictly when
the relevant effective sanction responds, and greater novelty strictly raises both. Holding $J$ fixed,
enactment pressure weakly raises $S^R$, strictly on a strictly sanctioned smooth enactment branch.
\end{proposition}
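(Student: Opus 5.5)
The plan is to prove Proposition~\ref{prop:capacity} by implicit differentiation, first of the enactment equation and then of the reduced formation map. Everything follows from three facts already in hand: the uniqueness of $a(J)$ under \eqref{eq:enactment-regularity}, the monotonicity of $s_m$ in Assumption~\ref{ass:socialmeaning}, and the identity $J=\Psi(J)$ at an equilibrium.

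\textbf{Step 1: the enactment solution.} Write the enactment equation as $F(A,J,\theta_a):=A-JG(b_N-\psi s_a(A;\theta_a))=0$. Its $A$-derivative is $1-Jg(z)\psi[-s_a'(A)]$. Using $J=A/G(z(A))$ at a solution, this equals $1-\frac{Ag(z)}{G(z)}\psi[-s_a']\ge 1-\mathcal C_a>0$. The implicit function theorem then gives $a_J>0$, which is \eqref{eq:aprime}. It also gives
\[
a_{\theta_a}=-\frac{Jg(z)\,\psi\,\partial s_a/\partial\theta_a}{1-Jg(z)\psi[-s_a']}\le 0,
\]
with strict inequality on a strictly sanctioned branch.

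\textbf{Step 2: the induced return.} Differentiate $\widehat R(J)=p\,\Omega(z(a(J)))-(1-p)[\gamma+\phi s_f(J;\theta_f)]$, using $\Omega'(x)=G(x)$. This gives
\[
\widehat R_J=-pG(z)\psi s_a'(a)\,a_J-(1-p)\phi s_f'(J).
\]
Both terms are $\ge0$ because $s_a',s_f'\le0$ and $a_J>0$. Next, $\widehat R_{\theta_f}=-(1-p)\phi\,\partial s_f/\partial\theta_f\le0$. For $\theta_a$, note that $z$ depends on $\theta_a$ directly and through $a$, so
\[
\widehat R_{\theta_a}=pG(z)\cdot\frac{dz}{d\theta_a}.
\]
From Step 1, $dz/d\theta_a=-\psi\,\partial s_a/\partial\theta_a/(1-Jg\psi[-s_a'])\le0$. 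This is the one place that needs care: the indirect effect through $a$ partially offsets the direct one, and the combined sign has to be shown, which the factor $1-Jg\psi[-s_a']>0$ does. Finally, $a$ does not depend on $p$, so $\widehat R_p=\Omega_P(a)+\gamma+\phi s_f(J)$, which is strictly positive since $\Omega>0$ and $\gamma>0$.

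\textbf{Step 3: equilibrium comparative statics.} Apply the implicit function theorem to $J-H(\widehat R(J;\theta))=0$. At a regular scalar-stable equilibrium we have $1-\Psi'(J_j)>0$, so
\[
\frac{dJ_j}{d\theta}=\frac{h(\widehat R_j)\,\widehat R_\theta}{1-\Psi'(J_j)}.
\]
The sign of $dJ_j/d\theta$ is therefore the sign of $\widehat R_\theta$, and $h>0$ makes it strict whenever $\widehat R_\theta\ne0$. For enacted judgment, $dA_j/d\theta=a_J\,dJ_j/d\theta+a_\theta$. For $\theta_a$ both terms are $\le0$. For $\theta_f$ and $p$ only the first term is present, and $a_J>0$ makes it strict for $p$.

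\textbf{Step 4: reputational silencing.} Hold $J$ fixed and write $S^R=J[w^0-G(z(a(J)))]$. Then $\partial S^R/\partial\theta_a=-Jg(z)\,dz/d\theta_a\ge0$, strictly when $\partial s_a/\partial\theta_a>0$, using the expression for $dz/d\theta_a$ from Step 2.

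Apart from the combined sign of $dz/d\theta_a$ in Step 2, every step is a routine sign check.
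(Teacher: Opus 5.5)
Your proposal is correct and follows essentially the same route as the paper: implicit differentiation of the enactment equation under \eqref{eq:enactment-regularity}, then of $\widehat R$ and of $J=H(\widehat R(J;\theta))$, and finally $\partial S^R/\partial\theta_a|_J=-a_{\theta_a}$. One correction to your commentary on $\widehat R_{\theta_a}$: the indirect effect through $a$ reinforces the direct one rather than offsetting it. Since $s_a'\le0$ and $a_{\theta_a}\le0$, the term $-\psi s_a'(a)\,a_{\theta_a}$ is itself nonpositive, and your factor $1/(1-Jg(z)\psi[-s_a'(a)])\ge1$ is an amplifier; the paper simply signs the two terms separately, so no delicate step is needed there.
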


On strictly sanctioned smooth branches the weak inequalities above are strict ($\widehat R_{\theta_f}$
additionally requires $\phi>0$), and $\widehat R_J>0$ whenever either margin responds strictly; a capped or
tolerated branch may carry a positive but locally unresponsive sanction.

\paragraph{Audience memory.}
Personal records delimit the anonymous channel. As partners observe longer histories, authentic and
opportunistic conduct become easier to distinguish and expected formation stigma declines. The mechanism is
therefore strongest when counterparties turn over faster than informative personal records accumulate.
Online Appendix OA.8 provides the corresponding bound.

\paragraph{Private maintenance.}
Because the formation sanction prices the visible probe, the natural deviation is to keep the practice and
hide it \citep{Prat2005}. The baseline treats diagnosis as interactive by technology: a probe requests a
justification held by others, and the knowledge needed to test a premise is dispersed \citep{Hayek1945}, so
a private substitute exists only at a cost premium. Online Appendix OA.8 makes the menu explicit. Whenever
the premium exceeds the largest exposure the formation sanction can generate, $\phi\kappa_f(1-\bar q_f)$,
the visible mode is strictly dominant at every prevalence and every equilibrium survives the menu
unchanged. At the benchmark this bound is $\invisUnifxg$ times the diagnostic cost $\gamma$; the certified
configuration survives numerically down to premiums of $\invisCapxg\gamma$, and free invisibility---the
limit $\phi=0$ of Corollary~\ref{cor:anonymity}---leaves a unique questioning culture at $J=\phiZeroJ$. The
trap thus requires that hiding the practice forfeit the information others hold.

\paragraph{Responsive background activity.}
The background source may itself shrink when stigma rises. Let opportunists differ in expressive benefit
$v\sim F_v$ on $[0,\infty)$ and act when that benefit covers the expected sanction, so the active formation
pool is $\nu_f(s)=\bar\nu_f[1-F_v(\phi s)]$, with an analogous construction at enactment. Writing
$\mathcal K_\nu$ for the pressure-threshold function at formation pool $\nu$ (the enactment pool held
fixed), the induced equilibrium map lies between the fixed-pool maps generated by the largest pool
$\bar\nu_f$ and the smallest undeterred mass
$\underline\nu_f:=\bar\nu_f[1-F_v(\phi s_f^{\max})]$, $s_f^{\max}:=\kappa_f(1-\bar q_f)$, so at least
three fixed points survive whenever
$\mathcal K_{\bar\nu_f}(J_H;p)>\kappa_f>\mathcal K_{\underline\nu_f}(J_L;p)$: a positive undeterred mass
and overlapping threshold bounds suffice for the prevalence feedback to survive responsive opposition.
Online Appendix OA.8 gives the sandwich argument, and Proposition~\ref{prop:responsive} establishes
persistence under sufficiently small smooth responsiveness.

Together, the results isolate the ingredients of coexistence through the formation-stigma route:
residual ambiguity, a sanction
that falls with authentic prevalence, positive effective formation exposure, a single-valued enactment
response, and an ordered pair of formation-pressure thresholds; at $\phi=0$, coexistence can instead
survive through the enactment channel alone. The three-piece threshold shape pins down
exactly three cultures. The next section makes formed judgment an inherited stock and studies forward-looking
paths from each initial condition.
\section{Forward-looking cultures: history, expectations, and policy duration}\label{sec:dynamics}

From one inherited stock, more than one rational-expectations future can be consistent with equilibrium.
This section constructs the extremal perfect-foresight paths, classifies inherited stocks into up to three
convergence regions separated by two cutoffs, and converts those cutoffs into duration bounds for temporary
policy. Throughout, the enactment subgame has a unique continuous nondecreasing solution $a(J)$ (condition
\eqref{eq:enactment-regularity} suffices on smooth branches), and $\widehat R(J):=R(J,a(J);p)$ and
$\Psi(J):=H(\widehat R(J))$ are continuous and nondecreasing.

\subsection{The dynamic stock and intervention timing}\label{subsec:dynstock}

The aggregate maintenance status is predetermined within a date, but the agents who revise it need not be
myopic. Current flow at date $t$ is realized against inherited $J_t$; at the end of the date, a fraction
$\delta\in(0,1)$ receives an opportunity to
choose its formation status for date $t+1$, independently of current status and cost. A revision remains in force
until the next opportunity. Agents discount at $\beta\in(0,1)$ and are atomistic. Interpret $k$ as the
per-period-equivalent net cost of maintaining formed judgment during the ensuing revision
spell.\footnote{Because its costs and returns begin when the new status becomes effective, their common
one-period discount cancels from the revision cutoff; this convention preserves the static $H$ rather than
mechanically rescaling it.} Let
$d:=1-\delta$ and $\lambda:=\beta(1-\delta)=\beta d\in(0,d)$, so $d$ is the per-period survival of an
existing maintenance status.\footnote{The audience primitives $\pi_m$, $\Lambda_m$, $c_m$ of
Section~\ref{subsec:audience} are separate objects and always carry a margin subscript.}
Given an anticipated aggregate path $\mathbf J=(J_t)_{t\ge0}$, the normalized value of formed judgment at date
$t$ is
\begin{equation}\label{eq:forward-value}
 V_t(\mathbf J):=(1-\lambda)\sum_{n=0}^{\infty}\lambda^n\widehat R(J_{t+n}),
 \qquad V_t=(1-\lambda)\widehat R(J_t)+\lambda V_{t+1},
\end{equation}
where the unique contemporaneous enactment solution is $A_t=a(J_t)$. The cohort revising at $t$ chooses a status
effective at $t+1$ and therefore forms iff $k\le V_{t+1}$,\footnote{Explicitly: a future revision resets
the status independently of the current one, so continuation payoffs beyond the next opportunity are
common to both choices, and forming changes the discounted stream by
$\sum_{n\ge0}\beta^{\,n+1}d^{\,n}\bigl[\widehat R(J_{t+1+n})-k\bigr]
=\tfrac{\beta}{1-\lambda}\bigl[V_{t+1}-k\bigr]$: the new status takes effect at $t+1$ and survives $n$
further periods with probability $d^{\,n}$. The revising cohort's cost distribution is $H$ because
opportunities arrive independently of $k$ and of current status, so each cohort is a representative
cross-section of the population---under the exact law-of-large-numbers convention for a continuum
\citep{Sun2006}---and this holds whether $k$ is permanent or redrawn at revision; both readings deliver
\eqref{eq:forward-stock}.} so
\begin{equation}\label{eq:forward-stock}
 J_{t+1}=dJ_t+\delta H(V_{t+1}).
\end{equation}

Two microfoundations induce the same equations: long-lived members with revision opportunities at rate
$\delta$, and literal turnover---exit at hazard $\delta$, entrants drawing $k\sim H$. Both deliver
$\lambda=\beta(1-\delta)$, because what matters is the survival of a maintenance status; the reading of
$\delta$ as generational replacement and migration, or seat rotation on a committee
(Section~\ref{subsec:outscope}), is exact rather than an analogy. A period is the interval over which the
fraction $\delta$ of the group turns over, and the duration statements below inherit this
unit.\footnote{At the worked benchmark $\delta=\deltaval$, a period replaces one member in twenty, so the
``three intervention dates'' of Section~\ref{sec:numerical} amount to roughly a $15\%$ replacement of the
body rather than three calendar periods.}

A \emph{perfect-foresight cultural path} from $J_0\in[0,1]$ is any bounded sequence
$(J_t,V_t)_{t\ge0}$ satisfying \eqref{eq:forward-value}--\eqref{eq:forward-stock}. An \emph{admissible
continuation} from date $T$ is the tail $(J_t,V_t)_{t\ge T}$ of any such path---under a temporary policy, of
any perfect-foresight path of the policy-augmented system; universal statements about continuations below
quantify over exactly this set.

\begin{corollary}[Immediate and delayed effects of enactment protection and formation support]\label{cor:stock-flow}
Fix a date $t$ and an inherited stock $J_t$.
\begin{enumerate}
\item[(i)] Removing enactment stigma immediately raises current enactment from
$A_t=J_tw(A_t)$ to $A_t^0=J_tw^0$---weakly, and strictly whenever $S_t^R>0$---without changing $J_t$, and
releases exactly
\[
 A_t^0-A_t=J_t[w^0-w(A_t)]=S_t^R.
\]
If departures are measured directly as acts---or as registered events with known $\psi$---and the background
flow is regime-invariant, the impact change in the measured departure mass identifies reputationally silenced judgment.
\item[(ii)] A direct cutoff shift $\tau_t$ offered to the cohort revising at $t$ leaves $(J_t,A_t)$ unchanged on
impact and changes the next stock according to
\[
 J_{t+1}(\tau_t)=dJ_t+\delta H(V_{t+1}+\tau_t),\qquad
 \left.\frac{\partial J_{t+1}}{\partial\tau_t}\right|_{V_{t+1}}
 =\delta h(V_{t+1}+\tau_t)>0.
\]
Conditional on the continuation value, its enactment effect first appears through
$A_{t+1}=a(J_{t+1})$.
\end{enumerate}
Thus enactment support has a front-loaded flow response at the inherited maintenance state, whereas formation support has a
delayed stock response followed by enactment. This timing restriction is the operational two-margin prediction.
\end{corollary}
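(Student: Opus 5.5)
The plan is to prove the two parts separately, each from the within-date timing of the dynamic system: $J_t$ is predetermined at date $t$, and the contemporaneous enactment solution $A_t=a(J_t)$ is recomputed from it.

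For part (i), fix $J_t$. Removing enactment stigma sets $s_a\equiv0$, so the enactment cutoff becomes $z=b_N$ and the conditional rate becomes $w^0=G(b_N)$, independent of $A$. The enactment equation \eqref{eq:AS} then reads $A=J_tw^0$, which has the unique solution $A_t^0=J_tw^0$; no fixed-point argument is needed. The stock $J_t$ is unchanged because it was fixed by revisions made at earlier dates, as in \eqref{eq:forward-stock}. Before the change, $A_t=J_tw(A_t)$ is the unique regular solution. Since $s_a\ge0$ and $G$ is increasing, $w(A_t)=G(b_N-\psi s_a(A_t))\le G(b_N)=w^0$. Subtracting gives $A_t^0-A_t=J_t[w^0-w(A_t)]$, which is exactly the $S^R$ term of \eqref{eq:decomposition} evaluated at date $t$. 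This difference is nonnegative, and it is strictly positive exactly when $S^R_t>0$.

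For the identification claim in part (i), note that the measured departure mass is $\psi(\nu_a+A)$, or $\nu_a+A$ if acts are measured directly. If $\nu_a$ is regime-invariant and $\psi$ is known, the impact difference divided by $\psi$ equals $A_t^0-A_t=S^R_t$. I will flag the one subtle point: the impact comparison holds $J_t$ fixed, but it does not claim that the continuation $V_{t+1}$ is unchanged.

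For part (ii), the cutoff shift enters only the revision decision of the cohort at $t$. That cohort chooses a status effective at $t+1$ and forms iff $k\le V_{t+1}+\tau_t$. Repeating the footnote computation behind \eqref{eq:forward-stock} with this shifted cutoff gives $J_{t+1}=dJ_t+\delta H(V_{t+1}+\tau_t)$. On impact, $J_t$ is predetermined and $A_t=a(J_t)$ depends only on $J_t$, so $(J_t,A_t)$ is unchanged. Holding $V_{t+1}$ fixed, differentiating gives $\delta h(V_{t+1}+\tau_t)$, which is positive because $h>0$. The enactment effect therefore first appears through $A_{t+1}=a(J_{t+1})$, and it has the right sign because $a$ is nondecreasing (strictly increasing where \eqref{eq:aprime} applies).

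The main obstacle is interpretive rather than technical: stating cleanly what is held fixed. Part (i) holds $J_t$ fixed, part (ii) holds the continuation value fixed, and both are consistent with the timing of \eqref{eq:forward-value}--\eqref{eq:forward-stock}. I will state these conditioning conventions explicitly and not attempt to claim equilibrium-path effects.
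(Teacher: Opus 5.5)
Your proposal is correct and follows the paper's own argument. Part~(i) is the decomposition \eqref{eq:decomposition} evaluated at the predetermined $J_t$ with $w^0=G(b_N)$, and part~(ii) is the stock law \eqref{eq:forward-stock} with the shifted revision cutoff, holding $V_{t+1}$ fixed. You spell out details that the paper's two-line proof leaves implicit: the sign $w(A_t)\le w^0$ from $s_a\ge0$, and the rescaling by a known $\psi$ under a regime-invariant background flow.
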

\begin{proof}
Part~(i) is the decomposition \eqref{eq:decomposition} at fixed $J_t$; part~(ii) follows from
\eqref{eq:forward-stock}, the cutoff shift, and the predetermined nature of $J_t$.
\end{proof}

In the committee, part~(i) is the protective chair, whose effect appears in this quarter's votes; part~(ii)
is training or selection, whose effect appears only as seats rotate. In the civic reading, part~(i) is
legal protection of dissent, visible in current conduct, and part~(ii) is civic formation, whose effect
arrives only as cohorts replace.

\subsection{Local and global perfect-foresight dynamics}\label{subsec:pfdyn}

The system \eqref{eq:forward-value}--\eqref{eq:forward-stock} has one predetermined variable ($J_t$) and one
that jumps ($V_t$), so an outer culture is locally a saddle for the right reason: the saddle property is
determinacy, not fragility---near it, exactly one value of $V_t$ places the system on the stable arm. Rational
expectations pin down that value. Under the slope condition of Online Appendix OA.9 the middle culture is a
source; outside that region it can be a saddle with a negative stable root, so convergence along its stable
arm is oscillatory. The benchmark lies in the source
region, so the scalar and perfect-foresight classifications agree there.

\begin{proposition}[Stationary cultures, saddle paths, and convergence rates near a fold]\label{thm:forward-local}
Suppose the enactment subgame has a unique $C^1$ solution $a(J)$, $\widehat R$ is $C^1$ and nondecreasing,
and $H$ is $C^1$ with continuous density $h$ near each stationary value $\widehat R(J_j)$.
A stationary perfect-foresight culture is exactly a static
cultural equilibrium. At a transverse fixed point $J_j$, let $m_j:=\Psi'(J_j)$ and define
$\vartheta_j:=d+\lambda^{-1}-\delta(1-\lambda)m_j/\lambda$.
The characteristic roots solve
\begin{equation}\label{eq:forward-characteristic}
 P_j(r)=r^2-\vartheta_jr+\frac d\lambda=0,
 \qquad
 P_j(0)=\frac1\beta>1,
 \qquad
 P_j(1)=\frac{\delta(1-\lambda)}{\lambda}(m_j-1).
\end{equation}
Hence every outer culture, where $m_j<1$, is a saddle with one root in $(0,1)$ and one above one. Its
convergence factor is
\begin{equation}\label{eq:forward-stable-root}
 r_j^s=\frac12\left[\vartheta_j-
 \sqrt{\vartheta_j^2-\frac{4d}{\lambda}}\right].
\end{equation}
At a nondegenerate fold along an outer branch, $m_j\uparrow1$ and $r_j^s\uparrow1$, so the convergence half-life
$\log(1/2)/\log r_j^s$ diverges. Online Appendix OA.9 gives the Jacobian and the complete classification of
the middle culture.
\end{proposition}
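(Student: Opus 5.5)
The plan is to reduce everything to the two-equation system \eqref{eq:forward-value}--\eqref{eq:forward-stock} and analyze its linearization by hand. For the stationary claim, set $J_t\equiv J$ and $V_t\equiv V$. The recursion in \eqref{eq:forward-value} then gives $V=(1-\lambda)\widehat R(J)+\lambda V$, so $V=\widehat R(J)$. The stock equation \eqref{eq:forward-stock} gives $J=dJ+\delta H(V)$, and since $\delta>0$ this is $J=H(\widehat R(J))=\Psi(J)$. Conversely, any fixed point of $\Psi$ with $V=\widehat R(J)$ is a bounded constant path. By the scalar reduction of Proposition~\ref{prop:mult}, which holds here because $a(J)$ is the unique enactment solution, these are exactly the static cultural equilibria.

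For the local analysis, write $x_t:=J_t-J_j$, $y_t:=V_t-\widehat R(J_j)$, $\rho:=\widehat R'(J_j)$ and $h_j:=h(\widehat R(J_j))$. Since $H'=h$, the chain rule gives $m_j=h_j\rho$. Solving the value recursion forward gives the linear system
\[
 y_{t+1}=\lambda^{-1}y_t-\lambda^{-1}(1-\lambda)\rho\,x_t,\qquad
 x_{t+1}=dx_t+\delta h_j y_{t+1}.
\]
Substituting the first equation into the second yields a $2\times2$ Jacobian with the following entries:
\begin{itemize}
\item first row $\bigl(d-\delta(1-\lambda)m_j/\lambda,\ \delta h_j/\lambda\bigr)$;
\item second row $\bigl(-(1-\lambda)\rho/\lambda,\ 1/\lambda\bigr)$.
\end{itemize}
I will verify that its trace is $\vartheta_j$. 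For the determinant, the two $\delta(1-\lambda)h_j\rho/\lambda^2$ terms cancel, leaving $d/\lambda$. This produces \eqref{eq:forward-characteristic}. Evaluating directly, $P_j(0)=d/\lambda=1/\beta$. For $P_j(1)$, use $1-d=\delta$ to get $1-d-\lambda^{-1}+d\lambda^{-1}=\delta-\delta/\lambda$; together with the $m_j$ term this gives $\delta(1-\lambda)(m_j-1)/\lambda$.

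The saddle claim then follows from sign counting on the upward-opening quadratic. When $m_j<1$ we have $P_j(1)<0<P_j(0)$, so the intermediate value theorem places one real root in $(0,1)$ and the other above $1$. That the product of the roots is $d/\lambda>1$ is consistent with this. The smaller root is the one given by \eqref{eq:forward-stable-root}, and the discriminant is automatically positive because $P_j(1)<0$. With one predetermined variable and one jump variable, one stable root means saddle-path determinacy in the usual Blanchard--Kahn sense. I will invoke the $C^1$ hypotheses and the Hartman--Grobman or stable-manifold theorem only to transfer this from the linear system to the local nonlinear one.

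For the fold statement, let $m_j\uparrow1$ along an outer branch. Then $P_j(1)\uparrow0$ from below, and $\vartheta_j\to d+\lambda^{-1}-\delta(1-\lambda)/\lambda$, which equals $1+d/\lambda$. The roots are continuous in the coefficients, so the root in $(0,1)$ converges to $1$, and the other root converges to $d/\lambda$. Monotonicity of $r^s_j$ in $m_j$ follows from $\partial\vartheta_j/\partial m_j<0$ together with the formula for the smaller root. Hence $\log r_j^s\uparrow0$ and the half-life $\log(1/2)/\log r^s_j$ diverges.

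The main obstacle is bookkeeping rather than conceptual. I must keep the timing straight: the revising cohort responds to $V_{t+1}$, not $V_t$, so the forward substitution has to be done before forming the Jacobian, and an error there would spoil the clean cancellation in the determinant. I also need to confirm that at a nondegenerate fold the limit $m_j\to1$ is approached from below along the outer branch, which holds by the definition of scalar stability.
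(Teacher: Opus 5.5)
Your proposal is correct and follows the paper's own argument (the main-text proof plus the OA.9 derivation). The steps coincide: stationarity from the value recursion, then the forward-solved Jacobian $M_j$ with trace $\vartheta_j$ and determinant $d/\lambda$, then the signs of $P_j(0)$ and $P_j(1)$, and finally the limit $P_j(r)\to(r-1)(r-d/\lambda)$ as $m_j\uparrow1$. Your added remarks, that $r_j^s$ is monotone in $m_j$ and that the stable-manifold theorem transfers the linear saddle to the nonlinear system, are sound refinements the paper leaves implicit.
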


Those local labels do not determine global history. The next theorem instead constructs the extremal
rational-expectations paths, adapting monotone lattice methods for dynamic economies with complementarities
\citep{BalbusReffettWozny2014} to the perfect-foresight selection skeleton of
\citet{MatsuiMatsuyama1995,HofbauerSorger1999,Oyama2002,BurdzyFrankelPauzner2001,OyamaTakahashiHofbauer2008}---with
no selection imposed, and with the complementarity itself an equilibrium object.\footnote{The cutoffs $J_-^{RE}$ and $J_+^{RE}$ are state-dependent
cousins of the absorption and accessibility notions that \citet{OyamaTakahashiHofbauer2008} also compute
from extremal paths, obtained here as certified enclosures from finitely checkable conditions rather than
as limit criteria. \citet{PierriReffett2026} construct recursive competitive equilibria on minimal state
spaces under price-dependent collateral constraints, with computable existence theory and constructive
comparative statics; the operator here acts instead on anticipated aggregate paths from a predetermined
social stock.} The construction uses an operator on conjectured
futures: $\mathcal T_j$ takes an anticipated aggregate path, computes the value it implies, and returns the
path that the induced formation decisions would actually generate; its fixed points are exactly the
perfect-foresight paths.

For fixed $j$, let $\mathbb X(j)$ be the set of sequences $\mathbf x\in[0,1]^{\mathbb N_0}$ with $x_0=j$,
ordered coordinatewise, and define
\begin{align}
 (\mathcal T_j\mathbf x)_0&=j,\nonumber\\
 (\mathcal T_j\mathbf x)_{t+1}
 &=dx_t+\delta H\!\left((1-\lambda)\sum_{n=0}^{\infty}
 \lambda^n\widehat R(x_{t+1+n})\right).\label{eq:path-operator}
\end{align}

\begin{theorem}[Extremal paths and expectation-dependent convergence]\label{thm:history-expectations}
Suppose the enactment subgame has a unique continuous nondecreasing solution $a(J)$ and the induced
$\widehat R$ and $H$ are continuous and nondecreasing.
\begin{enumerate}
\item[(i)] For every inherited stock $j$, perfect-foresight paths form a nonempty complete lattice. Its least
and greatest elements $\underline{\mathbf J}(j)$ and $\overline{\mathbf J}(j)$ are obtained by iterating
$\mathcal T_j$ from $(j,0,0,\ldots)$ and $(j,1,1,\ldots)$, respectively.
\item[(ii)] Both extremal paths are nondecreasing in $j$. If they converge to the same stationary culture,
every perfect-foresight path from $j$ converges there; if the least converges to $J_\ell$ and the greatest to
$J_h$, different continuations lead to different cultures from the same inherited stock.
\item[(iii)] Suppose $\Psi$ has exactly three stationary cultures and the sign of $\Psi(J)-J$ alternates
across them, beginning positive near zero. Let
\[
 \mathcal E_h^{\max}:=\{j:\overline J_t(j)\to J_h\},\qquad
 \mathcal E_h^{\min}:=\{j:\underline J_t(j)\to J_h\}.
\]
Both are upper sets and $\mathcal E_h^{\min}\subseteq\mathcal E_h^{\max}$. When
$\mathcal E_h^{\min}$ is nonempty, define
\begin{equation}\label{eq:expectation-band}
 J_-^{RE}:=\inf\mathcal E_h^{\max},\qquad
 J_+^{RE}:=\inf\mathcal E_h^{\min}.
\end{equation}
Then $J_-^{RE}\le J_+^{RE}$. No path from $j<J_-^{RE}$ converges to $J_h$, whereas every path from
$j>J_+^{RE}$ does. Thus the cutoffs separate up to three open regions: below $J_-^{RE}$ high convergence is
impossible; strictly between the cutoffs it depends on continuation expectations; and above $J_+^{RE}$ it
occurs under every admissible continuation. Cutoff stocks and non-convergent paths are left unclassified:
the sets classify convergence to $J_h$ only.
\end{enumerate}
\end{theorem}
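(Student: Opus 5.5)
The plan is to treat $\mathcal T_j$ as a monotone self-map of the complete lattice $\mathbb X(j)$, ordered coordinatewise, and to apply Tarski's theorem. Kleene-type iteration from the extremal points then identifies the least and greatest fixed points.

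First I will check that $\mathcal T_j$ is well defined and isotone. Since $\widehat R$ is continuous on $[0,1]$, it is bounded, so the discounted sum in \eqref{eq:path-operator} converges. The sum is nondecreasing in each $x_{t+1+n}$. Composing with the nondecreasing $H$ and adding $dx_t$ gives a coordinatewise nondecreasing map. Its values lie in $[0,1]$ because $d+\delta=1$.

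Next I will show that fixed points of $\mathcal T_j$ are exactly perfect-foresight paths from $j$. In one direction, set $V_t$ by \eqref{eq:forward-value}; the recursion holds and the path is bounded. In the other, any bounded $V$ solving the recursion iterates forward to the discounted sum, because the tail $\lambda^n V_{t+n}$ vanishes. Tarski then gives a nonempty complete lattice of fixed points.

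For the iteration claim I will use continuity from above and below. Start from $(j,1,1,\dots)$, which dominates $\mathcal T_j$ of itself, so the iterates decrease. Each coordinate converges pointwise, and by dominated convergence in the geometric sum together with continuity of $H$ and $\widehat R$, the limit is a fixed point. Any fixed point $\mathbf x$ satisfies $\mathbf x\le(j,1,\dots)$, so by induction it lies below every iterate and hence below the limit; the limit is therefore the greatest fixed point. The symmetric argument from $(j,0,0,\dots)$ gives the least. This order-continuity step is the main technical point to get right: the infinite sum must commute with monotone pointwise limits, which uniform geometric weighting ensures.

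For (ii), note that $\mathcal T_j\mathbf x$ is nondecreasing in $j$ jointly with $\mathbf x$, since $x_0=j$ enters through $dx_0$. If $j\le j'$, then by induction each iterate from the top (respectively bottom) starting point for $j$ lies below the corresponding iterate for $j'$. Passing to limits gives monotonicity of both extremal paths in $j$. The squeeze claim is immediate: every path lies between the two extremal paths, so if both converge to the same $J^\ast$, so does every path.

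For (iii), take $j\in\mathcal E_h^{\max}$ and $j'>j$. Then $\overline{\mathbf J}(j')\ge\overline{\mathbf J}(j)$, so $\liminf\overline J_t(j')\ge J_h$. For the matching upper bound I will show that every path eventually satisfies $\limsup J_t\le J_h$. The argument is that $V_t$ is a weighted average of $\widehat R$ along the path, and since $\Psi(J)<J$ for $J>J_h$, the stock cannot persistently remain above $J_h$. I would establish this with a comparison using the greatest path from $1$, which is nonincreasing because $\mathcal T_1(1,1,\dots)\le(1,1,\dots)$, and then identify its limit as a stationary culture $\ge J_h$, hence equal to $J_h$. Convergence to $J_h$ then follows, so $\mathcal E_h^{\max}$ is an upper set, and the same argument applies to $\mathcal E_h^{\min}$.

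The inclusion $\mathcal E_h^{\min}\subseteq\mathcal E_h^{\max}$ uses $\underline{\mathbf J}\le\overline{\mathbf J}$ together with the same upper bound, and it gives $J_-^{RE}\le J_+^{RE}$. For the classification, take $j<J_-^{RE}$. Then $\overline{\mathbf J}(j)\not\to J_h$, and any path from $j$ lies below it. If some path converged to $J_h$, the dominating path would have $\liminf\ge J_h$ and, by the limsup bound, would itself converge to $J_h$, a contradiction. Finally, for $j>J_+^{RE}$ the least path converges to $J_h$, and the squeeze argument with the limsup bound yields convergence of every path to $J_h$.
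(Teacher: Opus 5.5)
Your argument is correct. Parts (i) and (ii) follow the paper's proof essentially step for step: Tarski on the coordinatewise lattice, monotone iteration from $\bot_j$ and $\top_j$ with order continuity from bounded convergence in the discounted sum, and induction on iterates for monotonicity in $j$.

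Part (iii) takes a genuinely different route. The paper proves a per-path squeeze on limit points. For every perfect-foresight path, $l=\liminf_tJ_t$ and $u=\limsup_tJ_t$ satisfy $l\ge\Psi(l)$ and $u\le\Psi(u)$. The sign chart then forces any path that dominates a path converging to $J_h$ to converge to $J_h$ as well.

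You instead build one global envelope. The greatest path from $j=1$ is nonincreasing in time, so it converges. Its limit is a fixed point of $\Psi$, and that fixed point is at least $J_h$ because the path dominates the stationary path at $J_h$; hence the limit equals $J_h$. Monotonicity in $j$ then caps $\limsup_tJ_t$ at $J_h$ for every path from every stock.

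Your route is more elementary. It uses only that $J_h$ is the largest fixed point of $\Psi$, not the full alternating sign chart. The mirror argument from $j=0$ gives $\liminf_tJ_t\ge J_\ell$ universally.

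What the paper's inequalities buy is locality: they pin down the limit of an individual path from partial information. For example, a path with $\liminf_tJ_t>J_i$ must converge to $J_h$. Steps 3--4 of the proof of Theorem~\ref{thm:band} and the finite certificate in OA.11 reuse exactly this. Your envelope only confines limit points to $[J_\ell,J_h]$ and would not deliver it.

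Two small points.
\begin{enumerate}
\item The reason you give for time-monotonicity, $\mathcal T_1\top_1\le\top_1$, controls the iteration index, not $t$. The clean justification: the tail $(\overline J_{t+1}(1))_{t\ge0}$ is a perfect-foresight path from $\overline J_1(1)\le1$, so it lies below $\overline{\mathbf J}(\overline J_1(1))\le\overline{\mathbf J}(1)$ by (ii). Alternatively, show by induction that every iterate from $\top_1$ is time-nonincreasing.
\item You omit the middle region, but it is immediate. For $J_-^{RE}<j<J_+^{RE}$, the upper-set property puts $j$ in $\mathcal E_h^{\max}$, while $j<\inf\mathcal E_h^{\min}$ excludes it from $\mathcal E_h^{\min}$.
\end{enumerate}
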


The proof, in Appendix~\ref{app:dynproofs}, is a Tarski argument on the coordinatewise lattice of paths,
plus a squeeze on limit points. The theorem does not by itself make the band strict: the cutoffs may
coincide and any region may be empty; Theorem~\ref{thm:band} below gives a checkable
sufficient spectral condition for strictness. At the certified benchmark the enclosures
$[\REminlo,\REminhi]$ and $[\REpluslo,\REplushi]$ are disjoint, so $J_-^{RE}<J_+^{RE}$ strictly and all
three regions are nonempty. Between the cutoffs, failure to converge high does not imply convergence to
$J_\ell$ in general---Online Appendix OA.11 verifies the stronger low--high selection on the benchmark's
certified interior band---and the boundary stocks need not equal the static culture $J_i$.

The cutoffs admit finite certificates based on monotone iterates and finite-tail sub- and supersolutions.
Online Appendix OA.11 states the sufficient inequalities, controls the discounted tail, and verifies the
benchmark brackets $J_-^{RE}\in[\REminlo,\REminhi]$ and $J_+^{RE}\in[\REpluslo,\REplushi]$.

The certificates quantify the band at a parameter point; the next result makes it strictly open from a
condition on the slope of the static reduced map and the revision primitives $(\beta,\delta)$, checkable
without solving any dynamic path. The condition is local at
the unstable culture and is exactly the \emph{focus} case of the
local classification (Proposition~\ref{thm:forward-local}; Online Appendix OA.9): at a transverse middle
culture with $m_i:=\Psi'(J_i)>1$, the characteristic roots of \eqref{eq:forward-characteristic} are
nonreal precisely when
\begin{equation}\label{eq:focus-condition}
 \left(d+\lambda^{-1}-\frac{\delta(1-\lambda)m_i}{\lambda}\right)^{2}<\frac{4d}{\lambda}.
\end{equation}
The mechanism is simple. The forward system \eqref{eq:forward-value}--\eqref{eq:forward-stock} has constant
Jacobian determinant $d/\lambda=1/\beta$, so it is invertible wherever it is $C^1$. Under
\eqref{eq:focus-condition}, its \emph{backward} dynamics at $(J_i,\widehat R(J_i))$ is a stable spiral:
exact equilibrium prefixes recovered backward from a nearby target circle the unstable culture and cross
$J=J_i$ from both sides. Pasting such a prefix to a constant continuation above (below) $J_i$ yields a
subsolution (supersolution) of the path operator, and with it an equilibrium path that escapes from an
inherited stock strictly below $J_i$---or falls to the trap from one strictly
above. That complex roots at the unstable steady state open a history--expectations overlap---a regime in
which expectations can overturn history in both directions---is the classical criterion of
\citet{Krugman1991}, characterized exactly by \citet{FukaoBenabou1993} for smooth planar
perfect-foresight systems; Theorem~\ref{thm:band} transplants it to the present setting, where the
dynamics is a nonsmooth lattice system on paths, the complementarity runs through an equilibrium
posterior, and the classical phase-plane arguments do not apply.

\begin{theorem}[The focus condition opens the expectation band]\label{thm:band}
Maintain the hypotheses of Theorem~\ref{thm:history-expectations}.
Let $\Psi$ have exactly three stationary cultures $J_\ell<J_i<J_h$ with the alternating sign chart of
Theorem~\ref{thm:history-expectations}(iii); suppose the enactment subgame has a unique $C^1$ solution
and $\widehat R$ is $C^1$ and nondecreasing, each near $J_i$, and $H$ is $C^1$ with continuous density
$h$ near the stationary value $\widehat R(J_i)$;
and let $m_i>1$ satisfy the focus condition \eqref{eq:focus-condition}. Then there exist inherited stocks
$j^-<J_i<j^+$ with
\[
 [\,j^-,j^+\,]\subseteq\mathcal E_h^{\max}\setminus\mathcal E_h^{\min}:
\]
from every $j\in[j^-,j^+]$, some perfect-foresight path converges to $J_h$ and some converges to
$J_\ell$. Consequently, when $\mathcal E_h^{\min}$ is nonempty,
\[
 J_-^{RE}\;\le\;j^-\;<\;J_i\;<\;j^+\;\le\;J_+^{RE},
\]
so the cutoffs are strictly separated and the expectation-dependent region contains an open interval
around the unstable culture.
\end{theorem}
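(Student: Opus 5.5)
We follow the mechanism sketched before the statement: invert the forward system near $(J_i,V_i)$, $V_i:=\widehat R(J_i)$, and use the backward spiral to manufacture finite equilibrium prefixes. Paste them onto constant tails to obtain sub- and supersolutions of $\mathcal T_j$. Then invoke the Tarski structure of Theorem~\ref{thm:history-expectations}(i)--(ii) to obtain genuine paths converging to $J_h$ or $J_\ell$.

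\emph{Step 1 (local backward map).} Write the forward system as $V_t=(1-\lambda)\widehat R(J_t)+\lambda V_{t+1}$ and $J_{t+1}=dJ_t+\delta H(V_{t+1})$. Solving backward gives
\[
J_t=\bigl(J_{t+1}-\delta H(V_{t+1})\bigr)/d,\qquad V_t=(1-\lambda)\widehat R(J_t)+\lambda V_{t+1},
\]
a $C^1$ map $\mathcal B$ near $(J_i,V_i)$ (by the local $C^1$ hypotheses) with fixed point $(J_i,V_i)$. Its Jacobian is the inverse of the forward Jacobian, whose characteristic polynomial is $P_i$ in \eqref{eq:forward-characteristic}. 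Under \eqref{eq:focus-condition} the forward roots are complex with modulus $\sqrt{d/\lambda}=\beta^{-1/2}>1$. Hence the backward linearization is a contracting rotation with modulus $\beta^{1/2}<1$ and rotation angle $\varphi\in(0,\pi)$. By Hartman--Grobman (or directly, using a rotating polar coordinate in the adapted basis, where the nonlinear remainder is $o(|\cdot|)$), orbits of $\mathcal B$ starting in a small neighbourhood $U$ converge to $(J_i,V_i)$. Along the way, the angle advances by at least $\varphi/2$ per step, so the $J$-coordinate of every orbit that does not start at the fixed point changes sign relative to $J_i$ infinitely often.

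\emph{Step 2 (prefixes and pasting).} Choose a target $(J^\ast,V^\ast)\in U$ with $J^\ast>J_i$ and $V^\ast\ge\widehat R(J^\ast)$, so that the constant continuation $J_t\equiv J^\ast$ from the target is a subsolution. The sign chart gives $\Psi(J^\ast)>J^\ast$ just above $J_i$, so $H(\widehat R(J^\ast))>J^\ast$ and a small slack is available. Iterating $\mathcal B$ backward from the target, pick a step $N$ at which $J_{-N}<J_i$. Re-index so that the prefix starts at $j^-:=J_{-N}$.

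Define $\mathbf x=(x_0,\dots,x_N,J^\ast,J^\ast,\dots)$. On the prefix the equations hold exactly. On the tail, $\mathcal T_{j^-}\mathbf x\ge\mathbf x$ holds because $\widehat R(J^\ast)$ sustains $J^\ast$. At the junction, $V^\ast\le(1-\lambda)\sum_n\lambda^n\widehat R(J^\ast)=\widehat R(J^\ast)$ is exactly the condition ensuring that the prefix's values are dominated by the operator's. Checking these inequalities coordinatewise, using the monotonicity of $H$ and $\widehat R$, gives $\mathcal T_{j^-}\mathbf x\ge\mathbf x$.

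Tarski iteration of $\mathcal T_{j^-}$ from $\mathbf x$ increases to a fixed point $\ge\mathbf x$, whose tail stays $\ge J^\ast>J_i$. By the squeeze argument in the proof of Theorem~\ref{thm:history-expectations} (limit points are stationary cultures, and the only one in $[J^\ast,1]$ is $J_h$), this path converges to $J_h$. Hence $j^-\in\mathcal E_h^{\max}$. The symmetric construction, with target $J_\ast<J_i$, supersolution pasting and a backward orbit crossing above $J_i$, gives $j^+>J_i$ from which some path converges to $J_\ell$, so $j^+\notin\mathcal E_h^{\min}$.

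\emph{Step 3 (the whole interval).} $\mathcal E_h^{\max}$ is an upper set and contains $j^-$, so $[j^-,1]\subseteq\mathcal E_h^{\max}$. The complement of the upper set $\mathcal E_h^{\min}$ is a lower set and contains $j^+$, so $[0,j^+]$ lies outside $\mathcal E_h^{\min}$. For each $j\in[j^-,j^+]$ we need paths to both $J_h$ and $J_\ell$. For $J_h$, the greatest path dominates the path from $j^-$ by monotonicity (Theorem~\ref{thm:history-expectations}(ii)), so it eventually exceeds $J^\ast$; applying the same subsolution argument to its tail shows it converges to $J_h$. The argument for $J_\ell$ from the least path is symmetric. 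The cutoff inequalities $J_-^{RE}\le j^-<J_i<j^+\le J_+^{RE}$ then follow directly from the definitions in \eqref{eq:expectation-band}.

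\emph{Main obstacle.} The hardest part is the junction inequality together with the target choice in Step 2. We need a point near $(J_i,V_i)$ with $J^\ast>J_i$ and $V^\ast$ lying between the value that exactly sustains the prefix and $\widehat R(J^\ast)$. Since the backward orbit from an arbitrary target on a small circle is free to rotate, we would choose the target on the ray $V^\ast=\widehat R(J^\ast)$ itself. This is admissible because that ray is transverse to the fixed point's neighbourhood and reaches arbitrarily close to $(J_i,V_i)$. We would then verify that the forward value recomputed from the constant tail equals $V^\ast$ exactly, so the pasted sequence is an exact solution through the junction and a subsolution afterwards. A secondary technical point is controlling the nonlinear remainder of $\mathcal B$ so that the sign crossing of $J-J_i$ occurs within $U$. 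A uniform angular-advance estimate on a small enough disc, derived from the $C^1$ hypotheses and the continuity of the derivatives at $J_i$, should suffice.
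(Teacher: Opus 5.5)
Your proposal is correct and follows essentially the same route as the paper. Both arguments take an exact backward prefix, obtained from the backward spiral (the paper's Lemma~\ref{lem:spiral}), that starts at the point $(J^\ast,\widehat R(J^\ast))$ with $J^\ast$ just above (respectively below) $J_i$. They paste it onto a constant tail to get a sub- (respectively super-)solution of $\mathcal T_j$, iterate monotonically, apply the liminf/limsup squeeze, and use the upper-set and monotonicity argument to cover the whole interval $[j^-,j^+]$.

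Two small clean-ups. First, the tail's subsolution property comes from $\Psi(J^\ast)>J^\ast$, not from $V^\ast\ge\widehat R(J^\ast)$; the junction needs only $V^\ast\le\widehat R(J^\ast)$, and your final choice of equality is exactly the paper's. Second, the sign-crossing step needs the per-step angular advance to lie in $(0,\pi)$, not merely to be bounded below. Your direct polar estimate supplies this, but Hartman--Grobman alone would not.
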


The proof, in Appendix~\ref{app:dynproofs}, combines an elementary planar lemma---backward orbits of an
unstable focus spiral into it, crossing any line through it infinitely often---with the sub- and
supersolution pasting already used for the extremal paths. The finite backward witnesses of the OA.11
certificate are nonlinear instances of the same construction. The condition fails as $\lambda\downarrow0$
whenever $\delta m_i\neq1$, myopic revision making the roots
real---consistent with the band being a genuinely forward-looking phenomenon.

\subsection{Duration bounds and direct formation support}\label{subsec:duration}

The two boundaries now become policy targets: how many cohorts must have the opportunity to revise before
the stock can physically cross each one?

\begin{theorem}[Sharp duration bounds and sufficient direct formation shifts]
\label{thm:frontier}
Maintain the hypotheses of Theorem~\ref{thm:history-expectations}, including the three-culture sign-chart
supposition of its part~(iii), and suppose $\mathcal E_h^{\min}$ is nonempty, so that the cutoffs
\eqref{eq:expectation-band} are defined. Let $B_t\in[0,1]$ be the share of the
revising cohort that forms under a temporary policy withdrawn after
date $T-1$, and suppose $J_0<J_-^{RE}\le J_+^{RE}<1$, where $J_-^{RE}$ and $J_+^{RE}$ are the
post-withdrawal convergence cutoffs.
\begin{enumerate}
\item[(i)] \emph{(Universal bound.)} Every policy and expectation path, for any instrument at fixed
$\delta$, satisfies
\begin{align}
 J_T&=d^TJ_0+\delta\sum_{t=0}^{T-1}d^{T-1-t}B_t,
 \label{eq:general-stock-bounds}\\
 J_T&\le1-(1-J_0)d^T.
 \label{eq:physical-stock-ceiling}
\end{align}
\item[(ii)] \emph{(Duration necessity.)} The first durations at which the physical ceiling lies strictly
above each threshold are
\begin{align}
 T_{\mathrm{possible}}
 &=\left\lfloor\frac{\log[(1-J_-^{RE})/(1-J_0)]}{\log d}\right\rfloor+1,
 \label{eq:Tpossible}\\
 T_{\mathrm{robust}}
 &=\left\lfloor\frac{\log[(1-J_+^{RE})/(1-J_0)]}{\log d}\right\rfloor+1;
 \label{eq:Trobust}
\end{align}
no shorter policy can make $J_T$ strictly exceed the corresponding threshold.
\item[(iii)] \emph{(Constructive sufficiency.)} Suppose $H$ is strictly increasing with full support and a
direct formation instrument shifts the revising cohort's cutoff to $k\le V_{t+1}+\tau$. Let
$\underline R:=\min_{J\in[0,1]}\widehat R(J)$. For either target $c\in\{J_-^{RE},J_+^{RE}\}$, whenever
$1-(1-J_0)d^T>c$, define the uniform worst-case support bound
$\tau_T^{\mathrm{env}}(c):=\bigl[H^{-1}\!\bigl((c-d^TJ_0)/(1-d^T)\bigr)-\underline R\bigr]_+$.
Every finite uniform shift $\tau>\tau_T^{\mathrm{env}}(c)$ guarantees $J_T>c$ for every admissible
continuation: direct formation support can make high convergence possible at $T_{\mathrm{possible}}$ and
guarantee it at $T_{\mathrm{robust}}$.
\item[(iv)] \emph{(Sharpness within the class.)} If every finite direct cutoff shift has $B_t<1$ while
arbitrarily large finite shifts make $B_t$ arbitrarily close to one, both duration bounds are sharp within
this class.
\end{enumerate}
\end{theorem}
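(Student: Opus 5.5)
Part (i) is pure accounting: I would unroll the stock law \eqref{eq:forward-stock} with the realized formation share $B_t$ in place of $H(V_{t+1})$. Under any instrument, at fixed $\delta$, only the revising cohort changes status. Hence $J_{t+1}=dJ_t+\delta B_t$ holds identically, and induction gives \eqref{eq:general-stock-bounds}. Setting $B_t\le1$ and summing the geometric series $\delta\sum_{t<T}d^{T-1-t}=1-d^T$ gives $J_T\le d^TJ_0+1-d^T=1-(1-J_0)d^T$, which is \eqref{eq:physical-stock-ceiling}.

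For (ii), I would solve $1-(1-J_0)d^T>c$. This is equivalent to $d^T<(1-c)/(1-J_0)$. Because $\log d<0$, it becomes $T>\log[(1-c)/(1-J_0)]/\log d$. The hypothesis $J_0<c<1$ makes this ratio positive. The least integer strictly exceeding it is the floor plus one, which gives \eqref{eq:Tpossible}--\eqref{eq:Trobust}. Necessity then follows directly from (i): for shorter $T$, $J_T\le$ the ceiling $\le c$.

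For (iii), the key step is a worst-case lower bound on the continuation value. Every admissible continuation has $V_{t+1}\ge\underline R$, because $V$ is a $(1-\lambda)$-weighted average of values $\widehat R(J)\ge\underline R$, and $\underline R$ exists by continuity on $[0,1]$. Monotonicity of $H$ then gives $B_t=H(V_{t+1}+\tau)\ge H(\underline R+\tau)=:b$, and this bound holds uniformly over expectations. From (i), $J_T\ge d^TJ_0+(1-d^T)b$. So I need $b>(c-d^TJ_0)/(1-d^T)=:\beta_T$, and $\beta_T<1$ is exactly the ceiling condition. Since $H$ has full support and is strictly increasing, $H^{-1}(\beta_T)$ is finite when $\beta_T\in(0,1)$. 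A shift $\tau>H^{-1}(\beta_T)-\underline R$ gives $b>\beta_T$. If $\beta_T\le0$, any $\tau\ge0$ works, because $H>0$; this is what the positive part in $\tau_T^{\mathrm{env}}$ handles.

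Applying this with $c=J_-^{RE}$ at $T_{\mathrm{possible}}$ puts $J_T$ strictly above $J_-^{RE}$. Since $\mathcal E_h^{\max}$ is an upper set with infimum $J_-^{RE}$, the extremal path $\overline{\mathbf J}(J_T)$ reaches $J_h$, by Theorem~\ref{thm:history-expectations}(iii). Hence high convergence is possible for the post-withdrawal continuation. With $c=J_+^{RE}$ at $T_{\mathrm{robust}}$, every continuation converges to $J_h$. One subtlety needs care here: the post-withdrawal system from $J_T$ is the autonomous system whose cutoffs are the given $J_\pm^{RE}$. That holds because the policy is withdrawn after $T-1$ and admissible continuations are defined as tails.

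For (iv), sharpness has two sides. The lower side is already (ii). For the upper side I would use the hypothesis that shifts push $B_t$ arbitrarily close to one, uniformly, via the bound $b\to1$ as $\tau\to\infty$. Then $J_T$ can be made arbitrarily close to the ceiling, which is strictly above $c$ at the stated $T$. The main obstacle is the uniformity claim in (iii). $V_{t+1}$ depends on the whole anticipated path, including the policy period, so I must make sure the bound $V\ge\underline R$ is not contaminated by the policy. It is not, because $\tau$ shifts only the cutoff and not $\widehat R$. Even so, I expect the careful bookkeeping of which system the cutoffs refer to to be the delicate point.
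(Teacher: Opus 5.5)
Your (i), (ii), and the uniform bound $B_t\ge H(\underline R+\tau)$ in (iii) are correct and match the paper. So is the robust half of (iii): every policy path's post-withdrawal tail is an unpoliced perfect-foresight path from its own $J_T>J_+^{RE}$, and hence converges to $J_h$.

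The gap is in the ``possible'' half of (iii). A policy-augmented perfect-foresight path is a fixed point of the isotone operator $\mathcal T^{\boldsymbol\tau}_{J_0}$ on whole paths from $J_0$. For $t<T$ the value $V_{t+1}$, and hence $B_t$ and $J_T$, depends on the anticipated post-withdrawal tail. So ``every policy path has $J_T>J_-^{RE}$'' together with ``$\overline{\mathbf J}(J_T)$ converges to $J_h$'' does not yet exhibit a policy path that converges high. The tail of a given policy path $\mathbf x$ may be a low-converging path from $x_T$. If you swap in $\overline{\mathbf J}(x_T)$, the prefix is no longer an equilibrium: the higher tail raises $V_{t+1}$ and $B_t$ for $t<T$, and moves $J_T$ itself. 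This endogeneity of the terminal stock to post-withdrawal expectations is the delicate point, not which system the cutoffs refer to or contamination of $V\ge\underline R$.

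The paper resolves it with the greatest fixed point $\bar{\mathbf x}$ of $\mathcal T^{\boldsymbol\tau}_{J_0}$. Its tail must equal $\overline{\mathbf J}(\bar x_T)$. Otherwise, pasting that strictly greater tail onto $\bar{\mathbf x}$'s prefix gives a subsolution: prefix coordinates weakly rise because $V$ is monotone in the path and $H$ is nondecreasing, and tail coordinates are reproduced exactly because the pasted tail solves the unpoliced recursion. Upward iteration then produces a fixed point strictly above $\bar{\mathbf x}$, a contradiction. Your uniform bound applies to $\bar{\mathbf x}$, so $\bar x_T>J_-^{RE}$ and $\bar{\mathbf x}\to J_h$.

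Your sharpness argument has two holes. First, (ii) only gives $J_T\le c$ at shorter durations. Theorem~\ref{thm:history-expectations} leaves cutoff stocks unclassified, so you need $J_T<c$ strictly. This is exactly what the hypothesis that finite shifts have $B_t<1$ delivers: $J_T$ lies strictly below a ceiling that is weakly below $c$. Your (iv) never uses that hypothesis. With strictness, $T_{\mathrm{possible}}$ is settled, since every policy path's tail is an unpoliced path from a stock below $J_-^{RE}$, and none of those converges high.

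Second, for $T_{\mathrm{robust}}$, showing the guarantee fails at $T<T_{\mathrm{robust}}$ requires an actual policy-augmented path that does not converge high. Knowing $J_T<J_+^{RE}$ only tells you that $\underline{\mathbf J}(J_T)$ fails. You need the mirror-image pasting: the least fixed point of $\mathcal T^{\boldsymbol\tau}_{J_0}$ has as its tail $\underline{\mathbf J}$ of its own terminal stock, since otherwise pasting the smaller tail gives a supersolution strictly below it. That terminal stock lies outside the upper set $\mathcal E_h^{\min}$, so this path does not converge to $J_h$.

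A minor slip: $(c-d^TJ_0)/(1-d^T)>0$ always, because $c>J_0$. The positive part in $\tau_T^{\mathrm{env}}(c)$ therefore covers the case where $H(\underline R)$ already exceeds this target, not the case of a nonpositive target.
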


The proof is in Appendix~\ref{app:dynproofs}.

\paragraph{Beyond the bounds.}
The theorem gives duration bounds and conservative sufficient shift sizes within temporary direct cutoff
shifts, not a least-cost policy; Online Appendix OA.10 gives the capped-instrument duration, the full
envelope, nonuniform schedules, and the extremal-path construction. Enactment protection changes current
flow, obeys the same physical ceiling, and inherits constructive sufficiency only if promised future
protection supplies a uniform continuation-value floor. Reforms that raise $\delta$ itself---rotation,
hiring, selective replacement---or permanently change primitives lie outside the bounds: no temporary
intervention at fixed turnover can rebuild the stock faster than turnover permits, while forward-looking
expectations determine whether a given terminal stock makes high convergence merely possible or guarantees
it under every admissible continuation.

Figure~\ref{fig:dynamic-regions} summarizes the benchmark classification. The left panel shows the certified
convergence regions; the right panel compares their boundary boxes with the fastest stock rebuilding physically
possible after each intervention date.
\begin{figure}[t]\centering
\caption{History, expectations, and policy duration in the worked benchmark.}
\label{fig:dynamic-regions}
\includegraphics[width=0.92\textwidth]{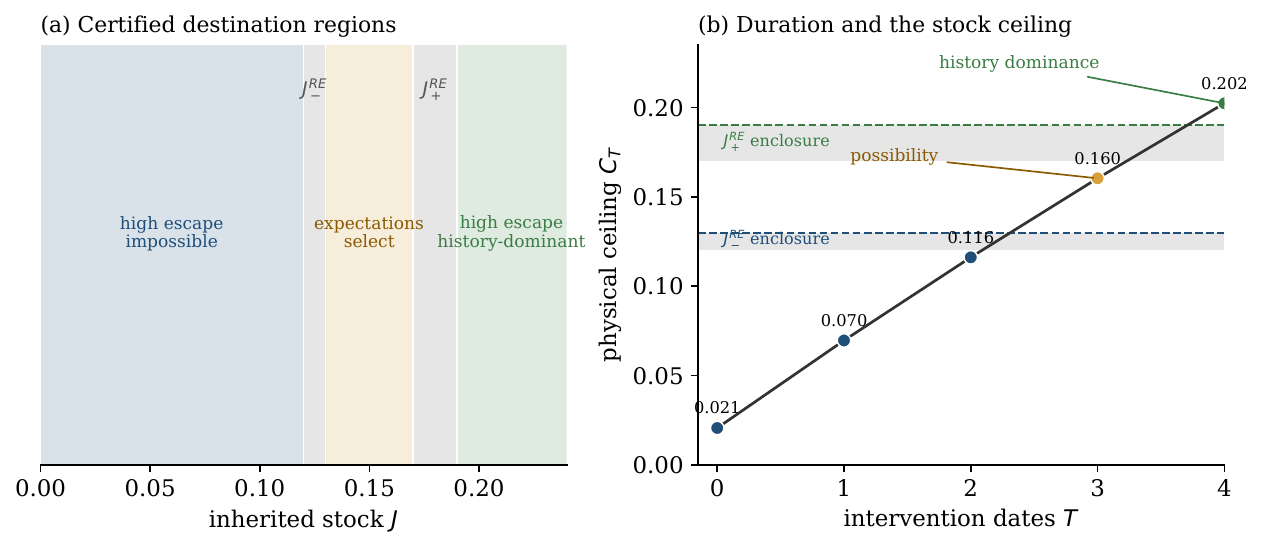}
\begin{minipage}{0.90\textwidth}
\footnotesize\emph{Notes:} The shaded intervals in the left panel enclose $J_-^{RE}$, below which high
convergence is impossible, and $J_+^{RE}$, above which it occurs under every admissible continuation. The right panel plots the physical stock ceilings from the trap
for zero through four intervention dates. Three dates cross the first boundary box; four dates cross the
second.
\end{minipage}
\end{figure}

\subsection{The inherited impact value of enactment protection}\label{subsec:sequencing}

The timing asymmetry of Corollary~\ref{cor:stock-flow} says the two instruments act at different speeds. It
does not say which to use first, and neither does what follows. What the model does deliver is a prerequisite
for any such comparison: the sign of the impact effect of protection, which is not a constant of the
environment but a function of the inherited state. Releasing judgment is worth something only when the
judgment released is likely to be right, and how likely it is to be right depends on how much judgment the
group has been enacting.

Let shared competence $\sigma(A)$ be strictly increasing, concave, and $C^1$: the accuracy at diagnosing novelty that
prevalent enacted practice sustains. It is a steady-state object, the accuracy supported by a culture in which
a mass $A$ routinely departs from failing routines, not a within-period return to a single departure; a
one-date intervention therefore does not move it. A formed agent identifies the appropriate novel action with
probability $\sigma(A)$; a correct departure produces $V_N$, an incorrect one destroys $L>0$, and either
avoids the routine's novel-state damage $D\ge0$. The expected output value of a departure relative to the
routine is
\begin{equation}\label{eq:delta-sigma}
 \Delta(\sigma)=\sigma V_N-(1-\sigma)L+D,
\end{equation}
which is increasing in $\sigma$ and vanishes at the \emph{break-even competence threshold}
\begin{equation}\label{eq:sigma-bar}
 \bar\sigma:=\frac{L-D}{V_N+L}.
\end{equation}
To make the crossing well defined, assume $\bar\sigma\in\sigma\bigl(a((0,1))\bigr)$: the break-even
competence threshold is attained at an enacted mass that some interior inherited stock actually generates.
Then $A^\dagger:=\sigma^{-1}(\bar\sigma)$ is well defined and lies in $a((0,1))$, and it is the enacted
mass at which a marginal departure has zero output value.\footnote{The condition rules out two corners: if
$\bar\sigma>\sigma(a(1))$, competence never reaches the threshold and protection lowers impact output at
every inherited stock, while if $\bar\sigma\le\sigma(a(0^+))$ it raises it at every stock. Both are
degenerate rather than impossible; the benchmark of Section~\ref{sec:numerical} satisfies the condition
with room to spare.}

\begin{proposition}[The impact sign of protecting dissent in a stationary inherited culture]\label{thm:sequencing}
Fix an inherited culture that is stationary at stock $J_t$, so that its enacted mass is $A_t=a(J_t)$ and its
inherited competence is the steady-state accuracy $\sigma(A_t)$ that the mean-count learning technology of
Online Appendix OA.1 attaches to that culture. Let a one-date enactment-protection
intervention remove enactment stigma at date $t$ only. The intervention
releases $S_t^R=J_t[w^0-w(A_t)]$ departures, and the impact change in adaptive output is
\begin{equation}\label{eq:impact-sign}
 S_t^R\,\Delta\bigl(\sigma(A_t)\bigr).
\end{equation}
At every positive inherited stock on a strictly sanctioned enactment branch $S_t^R>0$, so the sign is that
of $\Delta(\sigma(A_t))$: strictly negative when $A_t<A^\dagger$, strictly positive when $A_t>A^\dagger$. (At
$J_t=0$ nothing is released and the impact is zero.) Because $a(\cdot)$ is strictly
increasing---under \eqref{eq:enactment-regularity} and full support of $G$, \eqref{eq:aprime} gives
$a'>0$---the two signs are separated in the stock by the single threshold
\begin{equation}\label{eq:seq-stock}
 J^\dagger:=a^{-1}(A^\dagger):
\end{equation}
on a strictly sanctioned enactment branch, every stationary culture with stock below $J^\dagger$ has
strictly negative impact and every one above it strictly positive; on a tolerated branch nothing is
silenced and the impact is zero. The threshold is a point of the state space, not itself generically a stationary stock;
both signs occur across stationary cultures exactly when stationary stocks lie on both sides of
$J^\dagger$.
\end{proposition}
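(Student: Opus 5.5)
The proof is an accounting identity for the impact effect, followed by two monotonicity arguments: one for the sign, one for the stock threshold.

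\emph{Released mass.} I would first fix the inherited state. At date $t$, the stock $J_t$ is predetermined, and stationarity gives $A_t=a(J_t)$ with competence $\sigma(A_t)$. By the stipulation that $\sigma$ is a steady-state object, this competence is unchanged by a one-date intervention. Corollary~\ref{cor:stock-flow}(i) then says that removing enactment stigma at $t$ raises current enactment from $A_t$ to $A_t^0=J_tw^0$, releasing exactly $S_t^R=J_t[w^0-w(A_t)]$.

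\emph{Impact on output.} Every formed agent, whether she departed before or departs only under protection, identifies the novel action with the same probability $\sigma(A_t)$. Her draw of the resolve cost $\ell$ is independent of diagnostic accuracy, so the released agents are not selected on correctness. Each released departure therefore replaces the routine outcome $-D$ by an expected $\sigma V_N-(1-\sigma)L$, a per-departure gain of $\Delta(\sigma(A_t))$ from \eqref{eq:delta-sigma}. Departures already occurring are unaffected, and so is the familiar-state output $Y_F(J_t)$. Hence the impact change in adaptive (novel-state) output is $S_t^R\Delta(\sigma(A_t))$, which is \eqref{eq:impact-sign}. If output is instead scaled by $p$, this factor is positive and does not affect the sign.

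\emph{Sign.} On a strictly sanctioned enactment branch, $s_a(A_t)>0$, so $z(A_t)<b_N$. Strict monotonicity of $G$ (density $g>0$) gives $w(A_t)<w^0$, and hence $S_t^R>0$ whenever $J_t>0$. Next, $\Delta$ is strictly increasing in $\sigma$ with slope $V_N+L>0$ and vanishes at $\bar\sigma$. Since $\sigma$ is strictly increasing, $\Delta(\sigma(A))$ is strictly negative for $A<A^\dagger$ and strictly positive for $A>A^\dagger$. The degenerate cases are immediate: at $J_t=0$ we have $a(0)=0$ and nothing is released, and on a tolerated branch $s_a=0$ gives $w=w^0$, so $S^R=0$.

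\emph{Stock threshold.} By \eqref{eq:aprime}, the enactment regularity condition \eqref{eq:enactment-regularity} makes the denominator positive, and full support of $G$ makes $w>0$, so $a'>0$ wherever $a$ is differentiable. More robustly, uniqueness of the root together with strict monotonicity of $A/G(z(A))$ shows that $a$ is strictly increasing and continuous. The assumption $\bar\sigma\in\sigma(a((0,1)))$ places $A^\dagger$ in the range of $a$, so $J^\dagger=a^{-1}(A^\dagger)$ is well defined, and $J_t<J^\dagger$ holds exactly when $A_t<A^\dagger$. The final sentence of the proposition, that both signs occur exactly when stationary stocks lie on both sides of $J^\dagger$, is then a restatement.

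The step I expect to need the most care is strict monotonicity of $a$ at kinks of a hard sanction rule, where \eqref{eq:aprime} is unavailable. I would handle it through the uniqueness and monotone-ratio argument rather than through the derivative.
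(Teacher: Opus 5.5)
Your proposal is correct and follows the paper's own argument. The impact is the released mass $S_t^R$ times the per-departure value $\Delta(\sigma(A_t))$ at fixed inherited competence, and the sign and the stock threshold follow from $\Delta\gtrless0$ as $\sigma\gtrless\bar\sigma$ together with strict monotonicity of $\sigma$ and $a$. Your extra care fills in steps the paper's two-line proof leaves implicit: you make explicit that released enactors are not selected on correctness, and you obtain strict monotonicity of $a$ from the strictly increasing ratio $A/G(z(A))$ rather than only from \eqref{eq:aprime}, which also covers kinks of a hard rule.
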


\begin{proof}
At fixed inherited competence, each released departure contributes $\Delta(\sigma(A_t))$ relative to the
routine, and $S_t^R$ of them are released; \eqref{eq:impact-sign} follows. The sign statement is
$\Delta(\sigma)\gtrless0$ as $\sigma\gtrless\bar\sigma$ together with monotonicity of $\sigma$ and $a$.
\end{proof}

The economic content is that the sign of protecting dissent is inherited rather than intrinsic. The same
chair who shields a dissenting vote adds value in a group that has been exercising judgment and destroys it in
one that has not, and \eqref{eq:seq-stock} locates the crossing. In the trap the released judgment carries
negative expected adaptive value. The stationarity hypothesis is what licenses reading competence off the
current stock: along a transition the robust object is $S_t^R\Delta(\sigma_t)$ with $\sigma_t$ the accuracy
actually inherited, and two groups at the same transitional stock can carry different accumulated accuracy
from different histories, so the identification $\sigma_t=\sigma(a(J_t))$---and with it the threshold
$J^\dagger$---applies to stationary inherited cultures, which is how the benchmark of
Section~\ref{sec:numerical} uses it.

\paragraph{Toward sequencing.} The impact sign is not a sequencing rule (rebuild first, protect
afterwards), and I do not claim one. Two forces push the other way. Releases are public experiments: across
stationary cultures, higher enacted judgment sustains higher competence, an externality that no private
enactor counts (Theorem~\ref{thm:central-welfare}(i)). And a \emph{promised continuation} of protection,
unlike the one-date surprise of Corollary~\ref{cor:stock-flow}, raises the returns entering the revising
cohort's value $V_{t+1}$ and so accelerates rebuilding through the same channel formation support targets.
Settling the order would require a dynamic competence state (for instance
$C_{t+1}=(1-\eta)C_t+\eta f(A_t)$), an intertemporal objective, and instrument costs---an extension left
open; what the model delivers is the input any such exercise needs: the impact sign, the threshold
\eqref{eq:seq-stock}, and the timing contrast of Corollary~\ref{cor:stock-flow}.

\subsection{What stationary behavior conceals}\label{subsec:conceal}

The thresholds $J_-^{RE}$ and $J_+^{RE}$ depend on continuation values off the stationary path, so a
stationary action level does not identify either boundary without the transition primitives. In particular,
$\beta$ enters neither stationary condition yet moves both boundaries: at the worked primitives, greater
patience widens the expectation-dependent region---more patient members make a group both easier to rescue
and easier to lose (Online Appendix OA.10, a local computation rather than global
monotonicity).\footnote{The duration pair $(\Tpossible,\Trobust)$ is unchanged across that comparison
because both boundaries move within the same steps of the physical ceiling.}

\paragraph{Slowing convergence near a fold.}
At a nondegenerate fold where an outer culture meets $J_i$, $m_j\to1$ and the stable root in
\eqref{eq:forward-stable-root} tends to one, so the convergence half-life $\log(1/2)/\log r_j^s$ diverges. The
stationary level remains continuous up to the fold, whereas convergence becomes arbitrarily slow; this is the
phenomenon known as critical slowing down \citep{SchefferEtAl2009}. Tolerance or cap kinks can generate nonsmooth
transitions rather than nondegenerate folds; the worked baseline stays on a globally interior branch, so its
folds are nondegenerate.

Even sanctions that are ex-post rational filters are chosen after the stock is already low, so a
commitment to protect dissent or a norm of interpretive charity is valuable precisely as a commitment not
to apply the suspicious interpretation the trap itself creates---not a call for unconditional
permissiveness, since released fallible judgment can spread confident error (Online Appendix OA.1).

\section{Diagnosing absent and suppressed judgment}\label{sec:hidden}

What can an observer learn about the maintenance stock from conduct? Less than the stock's importance would
suggest, and in one respect the ranking is actively misleading. Conduct does not reveal the composition of
silence: the same visible action may be independent for one member and conformist for another. For a group
emerging from enforced conformity this is the transition question: how much did pressure deplete
maintained judgment, and how much of the surviving stock is socially suppressed rather than privately
withheld? Throughout, the analyst does not know the recording and background rates; were those primitives
known, event frequencies alone would recover the stocks.

\begin{proposition}[What conduct cannot reveal]\label{prop:limits}
Maintain the baseline $v_f=0$, the normalization $\alpha_m=1$, and a single exposure regime, and consider
an analyst who observes conduct---event frequencies and actions---but neither motives nor the primitives
$(\phi_0,\psi,\nu_m,H,G,b_N)$. Then: (i)
whenever $X_m>0$, every observed independent act has a
posterior probability of being authentic strictly inside $(0,1)$, and a conforming agent may be unformed,
silenced, or privately withheld; (ii) the observed event frequencies confound the stocks $(J,A)$ with the
background rates and visibilities; and (iii) even granted $(J,A)$, conduct restricts the split of
non-enacted judgment into its silenced and privately withheld parts only to the segment
$\{(S^R,N)\ge0:\,S^R+N=J-A\}$. The segment is sharp on its relative interior when $J>0$ and the enactment
margin is strictly sanctioned: primitives differing only in the unobserved cost distributions generate
identical conduct at every margin while assigning any interior split.
\end{proposition}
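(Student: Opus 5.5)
The argument splits into the three claims, and only (iii) needs a construction. For (i), I will use the posterior \eqref{eq:posterior} with $\alpha_m=1$: $q_m(X_m)=\nu_m/(\nu_m+X_m)$. Because $\nu_m>0$ and $X_m>0$, this lies strictly inside $(0,1)$, so the authentic probability $1-q_m$ does too. For the conforming agent, the decomposition \eqref{eq:decomposition} partitions non-enacting agents into three groups: unformed agents ($e=0$), silenced agents (those with $\ell\in(z(A),b_N]$), and privately withheld agents ($\ell>b_N$). All three take the routine action in every state, because a formed agent in a familiar state also plays $r$. Each group has positive mass whenever $J\in(0,1)$ and the enactment branch is strictly sanctioned, by full support of $H$ and $G$. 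Since the routine action is the same for all three, conduct cannot separate them.

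For (ii), observed frequencies are $\phi_0(\nu_f+J)$ at formation and $\psi(\nu_a+A)$ at enactment. I will exhibit the confound with an explicit rescaling. Fix the observed frequencies $F_f,F_a$. Then any $J'$ with $J'<F_f/\phi_0'$, paired with $\nu_f'=F_f/\phi_0'-J'$, reproduces $F_f$, and the same construction works at enactment. The remaining step is to confirm that these alternative stocks are themselves equilibrium stocks for suitable $(H,G,b_N)$; the same free-cost-distribution device used in (iii) will handle that, so (ii) follows once (iii) is done.

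The core of the proof is (iii). Fix the observed $(J,A)$ and everything observable, including $\psi$, the sanction schedule $s_a$, and hence $s_a(A)$ and the pressure gap $\psi s_a(A)>0$. Enactment requires $w(A)=G(b_N-\psi s_a(A))=A/J$. The silenced share is $S^R=J[G(b_N)-A/J]$ and the withheld share is $N=J(1-G(b_N))$. Choosing a target $w^0\in(A/J,1)$ means finding a full-support $G'$ and $b_N'$ with $G'(b_N'-\psi s_a(A))=A/J$ and $G'(b_N')=w^0$. Since both values lie strictly between $0$ and $1$ and the two points are a fixed positive distance apart, a continuous, strictly increasing, positive-density $G'$ hitting them exists, for instance a piecewise-smoothed logistic. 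Every interior split $(S^R,N)=(J(w^0-A/J),\,J(1-w^0))$ is then attained.

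The main obstacle is keeping the rest of the equilibrium, and hence all observed conduct, unchanged. I will handle it in three moves:
\begin{enumerate}
\item Changing $G$ changes $\Omega_P(A)$ and therefore $R(J,A;p)$. I will absorb this by choosing $H'$ with $H'(R')=J$, which is possible because $H$ is unobserved and only needs full support.
\item $b_N$ also appears nowhere else in conduct.
\item The enactment regularity condition \eqref{eq:enactment-regularity} involves $g$, but only the equilibrium fixed point matters for conduct, so I only need $(J,A)$ to remain an equilibrium.
\end{enumerate}
If uniqueness of $a(J)$ must also be preserved, I will choose $G'$ with small density near $z(A)$ so that $\mathcal C_a<1$ continues to hold.
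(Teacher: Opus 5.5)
Your proposal is correct and follows the paper's own argument. Part (i) is Bayes' rule with $\nu_m>0$ plus the fact that non-departure is common to all three non-enacting types. Part (ii) is the one-moment confound; the paper stops at this counting observation, so the equilibrium-consistency step you defer is not needed. The sharpness in (iii) is exactly the paper's construction: hold the audience primitives, hence posteriors and sanctions, fixed; interpolate a new full-support $\tilde G$ through $(z(A),A/J)$ and $(b_N,\tilde w^0)$; and restore the formation fixed point with a new $\tilde H$, using only the observed equilibrium rather than uniqueness of the perturbed enactment subgame.

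Two tightenings. First, keep $b_N$ fixed instead of choosing $b_N'$: the statement requires primitives differing only in the cost distributions, and your interpolation needs only the fixed gap $\psi s_a(A)>0$. Second, drop the aside about restoring $\mathcal C_a<1$ through small density near $z(A)$. It is unnecessary, and it is not justified as stated, because $\mathcal C_a$ is a supremum over all $A\in(0,1]$ rather than a local condition at the equilibrium.
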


Online Appendix OA.12 states the observational framework and shows that an invariant-stock
enactment-protection intervention, truthful private elicitation, and measured nuisance
levels---instruments assumed rather than derived---restore point identification of the full composition;
the proofs are in Online Appendix OA.13. What remains here is an implication an outside evaluator should
not miss: when deliberation is costly, normal performance ranks adaptive capacity in reverse---an
accounting implication of the output definitions, not a new equilibrium mechanism.

\begin{corollary}[Normal performance reverses the capacity ranking]\label{cor:flip}
When maintaining judgment carries a positive deliberation cost, $\gamma>0$, normal performance does not merely
fail to identify capacity; it ranks it in reverse (at $\gamma=0$ calm output is flat in the stock).
Familiar-state output
$Y_F=V_F+\gamma(1-J)$ is strictly decreasing in $J$, while competent novel-state output $Y_N^C$ is strictly increasing in
$A=a(J)$, itself increasing in $J$. For any pair of cultures $(J_h,A_h)>(J_\ell,A_\ell)$, the high-capacity
culture produces less in calm and more under novelty, and more in expectation iff
$p>\widehat p_{\ell h}$, where
\begin{equation}\label{eq:pbreak}
 \widehat p_{\ell h}:=
 \frac{\gamma(J_h-J_\ell)}
 {\gamma(J_h-J_\ell)+(V_N+D)(A_h-A_\ell)}\in(0,1).
\end{equation}
\end{corollary}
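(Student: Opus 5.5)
The argument is an accounting computation on the output definitions of Section~\ref{subsec:outscope}, with no new equilibrium reasoning.

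\emph{Step 1: monotonicity of the two components.} Differentiating $Y_F(J)=V_F+\gamma(1-J)$ gives slope $-\gamma$. This is strictly negative when $\gamma>0$ and zero when $\gamma=0$, which gives the parenthetical claim. For the novel state, rewrite
\[
 Y_N^C(A)=AV_N-(1-A)D=A(V_N+D)-D .
\]
Its slope is $V_N+D>0$, since $V_N>0$ and $D\ge0$. Composition with $a(\cdot)$ is increasing because $a_J>0$ by Proposition~\ref{prop:capacity}, or by \eqref{eq:aprime} under \eqref{eq:enactment-regularity}. So $Y_N^C$ is strictly increasing in $J$ along the enactment solution.

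\emph{Step 2: the pairwise comparison.} Take cultures with $(J_h,A_h)>(J_\ell,A_\ell)$. I read this as strict in both coordinates, which holds for distinct equilibria because $A=a(J)$ is strictly increasing. Then
\[
 Y_F(J_h)-Y_F(J_\ell)=-\gamma(J_h-J_\ell)<0
\]
and
\[
 Y_N^C(A_h)-Y_N^C(A_\ell)=(V_N+D)(A_h-A_\ell)>0 .
\]
So the high-capacity culture produces less in calm and more under novelty. The expected-output difference is
\[
 \Delta Y(p)=-(1-p)\gamma(J_h-J_\ell)+p(V_N+D)(A_h-A_\ell).
\]
This is affine in $p$. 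It equals $-\gamma(J_h-J_\ell)<0$ at $p=0$ and $(V_N+D)(A_h-A_\ell)>0$ at $p=1$, so it is strictly increasing. Solving $\Delta Y(p)=0$ gives exactly $\widehat p_{\ell h}$ in \eqref{eq:pbreak}. Positivity of both summands in the denominator places $\widehat p_{\ell h}$ in $(0,1)$, and $\Delta Y(p)>0$ if and only if $p>\widehat p_{\ell h}$.

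\emph{Main obstacle.} The calculation itself is routine. The only care needed is in the word ``reverse'' and in the hypotheses. The ranking flip needs $\gamma>0$ and $J_h>J_\ell$. The novelty ranking needs $A_h>A_\ell$ strictly; if only weak order were given, I would invoke $a'>0$ to upgrade it. Finally, the comparison should be stated at fixed primitives, so that $V_F$, $V_N$, $D$ and $\gamma$ are common to both cultures.
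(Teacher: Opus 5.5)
Your proposal is correct and follows the paper's own argument. The paper likewise obtains \eqref{eq:pbreak} from the sign of $-(1-p)\gamma(J_h-J_\ell)+p(V_N+D)(A_h-A_\ell)$, and it reads the calm/novel ranking off the monotonicity of $Y_F$, $Y_N^C$, and $a(\cdot)$. Your remark that $\widehat p_{\ell h}\in(0,1)$ requires strict ordering in both coordinates, which $a'>0$ supplies, makes explicit a point the paper leaves implicit.
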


The reversal requires the continuing deliberation cost, not the formation-stigma mechanism itself: the
worked benchmark becomes unique at $\gamma=0$, though other admissible primitives can still support
multiplicity.\footnote{Along continuous coexisting branches an output reversal solves
$p=\widehat p_{\ell h}(p)$, uniquely so when the output difference is strictly increasing across the
coexistence interval. With practice-built competence (Section~\ref{sec:welfare} defines the learning-adjusted
novel output $Y_N^L$), the same comparison replaces the competent novel-output gap by
$Y_N^L(A_h)-Y_N^L(A_\ell)$ whenever that gap is positive; Online Appendix OA.4 gives a sufficient
condition. Because $Y_N^L(A)\le Y_N^C(A)$ whenever competence is at most one, weak practice further
depresses novel output in the trap; the pairwise reversal threshold rises at the worked benchmark.}

\paragraph{Scope of the reversal.} The strict reversal treats deliberation as pure option value in calm
states. If a fraction $\rho_c\in[0,1)$ of cue-labeled-familiar episodes are covert novelties that only a
formed agent's diagnostic detects (net gain $\Delta_c>0$, no additional registered event), the equilibrium
system is unchanged but calm-labeled output acquires slope $-(1-\rho_c)\gamma+\rho_c\Delta_c$, so
Corollary~\ref{cor:flip} survives iff $(1-\rho_c)\gamma>\rho_c\Delta_c$: the reversal is confined to
settings where covert novelty is rare, or its gains small or slow to attribute (derivation in Online
Appendix OA.5).

\paragraph{Diagnostic content.} The timing asymmetry of Corollary~\ref{cor:stock-flow} yields the
familiar stock--flow signatures---an immediate flow response to enactment protection and to its
withdrawal, a delayed stock response to formation support, and heterogeneous post-withdrawal persistence
across otherwise comparable units starting between the cutoffs, since continuation expectations are
unobserved---consistency checks the mechanism shares with any stock--flow adjustment
technology.\footnote{The civic counterparts are as in
Section~\ref{subsec:outscope}: legal protection of dissent for the chair's protection, civic formation for
training, generational replacement for seat rotation.} What is specific to the mechanism is the sanction's
signature profile: it is \emph{act-contingent} and levied before outcomes are realized,
\emph{prevalence-priced}---the penalty attached to a given probe or dissenting vote falls with the
prevalence of the practice, because the posterior \eqref{eq:posterior} is decreasing in $X_m$---and
insensitive to individual vindication within the episode; under career concerns the penalty is instead
outcome-contingent and individually priced \citep{Holmstrom1999,ScharfsteinStein1990}. One fingerprint
discriminates most sharply: under career concerns a record that licenses departures reveals
\emph{competence}, whereas here it reveals \emph{motive}---authentic members probe only on the diagnostic
cue, so a \emph{sparse} probing record, not a successful one, identifies authenticity.\footnote{No single comparative static separates the two regimes: blame-sharing and
relative evaluation also make the cost of deviating fall with the mass deviating
\citep{ScharfsteinStein1990,Zwiebel1995}, and career-concern conformity fades as an individual's record
lengthens \citep{ChevalierEllison1999,HongKubikSolomon2000}, the same direction in which record visibility
erodes the anonymous channel here (Online Appendix OA.8, a bound taken at fixed prevalence, which loses
force at trap-level stocks). The sparse-record likelihood ratio is in Online Appendix OA.8.}
These are testable implications, not an unconditional welfare case for more independent action.

\section{Welfare benchmarks}\label{sec:welfare}

The planner--equilibrium wedge combines adaptive output, learning, and sanctions, so the whole gap should not
be attributed to reputation. Its sanction component has a timing source: withdrawal is optimal after the act
but chosen after judgment is formed, so no partner internalizes its effect on future formation.
Theorem~\ref{thm:central-welfare} characterizes the combined wedge; call a margin \emph{effectively
sanctioned} when its sanction survives exposure ($\psi s_a(A_j)>0$ or $\phi s_f(J_j)>0$---since $\psi>0$
is a primitive, enactment always qualifies when strictly sanctioned, and formation exactly when
$\phi>0$).

The planner benchmarks use the competence technology $\sigma(A)$ and the departure value
$\Delta(\sigma)$ of \eqref{eq:delta-sigma}, introduced for the sequencing result in
Section~\ref{subsec:sequencing}. Recall that $\Delta(1)=V_N+D$ and that $\Delta(\sigma)$ turns negative below
the break-even competence threshold \eqref{eq:sigma-bar}: the conditionality below reflects confident error, not technical
caution. The resulting novel-state output is $Y_N^L(A):=A\Delta(\sigma(A))-D$,
where the superscript distinguishes learning-built output from the competent benchmark $Y_N^C$ of
Section~\ref{sec:model}. Departures are public experiments---each release generates information for the group much as
trial-and-error search does \citep{Callander2011}---generating the externality that
Theorem~\ref{thm:central-welfare}(i) prices; Online Appendix OA.1 derives the learning technology as a
transparent mean-count approximation.

Consider a planner benchmark that treats standing losses as transfers. Let
$C(J)=\int_0^JH^{-1}(u)\,du$ and $K(q)=\int_0^qG^{-1}(u)\,du$; assume $\E|k|<\infty$ and
$\E|\ell|<\infty$, so that $C$ and $K$ are finite and continuous on $[0,1]$; both moment conditions hold
at the logistic benchmark.\footnote{The left tail of $\ell$ is already controlled by the maintained
finiteness of $\Omega$; the right tail is what $K(1)$, reached on the feasible boundary $A=J$,
additionally requires.} Stationary welfare is
\begin{equation}\label{eq:welfare}
 \mathcal W(J,A;p)=(1-p)[V_F+\gamma(1-J)]
 +p[A b_N+A\Delta(\sigma(A))-D]-C(J)-pJ K(A/J).
\end{equation}
Its four terms are familiar-state output net of diagnostic costs, novel-state surplus (the private benefit
$Ab_N$ plus adaptive output), aggregate formation costs, and aggregate resolve costs, with the convention
$JK(A/J):=0$ at $J=0$ (the enacting share is empty there); standing losses do not
appear---the benchmark treats them as transfers---and the background source stays outside the ledger, so
welfare comparisons are conditional on excluding both the harm undeterred opposition causes and the
sanctions it bears (Online Appendix OA.3 prices the former; the closing paragraph returns to the
accounting).
The resulting planner is the same-technology benchmark,
\[
 (J^{\ast\ast},A^{\ast\ast})\in\argmax_{0\le A\le J\le1}\ \mathcal W(J,A;p),
\]
facing the same endogenous competence as the decentralized cultures. The fully competent benchmark is its
$\sigma\equiv1$ specialization: define
\[
 w^\ast:=G(b_N+V_N+D),\qquad
 \Omega_S:=\E[(b_N+V_N+D-\ell)_+],\qquad
 \msv(p):=p\Omega_S-(1-p)\gamma,
\]
with unique allocation $J^\ast=H(\msv(p))$ and $A^\ast=J^\ast w^\ast$ (derivation in OA.1). Two
qualifications are independent and should be kept distinct: transfers versus real match-surplus accounting
(welfare accounting), and endogenous versus perfect competence (the accuracy technology).

\begin{theorem}[Learning externality and conditional underprovision]\label{thm:central-welfare}
Suppose $\sigma(A)$ is increasing, concave, and $C^1$; standing losses are treated as transfers; and the
baseline private-payoff specification is maintained, so that private enactors do not internalize output
consequences (the internalization parameter $\rho$ of Online Appendix OA.1 is zero).
\begin{enumerate}
\item[(i)] \emph{(Enactment wedge.)} At a permanent culture, the planner's marginal value of enactment
differs from the private cutoff by
\begin{equation}\label{eq:learning-wedge}
 \mathcal L_a(A):=\Delta(\sigma(A))+A\sigma'(A)\Delta'(\sigma(A))+\psi s_a(A).
\end{equation}
The learning term $A\sigma'(A)\Delta'(\sigma(A))$ is nonnegative, and strictly positive wherever $A>0$ and
$\sigma'(A)>0$; wherever $\mathcal L_a(A)>0$, enactment pressure thus creates a positive marginal
private--social enactment wedge.
\item[(ii)] \emph{(Conditional underprovision against the same-technology planner.)} Let
$(J^{\ast\ast},A^{\ast\ast})$ be an interior optimum of the planner facing the same endogenous
competence. If
\[
 \Delta(\sigma(A^{\ast\ast}))+A^{\ast\ast}\sigma'(A^{\ast\ast})\Delta'(\sigma(A^{\ast\ast}))\ge0,
\]
every equilibrium with at least one effectively sanctioned margin has
$J_j<J^{\ast\ast}$ and $A_j<A^{\ast\ast}$.
\item[(iii)] \emph{(Unconditional comparison with the competent benchmark.)} Independently of the sign
condition, every equilibrium satisfies $J_j<J^\ast$ and $A_j<A^\ast$. The gap is strict even without any
sanction, reflecting the uninternalized social value $V_N+D$ of a competent departure; the fully competent
planner is a counterfactual upper benchmark, not the same-technology comparison.
\end{enumerate}
\end{theorem}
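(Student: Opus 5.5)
Part~(i) is a first-order comparison. The private enactment cutoff is $z(A)=b_N-\psi s_a(A)$. For the planner I will write \eqref{eq:welfare} with $J$ held fixed and the enacting share $q=A/J$ as the choice variable. Then $pJK(q)$ has derivative $pJG^{-1}(q)$ in $q$. Differentiating $A\Delta(\sigma(A))$ gives $\Delta(\sigma(A))+A\sigma'(A)\Delta'(\sigma(A))$. So the planner's marginal cutoff for $\ell$ is $b_N+\Delta+A\sigma'\Delta'$, and subtracting $z(A)$ leaves exactly $\mathcal L_a(A)$. The sign of the learning term follows from $\Delta'=V_N+L>0$ and $\sigma'\ge0$, and it is strictly positive whenever $A>0$ and $\sigma'>0$.

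For part~(ii) I will derive the planner's first-order conditions at an interior optimum. The enactment FOC gives $G^{-1}(A^{\ast\ast}/J^{\ast\ast})=b_N+\Delta^{\ast\ast}+A^{\ast\ast}\sigma'^{\ast\ast}\Delta'^{\ast\ast}\ge b_N$ under the sign condition, so the planner's enacting share is at least $w^0=G(b_N)\ge w(A_j)$. The formation FOC comes from the envelope theorem applied to $J K(A/J)$ at the optimal share $q$. It reads $H^{-1}(J^{\ast\ast})=p\,\E[(b_N+M^{\ast\ast}-\ell)_+]-(1-p)\gamma$, where $M^{\ast\ast}$ denotes the marginal social value above. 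This is because $-p[K(q)-qK'(q)]$ plus the novel-surplus derivative per unit of $J$ reduces to the option value $\Omega(b_N+M)$. In equilibrium, $H^{-1}(J_j)=p\Omega(z(A_j))-(1-p)[\gamma+\phi s_f(J_j)]$. When some margin is effectively sanctioned and $M^{\ast\ast}\ge0$, the equilibrium return lies strictly below $p\Omega(b_N)-(1-p)\gamma\le$ the planner's return, provided $\Omega$ is strictly increasing, which holds since $G>0$. That gives $J_j<J^{\ast\ast}$.

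The step I expect to be the main obstacle is this one: the planner's $M$ is evaluated at $A^{\ast\ast}$ and the equilibrium at $A_j$, so comparing levels needs more than comparing cutoffs. I would handle it by comparing the return inequalities directly. The chain is $H^{-1}(J_j)<p\Omega(b_N)-(1-p)\gamma\le H^{-1}(J^{\ast\ast})$, together with the fact that $H^{-1}$ is strictly increasing, and this never requires monotonicity in $A$.

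For $A$, write $A_j=J_jw(A_j)\le J_jw^0$ and $A^{\ast\ast}=J^{\ast\ast}q^{\ast\ast}\ge J^{\ast\ast}w^0$. Since $J_j<J^{\ast\ast}$ and $w^0>0$, this gives $A_j<A^{\ast\ast}$. The edge case $J_j=0$ cannot occur because $H$ has full support, so $J_j>0$.

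Part~(iii) repeats the argument with $\sigma\equiv1$. The planner's enactment cutoff is $b_N+V_N+D$, giving $w^\ast>w^0\ge w(A_j)$, and the formation return is $\msv(p)=p\Omega(b_N+V_N+D)-(1-p)\gamma$. This strictly exceeds the equilibrium return $p\Omega(z(A_j))-(1-p)[\gamma+\phi s_f]$ because $V_N+D>0$, $z\le b_N$, and $\Omega$ is strictly increasing, with no sign condition and no sanction needed. So $J_j<J^\ast=H(\msv(p))$ and $A_j\le J_jw^0<J^\ast w^\ast=A^\ast$. The derivation of $J^\ast,A^\ast$ is taken from the OA.1 statement quoted before the theorem.
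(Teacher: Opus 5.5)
Your proposal is correct and follows the paper's proof essentially step for step: the same first-order conditions $G^{-1}(A^{\ast\ast}/J^{\ast\ast})=c(A^{\ast\ast})$ and $H^{-1}(J^{\ast\ast})=p\Omega(c(A^{\ast\ast}))-(1-p)\gamma$ (with $c(A^{\ast\ast})=b_N+M^{\ast\ast}$ in your notation); the same strict chain through $p\Omega(b_N)-(1-p)\gamma$ giving $J_j<J^{\ast\ast}$; the same bound $A_j\le J_jw^0<J^{\ast\ast}w^0\le A^{\ast\ast}$, which the paper writes as $J_jG(z(A_j))<J^{\ast\ast}G(c(A^{\ast\ast}))$; and the same $\sigma\equiv1$ specialization for part~(iii). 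There is one wording slip in your formation first-order condition: in the $(J,A)$ parametrization, $-p[K(q)-qK'(q)]$ by itself already equals $p\Omega(G^{-1}(q))$ via $qG^{-1}(q)-K(q)=\Omega(G^{-1}(q))$, so adding a separate novel-surplus derivative would double count if read literally, but the condition you state is the right one.
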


The proof is in Appendix~\ref{app:welfare-proof}. The theorem compares decentralized cultures with
planners; it does not rank the two cultures. That
ranking depends on formation and resolve costs, the economy of routine, competence, and real
screening value; the worked benchmark's comparison is reported in Section~\ref{sec:numerical}. Online
Appendix OA.4 records how learning compounds the output gap between cultures.

\paragraph{Beyond the transfer convention.} The benchmark treats standing losses as transfers, whereas the
matching microfoundation of Section~\ref{sec:trap} gives withdrawal real match-surplus consequences. Three
statements survive the change in accounting. The global planner--equilibrium ordering is robust to small
match-surplus terms: Online Appendix OA.2 gives $\bar\varrho>0$ such that underprovision survives for every
$\varrho\in[0,\bar\varrho)$ at any equilibrium with an effectively sanctioned margin---a bound that need not
reach full accounting ($\varrho=1$). Under full accounting the statement is local and one-directional:
marginal enactment protection is guaranteed to raise surplus whenever the released act satisfies
$V_N+D>\psi\bigl[\Lambda_as_a+\tfrac{c_a}{2}s_a^2\bigr]$, a sufficient condition since the release also
relaxes the stigma priced on inframarginal departures. At inherited competence $\sigma$ the same comparison
replaces $V_N+D$ by $\Delta(\sigma)$, and a permanent stationary comparison adds the learning term of
\eqref{eq:learning-wedge}. And when undeterred opposition causes large real harm, pressure becomes locally
valuable (Online Appendix OA.3). What survives the accounting is the direction of the distortion at the
margin, not a case against pressure as such.

\section{A fully coupled worked example}\label{sec:numerical}

This section applies the equilibrium, output, welfare, and duration results to a single fully coupled
parameter vector---a constructed example, not a calibration. Formation and resolve costs are
logistic and the audience follows the hard Bayesian withdrawal rule, which remains on its sanctioned interior
branch over the whole feasible domain: multiplicity arises because prevalence changes the intensity of
withdrawal, not because any equilibrium crosses a tolerance threshold. Online Appendix OA.11 reports the
exact-decimal inputs, which the computer-assisted proofs treat as rational numbers.

\paragraph{Stationary cultures.}
At $p=\pbench$ the reduced map has fixed points
$J_\ell=\Jlow$, $J_i=\Jmid$, and $J_h=\Jhigh$, with the stable--unstable--stable ordering under the reduced
scalar adjustment. Holding the other primitives fixed (under the Section~\ref{subsec:env} normalization
for $p$), coexistence
obtains for $p\in(\pfoldlo,\pfoldhi)$ within the certified novelty domain
$p\in[0,\tfrac12]$.\footnote{Beyond the
certified domain, uncertified numerical continuation indicates a single high-judgment equilibrium for
$p\in(\tfrac12,1)$, with $J$ approaching a value extremely close to---though, because $H$ has full support
and returns are finite, strictly below---one as $p\uparrow1$.} Below the lower fold only the trap survives, above the upper fold only the questioning
culture. Novelty is therefore infrequent throughout the coexistence region---between one episode in fifteen
and one in seven---as the environment of Section~\ref{sec:model} requires. Nor need the audience dislike
authentic independence: at $\bar q_f=\qbarf$ and $\bar q_a=\qbara$, a partner values excluding an opponent
about four to five times as much as it costs to lose an authentic one ($\pi_f/\Lambda_f=\ratiof$,
$\pi_a/\Lambda_a=\ratioa$). A familiar-state probe is recorded with probability $\phi_0=\phizero$ rather
than certainly, and the diagnostic cost $\gamma=\gammaval$ is under seven percent of familiar-state
output. The window in
$p$ is nonetheless narrow: the example is chosen for certifiability, and Appendix~\ref{app:folds} guarantees
persistence of its structure under sufficiently small $C^2$ perturbations, not a wide coexistence region.

The audience's posterior interpretation shifts sharply across the two cultures: the posterior probability that a probe comes from
the oppositional source falls from $\qflow$ in the trap to $\qfhigh$ in the questioning culture, and the
corresponding posterior for a departure falls from $\qalow$ to $\qahigh$. In the trap, that is, suspicion
is largely correct---more than nine in ten observed independent acts come from the background source---so
withdrawal is ex-post optimal and the audience makes no inference error; the pathology lies in what that
correct pricing does to formation.
Table~\ref{tab:cultures} reports the two stable cultures.
\begin{table}[H]\centering
\caption{Stable cultures at $p=0.10$.}
\label{tab:cultures}
\small\setlength{\tabcolsep}{4.5pt}
\begin{tabular}{lcccccccc}
\toprule
culture & $J$ & $w$ & $A$ & $S^R$ & $N$ & $Y_F$ & $Y_N^C$ & $Y^C$\\
\midrule
low  & $\Jlow$ & $\wlow$ & $\Alow$ & $\SRlow$ & $\Nlow$ & $\YFlow$ & $\YNlow$ & $\Ylow$\\
high & $\Jhigh$ & $\whigh$ & $\Ahigh$ & $\SRhigh$ & $\Nhigh$ & $\YFhigh$ & $\YNhigh$ & $\Yhigh$\\
\bottomrule
\end{tabular}\\[4pt]
\begin{minipage}{0.88\textwidth}
\footnotesize\emph{Notes:} Rows: \emph{low} is the conformity trap, \emph{high} the questioning culture.
$J$ formed judgment; $w=G(z(A))$ enactment rate of the formed; $w^0=G(b_N)=\wzero$ the counterfactual
enactment rate absent reputational pressure; $A=Jw$ enacted judgment; $S^R=J[w^0-w]$ reputationally silenced;
$N=J(1-w^0)$ privately withheld; $Y_F$, $Y_N^C$, $Y^C$ familiar-state, competent novel-state, and competent
expected output (output measures, not welfare). All entries are computed at the unrounded equilibrium roots
and then rounded independently, so $A+S^R+N$ can differ from $J$ in the last displayed digit; the identity
$A+S^R+N=J$ holds exactly at the unrounded roots. Statistics quoted in the text (e.g.\ $q_a=\qalow$) likewise
use the unrounded roots.
\end{minipage}
\end{table}
Three features of the table carry the section's economics. In the trap almost no one maintains judgment and
much of the surviving stock is silenced: $S^R/J=\SRsharelow\%$ against $\SRsharehigh\%$ in the questioning
culture, with the trap's absolute silenced mass the larger of the two ($\SRlow$ versus $\SRhigh$) though its
stock is one thirty-sixth the size. The questioning culture enacts almost everything it forms ($w=\whigh$).
And the trap economizes on familiar-state diagnosis ($Y_F=\YFlow$ versus $\YFhigh$) while the questioning
culture dominates under novelty ($Y_N^C=\YNhigh$ versus $\YNlow$).

\paragraph{Output and the reversal.}
Because novelty is rare here, those two effects do not cancel in favor of the questioning culture: at the
benchmark the \emph{trap} produces the higher expected competent output ($Y^C=\Ylow$ versus $\Yhigh$).
Re-solving the two stable branches as $p$ varies, the questioning culture overtakes the trap in $Y^C$ only at
$p=\phat$, which lies inside the coexistence window $(\pfoldlo,\pfoldhi)$.\footnote{The displayed value is
computed from the reversal condition \eqref{eq:pbreak} at the benchmark allocations; re-solving both
branches locates the crossing at the same displayed precision.} The output ranking of the two cultures
therefore reverses \emph{within} the region where both exist. This is the sharpest form of
Corollary~\ref{cor:flip}: over a range of environments in which a questioning culture is sustainable, an
observer who ranked the two groups by realized output would prefer the trap---right about output, wrong
about adaptive capacity. Nor is the trap an unambiguous inefficiency: the welfare comparison below does
not support that reading either.

\paragraph{Competence and the impact sign.}
With endogenous competence, two verifications must be kept distinct. The cultures sit on opposite sides of
the break-even competence threshold $\bar\sigma=(L-D)/(V_N+L)$ of \eqref{eq:sigma-bar}, at which an additional departure
has zero incremental value:
$\sigma(A_\ell)=\sigmalow<\bar\sigma=\sigmabar<\sigmahigh=\sigma(A_h)$,
so a marginal departure destroys value in the trap and creates it in the questioning culture. This is the
configuration Proposition~\ref{thm:sequencing} describes: the zero-value enacted mass is $A^\dagger=\Adagger$,
corresponding to the stock threshold $J^\dagger=\Jdagger$, which is not itself a stationary stock; the
stationary cultures straddle it. One-date enactment protection applied to a stationary culture below that
threshold---here, the trap---therefore releases judgment with negative expected adaptive value. Separately, the \emph{level} of learning-built
novel output, $Y_N^L=A\Delta(\sigma(A))-D$, is $\YNLhigh$ at the high culture against $\YNLlow$ in the trap
(both differing from the competent levels in Table~\ref{tab:cultures}).

\paragraph{Welfare and the planner.}
This parameter vector satisfies the sign condition of Theorem~\ref{thm:central-welfare}(ii), and the
certified unique planner for these primitives, $(J^{\ast\ast},A^{\ast\ast})=(\plannerJ,\plannerA)$, chooses
more formed and more enacted judgment than every decentralized equilibrium; uniqueness is computationally
certified for this parameter vector, not a consequence of concavity. The questioning culture has the higher
stationary welfare, $\welfarehigh$ versus $\welfarelow$, and comes close to the planner's $\welfareplanner$---so the output
comparison above and the welfare comparison point in opposite directions. What carries the gap is not output: term by term in
\eqref{eq:welfare}, the enactors' private benefit $pAb_N$ contributes $\wgappsy$ of the $\wgap$ gap and
the cost side $\wgapcost$ (the median net formation cost is negative, $\mu_k=\mukval$, so about two thirds of
members derive intrinsic value from forming), while measured output \emph{nets against} the questioning
culture ($\wgapfam$ familiar against $\wgapadapt$ adaptive).\footnote{The gap and its components are
computed at the unrounded roots and rounded independently, as in Table~\ref{tab:cultures}, so the
components need not sum exactly and the gap differs in the last digit from the difference of the displayed
welfare levels.} The welfare case here therefore rests on the taste
components $b_N$ and $\mu_k$, not on production, and the proximity to the planner should be read the same
way. Nor is $b_N$ a free dial: it moves behavior as well as welfare. Re-solved at lower enactment
benefits, coexistence itself fails below $b_N=\bnDagger$, and at $b_N=1.4$ the gap narrows to
$\wgapbn$. Silencing on the trap's scale requires stakes on $b_N$'s scale; the taste-driven gap is a feature
of the mechanism, not an artifact of one number. The stake is not a knife-edge either: mapped jointly over
$(b_N,\kappa_f)$, coexistence occupies a widening wedge---at the benchmark pressure it requires
$b_N>\bnDagger$, while every tested stake down to $b_N=1.20$ supports a nonempty pressure window whose
width grows with the stake, from $\bnWindowLow$ at $b_N=1.2$ to $\bnWindowTwo$ at $b_N=2$ (Online
Appendix OA.6). Online Appendices OA.1 and OA.11 report the
competence inputs, planner solutions, wedge decomposition, and the uniqueness certificate.

\paragraph{Escape and duration.}
The dynamic exercise asks how long sufficiently strong direct formation support must last. The certified
escape boundaries are $J_-^{RE}\in[\REminlo,\REminhi]$ and $J_+^{RE}\in[\REpluslo,\REplushi]$
(Figure~\ref{fig:dynamic-regions}); from $J_0=J_\ell=\Jlow$ at the benchmark turnover and patience, three
intervention dates let the physical ceiling clear the first bracket and four dates the second, so
$T_{\mathrm{possible}}=\Tpossible$ and $T_{\mathrm{robust}}=\Trobust$.\footnote{These are
feasibility thresholds---the minimum durations at which the physical ceiling clears each boundary---not
solutions to an intertemporal cost-minimization problem, and ``all continuation expectations'' means all
perfect-foresight continuations admitted by the model, not robustness to shocks or misspecification.}
Online Appendix OA.10
derives the conservative envelope-based support schedules, and OA.11 certifies schedules attaining the
bounds.\footnote{One numerical coincidence: the break-even inherited stock $J^\dagger=\Jdagger$ of
Proposition~\ref{thm:sequencing} falls inside the certified enclosure $[\REminlo,\REminhi]$ for the lower
convergence cutoff $J_-^{RE}$---a property of these numbers, not a theorem, and one that does not order the
two thresholds. The stationary trap sits below both: protection there releases judgment with negative
expected adaptive value \emph{and} no continuation from the trap converges high.}

\paragraph{The focus margin.}
The strict separation of the certified brackets is also implied by the slope of the static reduced map
together with the revision primitives. At the certified middle
slope $m_i$, the composite root parameter of \eqref{eq:forward-characteristic} is
$\vartheta_i=\thetamid$, strictly inside the complex-root bound $2\sqrt{d/\lambda}=\focusbound$, so the
middle culture is an unstable focus and Theorem~\ref{thm:band} applies: the expectation-dependent band is
strictly open around $J_i$ as a matter of theory, and the certificates locate its boundaries.\footnote{The
comparison is rigorous: $\vartheta_i$ is strictly decreasing in $m_i$, so the certified enclosure for
$m_i$ (Online Appendix OA.9) maps to an enclosure for $\vartheta_i$ well inside the bound. The thin margin---$\thetamid$ against
$\focusbound$---is consistent with the narrowness of the certified band.}

\paragraph{What is certified.}
The Arb certificates enclose and exhaustively count all equilibrium roots and both folds, verify the
benchmark planner, and certify the enclosures for $J_-^{RE}$ and $J_+^{RE}$: a rigorous existence witness
for the fully coupled model. Online Appendix OA.11 records the proof
obligations; OA.6 and OA.7 report robustness to alternative sanction technologies and responsive opposition.

\section{Conclusion}\label{sec:conclusion}

Approval can regulate not only visible conduct but the upstream practice that keeps independent judgment
available. When motives remain ambiguous, a Bayesian audience reads rare questioning as opposition and
withdraws cooperation, and the resulting prevalence feedback can sustain a conformity trap alongside a
questioning culture. Two ingredients carry the mechanism in different roles: residual ambiguity
($\nu_m>0$) is essential for any prevalence feedback, whereas formation visibility ($\phi>0$) matters only
for the formation-stigma route---at $\phi=0$ the enactment channel can still sustain multiplicity on its
own.

Staggered revision makes aggregate maintenance an inherited group-level state. Enactment protection can
release existing judgment on impact, whereas formation support changes that state only as members revise. When
escape under every admissible continuation is attainable, the extremal perfect-foresight paths classify
inherited stocks---escape impossible, expectation-dependent, or assured---with the middle regime strictly
present whenever the unstable culture is a local focus; the duration bounds then state how long temporary
formation support must last to cross each boundary. No temporary intervention at fixed turnover can rebuild the stock faster than turnover permits, and the same observed stationary culture can conceal
different transition risks.

What protection accomplishes is itself inherited. Because competence at diagnosing novelty is the
steady-state accuracy sustained by a culture that enacts judgment, protecting dissent below the break-even
threshold releases judgment whose expected adaptive value on impact is negative---an accounting at
inherited competence, before anything the released departures teach successors---and a model-implied stock
threshold separates the impact sign across stationary inherited cultures. Whether that makes rebuilding the efficient first move is
a question this model poses without answering, since protection also raises the option value of holding
judgment; a dynamic competence state would be needed to settle it. The scope of the mechanism is
correspondingly sharp rather than universal: it requires opposition that is not fully deterrable by the stigma
it attracts, and audiences whose memory of individuals is short relative to the rate at which a record would
identify a type.

Silence pools members who never maintained judgment with those who withhold or suppress it, while
appropriately timed interventions separate release from rebuilding. The planner comparison identifies
conditional underprovision: at the worked benchmark the conformity trap
even produces \emph{more} expected output than the questioning culture over part of the region where both
exist, which is precisely what makes the capacity it destroys hard to see. The central distinction is
therefore temporal as well as informational. A group that rewards praise over praiseworthiness, in Smith's
distinction, may not merely silence the judgment it has; it may gradually stop maintaining the readiness on
which future correction depends. Read at the scale of a polity, this offers one reason why civic passivity
can outlive the
regimes that produce it, and why its repair may be constrained by cohort turnover rather than by the
calendar of a campaign.

\appendix
\numberwithin{theorem}{section}
\numberwithin{proposition}{section}
\numberwithin{lemma}{section}
\numberwithin{corollary}{section}
\section{Threshold geometry and the number of cultures}\label{app:atmost}\label{app:logit-geometry}

\emph{Roadmap.} This appendix establishes the exact-count condition and proves Theorems~\ref{thm:trap} and~\ref{thm:analytic-threshold}; Appendix~\ref{app:folds} collects the fold normal form, robustness to primitive perturbations, the primitive coexistence conditions (Theorem~\ref{thm:coupled-coexistence}), and responsive opposition; Appendix~\ref{app:dynproofs} proves the dynamic theorems and Appendix~\ref{app:welfare-proof} the welfare theorem. Every numbered theorem is thus proved in the manuscript; proofs of propositions, lemmas, and corollaries, together with the extensions, are in the Online Appendix.

The prevalence-dependent posterior is what makes multiplicity possible: with sanctions fixed at both margins
the equilibrium is unique (Proposition~\ref{prop:nosocialmeaning} in the main text), so a standard disclosure
friction cannot create the trap. The exact count of cultures, when the posterior does vary with prevalence, is
pinned down by a shape condition on the primitive formation-pressure threshold function.

\begin{proposition}[Exact count under a threshold-shape condition]\label{prop:atmost}
Fix $p$ and a smooth multiplicative formation branch $s_f(J;\kappa_f)=\kappa_f b_f(J)$ with $b_f>0$,
valid on all of $(0,1)$ at every pressure considered. Let
$\mathcal K(J;p)$ be the primitive formation-pressure threshold function in \eqref{eq:pressure-threshold}. Suppose
$\mathcal K$ is continuously differentiable and there exist
$0<J_-<J_+<1$ such that
$\mathcal K'<0$ on $(0,J_-)$, $\mathcal K'>0$ on $(J_-,J_+)$, and $\mathcal K'<0$ on $(J_+,1)$.
Then for every positive formation pressure $\kappa_f$, the reduced equilibrium equation has at most three
interior fixed points. Hence any pressure interval delivered by Theorem~\ref{thm:trap} contains exactly
three fixed points, with the outer two stable and the middle one unstable whenever the crossings are transverse.
The condition is primitive for the construction: $\mathcal K$ is built from $B$, $H$, and $b_f$, and its shape
can be checked before solving for equilibrium roots. At $p=0.10$, the computer-assisted certificate computed with Arb and documented in Online Appendix OA.11 establishes the
condition globally for the exact-decimal full baseline, isolates the two stationary points, and certifies the
monotonicity pattern $(-,+,-)$.
\end{proposition}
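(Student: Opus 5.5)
The plan is to reduce fixed points of $\Psi(\cdot;\kappa_f)$ to level-set crossings of the threshold function $\mathcal K$. By \eqref{eq:psi-K}, for $J\in(0,1)$ we have $\Psi(J;\kappa_f)>J$ iff $\kappa_f<\mathcal K(J;p)$. The same algebra also gives equality: $\Psi(J)=J$ iff $H^{-1}(J)=B(J;p)-(1-p)\phi\kappa_f b_f(J)$ iff $\kappa_f=\mathcal K(J;p)$. This holds because the branch is multiplicative and valid on all of $(0,1)$ at the given pressure, $H$ is strictly increasing, and $(1-p)\phi b_f(J)>0$. Interior fixed points at pressure $\kappa_f>0$ are therefore exactly the solutions of $\mathcal K(J;p)=\kappa_f$ on $(0,1)$.

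Next I count solutions using the three-piece shape. On each of the intervals $(0,J_-]$, $[J_-,J_+]$ and $[J_+,1)$, $\mathcal K$ is strictly monotone. On the interiors this is given; at the endpoints it follows because $\mathcal K'$ has a fixed strict sign on the open interval and $\mathcal K$ is continuous. A strictly monotone function takes any value at most once. The three closed pieces overlap only at $J_-$ and $J_+$, so a point lying in two pieces is counted once. Hence $\{J:\mathcal K(J)=\kappa_f\}$ has at most three elements, and this holds for every real level, in particular every positive pressure. Combined with Theorem~\ref{thm:trap}, which gives at least three fixed points on $\mathcal I_{LH}(p)$, this yields exactly three fixed points there.

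For stability I relate the slope of $\Psi$ at a fixed point to the sign of $\mathcal K'$. Write $F(J):=\Psi(J;\kappa_f)-J$. Then $F>0$ exactly where $\mathcal K>\kappa_f$, and $F<0$ exactly where $\mathcal K<\kappa_f$. At a transverse crossing, where $\mathcal K'(J_j)\neq0$, $\mathcal K-\kappa_f$ changes sign at $J_j$, and so does $F$, in the same direction. Where $\mathcal K'<0$, $F$ goes from positive to negative, so $\Psi'\le1$. Transversality of the fixed point means $\Psi'(J_j)\neq1$, which makes the inequality strict: $\Psi'<1$, scalar-stable. Where $\mathcal K'>0$, the same argument gives $\Psi'>1$.

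To see which piece each root lies on, note the roots are ordered and each of the three pieces holds at most one. When there are exactly three roots, each piece holds exactly one, so the roots alternate decreasing, increasing, decreasing, which is the stable--unstable--stable pattern. For transverse roots, none sits at $J_\pm$, where $\mathcal K'=0$. The one step needing care is linking transversality of $\Psi$ to $\mathcal K'\neq0$. If $\mathcal K'(J_j)=0$ while $\Psi'(J_j)\neq1$, $F$ would change sign at $J_j$ with nonzero derivative, but $\mathcal K-\kappa_f$ would not, since $J_\pm$ are strict extrema. That contradicts the sign equivalence, so transverse roots avoid $J_\pm$ automatically.

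The last sentence of the proposition, the Arb certification at $p=0.10$, is not proved analytically. I would cite the interval-arithmetic certificate of Online Appendix OA.11, which encloses the zeros of $\mathcal K'$ and verifies the sign pattern on the complementary intervals.
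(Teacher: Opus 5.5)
Your proposal is correct and follows the paper's own argument. The paper's proof uses the equivalence $\Psi(J;\kappa_f)=J \iff \kappa_f=\mathcal K(J;p)$, allows at most one crossing on each monotone piece, takes the lower bound of three from Theorem~\ref{thm:trap}, and gets the alternating stability signs from transversality. You supply details the paper's three-line proof leaves implicit: strict monotonicity on the closed pieces, the sign-change derivation of $\Psi'\lessgtr 1$, and why transverse roots avoid $J_\pm$.
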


\begin{proof}
For a positive pressure on the smooth branch,
\[
 \Psi(J;\kappa_f)=J
 \quad\Longleftrightarrow\quad
 \kappa_f=\mathcal K(J;p).
\]
The stated shape partitions $(0,1)$ into three monotone intervals, so any horizontal line $\kappa_f$ has at
most one intersection in each. Theorem~\ref{thm:trap} supplies at least three crossings in the interval it
supplies; transversality gives the alternating scalar-stability signs.
\end{proof}

\begin{proof}[Proof of Theorem~\ref{thm:trap}]
Since $H$ is strictly increasing and $b_f(J)>0$,
\[
 \Psi(J;\kappa_f)>J
 \quad\Longleftrightarrow\quad
 \kappa_f<\mathcal K(J;p),
\]
with the inequality reversed for $\Psi(J;\kappa_f)<J$. For any
$\kappa_f\in\mathcal I_{LH}(p)$, therefore,
$F(J_L;\kappa_f):=\Psi(J_L;\kappa_f)-J_L<0$ and
$F(J_H;\kappa_f)>0$. Full support of $H$ gives $F(0;\kappa_f)>0$ and
$F(1;\kappa_f)<0$. The intermediate value theorem then gives one fixed point in each of
$(0,J_L)$, $(J_L,J_H)$, and $(J_H,1)$. The threshold-shape condition rules out further crossings; when the
crossings are transverse, the alternating sign pattern $+,-,+,-$ gives the stated scalar-stability signs. A
nondegenerate endpoint of a maximal three-crossing interval is a tangency satisfying the hypotheses of
Lemma~\ref{lem:regular-fold}.

For the converse, let $J_1<J_2<J_3$ be the transverse fixed points at $\kappa_f$. Since $F(0;\kappa_f)>0$ and
the crossings are transverse, $F<0$ on $(J_1,J_2)$ and $F>0$ on $(J_2,J_3)$. Pick any
$J_L\in(J_1,J_2)$ and $J_H\in(J_2,J_3)$: the displayed equivalence gives $\kappa_f>\mathcal K(J_L;p)$ and
$\kappa_f<\mathcal K(J_H;p)$, so $\mathcal K(J_H;p)>\max\{0,\mathcal K(J_L;p)\}$ and
$\kappa_f\in\mathcal I_{LH}(p)$.
\end{proof}

The threshold-shape condition above is exactly what the logit--proportional benchmark delivers globally.
The remainder of this appendix proves Theorem~\ref{thm:analytic-threshold}, stated in
Section~\ref{sec:trap}, and establishes the persistence of its geometry under small enactment feedback.

\begin{proof}[Proof of Theorem~\ref{thm:analytic-threshold}]
Write
\[
 L(J):=\log\frac{J}{1-J},\qquad \varpi(J):=r(1-\bar q_f)-\bar q_fJ.
\]
The restriction $\bar q_f<q_f(1)=r/(r+1)$ implies $\varpi(J)>0$ on $[0,1]$, and
\[
 b_f(J)=\frac{\varpi(J)}{r+J},\qquad b_f'(J)=-\frac{r}{(r+J)^2}.
\]
Define
\[
 \eta(J):=-\frac{b_f'(J)}{b_f(J)}=\frac{r}{(r+J)\varpi(J)}>0
\]
and
\[
 f(J):=L(J)+\frac{L'(J)}{\eta(J)}
 =\log\frac{J}{1-J}
 +\frac{(r+J)[r(1-\bar q_f)-\bar q_fJ]}{rJ(1-J)}.
\]
Since
\[
 \mathcal K(J;p)=\frac{\varsigma}{(1-p)\phi}\frac{c_0(p)-L(J)}{b_f(J)},
\]
differentiation gives the exact sign reduction
\begin{equation}\label{eq:Kprime-sign-reduction}
 \mathcal K'(J;p)=
 \frac{\varsigma\eta(J)}{(1-p)\phi b_f(J)}[c_0(p)-f(J)].
\end{equation}
A second direct differentiation yields
\begin{equation}\label{eq:fprime-canonical}
 f'(J)=
 \frac{[r(1-\bar q_f)-\bar q_fJ][(1+2r)J-r]}
 {rJ^2(1-J)^2}.
\end{equation}
All factors other than the second bracket in the numerator are positive. Hence $f$ is strictly decreasing on
$(0,J^\circ)$ and strictly increasing on $(J^\circ,1)$, where $J^\circ=r/(1+2r)$. Moreover,
$\lim_{J\downarrow0}f(J)=\lim_{J\uparrow1}f(J)=+\infty$,
and substitution of $J^\circ$ gives $f(J^\circ)=m(r,\bar q_f)$. It follows that $f(J)=c_0(p)$ has exactly two
solutions $J_-<J^\circ<J_+$ if and only if \eqref{eq:global-logit-inequality} holds. Equation
\eqref{eq:Kprime-sign-reduction} then gives the global derivative pattern. If the inequality fails, the same
equation gives the stated monotonicity conclusion.

At either stationary point $J_s\in\{J_-,J_+\}$, $c_0-L=L'/\eta$, and therefore
\[
 \mathcal K(J_s;p)=
 \frac{\varsigma L'(J_s)}{(1-p)\phi\,\eta(J_s)b_f(J_s)}>0.
\]
Also, $\lim_{J\downarrow0}\mathcal K(J;p)=+\infty$ and $\lim_{J\uparrow1}\mathcal K(J;p)=-\infty$.
Thus the graph falls from $+\infty$ to its positive minimum, rises to its strictly higher positive maximum,
and then falls to $-\infty$. A horizontal line at a positive $\kappa_f$ therefore has three intersections
exactly under \eqref{eq:canonical-pressure-window}, one intersection outside its closure, and one simple plus
one double intersection at either endpoint.

For stability and fold nondegeneracy, let
\[
 Q(J,\kappa_f):=B_0(p)-H^{-1}(J)-(1-p)\phi\kappa_fb_f(J)
 =(1-p)\phi b_f(J)[\mathcal K(J;p)-\kappa_f].
\]
Writing $F(J,\kappa_f)=\Psi(J;\kappa_f)-J$ gives
\[
 F(J,\kappa_f)=H\!\left(H^{-1}(J)+Q(J,\kappa_f)\right)-J.
\]
At any simple fixed point, $\operatorname{sgn}F_J=\operatorname{sgn}\mathcal K'$. The reduced best response
is increasing, so the negative signs at the two outer intersections imply $0<\Psi'<1$, while the positive sign
at the middle intersection implies $\Psi'>1$. Finally, at either stationary point,
\eqref{eq:fprime-canonical} and \eqref{eq:Kprime-sign-reduction} imply
$\mathcal K''(J_-;p)>0$ and $\mathcal K''(J_+;p)<0$.
At the corresponding pressure endpoint,
\[
 F_{\kappa_f}=-h(H^{-1}(J))(1-p)\phi b_f(J)\neq0,\qquad
 F_{JJ}=h(H^{-1}(J))(1-p)\phi b_f(J)\mathcal K''(J;p)\neq0.
\]
Both endpoints are therefore nondegenerate saddle-nodes.
\end{proof}

\begin{corollary}[Persistence under small enactment feedback]\label{cor:logit-feedback}
Maintain the formation primitives and the strict inequality \eqref{eq:global-logit-inequality}. Let
$B_\epsilon(\cdot;p)\in C^2[0,1]$ be a family of induced zero-pressure formation returns such that
\[
 \lVert B_\epsilon(\cdot;p)-B_0(p)\rVert_{C^2[0,1]}\longrightarrow0
 \qquad\text{as }\epsilon\downarrow0,
\]
and define
\[
 \mathcal K_\epsilon(J;p)=
 \frac{B_\epsilon(J;p)-H^{-1}(J)}{(1-p)\phi b_f(J)}.
\]
There exists $\bar\epsilon>0$ such that, for every $\epsilon\in[0,\bar\epsilon)$,
$\mathcal K_\epsilon$ has exactly two stationary points $J_-(\epsilon)<J_+(\epsilon)$, converging to
$J_-$ and $J_+$, with the same global derivative pattern $(-,+,-)$. Their threshold values remain positive
and strictly ordered. The interval with exactly three equilibria, the stable--unstable--stable ordering, and the
two nondegenerate saddle-nodes therefore persist.

A sufficient primitive construction is
\[
 s_a^\epsilon(A)=s_a^0+\epsilon\widetilde s_a(A),\qquad
 z_\epsilon(A)=b_N-\psi s_a^\epsilon(A),
\]
where $G$ and $\widetilde s_a$ are $C^3$ and the enactment subgame is regular. If $a_\epsilon(J)$ is the unique
solution of $A=JG(z_\epsilon(A))$, then
\[
 B_\epsilon(J;p)=p\Omega(z_\epsilon(a_\epsilon(J)))-(1-p)\gamma
\]
satisfies the displayed $C^2$ convergence. In particular, this applies to a sufficiently small multiple of a
smooth or globally interior proportional Bayesian enactment schedule.
\end{corollary}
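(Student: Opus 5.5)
The proof is a perturbation argument built on the exact sign reduction in the proof of Theorem~\ref{thm:analytic-threshold}. On any compact subinterval $[\underline J,\overline J]\subset(0,1)$ the map $B\mapsto\mathcal K_\epsilon$ is continuous from $C^2$ into $C^2$, because $b_f$ is smooth and bounded away from zero and $H^{-1}$ is smooth in the interior. So $\mathcal K_\epsilon\to\mathcal K_0$ in $C^2[\underline J,\overline J]$.

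The first task is to rule out spurious stationary points. Near the boundary I use the generalized reduction
\[
 \mathcal K_\epsilon'(J)\propto B_\epsilon'(J)/\varsigma+\eta(J)\bigl[c_\epsilon(J)-f(J)\bigr],
 \qquad c_\epsilon:=(B_\epsilon-\mu_k)/\varsigma,
\]
which is the analogue of \eqref{eq:Kprime-sign-reduction} with the positive prefactor $\varsigma/((1-p)\phi b_f)$ suppressed. Since $f\to+\infty$ at both ends, while $B_\epsilon$ and $B_\epsilon'$ are uniformly bounded for small $\epsilon$ and $\eta$ is bounded below by a positive constant on $[0,1]$, I can choose $\underline J,\overline J$ independently of $\epsilon$ so that $\mathcal K_\epsilon'<0$ on $(0,\underline J]$ and on $[\overline J,1)$.

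On the compact middle interval, $\mathcal K_0'$ has exactly two zeros $J_\pm$, and they are nondegenerate: $\mathcal K_0''(J_-)>0$ and $\mathcal K_0''(J_+)<0$, as shown in the proof. The argument then runs as follows.

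1. Take small neighborhoods $U_\pm$ of $J_\pm$ on which $|\mathcal K_0''|$ is bounded away from zero.
2. Outside $U_\pm$, $|\mathcal K_0'|$ is bounded away from zero, so the uniform $C^1$ convergence preserves its sign.
3. Inside each $U_\pm$, the uniform $C^2$ convergence keeps $\mathcal K_\epsilon''$ of one strict sign, so $\mathcal K_\epsilon'$ is strictly monotone there. Since $\mathcal K_\epsilon'$ has opposite signs at the endpoints of $U_\pm$, it has exactly one zero $J_\pm(\epsilon)$ in each.
4. Shrinking $U_\pm$ gives convergence $J_\pm(\epsilon)\to J_\pm$.

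Together these steps give exactly two stationary points with the global pattern $(-,+,-)$, and with $\mathcal K_\epsilon''(J_-(\epsilon))>0>\mathcal K_\epsilon''(J_+(\epsilon))$.

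The threshold values $\mathcal K_\epsilon(J_\pm(\epsilon))$ converge to $\mathcal K_0(J_\pm)$. These limits satisfy $0<\mathcal K_0(J_-)<\mathcal K_0(J_+)$, so for small $\epsilon$ the perturbed values stay positive and strictly ordered. Positivity and continuity of $B_\epsilon$ also keep $\mathcal K_\epsilon\to+\infty$ at $0$ and $\to-\infty$ at $1$. The counting conclusion of Proposition~\ref{prop:atmost} and Theorem~\ref{thm:trap} then yields exactly three fixed points for $\kappa_f$ in the window. The stability argument of the theorem carries over once $\operatorname{sgn}F_J=\operatorname{sgn}\mathcal K_\epsilon'$ at simple fixed points is established, which follows from $F=H(H^{-1}(J)+Q)-J$ with $Q=(1-p)\phi b_f(\mathcal K_\epsilon-\kappa_f)$. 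This gives the stable--unstable--stable ordering. Fold nondegeneracy follows from the computation of $F_{\kappa_f}$ and $F_{JJ}$ together with $\mathcal K_\epsilon''\neq0$ at $J_\pm(\epsilon)$.

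For the primitive construction, I apply the implicit function theorem to $\Phi(A,J,\epsilon)=A-JG(b_N-\psi s_0-\psi\epsilon\widetilde s_a(A))$. Regularity gives $\Phi_A>0$, so $a_\epsilon$ is $C^3$ jointly in $(J,\epsilon)$ on $[0,1]\times[0,\bar\epsilon]$ (one-sided at the ends). Hence $J\mapsto\Omega(z_\epsilon(a_\epsilon(J)))$, with $\Omega'=G$ being $C^3$, is jointly $C^2$ with continuous $\epsilon$-dependence. Joint continuity on a compact set gives uniform convergence of the $J$-derivatives up to order two, that is, convergence in $C^2$. At $\epsilon=0$ the return is the constant $B_0(p)$. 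For a proportional Bayesian schedule, smoothness of $q_a$ on $[0,1]$ supplies $\widetilde s_a\in C^3$.

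The main obstacle is the boundary control in the second paragraph. $\mathcal K_0$ blows up at both ends, so $C^2$ convergence of $\mathcal K_\epsilon$ fails globally, and the argument must rely on $f\to\infty$ dominating the bounded $B_\epsilon'$ terms.
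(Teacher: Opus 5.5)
Your proposal is correct and follows essentially the paper's argument: simple critical points of $\mathcal K_0$ with $\mathcal K_0''(J_-)>0>\mathcal K_0''(J_+)$, sign control of $\mathcal K_\epsilon'$ outside small neighborhoods, openness of the strict inequalities for positivity, ordering, stability and fold nondegeneracy, and a parameter-dependent implicit-function argument for the primitive construction. The obstacle you flag is not real, because $H^{-1}$ cancels in $\mathcal K_\epsilon-\mathcal K_0=(B_\epsilon-B_0)/\bigl((1-p)\phi b_f\bigr)$, so this difference does converge to zero in $C^2[0,1]$ and the tails follow directly from $\mathcal K_0'\to-\infty$ at both ends (your generalized reduction reaches the same conclusion less directly); also, it is boundedness of $B_\epsilon$, not its positivity, that keeps $\mathcal K_\epsilon\to\pm\infty$ at the endpoints.
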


The proof is in Online Appendix OA.13.

\section{Primitive coexistence conditions and robustness}\label{app:folds}
This appendix collects the local fold form, robustness of the coexistence interval to small $C^2$ perturbations, the
primitive route to a pressure-threshold pair, and the responsive-opposition scope result, all referenced
from Section~\ref{sec:trap}.

\begin{lemma}[Local normal form at a nondegenerate fold]\label{lem:regular-fold}
Let $\theta\in\{\kappa_f,p\}$; assume the densities $h$ and $g$ are $C^1$ and the sanction branch is $C^2$
near the fold, so that $\Psi$ is $C^2$ in $(J,\theta)$ there, and suppose $\Psi(\cdot;\theta)$ has a
\emph{nondegenerate fold} at $(J^\ast,\theta^\ast)$:
\[
 \Psi(J^\ast;\theta^\ast)=J^\ast,\quad \Psi_J(J^\ast;\theta^\ast)=1,\quad
 \Psi_{JJ}(J^\ast;\theta^\ast)\neq0,\quad \Psi_{\theta}(J^\ast;\theta^\ast)\neq0 .
\]
Then in a neighborhood of $\theta^\ast$ the fixed points of $\Psi(\cdot;\theta)$ near $J^\ast$ number two on
one side of $\theta^\ast$ and none on the other, appearing as a stable--unstable pair: \emph{locally}, two
equilibria are born (or annihilated) as $\theta$ crosses $\theta^\ast$. Transversality holds everywhere for
novelty, $\Psi_{p}=h(\widehat R)\,\widehat R_p>0$ since $\widehat R_p=\Omega_P(a)+\gamma+\phi s_f(J)>0$
(Proposition~\ref{prop:capacity}); for reputational sensitivity it holds on the interior sanctioned branch,
$\Psi_{\kappa_f}=h(\widehat R)\,\widehat R_{\kappa_f}<0$ where $\widehat R_{\kappa_f}=-(1-p)\phi\,(q_f-\bar
q_f)<0$ (and $\Psi_{\kappa_f}=0$ where the rule is tolerated or capped).
\end{lemma}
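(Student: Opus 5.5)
The plan is to reduce the fixed-point problem to a scalar zero-finding problem and apply the standard one-dimensional saddle-node normal form. Set $F(J,\theta):=\Psi(J;\theta)-J$, which is $C^2$ near $(J^\ast,\theta^\ast)$ under the stated smoothness. At the fold we have $F=0$ and $F_J=\Psi_J-1=0$, while $F_{JJ}=\Psi_{JJ}\neq0$ and $F_\theta=\Psi_\theta\neq0$.

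The first step is to solve for $\theta$ rather than $J$. Since $F_\theta\neq0$, the implicit function theorem gives a $C^2$ function $\theta=\Theta(J)$ on a neighborhood of $J^\ast$ such that the zero set of $F$ near $(J^\ast,\theta^\ast)$ is exactly the graph $\{(J,\Theta(J))\}$. Implicit differentiation gives $\Theta'(J^\ast)=-F_J/F_\theta=0$ and
\[
 \Theta''(J^\ast)=-F_{JJ}/F_\theta\neq0 .
\]
So $\Theta$ has a strict local extremum at $J^\ast$, and is strictly monotone on each side of $J^\ast$. It follows that for $\theta$ near $\theta^\ast$ on the side toward which $\Theta$ bends, the equation $\Theta(J)=\theta$ has exactly two solutions near $J^\ast$, one on each side. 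On the other side it has none. This proves the count.

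The second step is stability. Along the branch, $F_J(J,\Theta(J))=-F_\theta\,\Theta'(J)$. Because $\Theta'$ changes sign at $J^\ast$, $F_J$ takes opposite signs at the two roots. Hence one root has $\Psi_J<1$ (scalar-stable) and the other has $\Psi_J>1$ (scalar-unstable), which is the stable--unstable pair.

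The remaining claims are the transversality computations. For $\theta=p$, Proposition~\ref{prop:capacity} gives $\Psi_p=h(\widehat R)\widehat R_p$ with $\widehat R_p=\Omega_P(a)+\gamma+\phi s_f>0$, and $h>0$ closes it. For $\theta=\kappa_f$, on the interior branch $s_f=\kappa_f(q_f-\bar q_f)$ with $q_f>\bar q_f$. Since $a(J)$ does not depend on $\kappa_f$, differentiating $R$ yields $\widehat R_{\kappa_f}=-(1-p)\phi(q_f-\bar q_f)<0$ whenever $\phi>0$. On tolerated or capped regions $s_f$ is locally constant in $\kappa_f$, so the derivative vanishes.

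The main obstacle is the regularity bookkeeping rather than the logic. I need $\Psi$ to be $C^2$ jointly in $(J,\theta)$, which requires $a(J)$ to be $C^2$. That follows from the implicit function theorem applied to $A-JG(z(A))=0$: regularity condition \eqref{eq:enactment-regularity} makes the $A$-derivative nonzero, and $g\in C^1$ together with a $C^2$ sanction supplies the needed smoothness. For $\theta=p$, the map $p$ also enters through $R$ linearly, so joint smoothness is immediate.
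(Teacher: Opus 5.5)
Your proposal is correct and follows essentially the paper's route: the paper likewise solves $F=0$ for $\theta$ near the fold to obtain $\theta-\theta^\ast=-\tfrac12(F_{JJ}/F_\theta)(J-J^\ast)^2+o\big((J-J^\ast)^2\big)$, and it reads the transversality signs directly off $\widehat R_p$ and $\widehat R_{\kappa_f}$. Your explicit stability step via $F_J=-F_\theta\,\Theta'$ and your regularity check on $a(J)$ spell out what the paper leaves implicit in the phrase ``saddle-node normal form.''
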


\begin{proof}
At $(J^\ast,\theta^\ast)$, $F=F_J=0$ while $F_{JJ}=\Psi_{JJ}\neq0$ and $F_\theta=\Psi_\theta\neq0$. Expanding
$F$ and solving $F=0$ near the degenerate point gives
$\theta-\theta^\ast=-\tfrac12(F_{JJ}/F_\theta)(J-J^\ast)^2+o\big((J-J^\ast)^2\big)$, a parabola opening to one
side: two roots $J$ for each $\theta$ on that side, none on the other---the saddle-node normal form. The
transversality expressions are the $p$- and
$\kappa_f$-derivatives of $\widehat R$ (Proposition~\ref{prop:capacity} and the interior form of
\eqref{eq:endogenous-tolerance}), each nonzero on the sanctioned region; in general
$\widehat R_{\kappa_f}=-(1-p)\phi\,\partial_{\kappa_f}s_f$, and the statement's formula specializes it to
the hard interior branch, where $\partial_{\kappa_f}s_f=q_f-\bar q_f$.
\end{proof}

\begin{proposition}[Robustness of the trap to small primitive perturbations]\label{prop:openset}
Let $\theta$ be a scalar parameter, such as $p$ or $\kappa_f$, and let $\pi$ collect the remaining
primitives. Write
\[
 F(J,\theta;\pi):=\Psi_\pi(J;\theta)-J .
\]
Consider primitive perturbations in the induced $C^2$ topology: require uniform convergence of $F$, $F_J$,
$F_{JJ}$, $F_{J\theta}$, and $F_\theta$ on a compact rectangle $[0,1]\times\Theta$. Suppose that, at a baseline
primitive vector $\pi^0$, there is a nonempty open interval
$I^0=(\theta_L^0,\theta_U^0)\subset\Theta$ such that for every $\theta\in I^0$ the equation
$F(J,\theta;\pi^0)=0$ has exactly three transverse roots
\[
 J_\ell(\theta)<J_i(\theta)<J_h(\theta),
\]
with
\[
 F_J(J_\ell,\theta;\pi^0)<0,\qquad
 F_J(J_i,\theta;\pi^0)>0,\qquad
 F_J(J_h,\theta;\pi^0)<0 .
\]
Then for every compact interval $K\subset I^0$ with nonempty interior, there exists an open neighborhood
$\mathcal U_K$ of $\pi^0$ such that, for every $\pi\in\mathcal U_K$ and every $\theta\in K$,
$F(J,\theta;\pi)=0$ has exactly three transverse roots with the same stability signs. If, in addition, the
endpoints of $I^0$ are nondegenerate folds satisfying the conditions of Lemma~\ref{lem:regular-fold}, those folds
also persist under all sufficiently small perturbations of $\pi$. Hence the parameter region supporting two
stable cultures separated by an unstable threshold contains an open neighborhood of $\pi^0$; the result is not
a knife-edge feature of the baseline primitives.
\end{proposition}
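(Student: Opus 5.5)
The argument is a standard compactness-plus-implicit-function one, run on the uniform $C^1$ structure of $F$ over a compact rectangle. I will first fix the compact interval $K\subset I^0$ and use continuity of the baseline roots in $\theta$. By transversality and the implicit function theorem, each of $J_\ell(\theta),J_i(\theta),J_h(\theta)$ is a $C^1$ function on $I^0$. On $K$ they are therefore uniformly separated, say $J_h-J_i\ge 3\eta$ and $J_i-J_\ell\ge 3\eta$. Their transversality is also uniform: there is $c>0$ with $|F_J|\ge c$ on the $\eta$-tubes $\mathcal N_j:=\{(J,\theta):\theta\in K,\ |J-J_j(\theta)|\le\eta\}$, after shrinking $\eta$ if necessary. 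This uses uniform continuity of $F_J$ on the compact rectangle. Off the union $\mathcal N$ of the three tubes, continuity and compactness give $|F(J,\theta;\pi^0)|\ge\epsilon_0>0$ on $([0,1]\times K)\setminus\mathcal N$. This holds because $F(\cdot,\theta;\pi^0)$ has no other zeros, including at the endpoints, where full support of $H$ gives $F(0)>0$ and $F(1)<0$.

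Second, I choose the neighborhood $\mathcal U_K$ in the induced $C^2$ topology so that $\sup|F(\cdot;\pi)-F(\cdot;\pi^0)|<\epsilon_0/2$ and $\sup|F_J(\cdot;\pi)-F_J(\cdot;\pi^0)|<c/2$ on the rectangle. Then, for $\pi\in\mathcal U_K$ and $\theta\in K$, the following hold:
\begin{enumerate}
\item[(a)] There are no roots outside $\mathcal N$.
\item[(b)] On each tube, $F_J(\cdot;\pi)$ keeps the baseline sign and has modulus at least $c/2$, so $F(\cdot,\theta;\pi)$ is strictly monotone on each $\theta$-slice of the tube and has at most one root there.
\item[(c)] At the tube endpoints $J_j(\theta)\pm\eta$, which lie outside the open tube, the baseline values have modulus at least $\epsilon_0$ with opposite signs, determined by the sign of $F_J$. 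These signs are preserved under the perturbation, so the intermediate value theorem supplies exactly one root in each tube.
\end{enumerate}
The result is exactly three transverse roots with the stated stability signs, $F_J<0,>0,<0$. The one point I must check is that the endpoint values are bounded below by $\epsilon_0$. I will handle this by defining $\epsilon_0$ on the complement of the \emph{open} $\eta/2$-tubes while using the $\eta$-tubes for the derivative bound.

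Third, for the folds I will apply the implicit function theorem to the map $(J,\theta)\mapsto(F,F_J)$ at $(J^\ast,\theta^\ast;\pi^0)$. Its Jacobian is
\[
\begin{pmatrix}F_J&F_\theta\\F_{JJ}&F_{J\theta}\end{pmatrix}=\begin{pmatrix}0&F_\theta\\F_{JJ}&F_{J\theta}\end{pmatrix},
\]
with determinant $-F_\theta F_{JJ}\neq0$ by nondegeneracy. The implicit function theorem in its parametric (perturbed-map) form requires the maps and their first derivatives, namely $F,F_J,F_\theta,F_{JJ},F_{J\theta}$, to converge uniformly, which is exactly the stated topology. It therefore yields a unique nearby solution $(J^\ast(\pi),\theta^\ast(\pi))$ depending continuously on $\pi$. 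At that solution $F_{JJ}\neq0$ and $F_\theta\neq0$ persist by uniform convergence, so Lemma~\ref{lem:regular-fold} applies to the perturbed fold.

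The main obstacle is the bookkeeping in the second step, not the fold step: I must ensure that the uniform bounds are taken on the whole $\theta$-range $K$ rather than pointwise, and that no root sneaks in near $J=0$ or $J=1$. Compactness of $K$ is exactly what is needed, which is why the conclusion holds for compact $K\subset I^0$ rather than on all of $I^0$. Finally, the statement that the three-culture region contains an open neighborhood of $\pi^0$ follows directly: for any single $\theta\in I^0$, take $K$ to be a small compact interval around it.
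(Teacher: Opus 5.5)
Your proposal is correct and follows essentially the same route as the paper. The paper likewise builds disjoint neighborhoods of the baseline root graphs over compact $K$ on which $F_J$ is bounded away from zero with a fixed sign, bounds $F$ away from zero on their $J$-boundaries and the compact remainder, and uses uniform $C^0$/$C^1$ closeness with the intermediate value theorem to get exactly one root per neighborhood. For the folds it applies the implicit-function theorem to $(F,F_J)$, whose Jacobian determinant is $-F_\theta F_{JJ}\neq0$; your nested $\eta/2$ and $\eta$ tubes simply make the paper's boundary bound explicit.
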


The proof, together with a quantified perturbation margin, is in Online Appendix OA.13.

\medskip

The next result gives primitive conditions delivering a pressure-threshold pair in the fully coupled model,
with the enactment channel active; Section~\ref{subsec:trap} states its content informally.

\begin{theorem}[Coexistence with a regular enactment channel, allowing active feedback]\label{thm:coupled-coexistence}
Suppose formation costs are a location--scale family
$H_\varsigma^{-1}(J)=\mu_k+\varsigma Q_0(J)$,
where $Q_0$ is continuous on every compact subinterval of $(0,1)$ with the location normalization
$Q_0(1/2)=0$, so that $\mu_k$ is the median cost, and that the formation-sanction branch is
smooth and multiplicative, $s_f(J;\kappa_f)=\kappa_f b_f(J)$ with $b_f(J)>0$ and $b_f'(J)<0$,
on an interval containing $0<J_L<J_H<1$. Fix $p$ and enactment primitives satisfying
\eqref{eq:enactment-regularity} and with $s_a$ weakly decreasing in $A$---as the tolerance class
\eqref{eq:endogenous-tolerance} delivers---and let $a(J)$ be the induced enactment solution, so that
$B(J;p)=p\,\Omega(b_N-\psi s_a(a(J)))-(1-p)\gamma$ is weakly increasing in $J$. If
\begin{equation}\label{eq:coupled-condition}
 B(J_L;p)>\mu_k ,
\end{equation}
then there is $\bar\varsigma>0$ such that for every $\varsigma\in(0,\bar\varsigma)$ the pair $(J_L,J_H)$ is a
pressure-threshold pair, so the interval $\mathcal I_{LH}(p)$ of formation pressures is nonempty and open, and
every pressure in it yields at least three fixed points of the reduced map. Under the threshold-shape
condition of Proposition~\ref{prop:atmost}---with the branch valid on all of $(0,1)$---there are exactly
three; when the crossings are transverse they
form two stable cultures separated by one unstable threshold, and at any nondegenerate endpoint of a maximal multiplicity interval the entering or exiting
pair has the saddle-node form of Lemma~\ref{lem:regular-fold}.
\end{theorem}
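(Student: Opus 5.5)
The plan is to verify the defining inequality of a pressure-threshold pair, $\mathcal K(J_H;p)>\max\{0,\mathcal K(J_L;p)\}$, directly from the formula \eqref{eq:pressure-threshold}. Everything else then follows from Theorem~\ref{thm:trap}, Proposition~\ref{prop:atmost}, and Lemma~\ref{lem:regular-fold}. Using the location--scale form, write
\[
 \mathcal K(J;p)=\frac{B(J;p)-\mu_k-\varsigma Q_0(J)}{(1-p)\phi\,b_f(J)} .
\]
As $\varsigma\downarrow0$, the numerator converges to $B(J;p)-\mu_k$ at both points $J_L$ and $J_H$. The convergence is uniform on $\{J_L,J_H\}$ because $Q_0$ is finite at interior points. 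The denominators do not depend on $\varsigma$.

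First I establish positivity at $J_H$. Because $s_a$ is weakly decreasing and $a(\cdot)$ is nondecreasing (from \eqref{eq:aprime} under \eqref{eq:enactment-regularity}), the function $J\mapsto b_N-\psi s_a(a(J))$ is weakly increasing. Since $\Omega$ is increasing, $B(\cdot;p)$ is weakly increasing. Hence $B(J_H;p)\ge B(J_L;p)>\mu_k$ by \eqref{eq:coupled-condition}. Choosing $\varsigma$ small enough that $\varsigma|Q_0(J_m)|<\tfrac12(B(J_L;p)-\mu_k)$ for $m\in\{L,H\}$ makes both numerators strictly positive. Then $\mathcal K(J_H;p)>0$, and $\mathcal K(J_L;p)>0$ as well, so the maximum with $0$ is inactive.

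Next comes the ordering $\mathcal K(J_H)>\mathcal K(J_L)$, which I expect to be the only delicate step. The numerators satisfy $N_H^0:=B(J_H)-\mu_k\ge N_L^0:=B(J_L)-\mu_k>0$ in the $\varsigma=0$ limit. The denominators satisfy $b_f(J_H)<b_f(J_L)$ strictly, because $b_f'<0$ on an interval containing both points. This gives the strict limiting inequality
\[
 \frac{N_H^0}{(1-p)\phi b_f(J_H)}\;\ge\;\frac{N_L^0}{(1-p)\phi b_f(J_H)}\;>\;\frac{N_L^0}{(1-p)\phi b_f(J_L)},
\]
where the strictness comes from $N_L^0>0$. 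The gap between the two sides is a fixed positive number independent of $\varsigma$. The $\varsigma$-dependent error in each $\mathcal K$ value is $O(\varsigma)$, bounded by $\varsigma\max_m|Q_0(J_m)|/((1-p)\phi\,b_f(J_m))$. So there is $\bar\varsigma>0$ below which the strict ordering survives and both numerators stay positive. This shows that $(J_L,J_H)$ is a pressure-threshold pair. The interval $\mathcal I_{LH}(p)=(\mathcal K(J_L),\mathcal K(J_H))$ is then a nonempty open interval of positive pressures. Note that the strictness is bought entirely by $b_f'<0$ together with $B(J_L)>\mu_k$; no strict monotonicity of $B$ is needed. This is the point to check carefully, since with active enactment feedback $B$ may be flat.

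Finally, I invoke Theorem~\ref{thm:trap}. The branch is multiplicative on an interval containing $J_L$ and $J_H$ at every pressure, so every $\kappa_f\in\mathcal I_{LH}(p)$ yields at least three fixed points of $\Psi$. Under the threshold-shape condition, with the branch valid on $(0,1)$, Proposition~\ref{prop:atmost} caps the count at three. Transversality then gives the stable--unstable--stable signs through the sign equivalence \eqref{eq:psi-K}. At a nondegenerate endpoint of a maximal multiplicity interval, Lemma~\ref{lem:regular-fold} supplies the saddle-node form, with $\Psi_{\kappa_f}\neq0$ on the interior sanctioned branch.
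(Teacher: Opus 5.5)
Your proposal is correct and follows essentially the paper's route: let $\varsigma\downarrow0$ so that $\mathcal K$ approaches the exposure ratio $(B-\mu_k)/[(1-p)\phi\,b_f]$, show this limit is positive and strictly larger at $J_H$ than at $J_L$, and then invoke Theorem~\ref{thm:trap}, Proposition~\ref{prop:atmost}, and Lemma~\ref{lem:regular-fold}. The one difference is in how the ordering is obtained: the paper differentiates the ratio using $B'\ge0$ on smooth branches, whereas your two-point comparison uses only $B(J_H)\ge B(J_L)$ and $b_f(J_H)<b_f(J_L)$, which is slightly more economical and also covers kinks in $B$.
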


The proof below is a monotonicity argument on the exposure ratio
$\Xi(J;p):=[B(J;p)-\mu_k]/b_f(J)$: wherever the zero-pressure formation return exceeds the median formation cost, the
numerator weakly increases---the enactment channel helps rather than hurts here---while the denominator
strictly falls, so $\Xi$ is strictly increasing. Economically, \eqref{eq:coupled-condition} is the
requirement, glossed in Section~\ref{subsec:trap}, that the zero-stigma formation return clear the median
cost somewhere; concentration converts that surplus into a pressure window. The result is sufficient, not
necessary---Theorem~\ref{thm:trap} supplies necessity for the transverse configuration, and exact
equivalence only under its threshold-shape and transversality hypotheses---and it shows that small dispersion generates
coexistence rather than that large dispersion removes it, which would require the global Lipschitz condition of
Proposition~\ref{prop:mult}.

\begin{proof}[Proof of Theorem~\ref{thm:coupled-coexistence}]
With $s_f(J;\kappa_f)=\kappa_f b_f(J)$,
\[
 \mathcal K_{\varsigma}(J;p)
 =
 \frac{B(J;p)-\mu_k-\varsigma Q_0(J)}{(1-p)\phi b_f(J)} .
\]
Condition \eqref{eq:enactment-regularity} implies that the unique enactment solution $a(J)$ is weakly increasing in $J$; because $s_a$ is weakly
decreasing, the cutoff $z(a(J))=b_N-\psi s_a(a(J))$ and hence $B(J;p)$ are weakly increasing. On every smooth branch,
\[
 B'(J;p)
 =p\,G(z(a(J)))\,\psi[-s_a'(a(J))]a'(J)\ge0 .
\]
Define the primitive exposure ratio
\[
 \Xi(J;p):=\frac{B(J;p)-\mu_k}{b_f(J)} .
\]
Because $b_f'(J)<0$, whenever $B(J;p)>\mu_k$,
\[
 \Xi'(J;p)
 =\frac{B'(J;p)b_f(J)-[B(J;p)-\mu_k]b_f'(J)}{b_f(J)^2}>0 .
\]
The hypothesis $B(J_L;p)>\mu_k$ and monotonicity of $B$ therefore imply
$0<\Xi(J_L;p)<\Xi(J_H;p)$. Since
$\mathcal K_{\varsigma}(J;p)\to \Xi(J;p)/[(1-p)\phi]$ uniformly on the compact interval
$[J_L,J_H]$ as $\varsigma\downarrow0$, the inequalities
$0<\mathcal K_{\varsigma}(J_L;p)<\mathcal K_{\varsigma}(J_H;p)$
hold for all sufficiently small $\varsigma>0$. Thus $(J_L,J_H)$ is a pressure-threshold pair. Theorem~\ref{thm:trap}
gives at least three fixed points for every pressure between those two threshold values, and
Proposition~\ref{prop:atmost} gives exactly three crossings under the threshold-shape condition, with the
stated stability pattern at transverse crossings.
Nondegenerate endpoints of the multiplicity interval are folds by Lemma~\ref{lem:regular-fold}.
\end{proof}

Finally, the scope result on responsive opposition discussed in Section~\ref{sec:trap}.

\begin{proposition}[Scope: responsive opposition]\label{prop:responsive}
For each margin $m$ let the opportunistic pool respond to its sanction through
\[
 \nu_m^{\rho}(s)=\bar\nu_m\,r_m(\rho_m s),
\]
where $r_m$ is $C^2$ and positive on the relevant compact sanction range, $r_m(0)=1$, $r_m'\le0$, and
$\rho_m\ge0$ indexes responsiveness; the smooth sanction then solves
$s=\kappa_m\bigl[\nu_m^{\rho}(s)/(\nu_m^{\rho}(s)+\alpha_mX_m)-\bar q_m\bigr]_\varepsilon$. Suppose the
fixed-pool baseline ($\rho_f=\rho_a=0$) has exactly three transverse roots with the stated stability signs
throughout a compact parameter set $K$ contained in the interior of its coexistence region. Then there is
$\bar\rho>0$ such that, whenever $\max_m\rho_m<\bar\rho$, the responsive-opposition model has exactly three
transverse roots with those stability signs for every parameter in $K$, and every nondegenerate fold on the
boundary of the fixed-pool coexistence region continues locally. Conversely, at any responsiveness and
parameter value for which the induced reduced map satisfies $\sup_J\Psi_{\rho}'(J)<1$, the equilibrium is
unique.
\end{proposition}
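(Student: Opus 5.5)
The plan is to treat responsiveness $\rho$ as a small perturbation of the fixed-pool primitives and reduce the first claim to Proposition~\ref{prop:openset} and Lemma~\ref{lem:regular-fold}; uniqueness will follow from a contraction argument.

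\emph{Step 1: the responsive sanction is well defined and smooth in $\rho$.} For fixed $X_m$ and $\rho_m$, define the smooth sanction implicitly by
\[
 \Phi_m(s;X_m,\rho_m):=s-\kappa_m\bigl[\nu_m^{\rho}(s)/(\nu_m^{\rho}(s)+\alpha_mX_m)-\bar q_m\bigr]_\varepsilon=0 .
\]
At $\rho_m=0$ the bracket does not depend on $s$, so $\partial_s\Phi_m=1$. For general $\rho_m$,
\[
 \partial_s\Phi_m=1-\rho_m\,\kappa_m\,[\cdot]_\varepsilon'\,\frac{\alpha_mX_m\bar\nu_m r_m'}{(\nu+\alpha_mX_m)^2},
\]
which is $1+O(\rho_m)$ uniformly on the compact range of $(s,X_m)$. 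This uses $r_m>0$, so $\nu$ is bounded away from zero, together with bounded derivatives of the softplus.

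The right-hand side is also monotone in $s$: since $r_m'\le0$, a larger $s$ shrinks the pool and lowers the posterior. Hence $\partial_s\Phi_m\ge1$, and the solution $s_m^\rho(X_m)$ is unique for every $\rho$. The implicit function theorem, with $r_m\in C^2$, makes $s_m^\rho$ a $C^2$ function of $(X_m,\rho_m)$ jointly, with $s_m^\rho\to s_m^\varepsilon$ in $C^2$ on compacts as $\rho\to0$.

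For the enactment subgame, regularity \eqref{eq:enactment-regularity} is an open condition on $-s_a'$. It therefore persists for small $\rho$, and $a_\rho(J)$ converges to $a(J)$ in $C^2$ via the implicit function theorem applied to $A=JG(z_\rho(A))$. Composing with $H$ shows that $F_\rho(J,\theta)=\Psi_\rho(J;\theta)-J$ and its derivatives $F_J$, $F_{JJ}$, $F_\theta$, $F_{J\theta}$ converge uniformly on $[0,1]\times\Theta$. This is exactly the induced $C^2$ topology required by Proposition~\ref{prop:openset}.

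\emph{Step 2: persistence.} Apply Proposition~\ref{prop:openset} with $\pi=(\pi^0,\rho)$ and the compact set $K$. It yields $\bar\rho$ such that, for $\max_m\rho_m<\bar\rho$, every $\theta\in K$ has exactly three transverse roots with signs $(-,+,-)$ for $F_J$.

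For the folds, the nondegeneracy conditions $F_{JJ}\neq0$ and $F_\theta\neq0$ are open, and $(F,F_J)=(0,0)$ has a nonsingular Jacobian in $(J,\theta)$. Indeed, its determinant is $-F_{JJ}F_\theta\neq0$ at the fold, since $F_J=0$ there. The implicit function theorem therefore continues each fold locally in $\rho$, and Lemma~\ref{lem:regular-fold} gives the saddle-node form at the continued point.

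\emph{Step 3: uniqueness.} If $\sup_J\Psi_\rho'<1$, then $G_\rho(J):=\Psi_\rho(J)-J$ is strictly decreasing. Full support of $H$ gives $G_\rho(0)>0$ and $G_\rho(1)<0$, so there is exactly one root. By Step 1 and the scalar reduction of Proposition~\ref{prop:mult}, reduced fixed points correspond one-to-one to equilibria of the responsive model, so the equilibrium is unique. The correspondence holds provided $a_\rho$ is unique; if it is not, the argument needs the regularity from Step 1.

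\emph{Main obstacle.} The delicate step is the uniform $C^2$ control in Step 1 near $X_m=0$ and at the boundary of $[0,1]$. There the posterior's $X_m$-derivatives involve $\nu^{-1}$, and the enactment cutoff enters $\Omega$ and $G$ through second derivatives. I would handle this with the lower bound $\nu_m^\rho\ge\bar\nu_m\min r_m>0$ on the compact sanction range, which must first be confirmed to contain all solutions. That bound follows because the sanction is capped by $\kappa_m$ times a bounded softplus.
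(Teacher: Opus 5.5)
Your proposal is correct and follows the paper's own argument. The paper applies the implicit-function theorem to the sanction equation at $\rho=0$, where its $s$-derivative equals one, to get $C^2$ convergence of the responsive sanction and hence of $F_\rho=\Psi_\rho-J$; Proposition~\ref{prop:openset} then preserves the three transverse roots on $K$, the implicit-function theorem on $(F,F_J)=(0,0)$ (determinant $-F_\theta F_{JJ}\neq0$) continues the folds, and the derivative criterion gives uniqueness. Your extra observations make explicit what the paper leaves implicit: $\partial_s\Phi_m\ge1$ holds at every $\rho$ because $r_m'\le0$, so the induced sanction is unique at any responsiveness, and the strict enactment-regularity condition persists for small $\rho$.
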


The proof, in Online Appendix OA.7, shows that small responsiveness perturbs the induced sanction, hence the
reduced fixed-point function, in the $C^2$ topology, so persistence follows from
Proposition~\ref{prop:openset}; OA.7 also microfounds $r_m$ from expressive opportunists with heterogeneous
benefits and derives two sufficient primitive conditions for weak responsiveness.

\section{Proofs of the dynamic theorems}\label{app:dynproofs}

This appendix proves Theorem~\ref{thm:history-expectations} (extremal paths and
expectation-dependent convergence), Theorem~\ref{thm:band} (the focus condition and the strict band), and
Theorem~\ref{thm:frontier} (duration bounds), stated in Section~\ref{sec:dynamics}.

\begin{proof}[Proof of Theorem~\ref{thm:history-expectations}]
Under the coordinatewise order, $\mathbb X(j)$ is a complete lattice with bottom
$\bot_j=(j,0,0,\ldots)$ and top $\top_j=(j,1,1,\ldots)$. The operator $\mathcal T_j$ maps this lattice into
itself and is isotone because $\widehat R$ and $H$ are nondecreasing. Tarski's theorem therefore makes its
fixed-point set a nonempty complete lattice. Moreover, if a sequence of paths increases or decreases
coordinatewise, bounded convergence in the discounted sum and continuity of $H$ and $\widehat R$ imply that
$\mathcal T_j$ commutes with its pointwise limit. Thus
\[
 \mathcal T_j^n\bot_j\uparrow\underline{\mathbf J}(j),
 \qquad
 \mathcal T_j^n\top_j\downarrow\overline{\mathbf J}(j),
\]
and both limits are fixed points. Induction places every fixed point between every lower and upper iterate, so
the two limits are respectively the least and greatest fixed points. A fixed point of $\mathcal T_j$ is exactly
the $J$ component of a perfect-foresight path, with $V$ recovered from \eqref{eq:forward-value}; this proves (i).

If $j\le j'$, the corresponding bottom and top paths are ordered, and induction through the common coordinate
formula \eqref{eq:path-operator} orders every lower and upper iterate. Passing to limits proves that both
extremal paths are nondecreasing in inherited stock. Every other equilibrium path is squeezed between them,
which proves (ii).

We first record two consequences used in (iii). For any perfect-foresight path, let
$l=\liminf_tJ_t$ and $u=\limsup_tJ_t$. Eventually every future stock lies in an arbitrarily small enlargement
of $[l,u]$. Monotonicity of $\widehat R$ in the discounted sum, followed by the stock equation, gives
$l\ge\Psi(\max\{l-\varepsilon,0\})$ and $u\le\Psi(\min\{u+\varepsilon,1\})$ for every $\varepsilon>0$;
continuity of $\Psi$ then delivers, as $\varepsilon\downarrow0$,
\begin{equation}\label{eq:liminf-limsup}
 l\ge\Psi(l),\qquad u\le\Psi(u).
\end{equation}
Under the stated sign chart, if paths $\mathbf X\le\mathbf Y$ and $Y_t\to J_\ell$, then
$X_t\to J_\ell$: its limsup is at most $J_\ell$, while the first inequality in
\eqref{eq:liminf-limsup} rules out a liminf below $J_\ell$. Symmetrically,
$\mathbf X\ge\mathbf Y$ and $Y_t\to J_h$ imply $X_t\to J_h$.

It follows that $\mathcal E_h^{\max}$ and $\mathcal E_h^{\min}$ are upper sets. For example,
$j'\ge j$ implies $\overline{\mathbf J}(j')\ge\overline{\mathbf J}(j)$, so high convergence of the latter
forces high convergence of the former. Moreover, high convergence of the least path forces that of the
greatest, hence $\mathcal E_h^{\min}\subseteq\mathcal E_h^{\max}$ and the ordering in
\eqref{eq:expectation-band}. If a path from $j<J_-^{RE}$ converged high, the greatest path above it would also
converge high, contradicting the infimum of the upper set $\mathcal E_h^{\max}$. If $j>J_+^{RE}$, the least
path converges high and every other path above it does as well. For $j$ strictly between the cutoffs,
$j>\inf\mathcal E_h^{\max}$ places $j$ in the upper set $\mathcal E_h^{\max}$, so the greatest path
converges high, while $j<J_+^{RE}$ excludes $j$ from $\mathcal E_h^{\min}$, so the least path does not:
convergence depends on the continuation. This proves (iii).
\end{proof}

The next lemma isolates the planar fact behind Theorem~\ref{thm:band}.

\begin{lemma}[Backward spiral crossings]\label{lem:spiral}
Let $F$ be a $C^1$ map from a neighborhood of $s^\ast\in\mathbb R^2$ to $\mathbb R^2$ with
$F(s^\ast)=s^\ast$, and suppose the eigenvalues $\mu,\bar\mu$ of $DF(s^\ast)$ are nonreal with $|\mu|>1$.
Let $\ell$ be any nonzero linear functional on $\mathbb R^2$. Then $F$ is invertible near $s^\ast$, and
there is a neighborhood $B$ of $s^\ast$---a ball in a norm adapted to
$DF(s^\ast)$---such that for every $s\in B\setminus\{s^\ast\}$ the backward
orbit $s_{-k}:=F^{-k}(s)$, $k\ge0$, is defined for all $k$, remains in $B$, converges to $s^\ast$, never
equals $s^\ast$, and satisfies $\ell(s_{-k}-s^\ast)<0$ for infinitely many $k$ and
$\ell(s_{-k}-s^\ast)>0$ for infinitely many $k$.
\end{lemma}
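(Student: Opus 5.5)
The argument is local linear algebra followed by a standard rotation estimate. Write $M:=DF(s^\ast)$ with eigenvalues $\mu,\bar\mu$, where $|\mu|=\rho>1$ and $\mu=\rho e^{i\vartheta}$ with $\vartheta\in(0,\pi)$ because $\mu$ is nonreal. \emph{Step 1 (invertibility).} Since $\det M=\rho^2\neq0$, the inverse function theorem makes $F$ a $C^1$ diffeomorphism from a neighborhood $U$ of $s^\ast$ onto a neighborhood of $s^\ast$. The inverse $G:=F^{-1}$ fixes $s^\ast$, and $DG(s^\ast)=M^{-1}$ has eigenvalues $\mu^{-1},\bar\mu^{-1}$, of modulus $\rho^{-1}<1$ and still nonreal.

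\emph{Step 2 (adapted norm and invariant ball).} Choose real coordinates in which $M^{-1}=\rho^{-1}R_{-\vartheta}$, with $R$ a rotation. This is the real Jordan form, obtained by taking the real and imaginary parts of a complex eigenvector as basis. Let $|\cdot|$ be the Euclidean norm in these coordinates. Then $\|M^{-1}\|=\rho^{-1}$. By $C^1$ continuity, for any $\epsilon>0$ there is $r>0$ such that $G(s)-s^\ast=M^{-1}(s-s^\ast)+e(s)$ on the ball $B:=\{|s-s^\ast|<r\}$, with $|e(s)|\le\epsilon|s-s^\ast|$. Pick $\epsilon$ so that $\rho^{-1}+\epsilon<1$. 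Then $|G(s)-s^\ast|\le(\rho^{-1}+\epsilon)|s-s^\ast|$. Consequently $G(B)\subseteq B$, backward orbits are defined for all $k$, stay in $B$, and converge geometrically to $s^\ast$. Since $G$ is injective and fixes $s^\ast$, a point $s\neq s^\ast$ never maps to $s^\ast$.

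\emph{Step 3 (angular rotation).} Write $x_k:=s_{-k}-s^\ast\neq0$ in polar form. The relation $x_{k+1}=\rho^{-1}R_{-\vartheta}x_k+e_k$ with $|e_k|\le\epsilon|x_k|$ says that $x_{k+1}$ lies within relative distance $\epsilon\rho$ of the rotated vector $R_{-\vartheta}x_k$ (after rescaling). The angle between two vectors $v,w$ with $|w-v|\le\eta|v|$ and $\eta<1$ is at most $\arcsin\eta$. So each backward step changes the polar angle by $-\vartheta+\xi_k$, with $|\xi_k|\le\arcsin(\epsilon\rho)$. Shrink $\epsilon$ further so that $\arcsin(\epsilon\rho)<\min\{\vartheta,\pi-\vartheta\}/2$. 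Then each step rotates clockwise by an amount in $[\vartheta/2,\pi-\vartheta/2]\subset(0,\pi)$, and the accumulated angle $\sum_k(\vartheta-\xi_k)$ diverges.

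\emph{Step 4 (sign changes).} The set $\{x:\ell(x)>0\}$ is an open half-plane, an angular sector of width $\pi$, and $\{\ell<0\}$ is its complement up to the null line. A sequence of angles that advances monotonically by steps strictly less than $\pi$ and unboundedly in total cannot jump over an open sector of width $\pi$. It therefore enters each of the two open half-planes in every full revolution, which gives $\ell(x_k)>0$ and $\ell(x_k)<0$ infinitely often.

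\emph{Main obstacle.} The delicate point is the angular estimate in Step 3. The nonlinear error must be controlled relative to $|x_k|$ rather than absolutely, because $x_k\to0$. The bound $|e(s)|=o(|s-s^\ast|)$ from differentiability at $s^\ast$, together with the adapted Euclidean norm making $M^{-1}$ an exact scaled rotation, is what makes the per-step angle perturbation uniformly small. A second subtlety is the step-size bound: each step must rotate by strictly less than $\pi$, so that the orbit cannot skip a half-plane. This holds because the true rotation $\vartheta$ lies in $(0,\pi)$ and the perturbation was chosen smaller than half the gap to $0$ and to $\pi$.
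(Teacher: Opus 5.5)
Your proposal is correct and follows essentially the paper's route. Both arguments run through local inversion, the real Jordan form (which makes the backward linearization an exact scaled rotation), a remainder controlled relative to $|x_k|$ so that each backward step turns by an angle bounded inside $(0,\pi)$, and a no-skipping argument for the unwound angle. The differences are cosmetic: you get $s_{-k}\neq s^\ast$ from injectivity rather than from the paper's lower bound $|v'|\ge\rho(1-\varepsilon)|v|$, and you use the full open half-planes of width $\pi$ instead of the paper's shorter arcs of length $\omega+2\varepsilon'$, which works because every step is strictly below $\pi$.

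One small slip: with $m:=\min\{\vartheta,\pi-\vartheta\}$, the step lies in $[m/2,\pi-m/2]$, not in $[\vartheta/2,\pi-\vartheta/2]$. The latter interval fails for $\vartheta>\pi/2$, but your conclusion that the step lies in $(0,\pi)$ is unaffected.
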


\begin{proof}
Nonreal eigenvalues make $DF(s^\ast)$ invertible, so $F$ is a local $C^1$ diffeomorphism and
$\mathcal B:=F^{-1}$ is $C^1$ near $s^\ast$ with $D\mathcal B(s^\ast)$ having the nonreal eigenvalues
$\nu,\bar\nu$, $\nu:=\mu^{-1}$, of modulus $\rho:=|\mu|^{-1}<1$; label the pair so that
$\nu=\rho e^{i\omega}$ with $\omega\in(0,\pi)$. A real $2\times2$ matrix with nonreal eigenvalues is
real-similar to $\rho$ times the rotation by $\omega$, so after a real-linear change of coordinates,
identified with $\mathbb C$, the backward map reads
\[
 v\;\longmapsto\;\nu v+r(v),\qquad v:=P(s-s^\ast),
\]
with $|r(v)|/|v|\to0$ as $v\to0$, by differentiability at $s^\ast$. Fix $\varepsilon\in(0,1)$ with
$\rho(1+\varepsilon)<1$ and $\varepsilon':=\arcsin\varepsilon<\min\{\omega,\pi-\omega\}/2$, and choose
$\tau>0$ with $|r(v)|\le\varepsilon\rho|v|$ on $\{|v|\le\tau\}$; let $B$ correspond to that disc. For
$|v|\le\tau$ and $v':=\nu v+r(v)$,
\[
 \rho(1-\varepsilon)|v|\;\le\;|v'|\;\le\;\rho(1+\varepsilon)|v|,
\]
so the backward orbit stays in the disc, converges geometrically to $0$, and never reaches $0$; and
$v'/(\nu v)=1+r(v)/(\nu v)$ with $|r(v)/(\nu v)|\le\varepsilon$, so the continuously unwound angle
$\theta_k:=\arg v_k$ advances per step by $\omega_k\in[\omega-\varepsilon',\omega+\varepsilon']
\subset(0,\pi)$ and increases to infinity. Write $\ell(s-s^\ast)=\mathrm{Re}(\bar c\,v)$ for some
$c\neq0$. Around each angle $\arg c+\pi+2\pi n$, $n\in\mathbb N$, place an arc of length
$\omega+2\varepsilon'$; by the choice of $\varepsilon'$ this length is below $\pi$, and each step of
$(\theta_k)$ is strictly shorter than it, so the increasing angle sequence cannot jump across any of
these arcs and visits every arc lying above its initial value---hence all but finitely many of them,
still infinitely many of each family. At any such visit the angular distance to the minimizing direction
is below $(\omega+2\varepsilon')/2<\pi/2$, whence
$\ell(s_{-k}-s^\ast)\le-|c||v_k|\cos\!\bigl((\omega+2\varepsilon')/2\bigr)<0$. The arcs centered at
$\arg c+2\pi n$ give the strictly positive visits symmetrically.
\end{proof}

\begin{proof}[Proof of Theorem~\ref{thm:band}]
\emph{Step 1: the forward system near the middle culture.} Along any perfect-foresight path,
\eqref{eq:forward-value}--\eqref{eq:forward-stock} read $(J_{t+1},V_{t+1})=F(J_t,V_t)$ with
\[
 F(J,V):=\Bigl(dJ+\delta H\bigl(\lambda^{-1}[V-(1-\lambda)\widehat R(J)]\bigr),\;
 \lambda^{-1}[V-(1-\lambda)\widehat R(J)]\Bigr),
\]
and $s^\ast:=(J_i,\widehat R(J_i))$ is a fixed point. Under the stated smoothness $F$ is $C^1$ near
$s^\ast$, its linearization has the characteristic polynomial \eqref{eq:forward-characteristic} at
$m_j=m_i$, and its Jacobian determinant is $d/\lambda$ everywhere; under \eqref{eq:focus-condition} the
roots are nonreal with $|\mu|^2=d/\lambda=1/\beta>1$. Apply Lemma~\ref{lem:spiral} with
$\ell(J,V):=J$ evaluated on displacements, and shrink the resulting ball $B$ so that its $J$-projection
lies in $(0,1)$.

\emph{Step 2: an exact backward prefix.} Choose $\eta>0$ small enough that $M:=J_i+\eta<J_h$,
$m:=J_i-\eta>J_\ell$, and both $(M,\widehat R(M))$ and $(m,\widehat R(m))$ lie in $B$; neither equals
$s^\ast$, their first coordinates differing from $J_i$. By Lemma~\ref{lem:spiral} the backward orbit
$(x^{(k)},v^{(k)}):=F^{-k}(M,\widehat R(M))$ stays in $B$ and has $x^{(N)}<J_i$ for some $N$. Define
\[
 x_t:=x^{(N-t)},\quad V_t:=v^{(N-t)}\ (0\le t\le N),\qquad
 x_t:=M,\quad V_t:=\widehat R(M)\ (t>N).
\]
Consecutive backward-orbit points are $F$-images of each other, so $V_t=(1-\lambda)\widehat
R(x_t)+\lambda V_{t+1}$ for every $t<N$, and the constant continuation extends this identity to all $t$;
solving it forward against its bounded tail gives $V_t=(1-\lambda)\sum_{n\ge0}\lambda^n\widehat
R(x_{t+n})$: the $V_t$ are exactly the values induced by the path $\mathbf x$ itself.

\emph{Step 3: subsolution and upward pasting.} Let $j^-:=x_0=x^{(N)}<J_i$. For $t+1\le N$,
$(\mathcal T_{j^-}\mathbf x)_{t+1}=dx_t+\delta H(V_{t+1})=x_{t+1}$, the first coordinate of $F$; for
$t+1>N$, $(\mathcal T_{j^-}\mathbf x)_{t+1}=dM+\delta H(\widehat R(M))=dM+\delta\Psi(M)>M=x_{t+1}$,
because the sign chart gives $\Psi(M)>M$ on $(J_i,J_h)$. Hence $\mathbf x\le\mathcal T_{j^-}\mathbf x$.
The operator is isotone and commutes with monotone pointwise limits (proof of
Theorem~\ref{thm:history-expectations}), so the iterates $\mathcal T^n_{j^-}\mathbf x$ increase to a
perfect-foresight path $\mathbf y\ge\mathbf x$ from $j^-$ with $\liminf_ty_t\ge M>J_i$. By
\eqref{eq:liminf-limsup} its liminf $l$ satisfies $l\ge\Psi(l)$, which with $l\ge M$ and the sign chart
forces $l\ge J_h$; its limsup satisfies $u\le\Psi(u)$, which the sign chart forbids on $(J_h,1]$, so
$u\le J_h$ and $\mathbf y\to J_h$. The comparison argument in the proof of
Theorem~\ref{thm:history-expectations} then makes the greatest path from $j^-$ converge to $J_h$ as
well: $j^-\in\mathcal E_h^{\max}$.

\emph{Step 4: the mirror witness.} Repeating Steps 2--3 from $(m,\widehat R(m))$, with a backward index
$N'$ at which $j^+:=x^{(N')}>J_i$ and the constant continuation at $m$, yields a supersolution: on the
plateau, $dm+\delta\Psi(m)<m$ because $\Psi(m)<m$ on $(J_\ell,J_i)$. Downward iteration gives a
perfect-foresight path $\mathbf z$ from $j^+$ with $\limsup_tz_t\le m<J_i$. The squeeze forces
$u\le J_\ell$ (the sign chart forbids $u\le\Psi(u)$ on $(J_\ell,J_i)$) and then $l=u=J_\ell$, since
$l\ge\Psi(l)$ fails on $(0,J_\ell)$ and at $l=0$, where $\Psi(0)>0$. So $\mathbf z\to J_\ell$, and by
comparison the least path from $j^+$ converges to $J_\ell$: $j^+\notin\mathcal E_h^{\min}$.

\emph{Step 5: conclusion.} $\mathcal E_h^{\max}$ is an upper set containing $j^-$, and
$\mathcal E_h^{\min}$ is an upper set not containing $j^+$, so
$[j^-,j^+]\subseteq\mathcal E_h^{\max}\setminus\mathcal E_h^{\min}$. The display's stronger claim also
holds: both extremal paths are nondecreasing in the inherited stock
(Theorem~\ref{thm:history-expectations}(ii)), so for every $j\in[j^-,j^+]$ the greatest path from $j$
dominates the greatest path from $j^-$ and, by the comparison argument in the proof of
Theorem~\ref{thm:history-expectations}, converges to $J_h$, while the least path from $j$ lies below the
least path from $j^+$ and converges to $J_\ell$---not merely away from $J_h$. When $\mathcal E_h^{\min}$
is nonempty, \eqref{eq:expectation-band} gives $J_-^{RE}\le j^-<J_i<j^+\le J_+^{RE}$.
\end{proof}

\begin{proof}[Proof of Theorem~\ref{thm:frontier}]
Iterating $J_{t+1}=dJ_t+\delta B_t$ yields
\eqref{eq:general-stock-bounds}. Setting $B_t\le1$ and summing the geometric series gives
\eqref{eq:physical-stock-ceiling}. Once policy is withdrawn, Theorem~\ref{thm:history-expectations} applies from
the inherited stock $J_T$. A ceiling strictly below $J_-^{RE}$ rules out possible high convergence, and one
strictly below $J_+^{RE}$ rules out high convergence under every admissible continuation. A ceiling weakly below either cutoff
only rules out strict crossing.

For either threshold $c\in(J_0,1)$, the physical ceiling lies strictly above $c$ exactly when
\[
 1-(1-J_0)d^T>c
 \quad\Longleftrightarrow\quad
 T>\frac{\log[(1-c)/(1-J_0)]}{\log d}.
\]
The smallest integer satisfying the strict inequality is the floor plus one in
\eqref{eq:Tpossible}--\eqref{eq:Trobust}.

Because the continuation-value weights are nonnegative and sum to one, $V_{t+1}\ge\underline R$ on every
bounded path. Under a uniform direct shift, therefore,
$B_t=H(V_{t+1}+\tau)\ge H(\underline R+\tau)$. The lower bound obtained from
\eqref{eq:general-stock-bounds} weakly reaches $c$ at $\tau_T^{\mathrm{env}}(c)$ and strictly exceeds it for
$\tau>\tau_T^{\mathrm{env}}(c)$. Full support implies that
a finite such shift exists whenever the physical ceiling has strict slack above $c$. It also implies
$B_t<1$ for every finite shift, while arbitrarily large finite shifts make $B_t$ arbitrarily close to one.
At any shorter duration, therefore, a finite direct shift places $J_T$ strictly below the physical ceiling,
which is weakly below the relevant cutoff. Under the policy schedule, the isotone policy operator has a
greatest and a least fixed point. For $c=J_-^{RE}$, the greatest fixed point's post-withdrawal tail equals
the greatest no-policy path from its terminal stock $J_T$:
otherwise, pasting that greater tail to the policy prefix produces a subsolution strictly above the greatest
policy fixed point, a contradiction. Hence $J_T>J_-^{RE}$ yields at least one high-converging policy path;
when $J_T>J_+^{RE}$, every post-withdrawal tail converges high, and by the uniform bound
$B_t\ge H(\underline R+\tau)$ the terminal stock of \emph{every} policy fixed point then exceeds
$J_+^{RE}$, so the guarantee holds across all admissible continuations. For the necessity of
$T_{\mathrm{robust}}$, the mirror-image pasting applies: the least policy fixed point's post-withdrawal tail
equals the \emph{least} no-policy path from its terminal stock---otherwise, pasting that smaller tail to the
policy prefix produces a supersolution strictly below the least policy fixed point, and downward iteration
of the isotone operator from it reaches a fixed point strictly below the least one, a contradiction. At any
$T<T_{\mathrm{robust}}$ and any finite shift, the least policy fixed point's terminal stock lies strictly
below $J_+^{RE}$, so its tail---the least no-policy path from that stock---does not converge to $J_h$ by the
definition of $J_+^{RE}$ as $\inf\mathcal E_h^{\min}$: the robust guarantee fails at every shorter duration.
This proves feasibility and sharpness
within the stated policy class. Online Appendix OA.10 supplies the full continuation-value envelope and the
detailed pasting construction.
\end{proof}

\section{Proof of the welfare theorem}\label{app:welfare-proof}

\begin{proof}[Proof of Theorem~\ref{thm:central-welfare}]
(i) Differentiating \eqref{eq:welfare}, the planner's enactment first-order condition is
\[
 G^{-1}(A/J)=b_N+\Delta(\sigma(A))+A\sigma'(A)\Delta'(\sigma(A)),
\]
because
\[
 \partial_A[A\Delta(\sigma(A))]=\Delta(\sigma(A))+A\sigma'(A)\Delta'(\sigma(A)),
 \qquad \partial_A[pJK(A/J)]=pG^{-1}(A/J).
\]
The decentralized cutoff is $z(A)=b_N-\psi s_a(A)$, so the planner's value differs from the private cutoff by
$\Delta(\sigma(A))+A\sigma'(A)\Delta'(\sigma(A))+\psi s_a(A)$; this differs from the
fixed-competence wedge $\Delta(\sigma)+\psi s_a(A)$ by the learning term $A\sigma'(A)\Delta'(\sigma(A))$,
which is nonnegative and, since $\Delta'=V_N+L>0$, strictly positive wherever $A>0$ and $\sigma'(A)>0$,
proving (i). For part~(iii), consider the competent benchmark. Since
$z(A)\le b_N<b_N+V_N+D$, $w(A_j)<w^\ast$ and $\Omega_P(A_j)<\Omega_S$, so
\[
 R(J_j,A_j)<\msv(p),\qquad
 J_j=H(R)<J^\ast,\qquad
 A_j=J_jw(A_j)<J_jw^\ast<J^\ast w^\ast=A^\ast.
\]
For the same-technology comparison in part~(ii), define
\[
 c(A):=b_N+\Delta(\sigma(A))+A\sigma'(A)\Delta'(\sigma(A)).
\]
At an interior optimum of the planner with
endogenous competence, the first-order conditions give
\[
 G^{-1}(A^{\ast\ast}/J^{\ast\ast})=c(A^{\ast\ast}),\qquad
 H^{-1}(J^{\ast\ast})=p\Omega(c(A^{\ast\ast}))-(1-p)\gamma.
\]
The stated nonnegativity condition makes $c(A^{\ast\ast})\ge b_N$. Since every equilibrium has
$z(A_j)\le b_N$ and
\[
 H^{-1}(J_j)=p\Omega(z(A_j))-(1-p)[\gamma+\phi s_f(J_j)],
\]
an effectively sanctioned
margin makes the comparison strict: $\psi s_a(A_j)>0$ gives $z(A_j)<b_N\le c(A^{\ast\ast})$ and hence
$\Omega(z(A_j))<\Omega(c(A^{\ast\ast}))$, because $\Omega'=G>0$ makes $\Omega$ strictly increasing, while
$\phi s_f(J_j)>0$ makes the exposure term strictly negative with
$\Omega(z(A_j))\le\Omega(c(A^{\ast\ast}))$. Either way $H^{-1}(J_j)<H^{-1}(J^{\ast\ast})$, so
$J_j<J^{\ast\ast}$. It then gives
$A_j=J_jG(z(A_j))<J^{\ast\ast}G(c(A^{\ast\ast}))=A^{\ast\ast}$. Online Appendix OA.1 gives the planner
derivations and boundary conditions.
\end{proof}

\section*{Statements and Declarations}

\noindent\textit{Funding and competing interests.} The author declares no competing interests; no external
funding was received for this work.

\smallskip
\noindent\textit{Data and code availability.} No empirical data are used. A complete replication package
containing \texttt{replicate\_paperF3.py}, the companion verification and figure scripts, versioned software
requirements, expected outputs, and documentation accompanies this submission as supplementary research material.

\smallskip
\noindent\textit{Use of generative AI.} During the preparation of this
work, the author used Claude and OpenAI Codex as an AI-assisted tool for language and structural revision, consistency
checking, literature organization, and computational verification. The author reviewed and edited all output, verified the
mathematical arguments, references, and numerical results, and takes full responsibility for the content of
the manuscript.

\singlespacing

\clearpage
\normalsize
\onehalfspacing
\setcounter{section}{0}
\setcounter{equation}{0}
\setcounter{figure}{0}
\setcounter{table}{0}
\renewcommand{\thesection}{OA.\arabic{section}}
\renewcommand{\theequation}{OA.\arabic{equation}}
\renewcommand{\thefigure}{OA.\arabic{figure}}
\renewcommand{\thetable}{OA.\arabic{table}}

\begin{center}
{\Large\bfseries Online Appendix}\\[5pt]
{\large Conformity Traps and the Formation of Independent Judgment}\\[3pt]
{\normalsize Hector Galindo-Silva}
\end{center}
\medskip

\noindent This appendix is part of the same document: sections are numbered
OA.1--OA.13, matching every ``Online Appendix OA.$n$'' pointer in the main text.
\medskip

This appendix is organized in five blocks. Sections~\ref{oa:accuracy}--\ref{oa:covert} develop the
accuracy technology and the welfare and output accounting; Sections~\ref{oa:robustness}--\ref{oa:scope}
collect robustness panels, the responsive-opposition microfoundation, and the scope conditions;
Sections~\ref{oa:forward-local}--\ref{oa:patience} contain the local classification of forward-looking
cultures and the policy-duration material; Section~\ref{oa:certification} records the certification
details; and Section~\ref{oa:identification} states the identification framework, with the omitted proofs
collected last in Section~\ref{oa:omitted}.
Table~\ref{tab:oa-notation} collects the notation used most often in the main paper and here.

\begin{table}[ht]
\centering
\caption{Core notation.}\label{tab:oa-notation}
\footnotesize
\setlength{\tabcolsep}{5pt}
\begin{tabular}{L{0.24\textwidth}L{0.66\textwidth}}
\toprule
Object & Meaning \\
\midrule
$J, A$ & Masses of formed and enacted judgment. \\
$k,H;\ \ell,G$ & Formation net cost and its c.d.f.; resolve cost and its c.d.f. \\
$p,\gamma$ & Long-run frequency of novel episodes; expected diagnostic cost, equivalently the saving from immediate routine reliance, on familiar episodes. \\
$q_m,\nu_m,\alpha_m$ & Posterior probability of opposition after an event at margin $m$; background and authentic event-rate parameters. \\
$s_m,\kappa_m,\bar q_m$ & Audience sanction; its sensitivity and tolerance threshold. \\
$\phi_0,v_f,\phi;\ \psi$ & Raw formation visibility, prompt vindication, unresolved formation exposure $\phi=(1-v_f)\phi_0$; enactment-stigma weight. \\
$b_N,z(A),w(A)$ & Private value of faithful enactment; enactment cutoff; enactment rate among formed agents. \\
$\sigma(A),\Delta(\sigma)$ & Competence generated by enacted judgment; output value of departure relative to routine use in a novel episode. \\
$C,K,\Omega$ & Aggregate formation cost, aggregate resolve cost, and private option-value functions. \\
$\delta,\beta$ & Revision opportunity rate and discount factor in the forward-looking dynamics. \\
$\rho$ & Fraction of output consequences internalized by a private enactor. \\
$V_F,V_N,D,L$ & Familiar-output level, correct-departure value, routine loss under novelty, and incorrect-departure loss. \\
\bottomrule
\end{tabular}
\end{table}

The welfare sections use
$C(J)=\int_0^JH^{-1}(u)\,du$, $K(q)=\int_0^qG^{-1}(u)\,du$, and
$\Omega(x)=\E[(x-\ell)_+]$, with $\E|k|<\infty$ and $\E|\ell|<\infty$ assumed so that $C$ and $K$ are
finite and continuous on $[0,1]$; the left tail of $\ell$ is already controlled by the finiteness of
$\Omega$, while the right tail is what $K(1)$ requires. Standing losses
are treated as transfers unless
Sections~\ref{oa:real-accounting}--\ref{oa:screening} explicitly restore their real matching consequences.

\section{Accuracy and public experimentation}\label{oa:accuracy}

The learning schedule in the main paper is the reduced form of the following environment of public
experimentation. Novelty arrives in \emph{regimes}: an unknown parameter $\xi$ characterizes how novel tasks
currently differ from the routine, and regimes turn over at a fixed hazard, at which point $\xi$ is redrawn and
learning restarts at common Gaussian prior precision $\tau_0$. The routine's payoff does not vary with $\xi$, so
choosing it generates no data. A departure is an experiment. Within a regime, public experiments accumulate as
a finite point process whose expected count in a stationary culture with enacted prevalence $A$ is $\lambda A$
(absorbing arrival intensity and expected regime age into $\lambda$), and each produces a public signal
\[
 x_i=\xi+\epsilon+\epsilon_i,
 \qquad
 \epsilon\sim N(0,\tau_c^{-1}),
 \quad
 \epsilon_i\sim N(0,\tau_e^{-1}).
\]
The common noise $\epsilon$ is drawn once per regime and the idiosyncratic noises are independent across
experiments. Averaging $N$ signals eliminates idiosyncratic but not common noise. The reduced form evaluates
precision at the mean count---a mean-count approximation, not the exact expectation of nonlinear accuracy over
random $N$---and, absorbing $\lambda$ into $\tau_e$, gives shared posterior precision
\[
 \tau(A)=\tau_0+\frac{\tau_c\tau_eA}{\tau_eA+\tau_c},
\]
which is increasing, strictly concave, and bounded above by $\tau_0+\tau_c$. Common noise prevents exact learning
within a regime, while regime turnover prevents accumulation across regimes. The finite point process is also
what keeps idiosyncratic noise relevant; exact averaging over a positive mass in a non-atomic continuum would
otherwise eliminate it.

Let $\Sigma(\tau)$ be the probability that a formed agent who consults the shared model identifies the correct
novel action, and suppose it is nondecreasing in precision. Then $\sigma(A)=\Sigma(\tau(A))$ is nondecreasing
and bounded; it is concave whenever $\Sigma$ is also concave (including an affine rule with nonnegative slope).
In the affine case,
\[
 \sigma(A)=\underline\sigma+
 (\overline\sigma_{\max}-\underline\sigma)\frac{A}{A+m},
 \qquad m=\tau_c/\tau_e,
\]
the hyperbolic schedule used in the worked example. Learning saturates faster when individual experiments are
precise relative to the common shock. Signals become part of the shared model inherited by successors, rather
than information used by the audience sanctioning the current act. Because signals are public and the private
enactor values conviction rather than the informational spillover, experimentation is under-supplied as in
strategic experimentation \citep{BoltonHarris1999,KellerRadyCripps2005}.

With competence $\sigma(A)$,
\[
 Y_N^L(A)=A\Delta(\sigma(A))-D,
 \qquad
 \Delta(\sigma)=\sigma(V_N+L)-(L-D).
\]
If the agent internalizes fraction $\rho$ of output, her enactment cutoff is
$b_N+\rho\Delta(\sigma(A))-\psi s_a(A)$, whereas the planner's cutoff is
$b_N+\Delta(\sigma(A))+A\sigma'(A)\Delta'(\sigma(A))$.

\begin{proposition}[Accuracy and reputation]\label{prop:oa-accuracy}
An enacted departure has positive incremental value iff
$\sigma(A)>\bar\sigma=(L-D)/(V_N+L)$; the novel-state level is positive iff
$A\Delta(\sigma(A))>D$. Taking competence as given, the local enactment wedge is
\[
 W_a(A;\rho)=(1-\rho)\Delta(\sigma(A))+\psi s_a(A).
\]
With practice-built competence it gains the learning term
$A\sigma'(A)\Delta'(\sigma(A))$. At an interior planner allocation satisfying
$G^{-1}(A/J)=b_N+\Delta(\sigma(A))+A\sigma'(A)\Delta'(\sigma(A))$, the formation wedge is
\[
\begin{aligned}
 W_f(J,A;\rho)&=
 p\bigl\{\Omega\bigl(b_N+\Delta(\sigma(A))+A\sigma'(A)\Delta'(\sigma(A))\bigr)\\
 &\qquad-\Omega\bigl(b_N+\rho\Delta(\sigma(A))-\psi s_a(A)\bigr)\bigr\}
 +(1-p)\phi s_f(J).
\end{aligned}
\]
At an arbitrary $(J,A)$, the first $\Omega$ argument is instead $G^{-1}(A/J)$. Holding competence fixed removes
the term $A\sigma'(A)\Delta'(\sigma(A))$ from the planner cutoff. Pressure is
corrective exactly where the relevant wedge is negative, which requires low internalization and an
output loss large enough to overcome stigma and, when competence is endogenous, the learning externality.
\end{proposition}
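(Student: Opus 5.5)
The plan is to verify each claim directly from the welfare function \eqref{eq:welfare} and from the private cutoffs, treating the internalization parameter $\rho$ as affecting only the private enactment cutoff and never the planner's objective.

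\emph{Incremental value and level.} By \eqref{eq:delta-sigma}, $\Delta(\sigma)=\sigma(V_N+L)-(L-D)$ is affine in $\sigma$ with slope $V_N+L>0$. It vanishes exactly at $\bar\sigma$ from \eqref{eq:sigma-bar}, so $\Delta(\sigma(A))>0$ if and only if $\sigma(A)>\bar\sigma$. The level statement is immediate from the definition $Y_N^L(A)=A\Delta(\sigma(A))-D$.

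\emph{Enactment wedge.} With fraction $\rho$ of output internalized, a formed agent departs iff $\ell\le b_N+\rho\Delta(\sigma(A))-\psi s_a(A)$. For the planner, I would differentiate \eqref{eq:welfare} in $A$ at fixed $J$, exactly as in the proof of Theorem~\ref{thm:central-welfare}(i), using $\partial_A[pJK(A/J)]=pG^{-1}(A/J)$.
\begin{itemize}
\item If competence is held fixed at $\sigma$, the marginal social value of enactment is $b_N+\Delta(\sigma)$. Subtracting the private cutoff gives $W_a=(1-\rho)\Delta+\psi s_a$.
\item If competence is practice-built, $\partial_A[A\Delta(\sigma(A))]$ adds the learning term $A\sigma'(A)\Delta'(\sigma(A))$.
\end{itemize}

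\emph{Formation wedge.} Rather than varying $J$ at fixed $A$, I would reparametrize the planner's choice by $(J,w)$ with $A=Jw$ and differentiate in $J$ at fixed $w$. This gives
\[
 -(1-p)\gamma+pw\bigl[b_N+\Delta(\sigma(A))+A\sigma'(A)\Delta'(\sigma(A))\bigr]-H^{-1}(J)-pK(w),
\]
because $\partial_J[A\Delta(\sigma(A))]=w[\Delta+A\sigma'\Delta']$ at fixed $w$. At an interior planner allocation, the enactment first-order condition makes the bracket equal to $c:=G^{-1}(w)$. I would then invoke the Legendre identity
\[
 \Omega(c)=cG(c)-\int_0^{G(c)}G^{-1}(u)\,du=cw-K(w),
\]
obtained by integrating $\E[(c-\ell)_+]$ by parts; finiteness follows from $\E|\ell|<\infty$. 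The social formation return then becomes $p\Omega(c)-(1-p)\gamma$.

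The private formation return is $p\Omega\bigl(b_N+\rho\Delta-\psi s_a\bigr)-(1-p)\bigl[\gamma+\phi s_f(J)\bigr]$. Subtracting it from the social return yields $W_f$ as displayed. At an arbitrary $(J,A)$ the same computation holds with $c=G^{-1}(A/J)$, since the identity $wc-K(w)=\Omega(G^{-1}(w))$ does not use optimality. Holding competence fixed deletes the learning term, as in the enactment case.

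\emph{Corrective pressure.} For the final sentence, note that $\Omega$ is strictly increasing and $s_a,s_f\ge0$. Hence $W_a<0$ requires $(1-\rho)\Delta<-\psi s_a\le0$, which needs $\rho<1$ and $\Delta<0$ large enough in magnitude to overcome stigma, and the learning term when competence is endogenous. The corresponding statement for $W_f$ follows by the same monotonicity of $\Omega$.

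The main obstacle is the formation wedge. Its two delicate points are making the $(J,w)$ parametrization consistent with the $(J,A)$ first-order conditions, so that the learning term enters through $w$ and not twice, and verifying the Legendre identity together with the boundary convention $JK(A/J)=0$ at $J=0$.
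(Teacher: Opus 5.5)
Most of your argument is correct, including the formation wedge at an interior planner allocation. The sign and level claims for $\Delta$ are right. So is the enactment wedge: the private cutoff $b_N+\rho\Delta-\psi s_a$ compared with the planner's $A$-derivative, with the learning term coming from $\partial_A[A\Delta(\sigma(A))]$. The identity $\Omega(G^{-1}(w))=wG^{-1}(w)-K(w)$ is also right.

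\textbf{The gap.} The failing step is the clause that at an arbitrary $(J,A)$ the first $\Omega$ argument is $G^{-1}(A/J)$. In your $(J,w)$ parametrization the derivative at fixed $w$ is, at every point,
$p\{w\,c(A)-K(w)\}-(1-p)\gamma-H^{-1}(J)$, with $c(A):=b_N+\Delta(\sigma(A))+A\sigma'(A)\Delta'(\sigma(A))$. Moreover
$w\,c(A)-K(w)=\Omega(G^{-1}(w))+w[c(A)-G^{-1}(w)]$.
Replacing $c(A)$ by $G^{-1}(w)$ is exactly the planner's enactment first-order condition. So the fact that the identity itself does not use optimality does not rescue the claim: the substitution that lets you apply it does use optimality. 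Away from the optimum your computation yields neither $\Omega(G^{-1}(A/J))$ nor $\Omega(c(A))$. Equivalently, $\partial_J\mathcal W|_w=\partial_J\mathcal W|_A+w\,\partial_A\mathcal W|_J$, and the second term vanishes only where $\partial_A\mathcal W=0$.

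\textbf{The fix.} Differentiate at fixed $A$, which is the route implicit in the paper's planner derivation (Proposition~\ref{prop:oa-planner-endog}, which invokes $qG^{-1}(q)-K(q)=\Omega(G^{-1}(q))$). At fixed $A$ the novel-state surplus $p[Ab_N+A\Delta(\sigma(A))-D]$ does not depend on $J$. With $q=A/J$,
$-\partial_J[pJK(A/J)]=p[qG^{-1}(q)-K(q)]=p\,\Omega(G^{-1}(q))$.
Hence the social formation return is $p\Omega(G^{-1}(A/J))-(1-p)\gamma$ at every interior $(J,A)$, with no optimality condition used. Subtracting the private return gives the arbitrary-point wedge. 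At the planner allocation, the enactment condition $G^{-1}(A/J)=c(A)$ then turns it into the displayed $W_f$.

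This route also removes what you flagged as your main obstacle: the learning term never appears in the $J$-derivative and enters exactly once, through the enactment condition. Your closing monotonicity argument for when pressure is corrective is fine.
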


In the worked example, $L=0.80$ and $\bar\sigma=0.573$, while
$\sigma(A_\ell)=\sigmalow<\bar\sigma<\sigmahigh=\sigma(A_h)$. Nevertheless, at $\rho=0$ the enactment wedge remains
positive in both cultures ($\wedgelow$ in the trap and $\wedgehigh$ in the questioning culture), because
standing-loss distortions are concentrated exactly where competence is lowest. Pressure is corrective there only under
the screening qualification in Section~\ref{oa:screening}.

\subsection*{The endogenous-competence planner}

For completeness, consider the stationary planner objective from the main paper,
\[
 \mathcal W(J,A;p)=(1-p)[V_F+\gamma(1-J)]
 +p[A b_N+A\Delta(\sigma(A))-D]-C(J)-pJ K(A/J)
\]
on $0\le A\le J\le1$,
with the convention $JK(A/J):=0$ at $J=0$ and maintaining the baseline private-payoff specification
($\rho=0$) throughout, as in the main text.
Define the learning-adjusted social cutoff
\[
 c(A):=b_N+\Delta(\sigma(A))+A\sigma'(A)(V_N+L).
\]

\begin{proposition}[Endogenous-competence planner]\label{prop:oa-planner-endog}
Let $\sigma$ be increasing, concave, and $C^1$. The planner problem attains an interior maximum, and every
interior maximum $(J^{\ast\ast},A^{\ast\ast})$ satisfies
\[
 G^{-1}(A^{\ast\ast}/J^{\ast\ast})=c(A^{\ast\ast}),\qquad
 H^{-1}(J^{\ast\ast})=p\Omega(c(A^{\ast\ast}))-(1-p)\gamma.
\]
If
$\Delta(\sigma(A^{\ast\ast}))+A^{\ast\ast}\sigma'(A^{\ast\ast})(V_N+L)\ge0$, every equilibrium with at
least one effectively sanctioned margin ($\psi s_a(A_j)>0$ or $\phi s_f(J_j)>0$) satisfies
$J_j<J^{\ast\ast}$ and $A_j<A^{\ast\ast}$. If
$A\Delta(\sigma(A))$ is concave, the objective is jointly strictly concave and the optimum unique. In the
nonconcave hyperbolic worked example, the exhaustive computer-assisted certificate computed with Arb establishes uniqueness of the joint
first-order solution at $p=0.10$.
\end{proposition}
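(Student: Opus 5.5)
The plan is to treat the planner problem on the compact set $\{0\le A\le J\le1\}$ in four steps: existence, exclusion of boundary optima, the first-order conditions, and then the comparison and uniqueness claims.

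\emph{Existence.} With $\E|k|<\infty$ and $\E|\ell|<\infty$, $C$ and $K$ are finite and continuous. Hence $(J,A)\mapsto JK(A/J)$, extended by $0$ at $J=0$, is continuous: $|JK(A/J)|\le J\max_{[0,1]}|K|\to0$. So $\mathcal W$ is continuous on a compact set and attains a maximum.

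\emph{Boundary optima are excluded.} I would use Inada-type derivatives from full support.
\begin{itemize}
\item Near $J=1$: $\partial_J\mathcal W$ contains $-H^{-1}(J)\to-\infty$.
\item Near $J=0$: $H^{-1}(J)\to-\infty$, so raising $J$ (with $A/J$ held fixed) is profitable.
\item Near $A=0$: $\partial_A\mathcal W=p[c(A)-G^{-1}(A/J)]$ and $G^{-1}(0^+)=-\infty$, so the derivative is positive.
\item Near $A=J$: $G^{-1}(1^-)=+\infty$ while $c(A)$ is bounded, so the derivative is negative.
\end{itemize}
The only delicate corner is $J\to0$ jointly with $A$. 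There I would compare against the ray $A=\bar wJ$ and show the objective rises at rate $-H^{-1}(J)\to+\infty$.

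\emph{First-order conditions.} These follow from the envelope computation in the proof of Theorem~\ref{thm:central-welfare}(i). For the $J$ condition, note that $\partial_J[JK(A/J)]=K(x)-xG^{-1}(x)$ with $x=A/J$. Substituting $G^{-1}(x)=c$ gives $K(x)-xc=-\Omega(c)$, because $\Omega(c)=\E[(c-\ell)_+]=\max_x[xc-K(x)]$. This yields $H^{-1}(J^{\ast\ast})=p\Omega(c(A^{\ast\ast}))-(1-p)\gamma$.

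\emph{Underprovision.} This is exactly Theorem~\ref{thm:central-welfare}(ii) and its appendix proof, rewritten with $\Delta'=V_N+L$.

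\emph{Uniqueness under concavity.} Write $\mathcal W$ as a sum of:
\begin{itemize}
\item a linear term in $J$;
\item $pAb_N$, which is linear;
\item $pA\Delta(\sigma(A))$, which is concave by hypothesis;
\item $-C(J)$, which is strictly concave since $H^{-1}$ is strictly increasing;
\item $-pJK(A/J)$, which is concave as the negative of the perspective of the strictly convex $K$.
\end{itemize}
Joint strict concavity then needs one more check. The perspective is only positively homogeneous, hence not strictly convex along rays. But $-C(J)$ supplies strictness in $J$, and along a ray with $J$ fixed the perspective is strictly convex in $A$. So any nontrivial segment has strict curvature, and the maximizer is unique.

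The final numerical uniqueness claim is cited to the Arb certificate of OA.11 rather than proved.

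The main obstacle I expect is the boundary exclusion at the corner $J\to0$ with $A/J$ free. I would handle it by bounding $JK(A/J)\ge J\min K$ and using $-C(J)\approx -\int_0^J H^{-1}$, whose right derivative is $+\infty$ at $0$.
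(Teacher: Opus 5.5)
Your proposal is correct. For existence, boundary exclusion, and the first-order conditions it follows the paper's argument:
\begin{itemize}
\item continuity on the compact feasible set;
\item the Inada limits $C'(0^+)=-\infty$, $C'(1^-)=+\infty$, $G^{-1}(0^+)=-\infty$, $G^{-1}(1^-)=+\infty$;
\item the identity $xG^{-1}(x)-K(x)=\Omega(G^{-1}(x))$, which is your conjugate representation $\Omega=K^\ast$ evaluated at its maximizer $x=G(c)$.
\end{itemize}

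For underprovision, the paper writes out the chain
\[
 H^{-1}(J_j)\le p\Omega(b_N)-(1-p)\gamma\le p\Omega(c(A^{\ast\ast}))-(1-p)\gamma=H^{-1}(J^{\ast\ast})
\]
instead of citing Theorem~\ref{thm:central-welfare}(ii). Citing it is legitimate. That theorem's proof takes only the interior first-order conditions from this proposition, and you derive those independently.

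The genuine difference is the uniqueness step. The paper computes the Hessian of $-C(J)-pJK(A/J)$, whose determinant is $C''pK''/J>0$, and then adds the nonpositive curvature $[A\Delta(\sigma(A))]''$. You instead argue segment by segment: $-C$ is strictly concave along any segment on which $J$ varies, the perspective is strictly convex in $A$ along any segment with $J$ fixed, and every other term is concave.

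Your route uses exactly the stated hypothesis, concavity of $A\Delta(\sigma(A))$. It needs no second derivative of that term, which $\sigma\in C^1$ does not supply, and it holds on the whole closed feasible set. The paper's computation additionally gives negative definiteness of the cost block on the interior. Strict concavity alone does not imply that nondegeneracy, and it is useful for implicit-function comparative statics of the planner.

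One small correction to your last paragraph: the corner $J\to0$ is not delicate. Excluding $(0,0)$ only requires one better point. The ray $A=xJ$ for any fixed $x\in(0,1)$ provides it, because $pJK(x)=O(J)$ while $-C(J)/J\to+\infty$. The bound $JK(A/J)\ge J\min K$ bounds $\mathcal W$ from the wrong side and is not needed.
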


\begin{proof}
Continuity on the compact feasible set gives existence. Full support implies
$C'(0^+)=-\infty$ and $C'(1^-)=+\infty$, ruling out $J=0,1$. Given $J>0$, the enactment marginal tends to
$+\infty$ as $A\downarrow0$ and to $-\infty$ as $A\uparrow J$, so the solution is interior. Differentiation
gives the displayed first-order conditions, using
$qG^{-1}(q)-K(q)=\Omega(G^{-1}(q))$.

The nonnegativity condition gives $c(A^{\ast\ast})\ge b_N$. At any equilibrium,
\begin{align*}
 H^{-1}(J_j)
 &=p\Omega(z(A_j))-(1-p)[\gamma+\phi s_f(J_j)]\\
 &\le p\Omega(b_N)-(1-p)\gamma
 \le p\Omega(c(A^{\ast\ast}))-(1-p)\gamma
 =H^{-1}(J^{\ast\ast}),
\end{align*}
with strict first inequality at an effectively sanctioned margin: $\psi s_a(A_j)>0$ gives
$z(A_j)<b_N$ and hence $\Omega(z(A_j))<\Omega(b_N)$, because $\Omega'=G>0$ makes $\Omega$ strictly
increasing, while $\phi s_f(J_j)>0$ makes the exposure term strictly negative. Hence $J_j<J^{\ast\ast}$, and
$A_j=J_jG(z(A_j))<J^{\ast\ast}G(c(A^{\ast\ast}))=A^{\ast\ast}$.

If $[A\Delta(\sigma(A))]''\le0$, the benefit term is concave while $C$ and the perspective $JK(A/J)$ are
convex. On the interior the Hessian of $-C(J)-pJK(A/J)$ has determinant $C''pK''/J>0$ before adding the
nonpositive enactment curvature, so the full objective is strictly concave.
\end{proof}

Endogenous competence also sharpens the output reversal. If
$\sigma(A_\ell)<\bar\sigma<\sigma(A_h)$, departures in the trap lower novel-state output while those in the
questioning culture raise it. This does not by itself make pressure corrective: in the worked example the
enactment wedge remains positive in both cultures, $\wedgelow$ and $\wedgehigh$ at $\rho=0$, because stigma
is largest where competence is lowest. Pressure becomes corrective only if under-internalized error or real screening
value outweighs stigma and the steady-state learning externality.

\section{The transfer benchmark and full match-surplus accounting}\label{oa:real-accounting}

Ex post, a sanction is a real withdrawal of cooperation. A full welfare accounting for the real matching consequences of withdrawal therefore adds two flow
terms omitted from the transfer benchmark: destroyed match surplus for authentic agents and the partners'
matching payoff $U_m(s_m;q_m)$ per visible event. At an interior audience optimum the partners' payoff loss from
a small enactment-protection intervention is second order by the envelope property. By contrast, the reduction in
match-surplus destruction borne by inframarginal authentic enactors, $pA\psi\varepsilon$, is a first-order real gain.

The additional qualification concerns the released marginal act. Under real accounting, and with fully
competent released departures ($\Delta(1)=V_N+D$), its social value is
\[
 V_N+D-\psi\left[\Lambda_as_a+\frac{c_a}{2}s_a^2\right].
\]
The local first-order enactment-protection comparison therefore survives whenever
\[
 V_N+D>
 \psi\left[\Lambda_as_a+\frac{c_a}{2}s_a^2\right];
\]
at inherited competence $\sigma$ the same comparison replaces $V_N+D$ by $\Delta(\sigma)$.
This condition relaxes as prevalence rises and $s_a$ falls, and the inframarginal term reinforces it. The
formation margin is analogous, with destruction terms scaled by the enactment rate $G(z(A))$. Thus the
transfer benchmark isolates the judgment-capacity and learning channels from the match-related welfare effects of
withdrawal. Full match-surplus accounting adds an interpretable protection term of the same family as screening value.

The planner--equilibrium ordering is moreover \emph{globally} robust to adding sufficiently small
match-surplus terms.

\begin{proposition}[Local robustness to small match-surplus perturbations]\label{prop:realsurplus}
Let $\Upsilon\in C^0$ on the compact feasible set $\{(J,A):0\le A\le J\le1\}$ collect the match-surplus
flows omitted from the transfer benchmark---the destroyed match surplus of sanctioned authentic agents and
the partners' matching payoff $U_m(s_m;q_m)$ per visible event---and let
$\mathcal W_\varrho:=\mathcal W+\varrho\Upsilon$ for $\varrho\ge0$. Suppose the hypotheses of
Theorem~\ref{thm:central-welfare}(ii) hold, $(J^{\ast\ast},A^{\ast\ast})$ is the unique maximizer of
$\mathcal W$, and the reduced equilibrium equation has finitely many fixed points (as under the
threshold-shape condition of Proposition~\ref{prop:atmost}). Then there exists $\bar\varrho>0$ such that, for
every $\varrho\in[0,\bar\varrho)$, every maximizer $(J^{\ast\ast}_\varrho,A^{\ast\ast}_\varrho)$ of
$\mathcal W_\varrho$ satisfies $J_j<J^{\ast\ast}_\varrho$ and $A_j<A^{\ast\ast}_\varrho$ at every equilibrium
with at least one effectively sanctioned margin ($\psi s_a(A_j)>0$ or $\phi s_f(J_j)>0$).
\end{proposition}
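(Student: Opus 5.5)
The plan is an argmax-continuity (Berge) argument combined with the strict gaps that Theorem~\ref{thm:central-welfare}(ii) already delivers at $\varrho=0$. Two features of the setting make this work. First, the equilibrium set does not depend on $\varrho$: the match-surplus terms enter only the planner's objective, not private cutoffs or sanctions. Second, by hypothesis the reduced equilibrium equation has finitely many fixed points. Let $E$ denote the finite set of equilibria with at least one effectively sanctioned margin. Theorem~\ref{thm:central-welfare}(ii), applied at the unique maximizer $(J^{\ast\ast},A^{\ast\ast})$ of $\mathcal W$, gives $J_j<J^{\ast\ast}$ and $A_j<A^{\ast\ast}$ for each $j\in E$. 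Since $E$ is finite, we can set
\[
 \epsilon:=\min_{j\in E}\min\{J^{\ast\ast}-J_j,\;A^{\ast\ast}-A_j\}>0 .
\]
If $E$ is empty, the claim is vacuous.

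The next step is to show that every maximizer of $\mathcal W_\varrho$ lies within $\epsilon$ (say in the sup norm) of $(J^{\ast\ast},A^{\ast\ast})$ for all small $\varrho$. The feasible set $F=\{0\le A\le J\le1\}$ is compact. The function $\mathcal W$ is continuous on $F$: $C$ and $K$ are finite and continuous by the moment assumptions, and the perspective term $JK(A/J)$ extends continuously to $J=0$ with value $0$, because $|JK(A/J)|\le J\sup_{[0,1]}|K|$. The function $\Upsilon$ is continuous by hypothesis, so $M:=\max_F|\Upsilon|<\infty$.

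I would then argue by contradiction. Suppose the open $\epsilon$-ball $U$ around $(J^{\ast\ast},A^{\ast\ast})$ works for no small $\varrho$. Uniqueness of the maximizer and compactness of $F\setminus U$ give a welfare gap
\[
 g:=\mathcal W(J^{\ast\ast},A^{\ast\ast})-\max_{F\setminus U}\mathcal W>0 .
\]
Take any $\varrho<g/(2M)$ (any $\varrho$ works if $M=0$). For any point $x\in F\setminus U$,
\[
 \mathcal W_\varrho(x)\le \mathcal W(J^{\ast\ast},A^{\ast\ast})-g+\varrho M<\mathcal W(J^{\ast\ast},A^{\ast\ast})-\varrho M\le \mathcal W_\varrho(J^{\ast\ast},A^{\ast\ast}).
\]
So no point outside $U$ maximizes $\mathcal W_\varrho$. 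Every maximizer $(J^{\ast\ast}_\varrho,A^{\ast\ast}_\varrho)$ therefore satisfies $J^{\ast\ast}_\varrho>J^{\ast\ast}-\epsilon\ge J_j$ and $A^{\ast\ast}_\varrho>A^{\ast\ast}-\epsilon\ge A_j$ for every $j\in E$. This gives the result with $\bar\varrho:=g/(2M)$. Existence of maximizers of $\mathcal W_\varrho$ follows from continuity on a compact set.

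The main obstacle is the continuity of $\mathcal W$ at the boundary $J=0$ and along the edge $A=J$, where $G^{-1}$ blows up. The argument needs continuity on the closed set $F$, not just on its interior, and only finiteness and continuity of $K$ on $[0,1]$, which $\E|\ell|<\infty$ supplies. I would check this first. I would also note that the conclusion is about maximizers of $\mathcal W_\varrho$, which need not be interior. That is harmless, because the argument never uses first-order conditions at $\varrho>0$; it compares them only with the fixed finite equilibrium list through the $\epsilon$-ball.
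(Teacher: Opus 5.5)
Your proposal is correct and follows the paper's proof essentially step for step: you use the finite set $E$ of effectively sanctioned equilibria, the uniform margin $\epsilon$ from Theorem~\ref{thm:central-welfare}(ii), the welfare gap $g$ outside a sup-norm neighborhood of the unique maximizer, and $\bar\varrho=g/(2\|\Upsilon\|_\infty)$. The paper uses radius $\epsilon/2$ where you use an open ball of radius $\epsilon$, which works equally well, and the ``by contradiction'' framing is unnecessary since your inequality chain is already a direct argument.
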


\paragraph{Scale of the correction.}
At the ex-post optimal withdrawal the audience optimum of \eqref{eq:audience-payoff} has value
$U_m=\tfrac{c_m}{2}W_m(q)^2$ with $W_m(q)\le\pi_m/c_m$, so $U_m\le\pi_m^2/(2c_m)$. This disciplines the
omitted flows per event; it does not by itself place the economy with full match-surplus accounting
($\varrho=1$) inside $[0,\bar\varrho)$. The proof of the proposition is in Appendix~\ref{oa:omitted}.

\section{Screening and the value of pressure}\label{oa:screening}

Suppose an undeterred oppositional act at margin $m$ imposes real harm $D_m^{\mathrm o}$ and the mass of such acts
responds smoothly to stigma, with $\nu_m'(s_m)<0$. Let $\mathcal M_m(s_m)$ denote the first-order benchmark
judgment surplus destroyed by a marginal increase in pressure, including the induced loss of formation or
enactment. The local welfare derivative is
\[
 \frac{\partial\mathcal W}{\partial s_m}
 =D_m^{\mathrm o}[-\nu_m'(s_m)]-\mathcal M_m(s_m).
\]
Pressure is locally valuable exactly when the real screening value of deterring oppositional acts exceeds the
judgment surplus it destroys. This is an accounting decomposition rather than an independent theorem, because
$\mathcal M_m$ is defined as the surplus destroyed. Even then, the audience's ex-post sanction need not coincide
with the social optimum: the audience weighs private standing after observing an act and omits the formation
externality.

\subsection*{The complete conditional map}

Let $w^\ast=G(b_N+V_N+D)$ and $\Omega_S=\E[(b_N+V_N+D-\ell)_+]$. In the competent transfer benchmark,
\[
 \msv(p)-R(J,A)
 =p[\Omega_S-\Omega_P(A)]+(1-p)\phi s_f(J)\ge0.
\]
Thus every strict-pressure equilibrium has $J_j<J^\ast$ and $A_j<J^\ast w^\ast$: enactment pressure lowers
both action and the option value of formation, while formation pressure adds a separate wedge. The whole gap
should not be attributed to reputation. With sanctions removed,
$J^0=H(p\Omega(b_N)-(1-p)\gamma)$ still falls short of the competent planner because private conviction omits
$V_N+D$, and
\[
 J^\ast-J_j=(J^\ast-J^0)+(J^0-J_j).
\]
In the worked example this decomposition is $0.043+0.112$ in the questioning culture and $0.043+0.835$ in the
trap.

\begin{table}[ht]\centering
\caption{The conditional welfare map.}
\label{tab:oa-welfare}
\footnotesize\setlength{\tabcolsep}{4pt}
\begin{tabular}{L{0.20\textwidth}L{0.34\textwidth}L{0.36\textwidth}}
\toprule
regime & local welfare object & verdict on pressure\\
\midrule
Competent transfer benchmark
 & $\msv-R=p[\Omega_S-\Omega_P]+(1-p)\phi s_f\ge0$
 & distortionary; each strict-pressure equilibrium under-provides formation and enactment\\[2pt]
Competence built by practice
 & $\Delta(\sigma)+A\sigma'(A)\Delta'(\sigma)+\psi s_a(A)$
 & distortionary where positive; corrective only if under-internalized error overcomes stigma and learning\\[2pt]
Real oppositional harm
 & $D_m^{\mathrm o}[-\nu_m'(s_m)]-\mathcal M_m(s_m)$
 & valuable iff screening value exceeds the judgment surplus destroyed\\
\bottomrule
\end{tabular}
\end{table}

The audience's ex-post sanction need not coincide with the social optimum even in the third row: it weighs
private standing after observing the act and omits the formation externality. The main paper's welfare result is
therefore conditional, not an unconditional anti-pressure ranking.

\section{Capacity gap from practice-built competence}\label{oa:capacity-gap}

Retain the main paper's output notation
\[
 \Delta(\sigma)=\sigma(V_N+L)-(L-D),
 \qquad
 \bar\sigma=\frac{L-D}{V_N+L},
 \qquad
 Y_N^L(A)=A\Delta(\sigma(A))-D.
\]
The worked example uses the hyperbolic learning family
\[
 \sigma(A)=\underline\sigma+
 (\overline\sigma_{\max}-\underline\sigma)\frac{A}{A+m},
 \qquad m>0.
\]

\begin{proposition}[Prevalence and accuracy compound the capacity gap]\label{prop:oa-capacity-gap}
Let $0<A_\ell<A_h$ and let $\sigma$ be strictly increasing, with
$\sigma(A_h)>\bar\sigma$. Then
\[
 Y_N^L(A_h)-Y_N^L(A_\ell)
 =A_h\Delta(\sigma(A_h))-A_\ell\Delta(\sigma(A_\ell))>0.
\]
Holding $A_\ell$ and $A_h$ fixed, this gap increases in the learning ceiling
$\overline\sigma_{\max}$ and in steepness $1/m$ within the hyperbolic family. If
$\sigma(A_\ell)<\bar\sigma<\sigma(A_h)$, an enacted departure has negative incremental value in the trap and
positive incremental value in the high culture, so
\[
 Y_N^L(A_h)+D>0>Y_N^L(A_\ell)+D.
\]
The high-culture output level $Y_N^L(A_h)$ itself is positive if and only if
$A_h\Delta(\sigma(A_h))>D$.
\end{proposition}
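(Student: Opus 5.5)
The plan is to work directly from the closed form $Y_N^L(A)=A\Delta(\sigma(A))-D$. The constant $D$ cancels in the difference, and $\Delta$ is affine and strictly increasing with $\Delta'=V_N+L>0$. Write $\Delta_h:=\Delta(\sigma(A_h))$ and $\Delta_\ell:=\Delta(\sigma(A_\ell))$. Since $\sigma$ is strictly increasing and $A_\ell<A_h$, we get $\Delta_\ell<\Delta_h$. The hypothesis $\sigma(A_h)>\bar\sigma$ gives $\Delta_h>0$, because $\Delta$ vanishes exactly at $\bar\sigma$ by \eqref{eq:sigma-bar}.

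For positivity of the gap I would split on the sign of $\Delta_\ell$.
\begin{itemize}
\item If $\Delta_\ell\le0$, then $A_h\Delta_h>0\ge A_\ell\Delta_\ell$, using $A_\ell>0$.
\item If $\Delta_\ell>0$, chain the two strict inequalities $A_h\Delta_h>A_\ell\Delta_h>A_\ell\Delta_\ell$. The first uses $\Delta_h>0$ and $A_h>A_\ell$; the second uses $A_\ell>0$ and $\Delta_h>\Delta_\ell$.
\end{itemize}
This is the ``prevalence times accuracy'' compounding: both factors move in the favorable direction.

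For the comparative statics within the hyperbolic family, I would hold $A_\ell$, $A_h$, $\underline\sigma$ fixed and differentiate the gap $(V_N+L)[A_h\sigma(A_h)-A_\ell\sigma(A_\ell)]-(L-D)(A_h-A_\ell)$ in each parameter. Only the first bracket depends on the parameters.
\begin{itemize}
\item \emph{Ceiling.} Since $\partial\sigma(A)/\partial\overline\sigma_{\max}=A/(A+m)$, the derivative is $(V_N+L)\bigl[g(A_h)-g(A_\ell)\bigr]$ with $g(x)=x^2/(x+m)$. This is positive because $g$ is strictly increasing on $(0,\infty)$.
\item \emph{Steepness.} Since $\partial\sigma(A)/\partial m=-(\overline\sigma_{\max}-\underline\sigma)A/(A+m)^2$, the derivative in $m$ is $(V_N+L)(\overline\sigma_{\max}-\underline\sigma)\bigl[h(A_\ell)^2-h(A_h)^2\bigr]$ with $h(x)=x/(x+m)$. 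This is negative because $h$ is strictly increasing and $\overline\sigma_{\max}>\underline\sigma$, which follows from strict monotonicity of $\sigma$. Hence the gap increases in $1/m$.
\end{itemize}
These monotonicity claims concern the algebraic gap itself, so they do not require the hypothesis $\sigma(A_h)>\bar\sigma$ to remain satisfied along the parameter change.

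The remaining statements are sign bookkeeping.
\begin{itemize}
\item Under $\sigma(A_\ell)<\bar\sigma<\sigma(A_h)$ we have $\Delta_\ell<0<\Delta_h$, so each is the incremental value of an enacted departure in that culture. Then $Y_N^L(A)+D=A\Delta(\sigma(A))$ gives $Y_N^L(A_h)+D=A_h\Delta_h>0>A_\ell\Delta_\ell=Y_N^L(A_\ell)+D$, using $A_\ell,A_h>0$.
\item The final claim, $Y_N^L(A_h)>0$ if and only if $A_h\Delta_h>D$, is a direct rearrangement.
\end{itemize}
I do not expect a genuine obstacle. The only point needing care is the case $\Delta_\ell>0$ in the positivity step, where the chain of inequalities must be used rather than the sign argument.
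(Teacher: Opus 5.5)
Your proof is correct and follows the paper's structure. The comparative statics are the same computation: the paper differentiates in $1/m$, and its $A\,\partial\sigma/\partial(1/m)=(\overline\sigma_{\max}-\underline\sigma)A^2m^2/(A+m)^2$ is just $-m^2$ times your $A\,\partial\sigma/\partial m$. The sign bookkeeping is identical.

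The one genuine difference is the positivity step. The paper sets $F(A)=A\Delta(\sigma(A))$ and, when $\sigma(A_\ell)\ge\bar\sigma$, argues that $F'(A)=\Delta(\sigma(A))+A\sigma'(A)(V_N+L)>0$. That tacitly uses differentiability of $\sigma$, which comes from the paper's standing $C^1$ assumption but is not among this proposition's hypotheses. Your chain needs only strict monotonicity.

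In fact your route needs no case split. The identity $A_h\Delta_h-A_\ell\Delta_\ell=(A_h-A_\ell)\Delta_h+A_\ell(\Delta_h-\Delta_\ell)$ uses only $\Delta_h>0$, $A_h>A_\ell$, $A_\ell>0$, and $\Delta_h>\Delta_\ell$. None of these depends on the sign of $\Delta_\ell$, so your closing remark that the case $\Delta_\ell>0$ is the delicate one overstates matters: the chain alone covers both cases.
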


\begin{proof}
Let $F(A)=A\Delta(\sigma(A))$. If $\sigma(A_\ell)<\bar\sigma$, then
$F(A_\ell)<0<F(A_h)$. If instead $\sigma(A_\ell)\ge\bar\sigma$, then
$\Delta(\sigma(A))\ge0$ for every $A\in[A_\ell,A_h]$, and
\[
 F'(A)=\Delta(\sigma(A))+A\sigma'(A)(V_N+L)>0.
\]
Thus $F(A_h)>F(A_\ell)$ in either case, which proves the level-gap claim because the common $-D$ cancels.

For a learning parameter $\eta$, the direct partial effect on the gap is
\[
 (V_N+L)\left[
 A_h\,\partial_\eta\sigma(A_h)-A_\ell\,\partial_\eta\sigma(A_\ell)
 \right].
\]
In the hyperbolic family,
\[
 A\frac{\partial\sigma(A)}{\partial\overline\sigma_{\max}}
 =\frac{A^2}{A+m},
 \qquad
 A\frac{\partial\sigma(A)}{\partial(1/m)}
 =(\overline\sigma_{\max}-\underline\sigma)
 \frac{A^2m^2}{(A+m)^2},
\]
and both expressions are strictly increasing in $A>0$. This proves the two partial comparative statics.
Finally, $\Delta(\sigma)$ has the sign of $\sigma-\bar\sigma$, so the threshold inequalities give the signs of
$Y_N^L(A)+D=A\Delta(\sigma(A))$; subtracting $D$ gives the last level condition.
\end{proof}

At the worked high culture, $A_h\Delta(\sigma(A_h))=\AhDelta$ and therefore
$Y_N^L(A_h)=\AhDelta-D=\YNLhigh$. These learning-built values are distinct from the competent output
$Y_N^C(A_h)=\YNChigh$ reported in the main paper's culture table.

\section{Covert novelty as an accounting perturbation}\label{oa:covert}

Suppose a fraction $\rho_c\in[0,1)$ of the episodes that the cue labels familiar are in fact novelties that
only a formed agent's diagnostic detects, and that detection yields a net output gain $\Delta_c>0$ relative to
applying the routine but creates no additional registered enactment event and leaves decision and posterior
technologies unchanged. Because the gain is constant in the aggregates $(J,A)$, any privately internalized
part of it is absorbed into the location of the net formation cost $k$, so the equilibrium fixed-point system
of the main text is unchanged.

Per calm-labeled episode, the output of a formed agent differs from that of a routine-reliant agent by
$-\gamma$ on the fraction $1-\rho_c$ of truly familiar episodes and by $+\Delta_c$ on the fraction $\rho_c$ of
covert novelties. Calm-labeled output is therefore
$V_F+\gamma+[-(1-\rho_c)\gamma+\rho_c\Delta_c]J$, which reduces to $Y_F(J)$ at $\rho_c=0$ and is strictly
decreasing in $J$ if and only if
\[
 (1-\rho_c)\gamma>\rho_c\Delta_c .
\]
The remaining steps establishing the normal-performance ranking reversal are unchanged. The reversal is
therefore confined to cases in which covert novelty is rare, or its gains small or slow to attribute.

\section{Additional robustness panels}\label{oa:robustness}

Figure~\ref{fig:oa-region} maps the two-culture region over two parameter slices. Color gives trap depth
$J_h-J_\ell$; white contours mark the multiplicity boundary and the star marks the baseline. The left panel
varies formation sensitivity and background opposition, and the right panel varies formation sensitivity and
the logit formation-cost scale.

\begin{figure}[ht]\centering
\caption{Two-dimensional robustness of the trap.}
\label{fig:oa-region}
\includegraphics[width=0.92\textwidth]{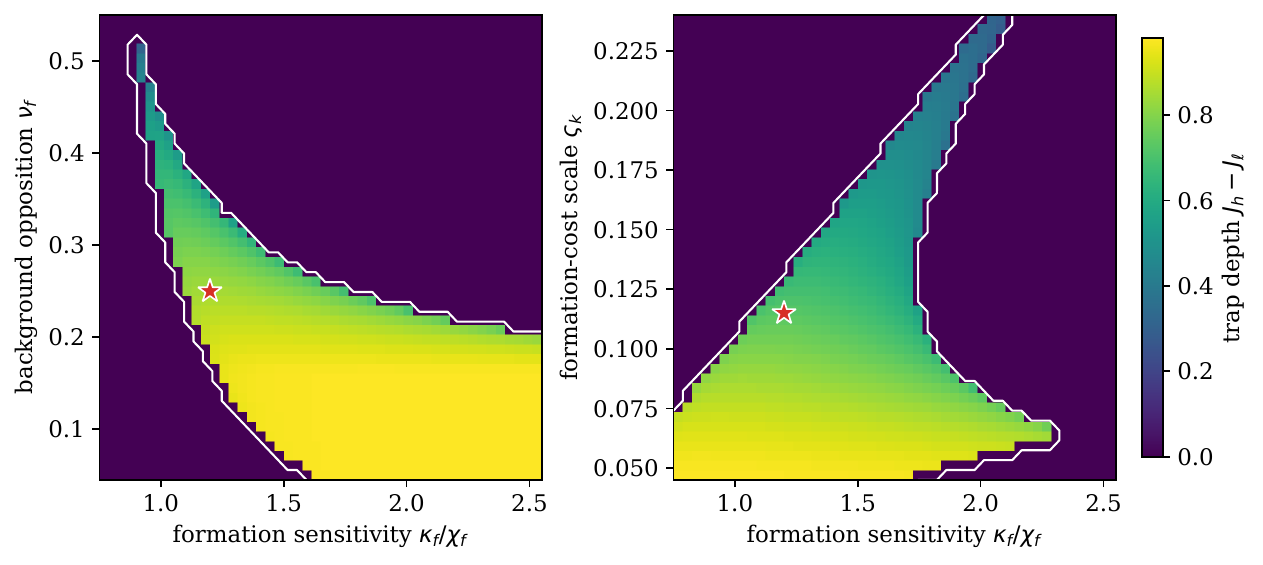}
\end{figure}

Figure~\ref{fig:oa-robustness} adds one-dimensional slices and structural variants. The first three panels vary
one primitive at a time and report trap depth
$J_h-J_\ell$; dashed lines mark the baseline. The final panel compares the baseline with smooth zero-threshold,
smooth-cap, and responsive-opposition variants.

\begin{figure}[ht]\centering
\caption{Robustness of the two-culture region.}
\label{fig:oa-robustness}
\includegraphics[width=0.92\textwidth]{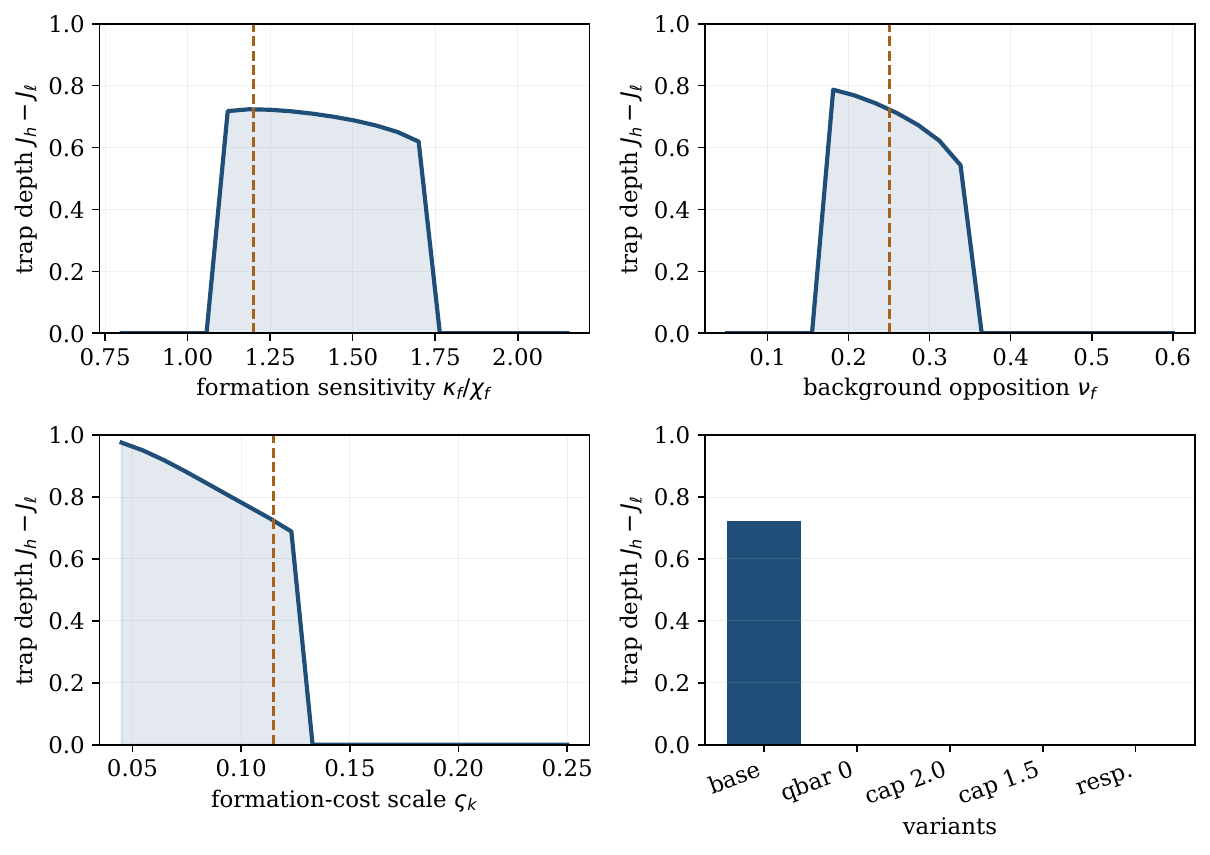}
\end{figure}

\paragraph{Joint robustness in cost dispersion and pool responsiveness.} How much economic room the mechanism
has before the scope conditions of Section~3 of the main text bind is a quantitative question, and Figure~\ref{fig:oa-jointrobust}
answers it for the two most exposed dimensions jointly: the dispersion $\varsigma_k$ of formation costs and the
responsiveness $1/d_\nu$ of the opportunistic pool, parametrized exponentially as
$\nu_m(s)=\bar\nu_m e^{-s/d_\nu}$ at both margins. Three features stand out. From the benchmark, coexistence
survives raising cost dispersion by about $13\%$ (to $\varsigma_k\approx0.130$) at a fixed pool; and at the
benchmark dispersion it survives pool responsiveness up to $d_\nu^\dagger=3.4524$, the threshold located by the
one-dimensional bisection of Section~\ref{oa:responsive}---the point at which trap-level stigma deters roughly $30\%$ of
opportunists---with the grid frontier of the figure approximating that boundary from within. And the two
margins substitute: with costs concentrated below $\varsigma_k\approx0.10$, coexistence tolerates every
responsiveness tested. The boundary is a trade-off frontier rather than a corner. This is a parametric
sweep---high-precision numerical rather than a rigorous interval enclosure; only the benchmark root counts and
folds are certified.

\begin{figure}[ht]\centering
\caption{The joint robustness region of coexistence.}
\label{fig:oa-jointrobust}
\includegraphics[width=0.62\textwidth]{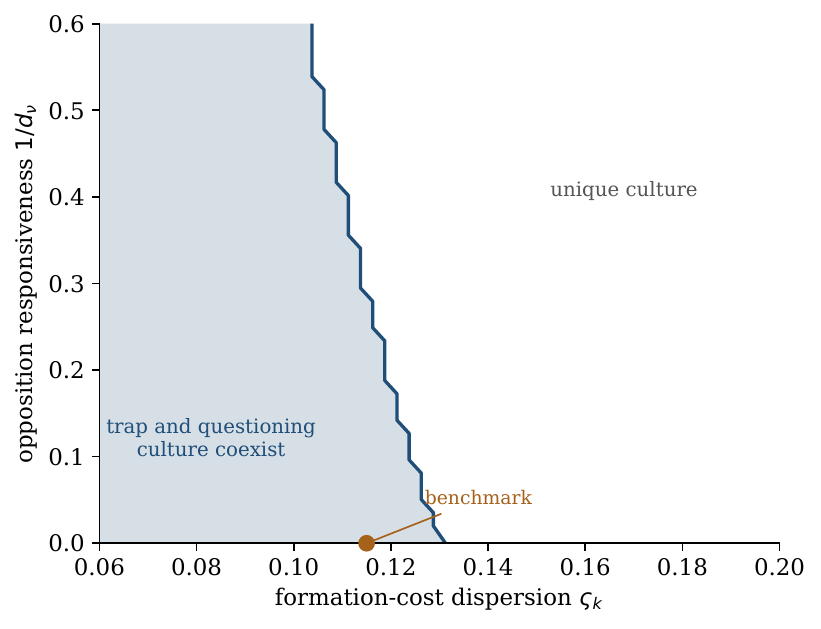}\\[4pt]
\begin{minipage}{0.88\textwidth}
\footnotesize\emph{Notes:} Shaded: pairs of formation-cost dispersion $\varsigma_k$ and opposition
responsiveness $1/d_\nu$ at which the responsive-opposition model retains three equilibria, holding every
other benchmark primitive fixed; $1/d_\nu=0$ is the fixed-pool baseline. The dot marks the benchmark
$(\varsigma_k,1/d_\nu)=(0.115,0)$. Root counts on a $57\times40$ grid with a composite $1{,}300$-point
$J$-grid; the displayed boundary anchors are unchanged under two- and four-fold denser root grids.
Numerical rather than certified; in this figure, only the benchmark root counts and folds are certified.
\end{minipage}
\end{figure}

\paragraph{A competence-aware audience.} The matching payoff of the main text prices the future
relationship with the actor: $\Lambda_a$, the loss from excluding an authentic member, is a primitive
constant. A natural objection is that the audience should also price what
Sections~\ref{subsec:sequencing} and~\ref{sec:welfare} of the main text attach to enacted judgment: if
prevalent enactment sustains competence $\sigma(A)$, excluding an authentic dissenter in the trap costs
less than excluding one in the questioning culture. The next proposition makes the audience
competence-aware and shows that the objection strengthens the mechanism. Let
\[
 \Lambda_a(\sigma(A);\varepsilon):=
 \Lambda_a^0\Bigl[1-\varepsilon+\varepsilon\,\frac{\sigma(A)}{\sigma_{\mathrm{ref}}}\Bigr],
 \qquad \varepsilon\in[0,1],
\]
with $\sigma$ increasing and $\sigma_{\mathrm{ref}}>0$ a fixed anchor; $\pi_a$ and $c_a$ are unchanged,
and $\varepsilon=0$ is the baseline.

\begin{proposition}[Competence-aware audience]\label{prop:competence-audience}
Suppose $\sigma$ is increasing and $C^1$.
\begin{enumerate}
\item[(i)] The decentralized ex-post sanction of the main text becomes
\[
 s_a^\varepsilon(A)=
 \frac{\pi_a q_a(A)-\bigl(1-q_a(A)\bigr)\Lambda_a(\sigma(A);\varepsilon)}{c_a}
\]
on its nonnegative branch, and wherever that branch is positive,
\[
 \frac{ds_a^\varepsilon}{dA}
 =\frac{\bigl(\pi_a+\Lambda_a(\sigma(A);\varepsilon)\bigr)q_a'(A)
 -\bigl(1-q_a(A)\bigr)\,\varepsilon\,\Lambda_a^0\,\sigma'(A)/\sigma_{\mathrm{ref}}}{c_a}<0.
\]
Both terms are strictly negative, so the sanction remains strictly decreasing in prevalence and
Assumption~\ref{ass:socialmeaning} continues to hold; competence awareness adds the second term---lower
enacted prevalence now also lowers the audience's stake in retaining authentic dissenters---and therefore
strengthens the prevalence feedback rather than weakening it.
\item[(ii)] Suppose in addition that $\sigma$ is $C^2$, that at $\varepsilon=0$ the enactment subgame
satisfies the regularity condition \eqref{eq:enactment-regularity} strictly, and that the reduced map has
exactly three transverse fixed points with the stable--unstable--stable pattern. Then there is
$\bar\varepsilon>0$ such that for every $\varepsilon\in[0,\bar\varepsilon)$ the competence-aware model
has a unique regular enactment solution and exactly three transverse fixed points with the same pattern,
and every nondegenerate fold continues.
\end{enumerate}
\end{proposition}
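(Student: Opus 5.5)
Part~(i) is a direct computation from the matching problem of Section~\ref{subsec:audience} with $\Lambda_a$ replaced by $\Lambda_a(\sigma(A);\varepsilon)$. For fixed $A$ the audience payoff \eqref{eq:audience-payoff} is still concave quadratic in $s$. Its unconstrained maximizer is
\[
 \bigl[\pi_aq_a-(1-q_a)\Lambda_a(\sigma(A);\varepsilon)\bigr]/c_a,
\]
which gives the displayed formula on the nonnegative branch. Differentiating in $A$ uses only the product rule. The $q_a'$ terms combine into $(\pi_a+\Lambda_a)q_a'(A)/c_a$. The derivative of $\Lambda_a$ contributes $-(1-q_a)\varepsilon\Lambda_a^0\sigma'(A)/(\sigma_{\mathrm{ref}}c_a)$.

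The signs then follow term by term. From \eqref{eq:posterior}, $q_a'(A)=-\nu_a\alpha_a/(\nu_a+\alpha_aA)^2<0$. Also $\pi_a+\Lambda_a>0$, because $\Lambda_a(\sigma;\varepsilon)\ge\Lambda_a^0(1-\varepsilon)\ge0$ when $\sigma\ge0$, which holds since $\sigma$ is an accuracy. The second term is nonpositive because $q_a<1$, $\varepsilon\ge0$ and $\sigma'\ge0$; it is strictly negative when $\varepsilon>0$ and $\sigma'>0$. So the derivative is strictly negative, and Assumption~\ref{ass:socialmeaning} holds on the interior branch. At the cap or at zero the sanction is flat, and weak monotonicity survives as in Proposition~\ref{prop:tolerance}.

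Part~(ii) is a perturbation argument that reduces to Proposition~\ref{prop:openset}, with $\varepsilon$ playing the role of the primitive perturbation. The plan has three steps.
\begin{enumerate}
\item[(a)] Show that $s_a^\varepsilon\to s_a^0$ in $C^2$ on the relevant compact range of $A$. The difference is $-\varepsilon(1-q_a)\Lambda_a^0[\sigma(A)/\sigma_{\mathrm{ref}}-1]/c_a$, which is linear in $\varepsilon$ with a $C^2$ coefficient because $\sigma\in C^2$. Away from hard kinks this gives the needed convergence.
\item[(b)] Show that regularity persists. The constant $\mathcal C_a$ in \eqref{eq:enactment-regularity} depends continuously on $s_a'$ in sup norm. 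Since the baseline inequality is strict, $\mathcal C_a^\varepsilon<1$ for small $\varepsilon$. The enactment solution $a_\varepsilon(J)$ is then unique. The implicit function theorem applied to $A-JG(b_N-\psi s_a^\varepsilon(A))=0$ gives $a_\varepsilon\to a_0$ in $C^2$, using \eqref{eq:aprime} and its derivative.
\item[(c)] Pass the convergence through the reduced map. It follows that $\widehat R_\varepsilon$, and hence $\Psi_\varepsilon=H\circ\widehat R_\varepsilon$, converge in $C^2$, so $F=\Psi-J$ and the derivatives required by Proposition~\ref{prop:openset} converge uniformly. The three transverse roots and their stability signs then persist. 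Nondegenerate folds continue by the fold clause of that proposition together with Lemma~\ref{lem:regular-fold}.
\end{enumerate}

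The main obstacle is uniformity near $A=0$ and near the boundary of the branch. The factor $A$ in $\mathcal C_a$, and the behavior of $a_\varepsilon$ as $J\downarrow0$, need care. If the baseline hard rule has kinks, the $C^2$ convergence needed in step (c) may also fail. My first attempt is to work on the globally interior branch, as the benchmark does. I would use $a_\varepsilon(0)=0$ together with the bounded derivative from \eqref{eq:aprime} to get uniform control down to $J=0$. The fixed points themselves are bounded away from $0$ and $1$ by the full-support signs $\Psi(0)>0$ and $\Psi(1)<1$. So $C^2$ convergence is needed only on a compact interior subinterval together with sign control at the ends, and that is exactly what the root-counting argument uses.
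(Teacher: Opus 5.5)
Your proposal is correct and follows essentially the paper's own route. For (i) it is the same product-rule computation and sign check. For (ii) it uses the same three steps: the observation that $s_a^\varepsilon=s_a^0+\varepsilon\widetilde s_a$ with a fixed $C^2$ coefficient, the parameter-dependent implicit-function theorem under strict regularity giving $a_\varepsilon\to a_0$ in $C^2$, and then Proposition~\ref{prop:openset} for the roots, stability signs, and folds.

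Your added care makes explicit what the paper's short proof leaves implicit:
\begin{itemize}
\item the second derivative term vanishes at $\varepsilon=0$, so only the sum, not each term, is strictly negative there;
\item $\mathcal C_a^\varepsilon<1$ persists by continuity;
\item the $C^2$ argument requires the baseline to lie on a smooth, globally interior branch.
\end{itemize}
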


\begin{proof}
Part (i) is the displayed derivative: $q_a'<0$, $\Lambda_a\ge0$, and $\sigma'>0$ make both numerator terms
strictly negative, and Assumption~\ref{ass:socialmeaning} asks only continuity, weak monotonicity, and the
strict signs on the sanctioned interior branch. For (ii), the perturbation is linear in $\varepsilon$ with
a fixed $C^2$ coefficient:
\[
 s_a^\varepsilon=s_a^0+\varepsilon\,\widetilde s_a,
 \qquad
 \widetilde s_a(A)=-\bigl(1-q_a(A)\bigr)\,\Lambda_a^0
 \Bigl[\frac{\sigma(A)}{\sigma_{\mathrm{ref}}}-1\Bigr]\Big/c_a\;\in\;C^2[0,1].
\]
The parameter-dependent implicit-function theorem under the strict regularity condition gives
$a_\varepsilon\to a_0$ in $C^2[0,1]$ as $\varepsilon\downarrow0$, exactly as in the proof of
Corollary~\ref{cor:logit-feedback}, so the induced reduced fixed-point function converges in the $C^2$
topology of Proposition~\ref{prop:openset}, which delivers the persistence of the three transverse roots,
their stability signs, and the continuation of nondegenerate folds.
\end{proof}

At the benchmark, full strength is not needed as a limit. Setting
$\sigma_{\mathrm{ref}}=\sigma(A_h)$---so that the audience of the questioning culture behaves exactly as in
the baseline---and $\varepsilon=1$, the enactment regularity constant is $\compCa$ (against $1$), the
sanction remains strictly decreasing, and the reduced map retains exactly three fixed points, at
$J=\compJlow$, $\compJmid$, and $\Jhigh$ against the certified $\Jlow$, $\Jmid$, $\Jhigh$: the questioning
culture is unchanged, the trap is slightly deeper (its enactment sanction rises from $\compSaTrapBase$ to
$\compSaTrap$), and the unstable culture rises from $\Jmid$ to $\compJmid$, so the trap's basin widens.
The effect is quantitatively modest for an instructive reason: in the trap the posterior already attributes
almost every departure to the background source ($q_a=\qalow$), so the audience expects to exclude few
authentic members, and its stake in their competence carries little weight where dissent is rarest.
Competence awareness therefore deepens the trap exactly where the objection presumed it would soften it.
The full-strength numbers are high-precision numerical checks
(\texttt{verify\_extensions\_B\_F3.py}), not interval certificates; the $\varepsilon$-sweep at $0.25$,
$0.5$, and $0.75$ preserves monotonicity, regularity, and the root count throughout. Two remarks delimit
the exercise: the sanction is evaluated at the steady-state competence of
Section~\ref{subsec:sequencing} of the main text, so it inherits the stationarity qualification stated
there; and the welfare benchmark of Section~\ref{sec:welfare} continues to treat standing losses as
transfers---competence awareness changes the audience's private stake, not that accounting convention.

\paragraph{Joint robustness in the conviction stake and formation pressure.} The worked benchmark sets the
private enactment benefit at $b_N=1.81$, roughly $1.8$ familiar episodes' output, and the main text reports
that coexistence fails below roughly $b_N\approx1.3$ at the benchmark pressure.
Figure~\ref{fig:oa-bnkappa} maps the two-culture region over the conviction stake and formation pressure
jointly. The region is a widening wedge, not a corner: at the benchmark pressure $\kappa_f/\chi_f=1.2$,
coexistence requires $b_N>\bnDagger$ and then survives to the top of the tested range, while every tested
stake down to $b_N=1.20$ supports a nonempty pressure window, whose width grows with the
stake---$\bnWindowLow$ at $b_N=1.2$ and $\bnWindowTwo$ at $b_N=2.0$, against the certified window
$(\Kminchi,\Kmaxchi)$ at the benchmark stake. Lower stakes do not extinguish the mechanism; they narrow
the band of audience sensitivities that supports it. The panel is a parametric sweep at the same precision
conventions as Figure~\ref{fig:oa-jointrobust}; at $b_N=1.81$ its bisected window reproduces the certified
folds to $10^{-4}$.

\begin{figure}[ht]\centering
\caption{Coexistence in the conviction stake and formation pressure.}
\label{fig:oa-bnkappa}
\includegraphics[width=0.62\textwidth]{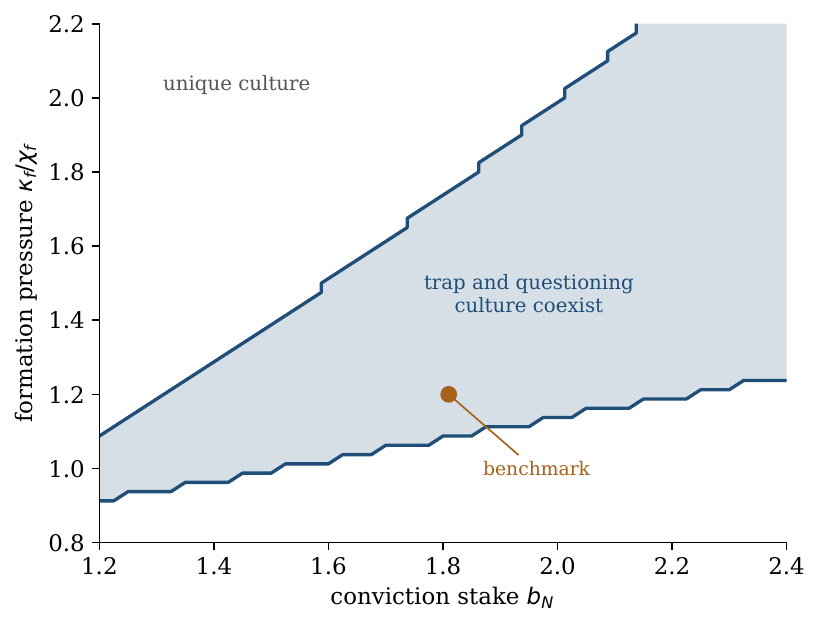}\\[4pt]
\begin{minipage}{0.88\textwidth}
\footnotesize\emph{Notes:} Shaded: pairs of conviction stake $b_N$ and formation pressure
$\kappa_f/\chi_f$ at which the reduced map retains three equilibria, holding every other benchmark
primitive fixed. The dot marks the benchmark $(b_N,\kappa_f/\chi_f)=(1.81,1.2)$. Root counts on a
$49\times57$ grid with the composite $1{,}300$-point $J$-grid of Figure~\ref{fig:oa-jointrobust};
boundaries at quoted stakes refined by bisection. Numerical rather than certified; only the benchmark
root counts and folds are certified.
\end{minipage}
\end{figure}

\section{Responsive opposition}\label{oa:responsive}

\paragraph{Background departures across all states.}
If the opportunistic pool is active in all states and the audience cannot condition on context, the enactment
posterior becomes
\[
 q_a(A)=\frac{\nu_a}{\nu_a+\alpha_a pA},
\]
the same family with the authentic arrival rate rescaled by novelty. All results carry over after replacing
$\alpha_a$ by $p\alpha_a$; novelty then lowers enactment stigma directly and reinforces the capacity comparative
static in the main paper.

If the opportunistic mass $\nu_m(s_m)$ falls with stigma, the smooth-branch sanction solves
\[
 s_m^\ast(X_m)=\kappa_m\left[
 \frac{\nu_m(s_m^\ast)}{\nu_m(s_m^\ast)+\alpha_mX_m}-\bar q_m
 \right]_\varepsilon.
\]
With $\Phi_\varepsilon(x)=(1+e^{-x/\varepsilon})^{-1}$, implicit differentiation gives
\[
 |s_m^{\ast\prime}(X_m)|
 =\frac{\kappa_m\Phi_\varepsilon\nu_m\alpha_m}
 {(\nu_m+\alpha_mX_m)^2+
   \kappa_m\Phi_\varepsilon|\nu_m'|\alpha_mX_m}
 <\frac{\kappa_m\Phi_\varepsilon\nu_m\alpha_m}
 {(\nu_m+\alpha_mX_m)^2}.
\]
Responsive opposition therefore preserves the negative reputational-sanction slope. At the current solution, the
endogenous-response term in the denominator attenuates the direct prevalence slope; this comparison does not
claim global monotonicity in response intensity.

\begin{proposition}[Responsive opposition: persistence and collapse]\label{prop:oa-responsive}
For each margin let
\[
 \nu_m^{\boldsymbol\rho}(s)=\bar\nu_m r_m(\rho_m s),
\]
where $r_m$ is $C^2$, positive on the relevant compact sanction range, $r_m(0)=1$, and $r_m'\le0$. Let
$s_m^{\boldsymbol\rho}(X_m)$ denote the solution of the responsive-sanction equation above. Suppose the
fixed-pool baseline $\boldsymbol\rho=\mathbf0$ has exactly three transverse roots with the same stability signs
throughout a compact parameter set $K$ contained in the interior of its coexistence region. Then there is
$\bar\rho>0$ such that, whenever $\max_m\rho_m<\bar\rho$, the responsive-opposition model has exactly three
transverse roots with those stability signs for every parameter in $K$. Separately, every nondegenerate fold on the
boundary of the fixed-pool coexistence region continues locally in $(\boldsymbol\rho,\theta)$, where $\theta$ is
the fold parameter. Conversely,
for any $\boldsymbol\rho$ and parameter value at which the induced reduced map satisfies
$\sup_J\Psi_{\boldsymbol\rho}'(J)<1$, the
equilibrium is unique. Thus mild responsive opposition preserves the trap, while any response that makes the
reduced map a contraction eliminates it.
\end{proposition}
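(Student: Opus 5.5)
The plan is to reduce everything to a perturbation of the fixed-pool reduced fixed-point function in the induced $C^2$ topology, and then invoke Proposition~\ref{prop:openset} for persistence and Lemma~\ref{lem:regular-fold} for folds.

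\textbf{Step 1: the sanction as a smooth function of $(X_m,\rho_m)$.} For each margin, write the responsive sanction as the solution of
\[
 \Gamma_m(s,X_m,\rho_m):=s-\kappa_m\Bigl[\frac{\bar\nu_m r_m(\rho_m s)}{\bar\nu_m r_m(\rho_m s)+\alpha_mX_m}-\bar q_m\Bigr]_\varepsilon=0 .
\]
At $\rho_m=0$ the bracket does not depend on $s$, so $\partial_s\Gamma_m=1$. For general $\rho_m$, $\partial_s\Gamma_m=1+\kappa_m\Phi_\varepsilon\,\rho_m\bar\nu_m|r_m'|\alpha_mX_m/(\cdot)^2\ge1$, using $r_m'\le0$. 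This is the same denominator computed just above.

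Hence the implicit function theorem applies uniformly on the compact set $X_m\in[0,1]$, $s$ in the relevant compact sanction range. It gives a unique $C^2$ solution $s_m^{\boldsymbol\rho}(X_m)$. Because $r_m$, the softplus and the posterior are $C^2$, the solution and its derivatives up to order two in $X_m$ (and the mixed derivatives with the fold parameter) depend continuously on $\rho_m$. So $s_m^{\boldsymbol\rho}\to s_m^{\mathbf 0}$ in $C^2[0,1]$ as $\boldsymbol\rho\to\mathbf0$.

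\textbf{Step 2: transmission to the reduced map.} Feed $s_a^{\boldsymbol\rho}$ into the enactment subgame. Strict regularity \eqref{eq:enactment-regularity} at the baseline gives a margin $1-\mathcal C_a>0$, and $C^1$-closeness of $s_a'$ preserves it for small $\boldsymbol\rho$. The implicit function theorem applied to $A-JG(b_N-\psi s_a^{\boldsymbol\rho}(A))=0$ then gives $a_{\boldsymbol\rho}\to a_{\mathbf0}$ in $C^2$. This is exactly as in the proof of Corollary~\ref{cor:logit-feedback} and Proposition~\ref{prop:competence-audience}(ii).

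Composing, $\widehat R_{\boldsymbol\rho}$ and $F_{\boldsymbol\rho}(J,\theta)=H(\widehat R_{\boldsymbol\rho}(J;\theta))-J$ converge together with $F_J$, $F_{JJ}$, $F_\theta$ and $F_{J\theta}$, uniformly on $[0,1]\times\Theta$. That is precisely the topology of Proposition~\ref{prop:openset}.

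\textbf{Step 3: persistence, folds, and the converse.} Given Step 2, Proposition~\ref{prop:openset} supplies $\bar\rho$ such that, on the compact set $K$, there are exactly three transverse roots with the same stability signs. One caveat: that proposition is phrased for $K$ inside an interval in a scalar $\theta$. If $K$ is a general compact parameter set, I would rerun its argument. The argument uses three ingredients: isolation of the transverse roots by a uniform lower bound on $|F_J|$ near them; a uniform lower bound on $|F|$ away from them, by compactness and continuity; and uniform convergence.

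For folds, treat $\boldsymbol\rho$ as an additional parameter. The nondegeneracy conditions $F_{JJ}\ne0$ and $F_\theta\ne0$ are open, so the implicit function theorem applied to the system $(F,F_J)=0$ in $(J,\theta)$ gives a continuation $(J^\ast(\boldsymbol\rho),\theta^\ast(\boldsymbol\rho))$. Its Jacobian has determinant $-F_\theta F_{JJ}\ne0$, and Lemma~\ref{lem:regular-fold} applies along it.

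The converse needs no perturbation argument. If $\sup_J\Psi'_{\boldsymbol\rho}<1$, then $\Psi_{\boldsymbol\rho}(J)-J$ is strictly decreasing, positive at $0$ and negative at $1$ by full support, so it has a unique zero. The one-to-one correspondence of Proposition~\ref{prop:mult} then gives a unique equilibrium.

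\textbf{Expected main obstacle.} The delicate point is the $C^2$ dependence in Step 1 near $X_m=0$ and at the boundary of the sanction range. I need the compact sanction range to be invariant, that is, solutions stay where $r_m>0$. I would secure this by bounding $s\le\kappa_m(1-\bar q_m)+\varepsilon\log2$ a priori from the softplus, independently of $\rho_m$, so that the range is fixed before the implicit function theorem is applied.
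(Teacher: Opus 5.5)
Your proposal is correct and follows the paper's proof essentially step for step. The steps match: the implicit-function theorem on the sanction equation, $C^2$ convergence of the induced $F_{\boldsymbol\rho}$, Proposition~\ref{prop:openset} for persistence on $K$, the implicit-function theorem on $(F,F_J)=(0,0)$ with Jacobian determinant $-F_\theta F_{JJ}\neq0$ for the folds, and the derivative criterion for uniqueness. Your extra checks fill in details the paper leaves implicit: $\partial_s\Gamma_m\ge1$ holds for every $\rho_m$ and not only at $\boldsymbol\rho=\mathbf 0$, the enactment-regularity margin is preserved, and the robustness argument needs rerunning for a general compact $K$. The one slip is harmless: the a priori bound should read $s\le\kappa_m(1-\bar q_m+\varepsilon\log 2)$, though any $\boldsymbol\rho$-independent bound serves.
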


\begin{proof}
Write
\[
 E_m(s,X,\boldsymbol\rho)=s-\kappa_m\left[
 \frac{\nu_m^{\boldsymbol\rho}(s)}{\nu_m^{\boldsymbol\rho}(s)+\alpha_mX}-\bar q_m
 \right]_\varepsilon.
\]
At $\boldsymbol\rho=\mathbf0$, $\nu_m^{\boldsymbol\rho}$ is independent of $s$ and $E_{m,s}=1$. The
implicit-function theorem therefore makes each $s_m^{\boldsymbol\rho}(X)$ a $C^2$ function of
$(X,\boldsymbol\rho)$ for all sufficiently small $\max_m\rho_m$, converging to the fixed-pool sanction in the
$C^2$ topology on compact $X$ intervals. The induced reduced fixed-point functions
$F_{\boldsymbol\rho}(J,\theta)=\Psi_{\boldsymbol\rho}(J;\theta)-J$ converge in the same topology to the fixed-pool function. The main
paper's robustness result then preserves the three transverse roots uniformly on $K$. At a nondegenerate boundary fold,
the system $(F_{\boldsymbol\rho},\partial_JF_{\boldsymbol\rho})=(0,0)$ has nonsingular Jacobian with respect to
$(J,\theta)$, so the implicit-function theorem continues that fold locally in $\boldsymbol\rho$. The derivative
criterion gives uniqueness whenever $\sup_J\Psi_{\boldsymbol\rho}'(J)<1$.
\end{proof}

\begin{proposition}[Expressive opposition and weak stigma elasticity]\label{prop:oa-expressive}
Let the background pool at margin $m$ consist of mass $\bar\nu_m$ of opportunists who value being seen to defy.
An opportunist draws expressive benefit $e\ge0$ from a $C^2$ distribution $\Phi_m$ with $\Phi_m(0)=0$ and
density $\varphi_m$, places weight $\omega_m\in[0,1]$ on standing, and acts iff $e\ge\omega_ms_m$. Then
\[
 \nu_m(s)=\bar\nu_m[1-\Phi_m(\omega_ms)].
\]
The response is weak enough for the regular trap to persist under either of two sufficient conditions:
\begin{enumerate}
\item[(i)] for any $\omega_m\le1$, on each compact equilibrium sanction range $\mathcal S_m$,
\[
 \max_m\sup_{s\in\mathcal S_m}
 \left\{\Phi_m(\omega_ms),\ \varphi_m(\omega_ms),\
 |\varphi_m'(\omega_ms)|\right\}\le\bar\varepsilon
\]
for a sufficiently small $\bar\varepsilon>0$; or
\item[(ii)] each $\Phi_m$ is $C^2$ on a neighborhood of the relevant compact range and
$\max_m\omega_m$ is sufficiently small.
\end{enumerate}
In either case the three roots keep their stability signs and the nondegenerate folds continue locally.
\end{proposition}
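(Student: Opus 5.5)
The plan is to reduce both sufficient conditions to the small-responsiveness persistence result, Proposition~\ref{prop:oa-responsive}. The pool there has the form $\bar\nu_m r_m(\rho_m s)$ with $r_m(0)=1$, $r_m'\le0$, and $r_m$ of class $C^2$ and positive. The derivation of $\nu_m(s)$ is immediate: opportunists act iff $e\ge\omega_m s$, so the acting mass is $\bar\nu_m[1-\Phi_m(\omega_m s)]$. It is nonincreasing because $\Phi_m$ is a c.d.f., equals $\bar\nu_m$ at $s=0$ since $\Phi_m(0)=0$, and is $C^2$ because $\Phi_m$ is. So the only work is to show that, in each case, the induced sanction $s_m^{\boldsymbol\rho}$ is $C^2$-close to the fixed-pool sanction on compact prevalence intervals. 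Once that holds, the reduced fixed-point function converges in the $C^2$ topology of Proposition~\ref{prop:openset}, which delivers three transverse roots with their stability signs and, via the implicit-function argument in the proof of Proposition~\ref{prop:oa-responsive}, continuation of nondegenerate folds.

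Case (ii) fits the template directly. Set $\rho_m:=\omega_m$ and $r_m(x):=1-\Phi_m(x)$. Then $r_m$ is $C^2$ near the relevant range, $r_m(0)=1$, and $r_m'=-\varphi_m\le0$. Positivity of $r_m$ on the compact range $[0,\omega_m\bar s]$ follows for small $\omega_m$ by continuity from $r_m(0)=1$. Proposition~\ref{prop:oa-responsive} then supplies $\bar\rho$, and $\max_m\omega_m<\bar\rho$ is the required smallness.

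Case (i) does not shrink a scale parameter, so I would not invoke Proposition~\ref{prop:oa-responsive} as stated. Instead I would rerun its proof with a direct perturbation bound. Write the responsive-sanction equation as $E_m(s,X;\nu)=0$, where the pool function $\nu(s)=\bar\nu_m[1-\Phi_m(\omega_m s)]$ enters through the composite. I would bound the $C^2$ distance between $\nu$ and the constant $\bar\nu_m$ on $\mathcal S_m$:
\begin{itemize}
\item the $C^0$ part is at most $\bar\nu_m\Phi_m(\omega_m s)\le\bar\nu_m\bar\varepsilon$;
\item the $C^1$ part is at most $\bar\nu_m\omega_m\varphi_m\le\bar\nu_m\bar\varepsilon$;
\item the $C^2$ part is at most $\bar\nu_m\omega_m^2|\varphi_m'|\le\bar\nu_m\bar\varepsilon$.
\end{itemize}
Each bound uses $\omega_m\le1$. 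Hence $E_{m,s}=1+O(\bar\varepsilon)$ stays bounded away from zero. The implicit-function theorem in its Banach-space form, with the pool function as the parameter in $C^2(\mathcal S_m)$, then gives a unique $C^2$ sanction depending continuously, in $C^2$ on compact $X$ intervals, on the pool. The rest of the chain is as before: the sanctions feed the regular enactment solution $a(J)$, which is stable under $C^2$ perturbation under a strict \eqref{eq:enactment-regularity} as in Corollary~\ref{cor:logit-feedback}, and then $\Psi$, with Proposition~\ref{prop:openset} concluding.

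The main obstacle is the self-referential domain in case (i). The hypothesis is imposed on "each compact equilibrium sanction range $\mathcal S_m$", but that range itself moves with the perturbation. I would fix $\mathcal S_m$ as a compact neighborhood of the fixed-pool sanction range, which is bounded by $\kappa_m(1-\bar q_m)$ in the uncapped smooth case. I would then check that for $\bar\varepsilon$ small the perturbed sanctions stay inside it, by a continuity and contraction argument on the map $s\mapsto\kappa_m[\cdot]_\varepsilon$. That check closes the loop, and is the step I would verify most carefully.
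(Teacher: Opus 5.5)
Your proposal is correct and essentially follows the paper's own proof. Case (ii) is the direct substitution $\rho_m=\omega_m$, $r_m=1-\Phi_m$ into Proposition~\ref{prop:oa-responsive}. Case (i) bounds the pool and its first two derivatives to within $\bar\nu_m\bar\varepsilon$ of the constant pool (using $\omega_m\le1$), reruns the implicit-function argument, and then invokes Proposition~\ref{prop:openset}.

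The paper leaves your domain check implicit. It closes most simply by taking $\mathcal S_m$ to be the whole attainable range $[0,\kappa_m[1-\bar q_m]_\varepsilon]$, which the sanction map sends into itself for any pool. Note that this softplus bound slightly exceeds the $\kappa_m(1-\bar q_m)$ you wrote.
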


\begin{proof}
The acting mass is $\bar\nu_m[1-\Phi_m(\omega_ms)]=\bar\nu_mr_m(\omega_ms)$ with
$r_m=1-\Phi_m$, which is $C^2$, positive on the relevant range, satisfies $r_m(0)=1$, and has
$r_m'=-\varphi_m\le0$. Part (ii) follows directly from the responsive-opposition proposition. Under (i), the
pool and its first two derivatives are uniformly within $\bar\nu_m\bar\varepsilon$ of the constant pool on the
compact sanction range. The implicit-function argument above makes the induced sanction and reduced fixed-point
function converge in $C^2$ to their fixed-pool counterparts; robustness to small $C^2$ perturbations preserves the roots, stability
signs, and folds.
\end{proof}

For the numerical diagnostic, the robustness script solves the full system with the hard-interior audience rule
and $\nu_m(s)=\bar\nu_me^{-s/d_\nu}$. This exponential hard-interior exercise is separate from the smooth
no-cap family used in the local $C^2$ proposition above. Small $\max_m\rho_m$ in that local family and large
$d_\nu$ in the exponential family both describe weak responsiveness. Bisection
places the transition near $d_\nu^\dagger\approx3.45$: the sampled higher-decay-scale side has three roots and
the lower side one; at $d_\nu=5$, $J=(0.028,0.091,0.758)$, while at $d_\nu=2$ a unique questioning culture
survives. At the boundary, the trap-level formation sanction $s_f\approx1.24$ leaves
$e^{-1.24/3.45}\approx0.70$ of the pool active, so a response that deters about $30\%$ of opportunists at
trap-level stigma is enough to remove coexistence in this example. This is a sampled fragility check,
not a global comparative-statics theorem; \texttt{verify\_planner\_threshold\_F3.py} reproduces the threshold.

\section{Scope conditions: audience memory and responsive opposition}\label{oa:scope}

This appendix collects the two results that delimit the mechanism's scope: how much individual
memory the anonymous-matching assumption tolerates, and how much of the opportunistic pool may
respond to the stigma it attracts.

\subsection{Bounded audience memory}

Anonymous matching is the assumption on which
the posterior channel rests, and it deserves a quantitative rather than a rhetorical defense. Agents here are
long-lived and forward-looking, so a partner who could accumulate a personal record would eventually tell an
authentic member from an opportunist, and $\nu_m\downarrow0$ would collapse the channel---the logic of
reputational identification in \citet{ElyValimaki2003}. The question is how fast that happens. Suppose a
partner observes, in addition to the current event, the individual's registered record over the last $n$
ordinary episodes, and that an authentic member leaves a registered probe in such an episode with probability
$\phi_0$ while an opportunist, who does not wait for a diagnostic cue, does so with probability
$\chi\phi_0$, $\chi>1$. Write $k$ for the number of recorded probes in the window.

\begin{proposition}[Bounded audience memory]\label{prop:memory}
Assume $0<\phi_0<1$ and $1<\chi<1/\phi_0$, so that both per-episode recording rates are probabilities, and let
$\Lambda_n(k):=\chi^k\bigl[(1-\chi\phi_0)/(1-\phi_0)\bigr]^{n-k}$ be the likelihood ratio of the record. Then:
\begin{enumerate}
\item[(i)] the posterior after a probe is the baseline posterior \eqref{eq:posterior} with residual ambiguity
$\nu_f$ replaced by the record-dependent $\nu_f\Lambda_n(k)$, so $n=0$ is the anonymous baseline;
\item[(ii)] the expected formation stigma borne by an authentic member obeys
\[
 \E\bigl[s_f\bigr]\;\le\;\kappa_f\sqrt{\nu_f/J}\;\varkappa^{\,n},
 \qquad
 \varkappa:=\sqrt{\chi}\,\phi_0+\sqrt{(1-\phi_0)(1-\chi\phi_0)}\;<\;1;
\]
\item[(iii)] hence, \emph{for each fixed} $J>0$, expected formation stigma converges geometrically to zero
as the record lengthens.
\end{enumerate}
\end{proposition}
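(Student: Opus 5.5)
The plan is to reduce everything to a one-line Bayes computation and then bound the expected sanction by the Bhattacharyya coefficient of the two record distributions.

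For (i), condition on the type (authentic vs.\ background) and use the fact that, given type, the $n$-episode record is independent of the current event. The prior odds that an observed probe is background are $\nu_f/J$ under \eqref{eq:posterior} with $\alpha_f=1$, exactly as in the anonymous baseline. The record enters only through a likelihood ratio. An opportunist's record count is $\mathrm{Bin}(n,\chi\phi_0)$ and an authentic member's is $\mathrm{Bin}(n,\phi_0)$. The binomial coefficients cancel, so the ratio of record probabilities at count $k$ is exactly $\Lambda_n(k)$. Hence the posterior odds are $\nu_f\Lambda_n(k)/J$ and $q_f=\nu_f\Lambda_n(k)/(\nu_f\Lambda_n(k)+J)$. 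At $n=0$ we have $\Lambda_0\equiv1$, which recovers the baseline.

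For (ii), the first step is to bound the sanction by the posterior. Under the exact rule \eqref{eq:endogenous-tolerance},
\[
 s_f=\bigl[\kappa_f(q_f-\bar q_f)\bigr]_0^{\bar s_f}\le\kappa_f q_f,
\]
since $\bar q_f\ge0$. This is the convention I will adopt; the smooth schedule would only add an $O(\varepsilon)$ term. The second step is the elementary inequality $x/(1+x)\le\min\{1,x\}\le\sqrt x$ for $x\ge0$, applied with $x=\nu_f\Lambda_n(k)/J$. It gives
\[
 s_f\le\kappa_f\sqrt{\nu_f/J}\,\sqrt{\Lambda_n(k)}.
\]
The third step is to take expectations under the authentic member's record law $k\sim\mathrm{Bin}(n,\phi_0)$. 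Because $\sqrt{\Lambda_n(k)}$ factorizes over episodes, the expectation is the $n$-th power of a single-episode expectation:
\[
 \E\bigl[\sqrt{\Lambda_n(k)}\bigr]=\Bigl(\phi_0\sqrt\chi+(1-\phi_0)\sqrt{(1-\chi\phi_0)/(1-\phi_0)}\Bigr)^n=\varkappa^n.
\]
The final step is to show $\varkappa<1$. Note that $\varkappa=\sqrt{\phi_0\cdot\chi\phi_0}+\sqrt{(1-\phi_0)(1-\chi\phi_0)}$ is the Bhattacharyya coefficient of the Bernoulli laws with parameters $\phi_0$ and $\chi\phi_0$, both of which are genuine probabilities by the hypothesis $1<\chi<1/\phi_0$. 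Cauchy--Schwarz gives $\varkappa\le1$, with equality only if the two laws coincide, which would require $\chi=1$. So $\varkappa<1$.

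Part (iii) is then immediate. For fixed $J>0$ the prefactor $\kappa_f\sqrt{\nu_f/J}$ is a constant, and $\varkappa^n\to0$ geometrically.

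The main obstacle is choosing the square-root relaxation. A direct bound $q_f\le\nu_f\Lambda_n/J$ fails, because $\E[\Lambda_n]=1$ under the authentic law and would give no decay. Dropping to $\min\{1,\cdot\}$ alone does not factorize over episodes. The $\sqrt{x}$ interpolation is the device that makes the expectation multiplicative and produces the Hellinger/Bhattacharyya rate. It also explains the prefactor's blow-up as $J\downarrow0$, which is why the statement is made at fixed $J$.
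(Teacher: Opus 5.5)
Your proposal is correct and follows the paper's proof essentially step for step: Bayes' rule on the current event together with the independent record, with the binomial coefficients cancelling, gives (i); then $s_f\le\kappa_f q_f$, the square-root relaxation $q_f\le\sqrt{\nu_f\Lambda_n(k)/J}$, and the identification of $\E[\sqrt{\Lambda_n}]$ under the authentic law as the $n$-fold Bhattacharyya coefficient $\varkappa^n<1$ via Cauchy--Schwarz give (ii) and (iii). The only cosmetic difference is that the paper obtains the square-root bound from AM--GM, which actually yields an extra factor $\tfrac12$, whereas your $\min\{1,x\}\le\sqrt x$ argument delivers exactly the stated constant.
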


\begin{proof}
Part (i) is Bayes' rule applied to the product of the current event and the independent record. For (ii),
$q_f^{(n)}=\nu_f\Lambda_n/(\nu_f\Lambda_n+J)\le\sqrt{\nu_f\Lambda_n/J}$ by the arithmetic--geometric mean
inequality, and under the authentic measure
$\E[\sqrt{\Lambda_n}]=\sum_k\sqrt{\Pr_a(k)\Pr_o(k)}=\varkappa^{\,n}$, the $n$-fold Bhattacharyya coefficient of
two distinct Bernoulli measures, which is strictly below one by Cauchy--Schwarz. Since
$s_f=\kappa_f[q_f-\bar q_f]_+\le\kappa_fq_f$, the bound follows, and (iii) is immediate from
$\varkappa<1$.
\end{proof}

What the bound does and does not settle is worth stating precisely, because the two are easy to conflate. It
settles the \emph{level} of expected formation stigma at any fixed prevalence: a long individual record makes
authentic questioning self-identifying there. Three things it does not settle. First, the bound is pointwise
rather than uniform: the factor $\sqrt{\nu_f/J}$ diverges as $J\downarrow0$, so it degrades precisely in the
low-prevalence region where the trap sits, and $\sup_{J\in(0,1]}$ of the right-hand side is infinite. It
therefore does not deliver uniform convergence of the formation map, of the equilibrium correspondence, or of
the extremal cultures. Second, a bound on the level of a sanction does not control its slope in $J$, which is
what the perturbation argument of Proposition~\ref{prop:openset} needs to transport folds; persistence of
coexistence over an explicit range of record lengths would require $C^1$ control of the induced formation map
in $n$, which I do not claim. Third, even the limit is not a uniqueness result:
the discussion following Corollary~\ref{cor:anonymity} shows that at $\phi=0$ the reduced map remains
increasing through the enactment channel, so multiplicity can survive private formation in
principle---though not at the certified benchmark, where the $\phi=0$ map is unique
(Section~\ref{oa:invisible})---and uniqueness would in general additionally require
$\sup_J\Psi_0'(J)<1$. The proposition is a rate, not a survival threshold.

The mechanism therefore applies where counterparties turn over quickly relative to the rate at which a record
identifies a type: rotating committee seats, arm's-length partners, standing held at the level of a market or
profession rather than a dyad. It does not apply to a small stable group that accumulates long individual
records, and the bound in (ii) gives the rate. The same reading disciplines policy: an anonymous channel
lowers $\phi_0$ and so slows identification, while any measure that makes questioning individually traceable
speeds it up.

\subsection{Opposition that responds to its own sanction}

\begin{proposition}[Responsive opposition: a primitive survival condition]\label{prop:responsive-primitive}
Under the responsive pool $\nu_f(s)=\bar\nu_f[1-F_v(\phi s)]$ (the construction applies symmetrically at
enactment; parts (ii)--(iii) hold the enactment pool fixed), with $F_v$
continuous:
\begin{enumerate}
\item[(i)] for each $J$ the induced sanction is the unique fixed point of a continuous weakly decreasing
self-map, and $\nu_f(s)\in[\underline\nu_f,\bar\nu_f]$ at every attainable stigma;
\item[(ii)] the induced reduced map is sandwiched, $\Psi_{\bar\nu_f}(J)\le\Psi_{\mathrm{resp}}(J)\le
\Psi_{\underline\nu_f}(J)$, where $\Psi_{\nu}$ denotes the fixed-pool map at ambiguity $\nu$;
\item[(iii)] consequently, if there are $0<J_L<J_H<1$ with
\[
 \mathcal K_{\bar\nu_f}(J_H;p)\;>\;\kappa_f\;>\;\mathcal K_{\underline\nu_f}(J_L;p),
\]
the responsive-opposition model has at least three fixed points at the same pressure; if, in addition,
$\Psi_{\mathrm{resp}}$ is nondecreasing and the smallest and largest crossings are transverse
($\Psi_{\mathrm{resp}}'\neq1$ there), those two fixed points are stable under the reduced scalar adjustment.
If instead $\underline\nu_f=0$ and $\sup_J\Psi_{\mathrm{resp}}'(J)<1$, the equilibrium is unique.
\end{enumerate}
\end{proposition}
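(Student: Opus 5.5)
The plan is to treat the responsive pool as a scalar fixed-point problem nested inside the formation map. I then compare it with the two fixed-pool maps by monotonicity in $\nu$. Throughout, $\alpha_f=1$ and the enactment pool is held fixed. Write $S_\nu(J)$ for the fixed-pool formation sanction at ambiguity $\nu$: the posterior $q_f=\nu/(\nu+J)$ composed with the (hard or smoothed) tolerance rule of Section~\ref{subsec:audience}. This sanction is continuous and weakly increasing in $\nu$, because the posterior is increasing in $\nu$ and $W_f$ is weakly increasing in $q$. It is also bounded by $s_f^{\max}=\kappa_f(1-\bar q_f)$.

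\emph{Step 1, part (i).} Fix $J$ and define $g_J(s):=S_{\nu_f(s)}(J)$ on $[0,s_f^{\max}]$. Since $F_v$ is a c.d.f., $\nu_f(s)=\bar\nu_f[1-F_v(\phi s)]$ is continuous and weakly decreasing, so $g_J$ is a continuous, weakly decreasing self-map of that interval. Then $g_J(s)-s$ is strictly decreasing and changes sign (weakly) at the endpoints. The intermediate value theorem gives existence and strict monotonicity gives uniqueness of the fixed point $s^\ast(J)$. Because $s^\ast(J)\le s_f^{\max}$ and $\nu_f$ is decreasing, $\nu_f(s^\ast)\ge\bar\nu_f[1-F_v(\phi s_f^{\max})]=\underline\nu_f$, and trivially $\nu_f(s^\ast)\le\bar\nu_f$. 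I also record continuity of $J\mapsto s^\ast(J)$, which Step 3 needs: $(s,J)\mapsto g_J(s)-s$ is jointly continuous with a unique zero on a compact $s$-interval, so a closed-graph argument gives a continuous selection.

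\emph{Step 2, part (ii).} From Step 1, $S_{\underline\nu_f}(J)\le s^\ast(J)=S_{\nu_f(s^\ast)}(J)\le S_{\bar\nu_f}(J)$. The formation return $R$ in \eqref{eq:invest} is strictly decreasing in $s_f$, the enactment block is untouched, and $H$ is increasing. The sandwich $\Psi_{\bar\nu_f}\le\Psi_{\mathrm{resp}}\le\Psi_{\underline\nu_f}$ therefore follows pointwise.

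\emph{Step 3, part (iii).} By the equivalence \eqref{eq:psi-K}, applied to each fixed pool on its multiplicative branch, I get two inequalities:
\[
 \kappa_f<\mathcal K_{\bar\nu_f}(J_H;p)\ \Rightarrow\ \Psi_{\bar\nu_f}(J_H)>J_H,
 \qquad
 \kappa_f>\mathcal K_{\underline\nu_f}(J_L;p)\ \Rightarrow\ \Psi_{\underline\nu_f}(J_L)<J_L .
\]
The sandwich then gives $\Psi_{\mathrm{resp}}(J_H)>J_H$ and $\Psi_{\mathrm{resp}}(J_L)<J_L$. Full support of $H$ gives $\Psi_{\mathrm{resp}}(0)>0$ and $\Psi_{\mathrm{resp}}(1)<1$. $\Psi_{\mathrm{resp}}$ is continuous by Step 1. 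The intermediate value theorem on $(0,J_L)$, $(J_L,J_H)$ and $(J_H,1)$ yields three fixed points, exactly as in the proof of Theorem~\ref{thm:trap}.

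For stability, $\Psi_{\mathrm{resp}}-J$ is positive just to the left of the smallest crossing and negative just to the right of the largest. With transversality this forces $\Psi_{\mathrm{resp}}'<1$ at both. Adding that $\Psi_{\mathrm{resp}}$ is nondecreasing gives $0\le\Psi'_{\mathrm{resp}}<1$, which is scalar stability. Finally, if $\sup_J\Psi_{\mathrm{resp}}'<1$, then $\Psi_{\mathrm{resp}}(J)-J$ is strictly decreasing. It has a root by the boundary signs, and that root is unique. The hypothesis $\underline\nu_f=0$ plays no role beyond marking the case where the sandwich is uninformative.

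\emph{Main obstacle.} The delicate point is Step 3's use of \eqref{eq:psi-K} at the two different pools. It requires each fixed-pool sanction to lie on its multiplicative branch $\kappa_f b_f$ at $J_L$ and $J_H$, or else a direct sign argument through $\Psi_\nu$. My fallback is to bypass $\mathcal K$ and argue directly with $\Psi_\nu(J)\gtrless J$, which is all the sandwich needs. The second delicate point is continuity of $s^\ast(J)$ under the hard, kinked rule, which I would handle by the compactness and uniqueness argument of Step 1.
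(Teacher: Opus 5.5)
Your proposal is correct and follows the paper's proof essentially step for step. The steps match: the weakly decreasing continuous self-map in $s$ for (i); monotonicity of the posterior in $\nu$ for the sandwich; and, for (iii), \eqref{eq:psi-K} at each fixed pool together with the full-support boundary signs, the intermediate value theorem, and a one-sided sign argument at the extreme crossings. You also fill in what the paper leaves implicit: continuity of $J\mapsto s^\ast(J)$, the bracket via $s\le\kappa_f(1-\bar q_f)$ rather than $\bar s_f$, and the uniqueness clause, which the paper's proof does not address. That bound holds only for the hard rule, since the softplus sanction exceeds $\kappa_f(1-\bar q_f)$ near $J=0$.
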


\begin{proof}
For (i), $\nu_f(\cdot)$ is weakly decreasing and $q_f$ is increasing in $\nu_f$, so
$s\mapsto\kappa_f[q_f(J;\nu_f(s))-\bar q_f]_+$ is continuous and weakly decreasing on the compact
$[0,\bar s_f]$; a decreasing continuous self-map of an interval has exactly one fixed point. The bracket on
$\nu_f$ follows from $s\in[0,\bar s_f]$. For (ii), monotonicity of $q_f$ in $\nu_f$ gives the corresponding
sanction bracket, and $\Psi$ is decreasing in $s_f$. Part (iii) applies \eqref{eq:psi-K} to each bound
separately: the upper bound at $J_L$ gives $\Psi_{\mathrm{resp}}(J_L)<J_L$, the lower bound at $J_H$ gives
$\Psi_{\mathrm{resp}}(J_H)>J_H$, and the intermediate value theorem with the boundary signs of
Theorem~\ref{thm:trap} supplies the three crossings. For the stability clause, the boundary signs place
$\Psi_{\mathrm{resp}}$ above the diagonal immediately to the left of the smallest crossing and below it
immediately to the right; at a transverse crossing of a nondecreasing map this sign pattern forces
$\Psi_{\mathrm{resp}}'<1$ there, which is stability under the reduced scalar adjustment, and the symmetric
argument applies at the largest crossing.
\end{proof}

\subsection{Invisible maintenance}\label{oa:invisible}

The formation-stigma channel prices the visible probe, so the natural deviation is to keep the practice
and hide it: substitute private verification for the public request, the transparency-substitution logic
of \citet{Prat2005}. The baseline treats diagnosis as interactive by technology---a probe requests a
justification held by others, and the knowledge needed to test a premise is dispersed
\citep{Hayek1945}---so a private substitute exists only at a cost premium. This subsection makes the
menu explicit and gives the premium at which the mechanism is safe.

Extend the environment so that each maintaining agent chooses, along with maintenance, a mode. The
\emph{visible} mode is the baseline diagnostic practice, with familiar-state expected cost $\gamma$ and
raw exposure $\phi_0$. The \emph{invisible} mode has identical diagnostic power and familiar-state
expected cost $\gamma_{\mathrm{priv}}>\gamma$, and generates no public event. The audience observes
events as before, so $X_f$ is the visible authentic mass and hidden maintainers do not enter the
posterior \eqref{eq:posterior}.

\begin{proposition}[Invisible maintenance]\label{prop:invisible}
\begin{enumerate}
\item[(i)] If
\[
 \gamma_{\mathrm{priv}}-\gamma\;>\;\phi\,\kappa_f\,(1-\bar q_f),
\]
the visible mode strictly dominates at every visible prevalence: every maintainer chooses it in any
equilibrium of the extended game, whose equilibrium set therefore coincides with the baseline's. In
particular, the certified three-culture configuration survives every such menu unchanged.
\item[(ii)] If instead $\gamma_{\mathrm{priv}}-\gamma<\phi\,s_f(J_\ell)$ at a stationary all-visible
culture with stock $J_\ell$, a maintainer there strictly prefers the invisible mode, so the all-visible
configuration unravels and the formation-stigma channel closes at that culture.
\end{enumerate}
\end{proposition}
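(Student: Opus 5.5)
The argument compares, mode by mode, the familiar-state cost a maintainer bears. Option value and formation cost are the same in both modes. The enactment stage is also identical, since diagnostic power is the same and enactment events are generated the same way. Hence the mode choice affects only the familiar-state term $(1-p)[\,\cdot\,]$ in $R$.

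In the visible mode this term is $(1-p)[\gamma+\phi s_f(X_f)]$, with $X_f$ the visible authentic mass. In the invisible mode it is $(1-p)\gamma_{\mathrm{priv}}$. The agent is atomistic, so her own mode does not move $X_f$. She therefore prefers the visible mode strictly if and only if
\[
\gamma_{\mathrm{priv}}-\gamma>\phi\,s_f(X_f).
\]
This comparison is independent of $p$, $k$ and $\ell$. It is also separable from the maintenance decision, because a maintainer picks the cheaper mode and then $e=1$ iff $k$ is below the return evaluated at that mode.

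For part (i), the task is to bound $s_f$ uniformly. Under the exact rule \eqref{eq:endogenous-tolerance} we have $s_f^W=[\kappa_f(q_f-\bar q_f)]_0^{\bar s_f}\le\kappa_f(q_f-\bar q_f)_+$. Since $q_f\le 1$ (it equals one at $X_f=0$ because $\nu_f>0$), it follows that $s_f\le\kappa_f(1-\bar q_f)$ at every visible prevalence, including $X_f=0$. The hypothesis then gives strict visible preference at every $X_f\in[0,1]$. Two consequences follow:
\begin{enumerate}
\item[(a)] In any equilibrium of the extended game, almost every maintainer is visible, so $X_f=J$.
\item[(b)] Each maintainer's return reduces to the baseline $R(J,A;p)$.
\end{enumerate}
Substituting into Definition~\ref{def:equilibrium} reproduces the baseline conditions (i)--(iv) exactly. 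Conversely, every baseline equilibrium, with all maintainers visible, is an equilibrium of the extended game, because no one gains by deviating to the invisible mode. The two equilibrium sets therefore coincide, and the certified three-culture configuration of Section~\ref{sec:numerical} survives unchanged.

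The step most likely to need care is the uniformity of this bound. If the paper's benchmark or smooth specification is used, the same cap must hold. For the no-cap smooth rule, $[x]_\varepsilon$ exceeds $x_+$ by at most $\varepsilon\log 2$, so I would state (i) for the exact rule and note the $\varepsilon\log2\,\kappa_f$ slack separately. Alternatively, I would argue via the cap $\bar s_f$ when it binds.

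For part (ii), I would fix the stationary all-visible culture at $J_\ell$, where $X_f=J_\ell$ and $s_f=s_f(J_\ell)$. By the displayed comparison with the inequality reversed, any single maintainer deviating to the invisible mode changes nothing aggregate and strictly lowers her familiar-state cost. The all-visible profile is therefore not a best response, so it unravels. As maintainers move to the hidden mode, the visible mass and hence the posterior channel no longer coincide with $J$, and the formation-stigma term ceases to price maintenance at that culture. That is the sense in which the channel closes. The claim stops at non-survival, which requires only the one-shot profitable deviation; I would not characterize the new equilibrium.
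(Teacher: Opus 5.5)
Your proposal is correct and follows the paper's proof essentially step for step. The uniform bound $s_f\le\kappa_f(1-\bar q_f)$, which follows from $q_f\le1$, gives mode-by-mode strict dominance, so the extended game's equilibrium conditions reduce to Definition~\ref{def:equilibrium}. Part (ii) is the same one-shot deviation check at $J_\ell$. Your explicit converse direction and the $\kappa_f\varepsilon\log 2$ slack for the smooth rule are harmless refinements that the paper leaves implicit, since its argument is stated for the hard rule.
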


\begin{proof}
For (i), $q_f(X_f)\le1$ gives $s_f(X_f)=\kappa_f[q_f(X_f)-\bar q_f]_+\le\kappa_f(1-\bar q_f)$, so the
visible mode's familiar-state cost obeys
$\gamma+\phi s_f(X_f)\le\gamma+\phi\kappa_f(1-\bar q_f)<\gamma_{\mathrm{priv}}$ at every
$X_f\in[0,1]$: strict dominance, mode by mode, independent of the profile. With every maintainer visible,
the extended game's equilibrium conditions reduce to those of
Definition~\ref{def:equilibrium}. Part (ii) is the one-line deviation check: at the stationary culture a
visible maintainer pays $\gamma+\phi s_f(J_\ell)$ per familiar episode against
$\gamma_{\mathrm{priv}}$ for the same diagnostic power.
\end{proof}

At the benchmark the uniform bound is $\phi\kappa_f(1-\bar q_f)=\invisUnif$, or $\invisUnifxg$ times the
diagnostic cost $\gamma$, and the trap-level exposure $\phi s_f(J_\ell)$ is $\invisTrapxg$ times
$\gamma$: the mechanism is safe against any private substitute costing roughly an order of magnitude more
than interactive diagnosis, and threatened by cheaper ones. Between the two thresholds, a conservative
diagnostic replaces the exposure by $\min\{\phi s_f(J),\bar c\}$ with
$\bar c=\gamma_{\mathrm{priv}}-\gamma$, the cost the cheaper mode would imply at unchanged visible
prevalence; the three-root configuration survives every cap down to $\invisCapxg\,\gamma$
(\texttt{verify\_extensions\_B\_F3.py}), so the numerical margin is wider than the dominance bound. The
limiting menu is instructive: free invisibility is the case $\phi=0$ of
Corollary~\ref{cor:anonymity}, and at the benchmark the reduced map is then unique, with a questioning
culture at $J=\phiZeroJ$---larger than the baseline's $\Jhigh$, since everyone forms without exposure.
The scope condition is therefore sharp in both directions. The trap requires diagnosis to be interactive,
so that hiding the practice forfeits the information others hold; where cheap private substitutes exist,
the formation-stigma channel closes---and closes toward the questioning culture. This is the
individual-choice counterpart of the institutional visibility margin of Corollary~\ref{cor:anonymity}:
institutions choose $\phi_0$; agents, offered a menu, choose whether to be seen at all.

\section{Local classification of forward-looking cultures}\label{oa:forward-local}

For reference, this section gives the complete local classification omitted from the main text. Let $H$ be the
formation-cost c.d.f., with density $h$, and let $\widehat R(J)$ be the current return to formed judgment after
solving the enactment margin. A share $\delta\in(0,1)$ can revise each period and agents discount at
$\beta\in(0,1)$. Write
\[
 d:=1-\delta,
 \qquad
 \lambda:=\beta d\in(0,d).
\]
A perfect-foresight path satisfies
\begin{align}
 V_t&=(1-\lambda)\widehat R(J_t)+\lambda V_{t+1},
 \label{eq:oa-forward-value}\\
 J_{t+1}&=dJ_t+\delta H(V_{t+1}).
 \label{eq:oa-forward-stock}
\end{align}
Define the static map $\Psi(J):=H(\widehat R(J))$.

\begin{theorem}[Forward-looking cultures: complete local classification]\label{thm:oa-forward-local}
Suppose $\widehat R$ is $C^1$ and nondecreasing, $H$ is $C^1$ with continuous density $h$ near the
stationary values $\widehat R(J_j)$, and $\Psi$ has exactly three transverse fixed points
$J_\ell<J_i<J_h$, with
\[
 m_\ell:=\Psi'(J_\ell)<1,
 \qquad m_i:=\Psi'(J_i)>1,
 \qquad m_h:=\Psi'(J_h)<1.
\]
Then:
\begin{enumerate}
\item[(i)] A stationary perfect-foresight culture is exactly a static cultural equilibrium:
$V^\ast=\widehat R(J^\ast)$ and $J^\ast=\Psi(J^\ast)$. Forward-looking formation therefore preserves the
three stationary cultures.
\item[(ii)] Define
\[
 \vartheta_j:=d+\lambda^{-1}-\frac{\delta(1-\lambda)m_j}{\lambda},
 \qquad
 m^{\mathrm{flip}}:=\frac{(1+d)(1+\lambda)}{\delta(1-\lambda)}.
\]
The local characteristic roots at $J_j$ solve
\begin{equation}\label{eq:oa-forward-characteristic}
 P_j(r)=r^2-\vartheta_jr+\frac d\lambda=0.
\end{equation}
Each outer culture is a saddle, with one root $r_j^s\in(0,1)$ and one root $r_j^u>1$. The middle culture is a
source when $1<m_i<m^{\mathrm{flip}}$: real roots then both lie outside the unit circle on the same side, while
complex roots have common modulus $\beta^{-1/2}>1$. At $m_i=m^{\mathrm{flip}}$, one root equals $-1$; above
that value, the middle culture is a saddle with a stable root in $(-1,0)$, so convergence along its stable arm is oscillatory.
\item[(iii)] Within the source region, define
\begin{equation}\label{eq:oa-node-focus}
 \widetilde Q_i(\lambda):=
 [1-\delta m_i+\lambda(d+\delta m_i)]^2-4d\lambda.
\end{equation}
The two algebraic node--focus boundaries are
\begin{equation}\label{eq:oa-lambda-node-focus}
 \lambda_i^\pm=
 \frac{[\sqrt d\pm\delta\sqrt{m_i(m_i-1)}]^2}{(d+\delta m_i)^2},
\end{equation}
with only roots in $(0,d)$ economically feasible. A negative value of \eqref{eq:oa-node-focus} gives a focus,
a positive value gives a node, and zero is a repeated-root boundary.
\item[(iv)] Along the saddle path into an outer culture, the local convergence factor is
\begin{equation}\label{eq:oa-forward-stable-root}
 r_j^s=\frac12\left[\vartheta_j-
 \sqrt{\vartheta_j^2-\frac{4d}{\lambda}}\right].
\end{equation}
At a nondegenerate fold where $m_j\uparrow1$, $r_j^s\uparrow1$, so the half-life
$\log(1/2)/\log r_j^s$ diverges.
\end{enumerate}
\end{theorem}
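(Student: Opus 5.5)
The plan is to linearize the two-equation system \eqref{eq:oa-forward-value}--\eqref{eq:oa-forward-stock} at each stationary point and read every claim off the characteristic polynomial, evaluated at $r=0$, $r=1$ and $r=-1$ and through its discriminant. For (i), I impose $J_t\equiv J^\ast$ and $V_t\equiv V^\ast$. Then \eqref{eq:oa-forward-value} gives $V^\ast=\widehat R(J^\ast)$, and \eqref{eq:oa-forward-stock} gives $\delta J^\ast=\delta H(V^\ast)$, that is, $J^\ast=\Psi(J^\ast)$. The converse holds by the same substitution.

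For the linearization, I rewrite the system forward, exactly as in Step~1 of the proof of Theorem~\ref{thm:band}:
\[
 V_{t+1}=\lambda^{-1}\bigl[V_t-(1-\lambda)\widehat R(J_t)\bigr],\qquad
 J_{t+1}=dJ_t+\delta H(V_{t+1}).
\]
Write $h:=h(\widehat R(J_j))$ and $R':=\widehat R'(J_j)$, so that $m_j=hR'$. In the variables $(J,V)$ the Jacobian is
\[
 \begin{pmatrix} d-\delta(1-\lambda)m_j/\lambda & \delta h/\lambda\\ -(1-\lambda)R'/\lambda & 1/\lambda\end{pmatrix}.
\]
Its trace is $\vartheta_j$ and its determinant is $d/\lambda$, which yields \eqref{eq:oa-forward-characteristic}. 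Direct evaluation then gives three values:
\[
 P_j(0)=d/\lambda=1/\beta,\qquad
 P_j(1)=\delta(1-\lambda)(m_j-1)/\lambda,\qquad
 P_j(-1)=\bigl[(1+d)(1+\lambda)-\delta(1-\lambda)m_j\bigr]/\lambda.
\]
The last expression vanishes exactly at $m^{\mathrm{flip}}$.

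The classification in (ii) is then a sign argument on a real quadratic whose root product is $1/\beta>1$.
\begin{itemize}
\item At an outer culture, $P_j(1)<0<P_j(0)$, and $P_j\to+\infty$ as $r\to\infty$. So one root lies in $(0,1)$ and the other above $1$: a saddle.
\item At the middle culture with $1<m_i<m^{\mathrm{flip}}$, both $P(1)$ and $P(-1)$ are positive. If the roots are real they share a sign, because their product is positive.
 \begin{itemize}
 \item If both are positive, they lie on the same side of $1$, since $P(1)>0$. Both cannot lie in $(0,1)$, because the product exceeds $1$; hence both exceed $1$.
 \item Symmetrically, if both are negative, both lie below $-1$.
 \item If the roots are complex, their common modulus is $\sqrt{d/\lambda}=\beta^{-1/2}>1$.
 \end{itemize}
 In every case the middle culture is a source.
\item At $m_i=m^{\mathrm{flip}}$, $P(-1)=0$, so one root equals $-1$.
\item For $m_i>m^{\mathrm{flip}}$, $P(-1)<0<P(0)$ places one root in $(-1,0)$. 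The other root is then below $-1$, since the product is positive and exceeds $1$. This is a saddle with a negative stable root.
\end{itemize}

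For (iii), the roots are nonreal iff $\vartheta_i^2<4d/\lambda$. Multiplying by $\lambda^2$ and using $d=1-\delta$ converts this into $\widetilde Q_i(\lambda)<0$, which is \eqref{eq:oa-node-focus}. To locate the boundary, set $x=\sqrt\lambda$ and solve $1-\delta m_i+x^2(d+\delta m_i)=\pm2\sqrt d\,x$ as a quadratic in $x$. Its discriminant simplifies, using $d=1-\delta$, to $4\bigl[d-(d+\delta m_i)(1-\delta m_i)\bigr]=4\delta^2m_i(m_i-1)$. Squaring the resulting roots gives \eqref{eq:oa-lambda-node-focus}, with feasibility restricted to $(0,d)$ because $\lambda=\beta d<d$.

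Part (iv) is the smaller root of \eqref{eq:oa-forward-characteristic}, which is \eqref{eq:oa-forward-stable-root}. It is real and lies in $(0,1)$ by (ii). As $m_j\uparrow1$, $P_j(1)\downarrow0$, and the root in $(0,1)$ is a continuous function of $m_j$, so $r_j^s\to1$. To get $r_j^s\uparrow1$ rather than mere convergence, I would use that $\vartheta_j$ is strictly decreasing in $m_j$. On the branch where $P_j(1)<0$ this moves the lower root up, because $\partial P_j/\partial\vartheta=-r<0$ and $P_j'(r_j^s)<0$. The half-life $\log(1/2)/\log r_j^s$ then diverges.

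The main obstacle is bookkeeping rather than ideas. The two places I expect to need care are the algebra collapsing $\lambda^2(\vartheta_i^2-4d/\lambda)$ into the displayed $\widetilde Q_i$, and the discriminant simplification $d-(d+\delta m_i)(1-\delta m_i)=\delta^2m_i(m_i-1)$. Both rely on the identity $d=1-\delta$, and I would check them symbolically first.
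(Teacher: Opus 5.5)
Your proposal is correct and follows the paper's own proof essentially step for step. Like the paper, you write the system forward in $(J,V)$, read the trace $\vartheta_j$ and the determinant $d/\lambda$ off the Jacobian, classify the roots by the signs of $P_j$ at $-1,0,1$ together with the root product $1/\beta>1$, multiply the discriminant by $\lambda^2$ and solve as a quadratic in $\sqrt\lambda$ for the node--focus boundaries, and take the smaller root at the outer cultures. Your explicit case split for negative real roots and your monotonicity argument for $r_j^s$ in $m_j$ spell out steps the paper compresses into a line; note that $d=1-\delta$ is needed only for the discriminant simplification, not to obtain $\widetilde Q_i=\lambda^2(\vartheta_i^2-4d/\lambda)$.
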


\begin{proof}
At a stationary path, \eqref{eq:oa-forward-value} gives $V^\ast=\widehat R(J^\ast)$, and
\eqref{eq:oa-forward-stock} then gives $J^\ast=H(\widehat R(J^\ast))$; the converse is immediate. Let
$h_j=h(\widehat R(J_j))$, so $m_j=h_j\widehat R'(J_j)$. Solving the value recursion forward gives
\[
 V_{t+1}=\lambda^{-1}V_t-(1-\lambda)\lambda^{-1}\widehat R(J_t),
 \qquad
 J_{t+1}=dJ_t+\delta H(V_{t+1}).
\]
Its Jacobian at culture $j$ is
\[
 M_j=\begin{pmatrix}
 d-\delta(1-\lambda)m_j/\lambda&\delta h_j/\lambda\\
 -(1-\lambda)\widehat R'(J_j)/\lambda&1/\lambda
 \end{pmatrix},
\]
and its characteristic polynomial is \eqref{eq:oa-forward-characteristic}. In particular,
\[
 P_j(0)=\frac d\lambda=\frac1\beta>1,
 \qquad
 P_j(1)=\frac{\delta(1-\lambda)}{\lambda}(m_j-1).
\]
For either outer culture, $P_j(1)<0$, so the two positive real roots lie on opposite sides of one. For the
middle culture, $P_i(1)>0$, $P_i(0)>1$, and
\[
 P_i(-1)=\frac{\delta(1-\lambda)}{\lambda}(m^{\mathrm{flip}}-m_i).
\]
When $1<m_i<m^{\mathrm{flip}}$, the signs at $-1$, $0$, and $1$, together with the product of the roots
$d/\lambda=1/\beta>1$, put every real pair outside the unit circle; complex roots have modulus
$\sqrt{d/\lambda}=\beta^{-1/2}$. Equality puts one root at $-1$. Above the $-1$-root threshold, a root lies in
$(-1,0)$ and the other remains outside the unit circle. Multiplying the discriminant by $\lambda^2$ gives
\eqref{eq:oa-node-focus}; solving its two zeros as a quadratic in $\sqrt\lambda$ gives
\eqref{eq:oa-lambda-node-focus}. Finally, the smaller outer root is \eqref{eq:oa-forward-stable-root}. As
$m_j\uparrow1$, $P_j(1)\uparrow0$, that root tends to one, and the other tends to $d/\lambda=1/\beta>1$.
The half-life therefore diverges.
\end{proof}

At the exact-decimal baseline, $\delta=\deltaval$ and $m_i=1.4993045\ldots$, certified by interval
arithmetic with enclosure radius below $5\times10^{-10}$, and $m^{\mathrm{flip}}=761$ exactly. The
discriminant $\vartheta_i^2-4d/\lambda$ is certified strictly negative under outward rounding, with
enclosure $-0.0082948419\pm4.1\times10^{-11}$; equivalently, the scaled discriminant
\eqref{eq:oa-node-focus} is
$\widetilde Q_i=\lambda^2(\vartheta_i^2-4d/\lambda)=-0.0067562005\ldots$
(script \texttt{certify\_forward\_dynamics\_F3.py}). With $\beta=0.95$, the middle roots are
$1.0249673\pm0.0455380i$ (common modulus $\beta^{-1/2}$, i.e.\ $|\cdot|^2=1.0526316$), so the middle
culture is an unstable focus (and hence a source). This classification is local; the global
convergence regions in the main paper come from the constructive monotone-path argument, not from the
linearization. The focus case, however, does carry global content: by Theorem~\ref{thm:band} of the main
text, nonreal middle-culture roots imply that the expectation-dependent band is strictly open around
$J_i$; the certified enclosures above make that hypothesis itself a certificate rather than a
floating-point reading.

\section{Direct formation support and patience comparison}\label{oa:patience}

To make the numerical statement precise, for inherited stock $j$ define the monotone path operator
\begin{align*}
 (\mathcal T_j\mathbf x)_0&=j,\\
 (\mathcal T_j\mathbf x)_{t+1}
 &=dx_t+\delta H\!\left((1-\lambda)\sum_{n=0}^\infty
 \lambda^n\widehat R(x_{t+1+n})\right).
\end{align*}
Iteration from $(j,0,0,\ldots)$ and $(j,1,1,\ldots)$ yields the least and greatest perfect-foresight stock
paths. Assume both high-convergence sets are nonempty. Let $J_-^{RE}$ be the infimum inherited stock from which
the greatest path converges to $J_h$, and let $J_+^{RE}$ be the infimum inherited stock from which the least
path converges to $J_h$. For initial stock $J_0$ satisfying
$J_0<J_-^{RE}\le J_+^{RE}<1$, and for $d=1-\delta$, define the physical duration thresholds
\[
 T_{\mathrm{possible}}
 =\left\lfloor\frac{\log[(1-J_-^{RE})/(1-J_0)]}{\log d}\right\rfloor+1,
 \qquad
 T_{\mathrm{robust}}
 =\left\lfloor\frac{\log[(1-J_+^{RE})/(1-J_0)]}{\log d}\right\rfloor+1.
\]

\subsection*{Sufficient direct formation shifts}

The instrument in this subsection shifts only the formation cutoff of a revising cohort. It does not change
$\widehat R$, the post-withdrawal primitives, or the rate at which existing formed judgment turns over.

\begin{theorem}[Sharp duration bounds and sufficient temporary cutoff shifts]\label{thm:oa-direct-formation}
Suppose $H$ is strictly increasing with full support on $\mathbb R$, $\widehat R$ is continuous and
nondecreasing, the post-withdrawal reduced map satisfies the three-culture alternating sign chart of
Theorem~\ref{thm:history-expectations}(iii) with $\mathcal E_h^{\min}$ nonempty, and
$J_0<J_-^{RE}\le J_+^{RE}<1$ for the post-withdrawal cutoffs defined above. Let
\[
 \underline R:=\min_{J\in[0,1]}\widehat R(J),
 \qquad
 \overline R:=\max_{J\in[0,1]}\widehat R(J).
\]
Fix $T\ge1$. At date $t<T$, let a direct formation instrument $\tau_t\ge0$ shift the revising cohort's cutoff
to $k\le V_{t+1}+\tau_t$, and set $\tau_t=0$ for $t\ge T$. For every anticipated stock sequence,
the revising share $B_t$ obeys
\begin{equation}\label{eq:oa-direct-formation-envelope}
 H(\underline R+\tau_t)\le B_t\le H(\overline R+\tau_t).
\end{equation}
Hence any target $b_t\in(0,1)$ is guaranteed by
\begin{equation}\label{eq:oa-direct-formation-implement}
 \tau_t\ge\left[H^{-1}(b_t)-\underline R\right]_+.
\end{equation}

For a terminal target $c\in(J_0,1)$, define
\[
 C_T:=1-(1-J_0)d^T,
 \qquad
 b_T(c):=\frac{c-d^TJ_0}{1-d^T}.
\]
If $C_T\le c$, no finite direct formation instrument can attain $J_T>c$. If $C_T>c$, a uniform policy
$\tau_t\equiv\tau$ for $t<T$ guarantees $J_T>c$ for every admissible continuation whenever
\begin{equation}\label{eq:oa-direct-formation-subsidy}
 H(\underline R+\tau)>b_T(c),
 \qquad\text{equivalently}\qquad
 \tau>H^{-1}(b_T(c))-\underline R.
\end{equation}
The uniform worst-case support bound over nonnegative shifts is
\begin{equation}\label{eq:oa-direct-formation-infimum}
 \tau_T^{\mathrm{env}}(c)
 :=\left[H^{-1}(b_T(c))-\underline R\right]_+.
\end{equation}
Every $\tau>\tau_T^{\mathrm{env}}(c)$ suffices, while $\tau=0$ also suffices when
$H(\underline R)>b_T(c)$. This is the infimum for the pointwise continuation-value envelope; restrictions
linking continuation values across dates may weakly lower the sufficient cutoff in the equilibrium-constrained
problem.

For the logit formation distribution
\[
 H(x)=\left[1+\exp\left(-\frac{x-\mu_k}{\varsigma_k}\right)\right]^{-1},
\]
the envelope infimum becomes
\begin{equation}\label{eq:oa-direct-formation-logit}
 \tau_T^{\mathrm{env}}(c)
 =\left[\mu_k+\varsigma_k\log\frac{b_T(c)}{1-b_T(c)}-\underline R\right]_+.
\end{equation}
Consequently, a finite direct formation policy can make high-culture convergence possible at
$T_{\mathrm{possible}}$ and guarantee it under every admissible continuation at $T_{\mathrm{robust}}$. These are sharp duration bounds when
every finite direct cutoff shift in this policy class has $B_t<1$ and finite shifts can make the finitely many
pre-withdrawal cohort responses uniformly arbitrarily close to one; sharpness is within this class.
\end{theorem}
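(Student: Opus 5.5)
The plan is to follow the main-text proof of Theorem~\ref{thm:frontier} and fill in the details it defers. There are three blocks: the pointwise envelope \eqref{eq:oa-direct-formation-envelope}, the stock accounting with the physical ceiling, and the pasting argument linking terminal stocks to the post-withdrawal cutoffs.

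\emph{Step 1: the envelope.} On every bounded anticipated path, $V_{t+1}=(1-\lambda)\sum_n\lambda^n\widehat R(x_{t+1+n})$ is a convex combination of values of $\widehat R$. Hence $\underline R\le V_{t+1}\le\overline R$, and both extrema exist because $\widehat R$ is continuous on $[0,1]$. Since $B_t=H(V_{t+1}+\tau_t)$ and $H$ is increasing, \eqref{eq:oa-direct-formation-envelope} follows. Applying $H^{-1}$, which is well defined because $H$ is strictly increasing with full support, gives \eqref{eq:oa-direct-formation-implement}.

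\emph{Step 2: stock accounting.} Iterating $J_{t+1}=dJ_t+\delta B_t$ gives \eqref{eq:general-stock-bounds}. If $B_t\equiv b$ is constant, then $J_T=d^TJ_0+(1-d^T)b$, so $J_T>c$ if and only if $b>b_T(c)$. More generally, since $J_T$ is increasing in each $B_t$, $J_T>c$ holds whenever $\min_tB_t>b_T(c)$.

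\begin{itemize}
\item \emph{Necessity of $C_T>c$.} Full support gives $B_t<1$ for every finite shift, so $J_T<C_T$. Hence if $C_T\le c$, no finite instrument can reach $J_T>c$.
\item \emph{Sufficiency of \eqref{eq:oa-direct-formation-subsidy}.} If $C_T>c$, then $b_T(c)<1$. If also $H(\underline R+\tau)>b_T(c)$, Step~1 gives $B_t>b_T(c)$ uniformly over all anticipated continuations, so $J_T>c$.
\item \emph{The infimum and the logit case.} The bound \eqref{eq:oa-direct-formation-infimum} is the infimum for the pointwise envelope because at $\tau=\tau^{\mathrm{env}}$ the lower bound equals $b_T$ exactly, unless $\tau=0$ already clears it. Substituting the logit quantile gives \eqref{eq:oa-direct-formation-logit}.
\end{itemize}

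\emph{Step 3: duration statements.} These combine Theorem~\ref{thm:history-expectations} with the pasting arguments from the proof of Theorem~\ref{thm:frontier}.

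\begin{itemize}
\item \emph{Setup.} The policy-augmented path operator (shifted $H$ before $T$, baseline after) is isotone and commutes with monotone pointwise limits. So Tarski's theorem gives least and greatest policy fixed points.
\item \emph{Greatest fixed point.} Its tail from $T$ is the greatest no-policy path from its own $J_T$. Otherwise, splice the prefix with the greater tail; this is a subsolution strictly above the fixed point, and upward iteration yields a larger fixed point, a contradiction. Thus choosing $c=J_-^{RE}$ and $T=T_{\mathrm{possible}}$ makes $C_T>c$. Step~2 then gives $J_T>J_-^{RE}$, and Theorem~\ref{thm:history-expectations}(iii) places $J_T$ in the upper set $\mathcal E_h^{\max}$. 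So high convergence is possible.
\item \emph{Robust case.} Take $c=J_+^{RE}$. Step~2 bounds $J_T$ strictly above $J_+^{RE}$ for every policy fixed point. Then Theorem~\ref{thm:history-expectations}(iii) makes every continuation converge to $J_h$.
\end{itemize}

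\emph{Step 4: sharpness.} For shorter $T$, the bound $C_T\le c$ together with $B_t<1$ gives $J_T<c$ for every admissible path. For the possible bound, $J_T<J_-^{RE}$ excludes high convergence because $\mathcal E_h^{\max}$ is an upper set with infimum $J_-^{RE}$. For the robust bound, the mirror pasting shows the least policy fixed point's tail is the least no-policy path from $J_T<J_+^{RE}$, so $J_T\notin\mathcal E_h^{\min}$ and the guarantee fails. Attainability at exactly $T_{\mathrm{possible}}$ and $T_{\mathrm{robust}}$ uses the hypothesis that finite shifts push the finitely many $B_t$ uniformly close to one, so $J_T$ approaches $C_T>c$.

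I expect the main obstacle to be the pasting step, which identifies the tails of the extremal policy fixed points with the extremal no-policy paths. It requires checking that splicing preserves the sub- or supersolution property at the junction date $T-1$, whose value depends on the tail. I would verify this by isotonicity of the discounted sum in the tail, exactly as in Step~3 of the proof of Theorem~\ref{thm:band}.
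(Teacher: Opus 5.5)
Your proposal is correct and follows essentially the same route as the paper's proof. It uses the convex-weight envelope $V_{t+1}\in[\underline R,\overline R]$, the stock iteration with the physical ceiling and the uniform lower bound $J_T\ge d^TJ_0+(1-d^T)H(\underline R+\tau)$, and the sub/supersolution pasting that identifies the tails of the greatest and least policy fixed points with the greatest and least no-policy paths from their terminal stocks; your explicit check of the pre-withdrawal coordinates $t\le T-1$, via isotonicity of the discounted sum in the spliced tail, is exactly what that pasting step needs.
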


\begin{proof}
Because the weights in the continuation value are nonnegative and sum to one, every bounded anticipated path
satisfies
\[
 V_{t+1}=(1-\lambda)\sum_{n=0}^{\infty}\lambda^n
 \widehat R(J_{t+1+n})\in[\underline R,\overline R].
\]
The cutoff shift gives $B_t=H(V_{t+1}+\tau_t)$, so monotonicity of $H$ proves
\eqref{eq:oa-direct-formation-envelope}--\eqref{eq:oa-direct-formation-implement}. For a policy schedule
$\boldsymbol\tau$, define
\begin{align*}
 (\mathcal T_{J_0}^{\boldsymbol\tau}\mathbf x)_0&=J_0,\\
 (\mathcal T_{J_0}^{\boldsymbol\tau}\mathbf x)_{t+1}
 &=dx_t+\delta H\!\left((1-\lambda)\sum_{n=0}^{\infty}\lambda^n
 \widehat R(x_{t+1+n})+\tau_t\right).
\end{align*}
This is an isotone self-map of the complete lattice of paths with first coordinate $J_0$, so it has least and
greatest fixed points. Under a uniform instrument, iteration of the stock law and the lower envelope give
\[
 J_T\ge d^TJ_0+(1-d^T)H(\underline R+\tau).
\]
This lower bound exceeds $c$ exactly under \eqref{eq:oa-direct-formation-subsidy}. Treating the pointwise value
interval as a Cartesian relaxation gives \eqref{eq:oa-direct-formation-infimum}; inversion of the logit c.d.f.
gives \eqref{eq:oa-direct-formation-logit}.

Full support implies $H(x)<1$ at every finite $x$. Hence any finite instrument has $B_t<1$ and
$J_T<C_T$, proving impossibility when $C_T\le c$. If $C_T>c$, then $b_T(c)\in(0,1)$ and
$H(\underline R+\tau)\to1$ as $\tau\to\infty$, so a finite sufficient shift exists.

It remains to connect terminal crossing to post-withdrawal selection. For $c=J_+^{RE}$, every policy path has
$J_T>c$, and every post-withdrawal perfect-foresight tail from that stock converges to $J_h$. For
$c=J_-^{RE}$, take the greatest policy path and the greatest post-withdrawal path from its terminal stock. The
latter weakly dominates the tail of the former. If domination were strict, pasting the greater tail onto the
policy prefix would produce a subsolution strictly above the greatest fixed point of
$\mathcal T_{J_0}^{\boldsymbol\tau}$; monotone iteration would then yield a still greater fixed point, a
contradiction. The greatest policy path therefore inherits the greatest post-withdrawal tail, which converges to
$J_h$ when $J_T>J_-^{RE}$. For necessity at $T_{\mathrm{robust}}$, the mirror-image pasting applies to the
least fixed point of $\mathcal T_{J_0}^{\boldsymbol\tau}$: its post-withdrawal tail equals the least
post-withdrawal path from its terminal stock---otherwise, pasting the smaller tail onto the policy prefix
would produce a supersolution strictly below the least fixed point, and downward monotone iteration from it
would yield a still smaller fixed point, a contradiction. At any duration whose physical ceiling lies weakly
below $J_+^{RE}$, every finite instrument leaves the least fixed point's terminal stock strictly below
$J_+^{RE}$, so its tail---the least post-withdrawal path from that stock---does not converge to $J_h$,
because stocks strictly below $\inf\mathcal E_h^{\min}$ lie outside the upper set $\mathcal E_h^{\min}$:
the robust guarantee fails at every shorter duration. Applying these arguments at the first strict
physical-ceiling crossings gives
possibility at $T_{\mathrm{possible}}$ and convergence under every admissible continuation at $T_{\mathrm{robust}}$. The limiting response
$H(\underline R+\tau)\to1$ and strict inequality at every finite $\tau$ complete sharpness.
\end{proof}

\begin{corollary}[Capped-instrument duration]\label{cor:oa-capped-duration}
Fix a cap $\bar\tau>0$ and restrict the instrument to $\tau_t\in[0,\bar\tau]$, and let
$\bar H(\bar\tau):=H(\underline R+\bar\tau)<1$. Under the envelope
\eqref{eq:oa-direct-formation-envelope}, the guaranteed stock after $T$ dates of the uniform capped policy
$\tau_t\equiv\bar\tau$ is
\[
 J_T\ \ge\ d^TJ_0+(1-d^T)\bar H(\bar\tau).
\]
For a target $c\in(J_0,1)$: if $\bar H(\bar\tau)\le c$, no duration makes the envelope guarantee
$J_T>c$ available at cap $\bar\tau$; if $\bar H(\bar\tau)>c$, the least duration at which the envelope
guarantees $J_T>c$ is
\begin{equation}\label{eq:oa-capped-duration}
 T^{\mathrm{env}}(c,\bar\tau)
 =\left\lfloor\frac{\log\bigl[(\bar H(\bar\tau)-c)/(\bar H(\bar\tau)-J_0)\bigr]}{\log d}\right\rfloor+1 .
\end{equation}
$T^{\mathrm{env}}(c,\bar\tau)$ is nonincreasing in $\bar\tau$ and recovers the uncapped durations
$T_{\mathrm{possible}}$ and $T_{\mathrm{robust}}$ displayed above as $\bar\tau\uparrow\infty$.
\end{corollary}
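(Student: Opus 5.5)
The corollary follows from the lower envelope of Theorem~\ref{thm:oa-direct-formation} together with the stock law \eqref{eq:general-stock-bounds}. I would proceed in three steps: establish the guaranteed-stock inequality, solve it for the least duration, and then treat monotonicity in the cap and the uncapped limit.

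\emph{Guaranteed stock.} Under the uniform capped policy $\tau_t\equiv\bar\tau$, every bounded anticipated path has $V_{t+1}\ge\underline R$. The envelope \eqref{eq:oa-direct-formation-envelope} therefore gives $B_t\ge\bar H(\bar\tau)$ for all $t<T$. Substituting into \eqref{eq:general-stock-bounds} and summing the geometric weights $\delta\sum_{t<T}d^{T-1-t}=1-d^T$ yields
\[
J_T\ge d^TJ_0+(1-d^T)\bar H(\bar\tau).
\]
Full support gives $\bar H<1$. Any other schedule with values in $[0,\bar\tau]$ has an envelope lower bound that is weakly smaller, because $H$ is monotone. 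The uniform cap is therefore the best envelope guarantee, so the "least duration" question reduces to this single policy.

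\emph{Threshold duration.} Rewrite the bound as $\bar H-d^T(\bar H-J_0)$.
\begin{itemize}
\item If $\bar H\le c$: since $J_0<c$, we have $J_0<\bar H$ or $J_0\ge\bar H$. In either case the right-hand side is a convex combination of $J_0$ and $\bar H$, both of which are at most $c$, with $J_0<c$. Hence it is strictly below $c$ for every $T\ge1$, and the envelope never guarantees $J_T>c$.
\item If $\bar H>c>J_0$: then $\bar H-J_0>0$, and the guarantee $J_T>c$ holds if and only if $d^T<(\bar H-c)/(\bar H-J_0)$. The ratio lies in $(0,1)$. Taking logarithms and dividing by $\log d<0$ gives $T>\log[(\bar H-c)/(\bar H-J_0)]/\log d$. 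The smallest integer satisfying this strict inequality is the floor plus one, which is \eqref{eq:oa-capped-duration}. This matches exactly the manipulation in the proof of Theorem~\ref{thm:frontier}.
\end{itemize}

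\emph{Monotonicity and the uncapped limit.} The map $x\mapsto(x-c)/(x-J_0)$ is increasing in $x>c$, since its derivative is $(c-J_0)/(x-J_0)^2>0$. Dividing its logarithm by $\log d<0$ makes the resulting quantity nonincreasing in $x$, and the floor preserves weak monotonicity. Because $\bar H(\bar\tau)$ is nondecreasing in $\bar\tau$, the duration $T^{\mathrm{env}}$ is nonincreasing in $\bar\tau$.

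As $\bar\tau\uparrow\infty$, full support gives $\bar H\uparrow1$. The ratio then tends to $(1-c)/(1-J_0)$, which recovers the expressions in \eqref{eq:Tpossible}--\eqref{eq:Trobust} for $c=J_\mp^{RE}$.

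The main obstacle is the floor at the limit. If $\log[(1-c)/(1-J_0)]/\log d$ is an integer, the floor is discontinuous there and pointwise convergence of the ratio alone does not give the right limit. I would handle this with a squeeze.
\begin{itemize}
\item Lower bound: since $\bar H<1$ for every finite cap, the capped ratio is strictly below its limit for all $\bar\tau$. The capped quotient therefore stays strictly above its limit, so $T^{\mathrm{env}}$ is always at least the uncapped $T$.
\item Upper bound: conversely, the uncapped $T$ satisfies the strict inequality $d^T<(1-c)/(1-J_0)$. By continuity in $\bar H$, the same $T$ satisfies the capped strict inequality once $\bar H$ is close enough to $1$, so $T^{\mathrm{env}}$ is at most the uncapped $T$ for large caps.
\end{itemize}
Together these show that $T^{\mathrm{env}}(c,\bar\tau)$ is eventually constant and equal to the uncapped duration, regardless of integrality.
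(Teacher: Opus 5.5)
Your proposal is correct and follows the paper's own proof. As there, the lower bound is the uniform-instrument envelope at $\tau=\bar\tau$, the least duration comes from solving $d^T<(\bar H-c)/(\bar H-J_0)$, and monotonicity and the uncapped limit follow from monotonicity of $H$ and $\bar H(\bar\tau)\uparrow1$. Your squeeze for the floor at the limit uses that $\bar H<1$ keeps the capped quotient strictly above its limit, and it makes explicit a step the paper's one-line claim that $\bar H\uparrow1$ \emph{recovers the uncapped formulas} leaves implicit.
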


\begin{proof}
The lower bound is the uniform-instrument display in the proof of
Theorem~\ref{thm:oa-direct-formation} at $\tau=\bar\tau$. When $\bar H(\bar\tau)>c>J_0$ it is strictly
increasing in $T$ and exceeds $c$ exactly when
$d^T<[\bar H(\bar\tau)-c]/[\bar H(\bar\tau)-J_0]$; the least such integer is
\eqref{eq:oa-capped-duration}. When $\bar H(\bar\tau)\le c$ the bound never exceeds $c$. Monotonicity in
$\bar\tau$ follows from monotonicity of $H$, and $\bar H(\bar\tau)\uparrow1$ recovers the uncapped
formulas.
\end{proof}

As with the theorem, the statement is envelope-conservative: it bounds what the pointwise
continuation-value envelope can guarantee at cap $\bar\tau$, and does not assert that crossing is
impossible at smaller caps or shorter durations.

\begin{proposition}[Patience changes selection, not stationary cultures]\label{prop:oa-beta-selection}
The discount factor $\beta$ enters neither stationary equilibrium condition, so the set of stationary cultures
$\{J_\ell,J_i,J_h\}$ and the associated $(A,q_m,s_m)$ are invariant to $\beta$. It does enter the continuation
operator through $\lambda=\beta d$, and therefore can move the global selection boundaries $J_-^{RE}$ and
$J_+^{RE}$.
\end{proposition}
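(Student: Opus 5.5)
The plan has two parts: show that $\beta$ is absent from the stationary system, then show that it enters the path operator and can move the cutoffs.

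\emph{Stationary invariance.} Start from Theorem~\ref{thm:oa-forward-local}(i), or equivalently Proposition~\ref{thm:forward-local}. A stationary perfect-foresight culture satisfies $V^\ast=\widehat R(J^\ast)$ and $J^\ast=\Psi(J^\ast)$. The first identity comes from setting $V_t=V_{t+1}$ in \eqref{eq:forward-value}: the factor $(1-\lambda)$ cancels, so $\lambda$, and with it $\beta$, drops out. Substituting into \eqref{eq:forward-stock} gives $\delta J^\ast=\delta H(\widehat R(J^\ast))$, and $\delta>0$ divides out. Next I would check term by term that none of the primitives entering $\Psi=H\circ\widehat R$ contain $\beta$:
\begin{itemize}
\item $H$ and $G$;
\item the sanction schedules $s_m$, built from the audience primitives and the posterior \eqref{eq:posterior};
\item the enactment solution $a(J)$ defined by \eqref{eq:AS};
\item $R(J,A;p)$ in \eqref{eq:invest}.
\end{itemize}
The one place $\beta$ might seem to hide is the capitalized standing loss $s_m$ and the per-period-equivalent cost $k$. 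The model treats both as primitives; the footnote on the revision cutoff explicitly cancels the common one-period discount. Hence the fixed-point set $\{J_\ell,J_i,J_h\}$ is $\beta$-invariant, and so are the induced $A_j=a(J_j)$, $q_m$ and $s_m$.

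\emph{Entry into selection.} Show that $\beta$ enters $\mathcal T_j$ through $\lambda=\beta d$. It appears both in the weights $(1-\lambda)\lambda^n$ of \eqref{eq:path-operator} and in the characteristic polynomial \eqref{eq:forward-characteristic}, whose root configuration depends on $\lambda$. Because $J_\pm^{RE}$ are defined in \eqref{eq:expectation-band} from the extremal fixed points of $\mathcal T_j$, they are functions of $\lambda$ and hence of $\beta$.

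The word ``can'' needs a witness showing the dependence is not vacuous. I would use the focus condition \eqref{eq:focus-condition}. Its left side depends on $\lambda$, and the paper notes it fails as $\lambda\downarrow0$ when $\delta m_i\ne1$. In that myopic limit, $V_{t+1}\to\widehat R(J_{t+1})$ and the path operator collapses toward a backward-looking map, for which the two cutoffs coincide near $J_i$. At the benchmark $\beta$, by contrast, Theorem~\ref{thm:band} gives $J_-^{RE}<J_i<J_+^{RE}$. So along $\beta$ the cutoffs move from coinciding to strictly separated.

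\emph{Main obstacle.} The rigorous version of this myopic-limit comparison is the hard step; everything else is bookkeeping. Showing that cutoffs coincide when $\lambda$ is small needs a uniqueness argument for perfect-foresight paths, or at least that the extremal paths agree.

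As a fallback, I would read ``can move'' as the weaker claim that $\beta$ enters the defining operator nontrivially. That reading is supported by the certified benchmark comparison in Section~\ref{oa:patience}, where both brackets shift as $\beta$ varies, rather than by a general monotonicity theorem.
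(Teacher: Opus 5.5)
Your stationary-invariance step matches the paper's own. Equation \eqref{eq:forward-value} with $\lambda<1$ forces $V=\widehat R(J)$, equation \eqref{eq:forward-stock} with $\delta>0$ then forces $J=\Psi(J)$, and neither involves $\beta$. Your remark that $s_m$ and $k$ are held fixed as primitives makes explicit a convention the paper leaves tacit.

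For the selection half, the paper proves your fallback and no more. It observes that $\mathcal T_j$ depends on $\beta$ only through $\lambda$, and that its extremal fixed points are not pinned down by the stationary conditions, so the cutoffs need not be invariant. The movement reported in Section~\ref{oa:patience} is explicitly a floating-point, non-rigorous computation, not the certified comparison you describe.

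The myopic-limit witness you lead with does not go through as sketched, and the proposition does not need it.

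\begin{itemize}
\item Theorem~\ref{thm:band} is only a sufficient condition for a strict band. So the failure of the focus condition as $\lambda\downarrow0$ says nothing about whether $J_-^{RE}=J_+^{RE}$ there.
\item At $\lambda=0$ the operator does not reduce to an explicit backward-looking map. It reduces to the implicit relation $J_{t+1}-\delta\Psi(J_{t+1})=dJ_t$. This defines a unique successor, and then makes both cutoffs equal $J_i$, only under an extra condition such as $\delta\Psi'<1$ throughout.
\item $\lambda=0$ lies outside the model. You would still have to show that the cutoffs at small positive $\lambda$ inherit this coincidence. That is a statement about the limits of extremal paths, and continuity of $\mathcal T_j$ in $\lambda$ at finite horizons does not control it.
\end{itemize}

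A genuine witness that the cutoffs move would need an argument of that kind, or an interval certificate at two values of $\beta$. Otherwise, state the selection claim as the paper does: non-invariance is possible, not demonstrated.
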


\begin{proof}
On a stationary path, \eqref{eq:oa-forward-value} and $\lambda<1$ give $V=\widehat R(J)$; because $\delta>0$,
\eqref{eq:oa-forward-stock} gives $J=H(V)$. Neither condition involves $\beta$, which proves stationary
invariance. The operator $\mathcal T_j$ of the main text depends on $\beta$ only through $\lambda$, and its
extremal fixed points are not pinned down by the stationary conditions, so the boundaries defined from them
need not be invariant.
\end{proof}

\paragraph{Direction of the effect at the worked primitives.} How the boundaries move is a quantitative
question, and at these primitives it is settled numerically rather than by certificate. The superseded
benchmark carried a companion interval certificate over $\beta$--$\delta$ rectangles; the present package
replaces it with \texttt{compare\_patience\_F3.py}, which locates the boundaries by the same drift criterion in
floating point and is labelled non-rigorous throughout. At $\delta=0.05$ it returns
\[
 \beta=0.92:\ J_-^{RE}\approx0.154,\ J_+^{RE}\approx0.170;
 \qquad
 \beta=0.95:\ J_-^{RE}\approx0.122,\ J_+^{RE}\approx0.181 .
\]
More patient members therefore lower the stock above which high convergence becomes possible and raise the stock
above which it occurs under every admissible continuation, widening the expectation-dependent region from about $0.015$ to about
$0.058$: patience makes a group both easier to rescue and easier to lose. Starting from the benchmark trap
stock, the constructive duration pair is $(3,4)$ at both values, because the two boundaries move within the
same steps of the physical ceiling. None of this establishes a derivative sign or global monotonicity in
$\beta$, and the main text reports it accordingly.

\section{Certification details and smooth variants}\label{oa:certification}

\paragraph{Exact-decimal benchmark inputs.}
The decimal values in this subsection are part of a constructed example, not estimates. The Arb
programs read them as exact rational inputs and round only the resulting evaluations outward. Formation and
resolve costs are
\[
 k\sim\mathrm{Logistic}(-0.08,0.115),
 \qquad
 \ell\sim\mathrm{Logistic}(-0.06,0.32).
\]
The audience primitives are
\[
\begin{gathered}
 (\nu_f,\nu_a)=(0.25,0.33),\qquad
 \alpha_f=\alpha_a=1,\qquad
 \kappa_a=2.32,\\
 \phi=0.44,\qquad \psi=1,\qquad
 (\bar q_f,\bar q_a)=(0.16,0.207).
\end{gathered}
\]
Formation pressure is $\kappa_f=1.2\chi_f=1.62$, where $\chi_f=1.35$ is only the reporting normalization
used in the robustness figures. The remaining positive and output primitives are
\[
 b_N=1.81,\qquad \gamma=0.069,\qquad
 V_F=1,\qquad V_N=0.30,\qquad D=0.17,
 \qquad p=0.10.
\]
The smooth-rule robustness comparison uses $\varepsilon=0.02$; the certified baseline itself uses the hard
withdrawal rule on its uncapped sanctioned interior branch. Indeed,
\[
 q_f(J)\ge q_f(1)=\qfone>\bar q_f,\qquad
 q_a(A)\ge q_a(1)=\qaone>\bar q_a
\]
throughout the physical domain. The inactive kink is deliberate: it rules out threshold crossing as the source
of the baseline folds. For endogenous competence the exact inputs are
\[
 \underline\sigma=0.30,\qquad
 \overline\sigma_{\max}=0.95,\qquad
 m=0.15,\qquad L=0.80,
\]
and the dynamic benchmark uses
\[
 \delta=\deltaval,\qquad \beta=0.95.
\]

\paragraph{Worked-example diagnostics.}
The enactment-regularity modulus is $0.2170$ in floating point, and the rigorous certificate puts it
below $0.25$, hence below one. At $p=0.10$ the equilibrium roots are
$(J_\ell,J_i,J_h)=(\Jlow,\Jmid,\Jhigh)$ after rounding. The competent planner chooses
$(J^\ast,A^\ast)=(\plannerCompJ,\plannerCompA)$ and the no-pressure equilibrium is
$(J^0,A^0)=(0.856,0.854)$; the corresponding level decompositions are reported at the unrounded roots.

Under endogenous competence,
\[
 \sigma(A_\ell)=\sigmalow<\bar\sigma=\sigmabar<\sigmahigh=\sigma(A_h),
\]
with novel-state values $\YNLlow$ and $\YNLhigh$. The learning-externality term at the high culture is $0.100$,
and the endogenous-competence planner chooses
$(J^{\ast\ast},A^{\ast\ast})=(\plannerJ,\plannerA)$. Because the hyperbolic objective is not concave, uniqueness here
is supplied by the exhaustive interval certificate of the joint first-order system together with the
existence and interiority argument in OA.1.

\paragraph{Certified fold geometry.}
In the main paper's worked example, the formation equation in the explicit enactment coordinate $A$ is affine
in novelty: $Q(A,p)=pD(A)-N(A)$ with $D(A)=Q_p>0$. Hence all solutions of the fold system $Q=Q_A=0$ are exactly
the zeros of $N'D-ND'$. Exhaustive root isolation with Arb over the full physical domain, with analytic endpoint-tail
bounds, proves that there are exactly two and no other folds:
\[
 (J,p)=(0.411405\ldots,0.067104\ldots)
 \quad\text{and}\quad
 (0.057587\ldots,0.142866\ldots).
\]
The certificate also proves $J'(A)>0$ on the full physical enactment branch and excludes zero from $Q_p$,
$Q_{AA}$, and the Jacobian determinant of $(Q,Q_A)$ at both boxes. Hence the $A\leftrightarrow J$ change of
coordinate is locally regular, these nondegeneracy conditions transfer to the original $(J,p)$ system, and both
points are nondegenerate saddle-nodes. Certified representative counts, together with the absence of any other
singular root or boundary entry, give exactly one equilibrium below, three inside, and one above the
coexistence window $p\in(\pfoldlo\ldots,\pfoldhi\ldots)$.

Holding $p=0.10$, \texttt{certify\_global\_geometry\_F3.py} proves that the full two-margin baseline has exactly
two stationary points of $\mathcal K$, at
\[
 J_-=0.068784\ldots,\qquad J_+=0.550367\ldots,
\]
with global derivative pattern $(-,+,-)$, nonzero curvature at both points, and certified pressure values
\[
 \mathcal K(J_-;\pbench)/\chi_f=\Kminchi\ldots,\qquad
 \mathcal K(J_+;\pbench)/\chi_f=\Kmaxchi\ldots.
\]
The exact-decimal baseline pressure $\kappa_f/\chi_f=1.2$ lies strictly between them. The same proof isolates
exactly three equilibrium roots, certifies their stable--unstable--stable slope pattern, and proves the unique
joint first-order solution of the endogenous-competence planner. Thus the exact-count result at $p=0.10$ is a
computer-assisted proof on the full domain, not a dense-grid diagnostic. The $\mathcal K$ geometry and
three-root statements fix $p=0.10$; the novelty-fold proof fixes every other exact-decimal primitive and varies
only $p$ over $[0,1/2]$, the range in which familiar tasks are at least as frequent as novel ones. The main paper's perturbation-robustness result gives qualitative persistence under small smooth
perturbations, but the certificate is not a quantified multidimensional parameter box.

\paragraph{Finite certificate for the convergence cutoffs.}
For completeness, this paragraph states the finite conditions used to turn the forward-looking path operator
into certified enclosures for the convergence cutoffs. Let $d:=1-\delta$, $\lambda:=\beta d$, and, for an inherited stock $j$,
let $\mathbb X(j)$ be the sequences in $[0,1]^{\mathbb N_0}$ whose first coordinate is $j$. Define
\begin{align*}
 (\mathcal T_j\mathbf x)_0&=j,\\
 (\mathcal T_j\mathbf x)_{t+1}
 &=dx_t+\delta H\!\left((1-\lambda)\sum_{n=0}^{\infty}
 \lambda^n\widehat R(x_{t+1+n})\right).
\end{align*}
Write $\bot_j=(j,0,0,\ldots)$ and $\top_j=(j,1,1,\ldots)$. Iteration from these two paths gives the least and
greatest perfect-foresight paths, denoted $\underline{\mathbf J}(j)$ and $\overline{\mathbf J}(j)$.

The following comparison principle is used repeatedly and is worth stating explicitly. If two paths are
ordered coordinatewise, $\mathbf x\le\mathbf y$, then $\limsup_tx_t\le\limsup_ty_t$ and
$\liminf_tx_t\le\liminf_ty_t$; in particular, every equilibrium path from $j$ is squeezed between
$\underline{\mathbf J}(j)$ and $\overline{\mathbf J}(j)$ (Theorem~\ref{thm:history-expectations}(i) of the
main text), so when a
dominating extremal path converges, the limit superior of every dominated path is bounded by its limit---and
these bounds hold whether or not the dominated path itself converges.

\begin{lemma}[Finite certificate for the convergence cutoffs]\label{lem:oa-finite-destination}
Suppose $H$ and $\widehat R$ are continuous and nondecreasing and the static map
$\Psi=H\circ\widehat R$ has exactly three fixed points with sign pattern $+,-,+,-$. Choose
\[
 J_\ell<a_-<a_+<J_i<b_-<b_+<J_h.
\]
Suppose that, for finite integers $n_-,n_+$,
\[
 [\mathcal T_{a_-}^{n_-}\top_{a_-}]_1\le a_-,
 \qquad
 [\mathcal T_{b_+}^{n_+}\bot_{b_+}]_1\ge b_+.
\]
Suppose also that there is a subsolution
$\mathbf s\in\mathbb X(a_+)$ satisfying
$\mathbf s\le\mathcal T_{a_+}\mathbf s$ and $\liminf_ts_t>J_i$, and a supersolution
$\mathbf q\in\mathbb X(b_-)$ satisfying
$\mathcal T_{b_-}\mathbf q\le\mathbf q$ and $\limsup_tq_t<J_i$. Then both high-convergence sets are nonempty.
Define
\[
 J_-^{RE}:=\inf\{j:\overline J_t(j)\to J_h\},
 \qquad
 J_+^{RE}:=\inf\{j:\underline J_t(j)\to J_h\}.
\]
Then
\begin{equation}\label{eq:oa-constructive-global-regions}
\begin{aligned}
 j\le a_-&\Longrightarrow\text{ every path converges to }J_\ell,\\
 a_+\le j\le b_-&\Longrightarrow
 \underline J_t(j)\to J_\ell\ \text{and}\ \overline J_t(j)\to J_h,\\
 j\ge b_+&\Longrightarrow\text{ every path converges to }J_h,
\end{aligned}
\end{equation}
and
\[
 J_-^{RE}\in[a_-,a_+],
 \qquad
 J_+^{RE}\in[b_-,b_+].
\]
\end{lemma}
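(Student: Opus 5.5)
The plan is to prove the four implications separately, each from one of the four finite hypotheses. Throughout I will use the tools of Theorem~\ref{thm:history-expectations}: monotonicity of the extremal paths in the inherited stock, the limit inequalities \eqref{eq:liminf-limsup}, and the comparison principle stated above. The cutoff enclosures then follow from the upper-set structure of $\mathcal E_h^{\max}$ and $\mathcal E_h^{\min}$.

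\emph{Step 1 (low region from $a_-$).} Since $\overline{\mathbf J}(a_-)\le\mathcal T_{a_-}^{n}\top_{a_-}$ for every $n$, the hypothesis gives $\overline J_1(a_-)\le a_-$. The key observation is a \emph{shift property}. If $(x_t)_{t\ge0}$ is a perfect-foresight path, then its tail $(x_{t+1})_{t\ge0}$ is a perfect-foresight path from $x_1$. This holds because \eqref{eq:path-operator} involves only the current stock and the forward discounted sum, so the fixed-point equations are shift invariant.

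Hence the shifted greatest path from $a_-$ is a path from $\overline J_1(a_-)\le a_-$. It is therefore dominated by $\overline{\mathbf J}(\overline J_1(a_-))\le\overline{\mathbf J}(a_-)$, using maximality and Theorem~\ref{thm:history-expectations}(ii). So $\overline J_{t+1}(a_-)\le\overline J_t(a_-)$ for all $t$. The greatest path is thus monotonically nonincreasing and converges to some limit in $[0,a_-]$.

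By \eqref{eq:liminf-limsup} that limit is a fixed point of $\Psi$. The only fixed point in $[0,a_-]\subset[0,J_i)$ is $J_\ell$. For every $j\le a_-$, every path from $j$ lies below $\overline{\mathbf J}(a_-)$, so it converges to $J_\ell$ by the comparison argument in the proof of Theorem~\ref{thm:history-expectations}. That argument rules out a liminf below $J_\ell$ via $l\ge\Psi(l)$ and the positive sign of $\Psi(J)-J$ near zero.

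\emph{Step 2 (high region from $b_+$).} This is the mirror image of Step 1. From $\underline J_1(b_+)\ge[\mathcal T^{n_+}_{b_+}\bot_{b_+}]_1\ge b_+$, the shift property and minimality make $\underline{\mathbf J}(b_+)$ nondecreasing. Its limit is a fixed point in $[b_+,1]$, hence equal to $J_h$. Every path from $j\ge b_+$ dominates this least path and so converges to $J_h$. In particular $\mathcal E_h^{\min}\ni b_+$ is nonempty, and so is $\mathcal E_h^{\max}\supseteq\mathcal E_h^{\min}$.

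\emph{Step 3 (middle region).} Take the subsolution $\mathbf s$. By isotonicity, and because $\mathcal T_{a_+}$ commutes with monotone limits, the iterates $\mathcal T_{a_+}^n\mathbf s$ increase to a perfect-foresight path $\mathbf y\ge\mathbf s$ from $a_+$ with $\liminf_t y_t>J_i$. Exactly as in Step~3 of the proof of Theorem~\ref{thm:band}, the inequality $l\ge\Psi(l)$ with $l>J_i$ forces $l\ge J_h$, and $u\le\Psi(u)$ forces $u\le J_h$. So $\mathbf y\to J_h$, and the greatest path from $a_+$ dominating it also converges to $J_h$: $a_+\in\mathcal E_h^{\max}$.

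Symmetrically, downward iteration from $\mathbf q$ yields a path from $b_-$ converging to $J_\ell$. The least path lies below it, so $\underline{\mathbf J}(b_-)\to J_\ell$. Monotonicity of the extremal paths in $j$ then gives, for $a_+\le j\le b_-$, that $\overline{\mathbf J}(j)\ge\overline{\mathbf J}(a_+)$ converges to $J_h$ and $\underline{\mathbf J}(j)\le\underline{\mathbf J}(b_-)$ converges to $J_\ell$. This establishes \eqref{eq:oa-constructive-global-regions}.

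\emph{Step 4 (cutoff enclosures).} For the lower cutoff, $a_+\in\mathcal E_h^{\max}$ gives $J_-^{RE}\le a_+$. By Step~1 no $j\le a_-$ lies in $\mathcal E_h^{\max}$, so $J_-^{RE}\ge a_-$. For the upper cutoff, $b_+\in\mathcal E_h^{\min}$ gives $J_+^{RE}\le b_+$. Since $b_-\notin\mathcal E_h^{\min}$ and $\mathcal E_h^{\min}$ is an upper set, $J_+^{RE}\ge b_-$.

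The step I expect to need the most care is the shift property in Step~1. One must verify that the tail of a fixed point of $\mathcal T_j$ is a fixed point of $\mathcal T_{x_1}$, which holds because each continuation value is a forward discounted sum along the path. One must also verify that combining this with maximality legitimately yields monotone decrease of the whole greatest path, not just of its first step.
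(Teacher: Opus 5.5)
Your proposal is correct and follows the paper's proof essentially step for step. Both use the shifted-tail comparison that makes the greatest path from $a_-$ nonincreasing (and the least path from $b_+$ nondecreasing), monotone iteration from the sub- and supersolutions combined with the squeeze \eqref{eq:liminf-limsup} and the sign chart, and monotonicity in the inherited stock with the upper-set and infimum bookkeeping for the two brackets. The step you flag as delicate is already settled by your own argument, since the tail comparison gives $\overline J_{t+1}(a_-)\le\overline J_t(a_-)$ coordinatewise for every $t$ at once.
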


\begin{proof}
The finite upper-iterate condition implies
$\overline J_1(a_-)\le a_-$. The shifted tail of the greatest path is a path from
$\overline J_1(a_-)$ and is bounded above by
$\overline{\mathbf J}(\overline J_1(a_-))\le\overline{\mathbf J}(a_-)$. The greatest path is therefore
nonincreasing; its stationary limit is $J_\ell$ because $a_-<J_i$. Monotonicity in inherited stock extends low
convergence of every path to $j\le a_-$. The finite lower-iterate condition symmetrically makes the least path
from $b_+$ nondecreasing and convergent to $J_h$, and extends high convergence of every path to $j\ge b_+$.

Iteration from $\mathbf s\le\mathcal T_{a_+}\mathbf s$ increases to a fixed path whose liminf exceeds $J_i$;
the sign pattern forces that path to converge to $J_h$. Hence the greatest path converges high from $a_+$ and,
by monotonicity, from every larger stock. Iteration down from $\mathbf q$ yields a fixed path whose limsup is
below $J_i$, so it converges to $J_\ell$; the least path therefore converges low from every $j\le b_-$. These
four implications give \eqref{eq:oa-constructive-global-regions} and the cutoff brackets.
\end{proof}

The conditions are finitely checkable. If
$\underline R:=\min_J\widehat R(J)$ and
$\overline R:=\max_J\widehat R(J)$, truncating the continuation sum after term $N$ and bounding the remainder
by its extrema gives an enclosure of width no larger than
\begin{equation}\label{eq:oa-tail-error}
 \lambda^{N+1}(\overline R-\underline R).
\end{equation}
Thus the drift tests are exact finite iterates with constant tails, while the sub- and supersolutions are finite
backward recursions followed by constant tails. Neither step extrapolates convergence from a simulated path.

At the dynamic benchmark, the extremal-path convergence tests classify inherited stocks $0.05$, $0.15$, and $0.25$
as low--low, low--high, and high--high, respectively. Starting from the low stationary culture, the physical
ceilings after zero through four intervention dates are $0.020671$, $0.069638$, $0.116156$, $0.160348$, and
$0.202330$.

The independent dynamic certificate \texttt{certify\_forward\_dynamics\_F3.py} re-certifies the unique enactment
branch, the three stationary roots, and the global $+,-,+,-$ sign chart. It then evaluates finite Tarski
iterates with constant tails and two finite backward-recursion constructions. The $900$th upper iterate at
$j=0.12$ has first successor $0.119774<0.12$, while the $400$th lower iterate at $j=0.19$ has first successor
$0.191751>0.19$. A $70$-step subsolution with tail $0.5$ reconstructs $j=0.124815<0.13$, and an $80$-step
supersolution with tail $0.10$ reconstructs $j=0.176387>0.17$. These two backward recursions are finite
nonlinear instances of the spiral construction behind Theorem~\ref{thm:band} of the main text: both loop
around the middle culture before anchoring on their constant tails, which is how they reach inherited
stocks on the far side of $J_i$. The finite certificate therefore yields
\[
 J_-^{RE}\in[\REminlo,\REminhi],\qquad J_+^{RE}\in[\REpluslo,\REplushi],
\]
without shooting interpolation or convergence extrapolation. The same certificate verifies the physical
ceilings, the exact duration split $(T_{\mathrm{possible}},T_{\mathrm{robust}})=(\Tpossible,\Trobust)$, the return floor
$\widehat R(J)>-0.577249$, and the strict sufficient-shift inequalities
\[
 \tau=0.648\Longrightarrow J_3>0.130063,
 \qquad
 \tau=0.802\Longrightarrow J_4>0.190090 .
\]

\paragraph{Patience comparison (numerical, not certified).}
The superseded benchmark carried a companion interval certificate over $\beta$--$\delta$ rectangles; at the
present primitives that exercise is replaced by the explicitly non-rigorous
\texttt{compare\_patience\_F3.py}. Section~\ref{oa:patience} reports the comparison and its scope.

Script \texttt{verify\_planner\_threshold\_F3.py} reproduces the responsive-opposition
threshold. Objects outside the scope of these Arb-based certificates
use high-precision diagnostics, with residuals and transversality terms reported.

For reference, the hard audience rule is
$s_m^W(X_m)=[\kappa_m(q_m(X_m)-\bar q_m)]_0^{\bar s_m}$, with
$\kappa_m=(\pi_m+\Lambda_m)/c_m$ in the matching microfoundation. The lower truncation is the audience's
tolerance region and the upper truncation is the finite partner mass; the no-cap convention applies when the
equilibrium sanction range remains below that mass. Replacing the positive part by its logit smoothing (the ``softplus'') gives the
no-cap smooth approximation in the main paper. At zero tolerance it converges to proportional reputational sanctions
as the smoothing parameter vanishes. A finite exponential saturation is a distinct smooth-cap technology, not
the hard cap.

The hard interior rule is not the only nearby case with a trap. The vulnerability script replaces stigma by
the softened, optionally capped family
\[
 \tilde s_f(J)=\bar s\left(1-\exp\left\{
 -\frac{\kappa_f\,\varepsilon\log(1+\exp[(q_f(J)-\bar q)/\varepsilon])}{\bar s}
 \right\}\right),
\]
with the no-cap case obtained as $\bar s=\infty$. With finite $\bar s$ this is a smooth saturating-cap
technology; its curvature need not satisfy $s''\ge0$ globally, so it is reported as robustness only. The
zero-threshold limit leaves a unique culture at $J=0.0052$. With the benchmark tolerance $\bar q_f=0.16$ and
$\varepsilon=0.02$, the softened approximation gives $J=(0.0188,0.1966,0.7124)$, reproducing the hard-rule
baseline's three-root structure and stability pattern. Smooth saturating caps, by contrast, remove it: at
$\bar s=2.0$ and at $\bar s=1.5$ a unique questioning culture survives. The reason is visible in the
schedule---the saturating form is concave, so at $J\to0$ it delivers $0.99$ against the linear rule's $1.36$,
cutting roughly a quarter of the trap-level sanction on which the feedback depends. The example with nondegenerate folds
therefore survives softening of the kink but not caps that blunt the high sanction at low prevalence.

\section{Identification: limits and design}\label{oa:identification}

This appendix states the two identification results summarized in Section~5 of the main text, maintaining
throughout that the analyst does not know the recording and background rates (were those primitives known,
event frequencies alone would recover the stocks). Two layers of non-identification should be kept
distinct: with the recording and background rates unknown, the stocks $(J,A)$ themselves are not identified
from event frequencies; and even granted $(J,A)$, splitting non-enacted judgment into its silenced and
privately withheld parts requires the background enactment propensity $G(b_N)$.

\subsection{What conduct cannot reveal}\label{subsec:limits}

\begin{proposition}[Limits to identifying latent adaptive capacity]\label{thm:hidden}
Maintain the no-prompt-vindication baseline $v_f=0$. Consider any environment in which \emph{(A1)} visible independent conduct pools authentic and background
sources that no single act distinguishes: in calm states, independent events (probes) mix authentic formers
with the opportunistic pool, and all other conduct is the common routine; \emph{(A2)} in rare states an agent
departs from the routine only if she is both able and, given exposure, willing; and \emph{(A3)} the population
partitions into never-formed $1-J$, formed-but-silenced $S^R$, formed-but-privately-withheld $N$, and
formed-and-enacting $A$, with $A+S^R+N=J$; the authentic flow $A$ enters the observed novel-state events
together with the background pool, as in (A1). The two-margin model is one
such environment. Consider an analyst who observes conduct (event frequencies and actions) but neither
motives nor the primitives $(\phi_0,\psi,\nu_m,H,G,b_N)$, with the visibility normalization $\alpha_m=1$
maintained. Then:
\begin{enumerate}
\item[(i)] \emph{(No act classifies its actor.)} Whenever $X_m>0$, every observed independent act has a
posterior probability of being authentic strictly inside $(0,1)$, and a conforming agent may be unformed,
silenced, or privately withheld.
\item[(ii)] \emph{(Aggregates are confounded.)} The calm-state event frequency $\phi_0(\nu_f+J)$ is
a single moment in which the stock $J$ is confounded with the background composition and visibility, and the
registered rare-state event mass $\psi(\nu_a+A)$ is likewise confounded with $\nu_a$ and $\psi$.
\item[(iii)] \emph{(The composition is counterfactual even given aggregates.)} Granting the analyst
$(J,A)$, the split of non-enacted judgment $J-A$ into its reputationally silenced and privately withheld
parts requires the no-stigma enactment rate $w^0$, which no conduct under a single exposure regime reveals;
under (A1)--(A3) alone, conduct identifies at most the line-segment outer bound
$(S^R,N)\in\{(S^R,N)\ge0:\ S^R+N=J-A\}$.
\item[(iv)] \emph{(Sharpness, within the two-margin model.)} The bound is sharp on its relative interior
whenever $J>0$, the enactment margin is strictly sanctioned (so that $z(A)<b_N$), and the analyst's
uncertainty admits any full-support cost primitives: for every target split with $S^R>0$ and $N>0$ there
exist admissible primitives, differing only in the unobserved $(G,H)$, whose equilibrium generates identical
conduct at every margin and assigns that split. There this line segment is the identified set up to its boundary.
\end{enumerate}
\end{proposition}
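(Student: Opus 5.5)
Parts (i)--(iii) are direct consequences of the posterior formula and the decomposition; the substantive work is the sharpness construction in (iv). For (i), I will invoke \eqref{eq:posterior} with $\alpha_m=1$: since $\nu_m>0$ and $X_m>0$, $q_m(X_m)=\nu_m/(\nu_m+X_m)\in(0,1)$, so the authenticity posterior $1-q_m$ lies strictly inside $(0,1)$. For the second clause, a conforming agent's action is the common routine under (A1), and each of the three classes in (A3) produces it: never-formed agents always follow routine, and silenced or withheld agents do so in rare states. Hence conduct cannot separate them.

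For (ii), the observed calm-state moment is $\phi_0(\nu_f+J)$, and the map $(\phi_0,\nu_f,J)\mapsto\phi_0(\nu_f+J)$ is not injective. For any $J'\neq J$ in $(0,1)$, choosing $\nu_f'=\nu_f+J-J'$ (positive for $J'$ near $J$) and $\phi_0'=\phi_0$ reproduces the moment; alternatively, $\phi_0$ can be rescaled instead. The same argument applies to $\psi(\nu_a+A)$. For (iii), I will use the decomposition \eqref{eq:decomposition}: $S^R=J[w^0-w(A)]$ and $N=J(1-w^0)$, so $S^R+N=J-A$ with $J w(A)=A$ observed. The split is governed by the counterfactual $w^0=G(b_N)$, which is never played under a single exposure regime. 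Nonnegativity of both parts then gives exactly the segment.

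For (iv), fix observed $(J,A)$ with $J>0$ and the equilibrium sanctions $s_f=S_f(J)$ and $s_a=S_a(A)$; these depend only on prevalences and audience primitives, which I hold fixed. Given a target interior split, set the target $w^{0\prime}:=1-N/J\in(A/J,1)$. I need a full-support $G'$ with $G'(z)=A/J$ at the realized cutoff $z=b_N-\psi s_a<b_N$ and $G'(b_N)=w^{0\prime}$. Both are feasible because $z<b_N$ and $A/J<w^{0\prime}$: take a strictly increasing continuous c.d.f.\ interpolating these two points, for instance a piecewise-logistic or mixture construction with continuous positive density. This fixes enactment at $A$. Then $\Omega'(z)=\int_{-\infty}^zG'$ changes, so the formation return $R'$ changes. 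I will restore $J$ by choosing $H'$ with $H'(R')=J$ and full support, for example by shifting the location of $H$. This keeps the fixed-point condition $J=H'(R')$.

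The main obstacle is keeping the equilibrium regular, so that $(J,A)$ remains a solution and conduct is genuinely identical. Equilibrium conditions (i)--(iv) of Definition~\ref{def:equilibrium} hold by construction, and event frequencies are unchanged because $\phi_0,\psi,\nu_m$ and $(J,A)$ are unchanged. Actions are likewise unchanged, since the enacting mass is $A$ and the formed mass is $J$. The remaining worry is whether $A$ is the unique enactment solution under $G'$, for statements relying on $a(J)$. I will handle this by interpolating $G'$ with small density near $z$, making $\mathcal C_a<1$. Nothing in the identity-of-conduct claim requires uniqueness, however, so this is only a robustness remark. Boundary splits with $S^R=0$ or $N=0$ are excluded because they would require $w^0=w(A)$, contradicting strict sanctioning, or $w^0=1$, contradicting full support.
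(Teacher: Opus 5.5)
Your proposal is correct and follows essentially the same route as the paper. The paper also uses Bayes' rule and the decomposition for (i)--(iii), and proves (iv) by re-interpolating $G$ through $(z(A),A/J)$ and $(b_N,(A+S^R)/J)$ and then choosing a full-support $H$ with $H(\tilde R)=J$. As the paper notes, uniqueness of the perturbed enactment subgame is not needed, so you can drop your small-local-density remark, which would not by itself control the supremum defining $\mathcal C_a$.
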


\subsection{Identification by design}\label{subsec:design}

Point identification is possible within the two-margin model, but---within the maintained environment in
which the recording and background rates are unknown---only by design: the required instruments
are assumptions about measurement rather than objects the model supplies, and conditions (D2)--(D3) below
come close to observing the stocks directly. The next statement is therefore a
proposition under explicit design conditions, not a theorem of the model.

\begin{proposition}[Point identification under design conditions]\label{prop:pointid}
Maintain $v_f=0$ and consider the following design conditions:
\begin{enumerate}
\item[(D1)] \emph{Immediate enactment protection:} an exogenous no-stigma enactment-protection intervention is
applied before the inherited stock $J$ can adjust, leaving $G$, $b_N$, the background flow $\nu_a$, and the
measurement technology unchanged; the novel-state departure mass is measured in both regimes, either directly
as acts (the total mass $\nu_a+A$, at the normalization $\alpha_a=1$) or as registered events with a
regime-invariant registration rate $\psi$;
\item[(D2)] \emph{Truthful elicitation:} a private elicitation instrument reports formation status
truthfully;
\item[(D3)] \emph{Measured nuisance levels:} (a) the level $\nu_a$ is independently measured; and (b), when
departures are registered as events rather than directly counted as acts, the registration rate $\psi$ is known.
\end{enumerate}
Under (D1) the background cancels from the change induced by enactment protection: acts give $A^0-A=S^R$ in levels, while events
give $\psi S^R$. With acts, (D2) gives $J$ and hence $(1-J,S^R,A+N)$, where $A+N=J-S^R$; events require the
known scale in (D3b) for the same conclusion. If $\psi$ is unknown but regime-invariant, event changes identify
only $\psi S^R$ and its sign, and rank $S^R$ across groups only when they share $\psi$. The change alone does
not split $A+N$. Clause (D3a) recovers $A=(\nu_a+A)-\nu_a$ from acts and, together with (D3b),
$A=y/\psi-\nu_a$ from events; $A^0$ follows likewise, so $S^R=A^0-A$, $N=J-A^0$, and the full static
composition are point-identified. A small dated exposure shock around an outer culture further separates
impact timing and, from local decay along its stable arm, identifies the composite stable root characterized
in Section~\ref{sec:dynamics}.
\end{proposition}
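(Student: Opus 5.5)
The argument is mostly bookkeeping with the measured quantities, so I will go through the identification chain clause by clause. Everything rests on the additive structure of the enactment-event mass and on the decomposition \eqref{eq:decomposition}.

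\emph{Step 1: the background cancels.} Under (D1), the inherited stock $J$ is predetermined, so it is unchanged between the two regimes. The primitives $G$, $b_N$ and $\nu_a$ are invariant as well. Protection sets $s_a=0$, so post-protection enactment is $A^0=Jw^0=JG(b_N)$, exactly as in Corollary~\ref{cor:stock-flow}(i). The measured departure mass is $\nu_a+A$ before protection and $\nu_a+A^0$ after it (with $\alpha_a=1$), and $\nu_a$ is the same in both. Differencing therefore gives $A^0-A=J[w^0-w(A)]=S^R$ in acts. In registered events the masses are $\psi(\nu_a+A)$ and $\psi(\nu_a+A^0)$ with a regime-invariant $\psi$, so the change is $\psi S^R$. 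The sign of that change is the sign of $S^R$ because $\psi>0$. Two groups sharing $\psi$ are therefore ranked by $S^R$, and without knowing $\psi$ only the scaled level is recovered. This settles the clauses about acts, about events with unknown $\psi$, and about ranking.

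\emph{Step 2: the aggregate split.} (D2) delivers $J$ directly, so $1-J$ is known. With $S^R$ from Step~1, $A+N=J-S^R$ follows from \eqref{eq:decomposition}. In the event case I first divide the event change by the known $\psi$ from (D3b). I will also note explicitly that the change alone cannot split $A+N$. By Proposition~\ref{prop:limits}(iii), and its sharp form Proposition~\ref{thm:hidden}(iv), primitives differing in $(G,H)$ can shift mass between $A$ and $N$ while holding $J$ and $S^R$ fixed. The reason is that the change only reports $A^0-A$, and $A^0-A=S^R$ does not pin down $A$ separately.

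\emph{Step 3: full composition.} (D3a) gives $\nu_a$. Subtracting it from the measured pre-protection mass gives $A=(\nu_a+A)-\nu_a$, or $A=y/\psi-\nu_a$ from the event count $y$ when $\psi$ is known. The same subtraction applied to the post-protection measurement yields $A^0$. Then $S^R=A^0-A$ and $N=J-A^0$ by \eqref{eq:decomposition}, since $A+S^R=A^0$. Together with $1-J$ this gives the full static composition.

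\emph{Step 4: dynamic addendum.} This is the only part that is not bookkeeping, and I expect it to be the main obstacle. The plan is to take a small one-date perturbation of $\psi$ or $\phi$ near an outer culture. By Proposition~\ref{thm:forward-local}, that culture is a saddle with a unique root $r_j^s\in(0,1)$. On impact, enactment responds immediately while $J$ is predetermined, exactly as in Corollary~\ref{cor:stock-flow}, so impact timing separates the two margins. After the shock is withdrawn, rational expectations place the system on the stable arm. The deviation $J_t-J_j$ then decays locally as $(r_j^s)^t$ up to higher-order terms, and its ratio across dates identifies $r_j^s$. The delicate points are that the perturbation must be small enough for the linearization to govern the decay, and that the identified object is the composite $r_j^s$, not its separate ingredients. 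I would state the claim only locally, in that form.
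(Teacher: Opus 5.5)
Your Steps 1--3 reproduce the paper's bookkeeping essentially line by line. The gap is in Step 4: the shock you choose does not displace the stock, so there is no transition whose decay could reveal $r_j^s$. The cohort revising at date $t$ forms iff $k\le V_{t+1}$, and $V_{t+1}=(1-\lambda)\sum_{n\ge0}\lambda^n\widehat R(J_{t+1+n})$ contains only returns from $t+1$ on. You impose a one-date perturbation of $\psi$ or $\phi$ at date $t$ ``exactly as in Corollary~\ref{cor:stock-flow}'', with $J_t$ predetermined. Such a shock changes $\widehat R$ only at date $t$, hence only $V_t$, and $V_t$ belonged to a cohort that has already chosen. So $J_{t+1}=dJ_j+\delta H(V_{t+1})$ is unaffected, local saddle determinacy keeps the continuation at $J_j$, and $J_s-J_j\equiv0$. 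The ratio across dates that you propose to form is therefore $0/0$.

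A one-date $\psi$-shock is simply partial enactment protection. The paper's proof singles this out as the case that ``leaves $J_t$ unchanged and traces no transition.'' A one-date surprise in $\phi$ alters no choice at all, since date-$t$ formation status is fixed and $z(A)$ does not involve $\phi$. For the same reason your timing separation is only half there: you exhibit the immediate flow response but no delayed stock response to set against it.

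The missing idea is that the shock must enter the revising cohort's cutoff. The paper uses a one-date formation subsidy: the direct cutoff shift $\tau_t$ of Corollary~\ref{cor:stock-flow}(ii), with $\partial J_{t+1}/\partial\tau_t=\delta h>0$ at the given continuation value. An exposure change announced in advance, so that it enters $V_{t+1}$, would serve equally. Once $J_{t+1}\neq J_j$, the post-shock path near the outer saddle is the locally unique one on the stable arm. Successive deviations then shrink by the factor $r_j^s$ to first order, and only that composite root is identified. Contrasting this delayed stock response with the immediate release under enactment protection gives the timing separation.

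A smaller point in Step 2: Proposition~\ref{thm:hidden}(iv) holds $(J,A)$ and all conduct fixed and moves mass between $S^R$ and $N$. It therefore does not deliver the claim you cite it for. Your own reason, that the difference carries no level information, is the right one. The level of $A$ is then confounded with the unknown $\nu_a$ as in Proposition~\ref{thm:hidden}(ii), which is exactly what (D3a) removes.
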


This is composite identification: the local decay rate identifies $r_j^s$, while separating turnover requires
external knowledge of discounting and the local slope of the cultural map.

The design conditions are instruments taken as available rather than derived: truthful private elicitation
is treated as given (modeling why a silenced member reports truthfully is a mechanism-design question beyond
this paper); under responsive opposition the enactment-protection comparison also counts deterred opportunists, which
is why (D1) requires the pool to remain fixed; and separating enacted from privately withheld judgment requires
its level---and, for events, the registration scale---to be measured, not merely stable. In the committee application, enactment protection is implemented by a chair who protects
dissenting votes on unusual cases, elicitation is an anonymous canvass of members' private assessments, and the
exposure shock is a dated change in how visibly questioning enters the record.

\subsection*{Auxiliary rank separation under exclusion restrictions}\label{oa:rank}

This subsection records the formal rank comparison used only as an auxiliary diagnostic. It uses the main paper's
notation: $J$ is formed judgment, $A$ is enacted judgment, and $(\theta_f,\theta_a)$ are pressure shifters at
the formation and enactment margins. The comparison is deliberately restricted by exclusion restrictions; it
does not rule out one-state models with direct instrument effects or additional hidden states.

\begin{lemma}[Rank separation under exclusion restrictions]\label{lem:oa-rank}
Consider a comparison class in which a scalar state $X$ solves $X=\Gamma(X;\vartheta)$ and a vector of
observable statistics is $\zeta(X)$. Suppose the instruments $\vartheta$ have no direct effect on $\zeta$
conditional on $X$. At every regular solution,
\[
 \frac{\partial\zeta}{\partial\vartheta}
 =\zeta'(X)\frac{\partial X}{\partial\vartheta}
\]
has rank at most one.

In the two-margin model, write the unique within-period enactment solution as
$A=a(J;\theta_a)$; formation pressure has no direct effect on $A$ conditional on $J$. At a regular equilibrium
on strictly sanctioned smooth interior branches at both margins, suppose
$J_{\theta_f}\ne0$ and $a_{\theta_a}\ne0$, as established by the main paper's comparative statics. Then
\[
 \det\frac{\partial(J,A)}{\partial(\theta_f,\theta_a)}
 =J_{\theta_f}a_{\theta_a}\ne0,
\]
so the response of $(J,A)$ has rank two.

For a further fixed-stock implication, let $w^0$ be the enactment rate absent enactment stigma and define
reputationally suppressed judgment by $S^R=Jw^0-A$. On a strictly sanctioned enactment branch,
\[
 S^R>0,
 \qquad
 \left.\frac{\partial S^R}{\partial\theta_a}\right|_J=-a_{\theta_a}>0.
\]
If a candidate single-index model has a locally invertible stock equation $J=f(X)$, holding $J$ fixed also fixes
$X$ and therefore fixes every statistic $\zeta(X)$. This fixed-stock response is another implication of the
same exclusion restrictions.
\end{lemma}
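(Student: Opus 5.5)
The statement has three parts, and I would prove them in order.

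\emph{Rank-one bound.} At a regular solution, $1-\Gamma_X\neq0$. The implicit function theorem then gives
\[
\partial X/\partial\vartheta=\Gamma_\vartheta/(1-\Gamma_X),
\]
a row vector, one entry per instrument. The exclusion restriction says $\zeta$ depends on $\vartheta$ only through $X$, so the chain rule gives $\partial\zeta/\partial\vartheta=\zeta'(X)\,\partial X/\partial\vartheta$. This is a column vector times a row vector, an outer product, so its rank is at most one.

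\emph{Rank two in the two-margin model.} The key observation is the block-triangular structure.
\begin{itemize}
\item $J$ is the equilibrium stock solving $J=\Psi(J;\theta_f,\theta_a)$.
\item $A=a(J;\theta_a)$, and $\theta_f$ enters $a$ neither directly nor through the enactment equation $A=JG(b_N-\psi s_a(A;\theta_a))$.
\end{itemize}
Hence $A_{\theta_f}=a_J J_{\theta_f}$ and $A_{\theta_a}=a_J J_{\theta_a}+a_{\theta_a}$. The Jacobian of $(J,A)$ with respect to $(\theta_f,\theta_a)$ is
\[
\begin{pmatrix} J_{\theta_f} & J_{\theta_a}\\ a_J J_{\theta_f} & a_J J_{\theta_a}+a_{\theta_a}\end{pmatrix}.
\]
Subtracting $a_J$ times the first row from the second leaves the lower row $(0,\,a_{\theta_a})$, so the determinant equals $J_{\theta_f}a_{\theta_a}$.

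It remains to justify that both factors are nonzero, which comes from Proposition~\ref{prop:capacity}.
\begin{itemize}
\item For $J_{\theta_f}$: at a regular scalar-stable equilibrium, or any regular one with $\Psi'\neq1$,
\[
J_{\theta_f}=h\widehat R_{\theta_f}/(1-\Psi')\neq0,
\]
because $h>0$ and $\widehat R_{\theta_f}<0$ on a strictly sanctioned formation branch with $\phi>0$.
\item For $a_{\theta_a}$: implicit differentiation of the enactment equation gives
\[
a_{\theta_a}=-Jg\psi\,\partial_{\theta_a}s_a\big/\bigl(1-Jg\psi[-s_a']\bigr).
\]
This is strictly negative under \eqref{eq:enactment-regularity}, since $\partial_{\theta_a}s_a>0$ on the sanctioned branch by Assumption~\ref{ass:socialmeaning}.
\end{itemize}
The statement takes both nonvanishing conditions as hypotheses, so I only need to point to these formulas.

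\emph{Fixed-stock claims.} By definition $S^R=Jw^0-A$, with $w^0=G(b_N)$ independent of $\theta_a$. Strict sanctioning gives $z(A)<b_N$, so $w(A)<w^0$ because $g>0$; with $J>0$ this yields $S^R>0$. Holding $J$ fixed, $\partial S^R/\partial\theta_a|_J=-a_{\theta_a}>0$ by the sign just derived. For the single-index contrast, local invertibility of $J=f(X)$ gives $X=f^{-1}(J)$, so fixing $J$ fixes $X$ and hence every $\zeta(X)$. A single-index model therefore predicts zero fixed-stock response of all statistics, while $S^R$ responds.

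The only real obstacle is bookkeeping: making sure $\theta_f$ genuinely has no direct effect on $a$. This holds because $s_a$ depends on $\theta_a$ and $A$ only. I would also note that regularity at the equilibrium is what licenses the implicit-function steps.
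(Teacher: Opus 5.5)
Your proposal is correct and follows the paper's proof essentially step for step. The steps match: the column-times-row argument gives rank at most one, the triangular Jacobian's cross terms cancel to leave $J_{\theta_f}a_{\theta_a}$, a strict enactment sanction gives $w(A)<w^0$ and hence $S^R>0$ with $\partial S^R/\partial\theta_a|_J=-a_{\theta_a}$, and local invertibility of $J=f(X)$ pins down every $\zeta(X)$. Your added implicit-function formulas for $J_{\theta_f}$ and $a_{\theta_a}$ only spell out what the paper takes as hypotheses from Proposition~\ref{prop:capacity}.
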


\begin{proof}
The chain rule gives
$\partial\zeta/\partial\vartheta=\zeta'(X)\partial X/\partial\vartheta$, the product of a column and a row;
its rank is therefore at most one. In the two-margin model, total differentiation of
$A=a(J;\theta_a)$ gives the two Jacobian rows
\[
 (J_{\theta_f},J_{\theta_a})
 \quad\text{and}\quad
 (a_JJ_{\theta_f},a_JJ_{\theta_a}+a_{\theta_a}).
\]
The cross terms cancel in the determinant, leaving
$J_{\theta_f}a_{\theta_a}\ne0$. Finally, strict enactment pressure implies an enactment rate below $w^0$ and
hence $S^R>0$. Holding $J$ fixed in $S^R=Jw^0-A$ gives
$\partial S^R/\partial\theta_a=-a_{\theta_a}>0$. If $J=f(X)$ is locally invertible, a fixed $J$ fixes $X$,
which proves the last comparison.
\end{proof}

\section{Omitted proofs}\label{oa:omitted}

\begin{proof}[Proof of Proposition~\ref{prop:practice}]
At stationary aggregates, maintaining readiness yields $R(J,A;p)-k$ relative to not maintaining in every
period, independently of history. Thus the unique optimum maintains every period when $k<R$ and
never maintains when $k>R$. At $k=R$ any history-contingent policy is optimal; the maintained tie-breaking rule
selects stationary maintenance, and continuity of $H$ gives the tie type zero mass.
\end{proof}

\begin{proof}[Proof of Lemma~\ref{lem:reputation}]
For the no-cap rule, let $\Phi_\varepsilon(x)=(1+\exp[-x/\varepsilon])^{-1}$. Then
\[
 s_m'=\kappa_m\Phi_\varepsilon(q_m-\bar q_m)q_m'<0,
\]
and
\[
 s_m''=\kappa_m\left[
 \frac{\Phi_\varepsilon(q_m-\bar q_m)(1-\Phi_\varepsilon(q_m-\bar q_m))}{\varepsilon}(q_m')^2
 +\Phi_\varepsilon(q_m-\bar q_m)q_m''\right]>0,
\]
because $q_m'<0$ and $q_m''>0$. Positivity and $\partial s_m/\partial\kappa_m>0$ are immediate. The hard
tolerance rule $[\kappa_m(q_m(X_m)-\bar q_m)]_0^{\bar s_m}$ is the $\varepsilon\downarrow0$ version with a
linear interior branch and flat tolerated or capped regions.
\end{proof}

\begin{proof}[Proof of Proposition~\ref{prop:tolerance}]
Partner $u$ withdraws exactly when $u\le[(\pi_m+\Lambda_m)q-\Lambda_m]/c_m$; truncating this cutoff to
$[0,\bar s_m]$ gives \eqref{eq:endogenous-tolerance}. The same expression maximizes the concave aggregate
audience payoff \eqref{eq:audience-payoff}. Since
$q_m'(X_m)=-\nu_m\alpha_m/(\nu_m+\alpha_mX_m)^2<0$, the derivative on an interior region is the displayed
expression. If $X_h>X_\ell$, then $q_m(X_h)<q_m(X_\ell)$. Any threshold between the two posteriors makes
$W_m(q_m(X_h))=0$ and $W_m(q_m(X_\ell))>0$.
\end{proof}

\begin{proof}[Proof of Proposition~\ref{prop:mult}]
The map $\mathcal T$ is continuous, isotone, and maps the complete lattice $[0,1]^2$ into itself; Tarski's
theorem gives a nonempty complete lattice of fixed points. The argument preceding the proposition supplies
the unique regular enactment solution $a(J)$; \eqref{eq:aprime} follows by implicit differentiation.
Substitution yields the increasing scalar map $\Psi$. The scalar uniqueness claim follows from the contraction
mapping theorem; on a smooth branch, the derivative bound implies the stated Lipschitz condition. For the
three-root claim, note that $\Psi(0)>0$ and $\Psi(1)<1$ because $H$ has full
support; the assumed derivative pattern and signs at the two extrema then imply exactly three crossings,
with the stated local stability for the reduced scalar adjustment. For the necessity direction of the
equivalence, write $F:=\Psi-\mathrm{id}$ and suppose $F(\underline J)\ge0$: since $F$ decreases on
$(0,\underline J)$ and increases on $(\underline J,\overline J)$, it is nonnegative on
$(0,\overline J]$ with at most the single zero $\underline J$, and on $(\overline J,1)$ it decreases
strictly from $F(\overline J)\ge0$ to $F(1)<0$, contributing exactly one further zero---at most two fixed
points in total. Symmetrically, $F(\overline J)\le0$ allows at most two. Exactly three fixed points
therefore require $F(\underline J)<0<F(\overline J)$.
\end{proof}

\begin{proof}[Proof of Proposition~\ref{prop:nosocialmeaning}]
With constant sanctions the enactment cutoff $\bar z$ is independent of $A$, so \eqref{eq:AS} gives
$A=\bar wJ$ directly, and the formation return is independent of $(J,A)$; $J^0$ is the unique fixed point of
a constant map.
\end{proof}

\begin{proof}[Proof of Proposition~\ref{prop:capacity}]
Implicit differentiation of \eqref{eq:AS} gives \eqref{eq:aprime} and
\[
 a_{\theta_a}=\frac{-Jg(z)\psi\,\partial s_a/\partial\theta_a}{1-Jg(z)\psi[-s_a'(a)]}\le0
\]
by Assumption~\ref{ass:socialmeaning} and \eqref{eq:enactment-regularity}, with strict inequality on a strictly
sanctioned smooth interior branch. Along $A=a(J)$, with $\Omega'=G$,
\[
 \widehat R_J=pG(z)\psi[-s_a'(a)]\,a'(J)+(1-p)\phi[-s_f'(J)]\ge0,\qquad
 \widehat R_{\theta_f}=-(1-p)\phi\,\frac{\partial s_f}{\partial\theta_f}\le0,
\]
and $\widehat R_{\theta_a}=pG(z)\{-\psi\,\partial s_a/\partial\theta_a-\psi s_a'(a)\,a_{\theta_a}\}\le0$.
The corresponding inequalities are strict under the branch conditions stated in the proposition;
$\widehat R_J$ is strict whenever the enactment schedule is strictly decreasing, or the formation
schedule is strictly decreasing and $\phi>0$. The derivative
$\widehat R_p$ is the displayed direct derivative of
\eqref{eq:invest}. Total differentiation of $J=H(\widehat R(J;\theta))$ gives
$dJ_j/d\theta=h\widehat R_\theta/(1-\Psi'(J_j))$. At a regular stable equilibrium the denominator is positive.
Enacted judgment then satisfies
\[
 \frac{dA}{d\theta_f}=a_J\frac{dJ}{d\theta_f}\le0,\qquad
 \frac{dA}{d\theta_a}=a_J\frac{dJ}{d\theta_a}+a_{\theta_a}\le0,\qquad
 \frac{dA}{dp}=a_J\frac{dJ}{dp}>0,
\]
with the stated strictness. Finally, at fixed $J$, $S^R=Jw^0-a(J;\theta_a)$, so
$\partial S^R/\partial\theta_a=-a_{\theta_a}\ge0$, strictly on a strictly sanctioned interior branch.
\end{proof}

\begin{proof}[Proof of Corollary~\ref{cor:logit-feedback}]
Let $C_p:=(1-p)\phi$. Since $b_f$ is smooth and bounded strictly away from zero on $[0,1]$,
\[
 \mathcal K_\epsilon-\mathcal K_0=\frac{B_\epsilon-B_0}{C_pb_f}\longrightarrow0
\]
in $C^2[0,1]$. The two zeros of $\mathcal K_0'$ are simple because
$\mathcal K_0''(J_-)>0$ and $\mathcal K_0''(J_+)<0$. Choose disjoint neighborhoods of those points on which
the corresponding second-derivative signs are uniform. Outside those neighborhoods, $\mathcal K_0'$ is
bounded away from zero: on a compact middle remainder this follows by continuity, and on sufficiently short
endpoint tails it follows from $\mathcal K_0'(J)\to-\infty$ as $J\downarrow0$ or $J\uparrow1$.
Sufficiently small $C^2$ perturbations therefore preserve the sign of the first derivative outside the two
neighborhoods and the strict monotonicity of that derivative inside them. Each neighborhood contains exactly
one zero of $\mathcal K_\epsilon'$, and there are no others. Convergence of the critical points and values
follows from the implicit-function theorem and uniform convergence. Positivity, strict ordering, stability
signs, and fold nondegeneracy are open strict inequalities.

For this primitive construction, at $\epsilon=0$ the cutoff is the constant
$z_0=b_N-\psi s_a^0$, so $a_0(J)=JG(z_0)$ and
$B_0(p)=p\Omega(z_0)-(1-p)\gamma$. The parameter-dependent implicit-function theorem, together with condition
\eqref{eq:enactment-regularity}, gives $a_\epsilon\to a_0$ in $C^2[0,1]$. Because the prevalence-dependent part of $z_\epsilon$
is multiplied by $\epsilon$, $z_\epsilon(a_\epsilon(\cdot))\to z_0$ in $C^2[0,1]$. Smoothness of $\Omega$
then gives the required convergence of $B_\epsilon$.
\end{proof}

\begin{proof}[Proof of Proposition~\ref{prop:openset}]
Fix $K\subset I^0$ compact with nonempty interior. Because roots are transverse and exactly three for each
$\theta\in K$, the implicit-function theorem represents them locally as smooth graphs. Compactness of $K$
and the ordering of the roots yield three pairwise disjoint neighborhoods of these root graphs, chosen narrow
enough that $F_J$ is uniformly separated from zero and has the appropriate sign throughout each neighborhood.
On the two $J$-boundaries of each neighborhood, and on the compact remainder of $[0,1]\times K$, the baseline function
$F(\cdot,\cdot;\pi^0)$ is bounded away from zero. Uniform $C^0$ closeness therefore preserves the opposite
boundary signs and excludes roots on the remainder, while uniform $C^1$ closeness preserves strict
monotonicity inside each neighborhood. Hence there is exactly one root in each neighborhood, with the two outer roots locally
stable and the middle root unstable.

For the endpoints, apply the implicit-function theorem to
\[
 Q(J,\theta;\pi):=(F(J,\theta;\pi),F_J(J,\theta;\pi))=0
\]
in the unknowns $(J,\theta)$. At a nondegenerate fold, $F_J=0$ and the Jacobian determinant
$F_JF_{J\theta}-F_\theta F_{JJ}$ equals $-F_\theta F_{JJ}\neq0$; uniform convergence of all four
derivatives---including $F_{J\theta}$---keeps the perturbed Jacobian nonsingular near the fold. Thus each
fold has a unique nearby continuation under small $C^2$ perturbations of the primitives.
\end{proof}

\paragraph{Quantified margin.} Write $F^0:=F(\cdot,\cdot;\pi^0)$. For the closures of these three pairwise
disjoint neighborhoods of the root graphs over a compact $K\subset I^0$, let $\eta>0$ be the minimum of
$|F^0|$ on their boundaries and compact complement, and let $m>0$ be the minimum of $|F_J^0|$ on the closed
neighborhoods. The bounds
$\|F-F^0\|_\infty<\eta/2$ and $\|F_J-F_J^0\|_\infty<m/2$ preserve exactly three roots and their signs.
Around each endpoint fold, let $\zeta>0$ bound $|F_\theta^0F_{JJ}^0|$ away from zero; the additional bound
$\|F_\theta F_{JJ}-F_\theta^0F_{JJ}^0\|_\infty<\zeta/2$ preserves nondegeneracy of the continued fold.

\begin{proof}[Proof of Proposition~\ref{thm:hidden}]
Under the maintained $v_f=0$ baseline, with $\nu_m>0$ and $X_m>0$, Bayes' rule \eqref{eq:posterior} places the posterior that any single observed
act is authentic strictly inside $(0,1)$, so no act classifies its actor; a conforming agent's conduct is the
routine in every calm state and non-departure in rare states, both consistent with all three latent
non-enacting types. The observed calm-state event frequency $\phi_0(\nu_f+J)$ is one moment in the
unknowns $(\phi_0,\nu_f,J)$, and the registered departure mass $\psi(\nu_a+A)$ adds $A$ confounded with
$\nu_a$ and $\psi$; neither
moment involves $w^0=G(b_N)$, so the accounting identity \eqref{eq:decomposition} restricts $(S^R,N)$ only to the
stated line segment even given $(J,A)$.

\emph{Sharpness.} Fix an equilibrium $(J,A,q_f,q_a,s_f,s_a)$ with a strictly sanctioned enactment branch and a
target split $(\tilde S^R,\tilde N)$ with $\tilde S^R,\tilde N>0$ and $\tilde S^R+\tilde N=J-A$. Set
$\tilde w^0:=(A+\tilde S^R)/J\in(A/J,1)$. Because $s_a(A)>0$ and $\psi>0$, the cutoff satisfies $z(A)<b_N$, so
there exists a strictly increasing $C^1$ c.d.f.\ $\tilde G$ with everywhere positive density such that
$\tilde G(z(A))=A/J$ and $\tilde G(b_N)=\tilde w^0$ (both interpolation values are interior, so such a
$\tilde G$ exists). The identification theorem requires this observed equilibrium, not global uniqueness of
the perturbed enactment subgame. Audience primitives are unchanged, so the posteriors $q_m$ and sanctions $s_m$ are unchanged, and the
enactment condition $A=J\tilde G(z(A))$ holds by construction. Choose $\tilde H$ strictly increasing with full
support and $\tilde H(\tilde R)=J$, where $\tilde R$ is the formation return computed under $\tilde G$. Then
$(J,A,q_f,q_a,s_f,s_a)$ is an equilibrium of the perturbed primitives; every observable is unchanged (the
calm-state event frequency $\phi_0(\nu_f+J)$, the registered novel-state event mass $\psi(\nu_a+A)$, all sanctions
and posteriors), while the decomposition \eqref{eq:decomposition} now assigns
$(S^R,N)=(\tilde S^R,\tilde N)$. Conduct therefore cannot rank points in the relative interior of this line segment.
\end{proof}

\begin{proof}[Proof of Proposition~\ref{prop:pointid}]
Under (D1), $A^0-A=J[w^0-w(A)]=S^R$ at fixed $J$. Background invariance makes the measured change equal to
$S^R$ for acts and $\psi S^R$ for events. Under (D2), acts therefore give $J$, $1-J$, and
$A+N=J-S^R$; events also need (D3b), while unknown $\psi$ gives only the scaled mass and its sign. Under
(D3a), $A=(\nu_a+A)-\nu_a$ for acts or, using (D3b), $A=y/\psi-\nu_a$ for events; applying the same formula
after enactment protection yields $S^R=A^0-A$ and $N=J-A^0$. A small dated shock that displaces the
stock---a one-date formation subsidy, say; a purely contemporaneous enactment protection leaves $J_t$
unchanged and traces no transition---identifies impact timing around an outer culture and, along its
stable arm, the composite local root in Proposition~\ref{thm:forward-local}, not its components separately.
\end{proof}

\begin{proof}[Proof of Corollary~\ref{cor:flip}]
The sign of $-(1-p)\gamma(J_h-J_\ell)+p(V_N+D)(A_h-A_\ell)$ gives \eqref{eq:pbreak}; monotonicity of
$Y_F$, $Y_N^C$, and $a(\cdot)$ gives the ranking and the branchwise claim.
\end{proof}

\begin{proof}[Proof of Proposition~\ref{thm:forward-local}]
Stationarity in \eqref{eq:forward-value}--\eqref{eq:forward-stock} gives
$V^\ast=\widehat R(J^\ast)$ and $J^\ast=H(\widehat R(J^\ast))$, and the converse is immediate. Linearization
gives \eqref{eq:forward-characteristic}. At an outer culture $m_j<1$, so
$P_j(0)=1/\beta>1$ and $P_j(1)=\delta(1-\lambda)(m_j-1)/\lambda<0$; since the leading coefficient is positive,
the two roots are positive and lie on opposite sides of one. The smaller root is
\eqref{eq:forward-stable-root}. As $m_j\uparrow1$, that root tends to one, so the half-life diverges.
\end{proof}

The proofs of Theorems~\ref{thm:history-expectations} and~\ref{thm:frontier} are in Appendix~D of the main text.

\begin{proof}[Proof of Proposition~\ref{prop:realsurplus}]
The equilibrium set does not depend on $\varrho$, and under the finiteness assumption the set $E$ of
equilibria with at least one effectively sanctioned margin is finite. If $E$ is empty the conclusion is
vacuous; suppose it is not. Theorem~\ref{thm:central-welfare}(ii)
gives $J_j<J^{\ast\ast}$ and $A_j<A^{\ast\ast}$ for every $(J_j,A_j)\in E$, so
\[
 \epsilon:=\min_{(J_j,A_j)\in E}\min\{J^{\ast\ast}-J_j,\ A^{\ast\ast}-A_j\}>0 .
\]
Let $N_\epsilon$ be the intersection of the feasible set with the open sup-norm ball of radius $\epsilon/2$
around $(J^{\ast\ast},A^{\ast\ast})$. The maintained integrability assumptions make $\mathcal W$ continuous
on the compact feasible set, and $(J^{\ast\ast},A^{\ast\ast})$ is its unique maximizer, so the gap
$g:=\mathcal W(J^{\ast\ast},A^{\ast\ast})-\max_{\text{feasible}\setminus N_\epsilon}\mathcal W$ is strictly
positive. Set $\bar\varrho:=g/(2\|\Upsilon\|_\infty)$ if $\|\Upsilon\|_\infty>0$ and
$\bar\varrho:=\infty$ otherwise. For $\varrho\in[0,\bar\varrho)$ and any feasible $(J,A)\notin N_\epsilon$,
\[
 \mathcal W_\varrho(J,A)\le\mathcal W(J,A)+\varrho\|\Upsilon\|_\infty
 <\mathcal W(J^{\ast\ast},A^{\ast\ast})-\varrho\|\Upsilon\|_\infty
 \le\mathcal W_\varrho(J^{\ast\ast},A^{\ast\ast}),
\]
so every maximizer of the continuous $\mathcal W_\varrho$ on the compact feasible set lies in $N_\epsilon$
and therefore satisfies $J^{\ast\ast}_\varrho>J^{\ast\ast}-\epsilon/2>J_j$ and
$A^{\ast\ast}_\varrho>A^{\ast\ast}-\epsilon/2>A_j$ for every equilibrium in $E$.
\end{proof}

\end{document}